\def\be{\begin{equation}}
\def\ee{\end{equation}}
\def\bea{\begin{eqnarray}}
\def\eea{\end{eqnarray}}
\author{}
\title{}
\DeclareMathOperator*{\argmin}{\arg\!\min}
\DeclareMathOperator*{\plim}{p\!\lim}
\DeclareMathOperator*{\sgn}{\normalfont\textrm{sgn}}
\renewcommand{\arraystretch}{1.2}
\begin{document}
\newcommand\blfootnote[1]{
\begingroup
\renewcommand\thefootnote{}\footnote{#1}
\addtocounter{footnote}{-1}
\endgroup
}

\newtheorem{corollary}{Corollary}
\newtheorem{definition}{Definition}
\newtheorem{lemma}{Lemma}
\newtheorem{proposition}{Proposition}
\newtheorem{remark}{Remark}
\newtheorem{theorem}{Theorem}
\newtheorem{assumption}{Assumption}
\newtheorem{example}{Example}

\numberwithin{corollary}{section}
\numberwithin{definition}{section}
\numberwithin{equation}{section}
%\numberwithin{lemma}{section}
\numberwithin{proposition}{section}
\numberwithin{remark}{section}
\numberwithin{theorem}{section}
%\numberwithin{assumption}{section}

\allowdisplaybreaks[4]

\begin{titlepage}

\begin{center}
{\Large \textbf{Robust Estimation and Inference for \\High-Dimensional Panel Data Models}\blfootnote{Gao and Peng would like to acknowledge the Australian Research Council (ARC) for its financial support under Grant Number: DP250100063, and Peng would also like to thank the ARC for its financial support under DP210100476.  Liu's research was financially supported by National Natural Science Foundation of China under Grant Number 72203114. Yan acknowledges the financial support by the NSFC under the grant number 72303142 and the Fundamental Research Funds for the Central Universities under grant numbers 2022110877 and 2023110099. The authors contributed equally to this paper and are credited in alphabetical order.

$^*$Department of Econometrics and Business Statistics,  Monash University. %Email: \url{jiti.gao@monash.edu}.

$^\dag$School of Finance, Nankai University. %Email: \url{fei.liu@nankai.edu.cn}.

$^\ddag$School of Statistics and Management, Shanghai University of Finance and Economics. }%Email: \url{yanyayi@mail.shufe.edu.cn}. }
\bigskip}

\bigskip

{\sc Jiti Gao$^{\ast}$} and {\sc Fei Liu$^{\dag}$}  and {\sc Bin Peng$^{\ast}$} and {\sc Yayi Yan$^{\ddag}$}

\bigskip

\today

\end{center}

\begin{abstract}
This paper provides the relevant literature with a complete toolkit for conducting robust estimation and inference about the parameters of interest involved in a high-dimensional panel data framework.  Specifically, (1) we allow for non-Gaussian, serially and cross-sectionally correlated and heteroskedastic error processes, (2) we develop an estimation method for high-dimensional long-run covariance matrix using a thresholded estimator, (3) we also allow for the number of regressors to grow faster than the sample size. 

Methodologically and technically, we develop two Nagaev--types of concentration inequalities: one for a partial sum and the other for a quadratic form, subject to a set of easily verifiable conditions. Leveraging these two inequalities, we derive a non-asymptotic bound for the LASSO estimator, achieve asymptotic normality via the node-wise LASSO regression, and establish a sharp convergence rate for the thresholded heteroskedasticity and autocorrelation consistent (HAC) estimator. 

We demonstrate the practical relevance of these theoretical results by investigating a high-dimensional panel data model with interactive effects. Moreover, we conduct extensive numerical studies using simulated and real data examples.

\medskip
\medskip

\noindent {\it Keywords:}  Asset Pricing, Concentration Inequality, Heavy-Tailed Distribution, High-Dimensional Long-Run Covariance Matrix.
	
\medskip
	
\noindent{\it JEL Classification:} C14, C32

\end{abstract}

\end{titlepage}

\section{Introduction}\label{Sec1}

With the emergence of the data-rich world, numerous applications in many fields, including business and economics, focus on panel data regressions within a high-dimensional framework, where the number of variables can be very large and potentially exceed the sample size. For example, in the realm of asset pricing, academic researchers and financial analysts have put forth over five hundred risk factors and individual firm characteristics (which continue to grow) to explain the cross-sectional relationship of stock return (\citealp{feng2020taming}). Given the necessity for innovative statistical methodologies to dissect such data,  there is an increasing interest in discussing regularized estimation of high-dimensional (HD) panel data models (e.g., \citealp{vogt2022cce,babii2022machine,belloni2023high}). Nonetheless, achieving robust estimation and inference using HD panel data remains a challenge, particularly when idiosyncratic errors present non-Gaussianity, time series autocorrelation (TSA), and cross-sectional dependence (CSD). Our main goal is to enrich the relevant literature (e.g., \citealp{FLY2015, KP2019,gupta2023robust}) with a complete toolkit for conducting robust inference about the parameters of interest involved in HD panel data models associated with complex dependence structure.  

That said, in this paper, (i) we consider HD panel data models in which the idiosyncratic errors  exhibit TSA, CSD, heteroscedasticity, as well as heavy--tailed behaviors; (ii) we debias the least absolute shrinkage and selection operator (LASSO) estimator to achieve asymptotic normality, and explore the estimation of HD long--run covariance matrix using a thresholded estimator; and (iii) we accommodate such scenarios that the number of regressors may increase more rapidly than the sample size. 

To achieve these  goals with the aforementioned data features, we establish two Nagaev--types of concentration inequalities, one for partial sum and the other for quadratic form, subject to a set of easily verifiable conditions. Leveraging these two inequalities, we derive a non-asymptotic bound for the  LASSO estimator, achieve asymptotic normality via the nodewise LASSO regression, and obtain a sharp convergence rate for the thresholded heteroskedasticity and autocorrelation consistent (HAC) estimator. Furthermore, we considerably generalize the main ideas to investigate a class of widely used HD panel data models (e.g., \citealp{vogt2022cce, belloni2023high}) associated with interactive effects.

Up to this point, it is worth commenting on some key references, and outlining our contributions to the relevant literature accordingly. 

\begin{enumerate}[leftmargin=*, itemsep=0.5pt, parsep=0.5pt, topsep=0.6pt]
\item We establish a set of toolkit to analyze HD panel data based on some easily verifiable conditions, which also enriches the literature of HD time series analysis. For instance, the newly developed toolkit allows for the generalization or derivation of restrictions such as Assumption 3.1 in \cite{babii2022machine} and Assumptions 4 and 5 in \cite{gupta2023robust} from a set of fundamental conditions.

\item \cite{vogt2022cce} examine a HD panel data model with interactive effects, and present error bounds for penalized estimators under the cross-sectional independence condition. In a similar vein, \cite{belloni2023high} delve into quantile regression within a comparable framework. In this context, cross-sectional independence remains crucial for establishing the corresponding asymptotic properties, and a valid inference procedure is absent. Building on this line of research, we respectively investigate HD panel data regression with and without a factor structure, and establish inference while accommodating non-Gaussian, serially and cross-sectionally correlated, and heteroskedastic error processes.

\item Under a Gaussian assumption, \cite{baek2023local} establish non-asymptotic error bounds for the estimation of a HD long-run covariance matrix. In their research, the Gaussian condition is employed to derive a concentration inequality for a quadratic form, controlling the maximum deviation of each element in the HD long-run covariance estimator from its true value. Similarly, \cite{BAI2020} explore the use of a thresholded HAC estimator to achieve valid inference for a fixed-dimensional panel data model under a sub-Gaussian assumption, and \cite{gao2023higher} propose a dependent wild bootstrap method to deal with cross--sectional dependence for the fixed--dimensional setting.  Our contribution in this paper involves developing two general Nagaev--types of concentration inequalities for quadratic forms under the context of HD panel data, encompassing heavy-tailed behaviours with temporal and cross-sectional correlation. 

\item  \cite{adamek2023lasso} and \cite{babii2022machine} develop inferential procedures for HD regression via HAC estimators, and they show spectral norm consistent only if the dimensionality of regressors is much smaller than the number of observations. Furthermore, the pooled HAC estimator proposed by \cite{babii2022machine} is not robust in the presence of CSD. Our contribution to this area of study involves proposing a thresholded HAC estimator with a sharp rate, allowing for valid inference in the HD setting where the dimensionality of regressors can be much larger than the number of observations.
\end{enumerate}

The rest of this paper is organized as follows.  Section \ref{Sec2} presents the main results of this paper, and Section \ref{Sec3} showcases their practical relevance by considering a HD panel data model with interactive effects. Section \ref{Sec4} conducts Monte Carlo simulations to examine the theoretical findings. Section \ref{Sec5} consists of the empirical results by studying a HD asset pricing model. Section \ref{Sec6} concludes. Appendix \ref{AP.A1} verifies several assumptions in the paper. A numerical implementation procedure is presented in Appendix \ref{AP.A2}. Many other technical details are given in Appendix \ref{AP.A3}--\ref{AP.A5} and online Appendices B and C.

Before proceeding further, we introduce a set of notations which are repeatedly used throughout. For any $\mathbf{x} \in \mathbb{R}^{n}$, $|\mathbf{x}|_p\coloneqq  (\sum_{i=1}^{n}|x_i|^p)^{1/p}$, in which $x_i$ stands for the $i^{th}$ element of $\mathbf{x}$; for a random vector $\mathbf{v}$, $\|\mathbf{v}\|_p\coloneqq (E|\mathbf{v}|_{p}^p )^{1/p}$; $|\cdot|$ denotes the absolute value of a scalar or the cardinality of a set; for an $m\times n$ matrix $\mathbf{A} = (A_{ij})_{ i\leq m, j\leq n}$, $|\mathbf{A}|_{p}$ denotes the induced $\ell_p$ matrix norm such that  $|\mathbf{A}|_{p} \coloneqq \max_{\mathbf{x} \neq \mathbf{0}}|\mathbf{A}\mathbf{x}|_p/|\mathbf{x}|_p$; $|\mathbf{A}|_{\mathrm{max}} \coloneqq\max_{ i\leq m,j\leq n}|A_{ij}|$ denotes the matrix element-wise max norm; $|\mathbf{A}|_F$ denotes the Frobenius norm; $|\mathbf{A}|_{*} \coloneqq \sum_{k=1}^{\min(m,n)}\psi_{k}(\mathbf{A})$ denotes the nuclear norm, where $\psi_{k}(\cdot)$ is the $k^{th}$ largest singular value of a matrix;  $\psi_{\mathrm{max}}(\mathbf{A})$ and $\psi_{\mathrm{min}}(\mathbf{A})$ denote the largest and smallest singular values of $\mathbf{A}$, respectively; for two conformable matrix $\mathbf{A}$ and $\mathbf{B}$, let $\mathbf{A}\circ \mathbf{B}$ be the Hadamard product; $\mathbf{I}_a$ stands for an $a\times a$ identity matrix; let $\mathbf{1}_a$ stand for a $a\times 1$ vector of ones; let $\to_P$, $\to_D$ and $=_D$ denote convergence in probability, convergence in distribution and equal in distribution, respectively;  for two constants $a$ and $b$, $a \asymp b$ stands for $a=O(b)$ and $b=O(a)$; for a given positive integer $L$, let $[L]=\{1,2,\ldots, L\}$.

\section{Methodology \& Asymptotic Theory}\label{Sec2}

We start this section by introducing a set of HD panel data. In what follows, $i\in [N]$ and $t\in [T]$ always index individuals and time periods respectively, and both $(N,T)$ may diverge. Additionally, let $j\in [d]$ index the variables, and $d$ may diverge along with $(N,T)$. Accordingly, for a given time period $t\in [T]$ and a given variable $j\in [d]$, we denote the following $N\times 1$ vector:
\begin{eqnarray*}
\mathbf{u}_j(\mathcal{F}_t)= [u_{j1}(\mathcal{F}_t),\ldots, u_{jN}(\mathcal{F}_t)]^\top\quad\text{with}\quad \mathcal{F}_{t} = (\pmb{\varepsilon}_t,\pmb{\varepsilon}_{t-1},\ldots),
\end{eqnarray*}
where   $\{\pmb{\varepsilon}_t\}_{t \in \mathbb{Z}}$ are a sequence of independent and identically distributed (i.i.d.) random vectors, and each $u_{ji}(\cdot)$ is a measurable function. Define the coupled version of $\mathcal{F}_t$ in order to measure dependence:
$\mathcal{F}_t^*= (\pmb{\varepsilon}_t, \ldots ,\pmb{\varepsilon}_{1},\pmb{\varepsilon}_{0}^*,\pmb{\varepsilon}_{-1},\ldots)$, where $\pmb{\varepsilon}_{0}^*$ is an i.i.d. copy of $\{\pmb{\varepsilon}_t\}$.  Let us introduce Assumption 1 below.

\begin{assumption}\label{ASSUMPTION1}
Let $E(\mathbf{u}_j(\mathcal{F}_0)) = 0$ for $\forall j\in [d]$. There exist constants $\alpha > 2$ and $q > 2$ such that 
$$\max_{j\in [d]}\sup_{|\mathbf{w}|_2<\infty}\|\mathbf{w}^\top[\mathbf{u}_j (\mathcal{F}_t) -\mathbf{u}_j (\mathcal{F}_t^*)] \|_q = O(t^{-\alpha}).$$
\end{assumption}

Assumption \ref{ASSUMPTION1} generalizes the functional dependence measure of \cite{wu2005nonlinear} to a panel data setting, allowing for a wide range of data generating processes (DGPs). For example, one often chooses $\mathbf{w} = (1,0,\ldots, 0)^\top$ and $\mathbf{w} = (\frac{1}{\sqrt{N}},\ldots, \frac{1}{\sqrt{N}})^\top$ in some settings (e.g., \citealp[Eq. (66)]{Pesaran2006}; \citealp[Assumption C.iii]{bai2009panel}). This condition regulates the decay rate of temporal dependence after taking cross--sectionally weighted averages. By doing so, $u_{ji}(\mathcal{F}_t)$'s are allowed to exhibit cross--sectional dependence of various unknown forms. In Example \ref{Exam1} of Appendix \ref{AP.A1}, we show the flexibility of Assumption \ref{ASSUMPTION1}.  

Assumption \ref{ASSUMPTION1} is sufficient for us to establish two Nagaev--types of sharp concentration inequalities, allowing for correlation over both dimensions and non-Gaussianity. These results will be used to study LASSO and HD long-run covariance estimation respectively, and they are also generally applicable to deal with some other scenarios.

\begin{lemma}\label{LEMMA1}
Let Assumption \ref{ASSUMPTION1} hold. 
\begin{itemize}
\item [1.] There exist constants $C_1>0$, $C_2>0$ and $C_3 > 0$ such that for $\forall j\in [d]$ and any $x \geq c_q\sqrt{NT}$ with some $c_q>0$,
{\small
\begin{eqnarray*}
\Pr\left(\max_{1\leq t\leq T}\left| \sum_{s=1}^{t}\sum_{i=1}^{N}u_{ji}(\mathcal{F}_s)\right|\geq x \right) \leq C_1\frac{TN^{q/2}}{x^q} + C_2 \exp\left( - C_3 \frac{x^2}{TN} \right).
\end{eqnarray*}}

\item [2.] Let $x = c_q N\sqrt{TM\ell \log(d)}$ for some $c_q>0$ and any $M>1$, and $\{a_k\}_{k=-\infty}^{\infty}$ be a sequence of non-negative constants satisfying $a_{|k|} = 0$ for $|k|>\ell$ with $\ell  = O(T^{\gamma})$ and $0<\gamma<1$.  Then for $\gamma < \theta < 1$ and $q > 4$, there exist constants $C_4>0$ and $C_5>0$ such that for $\forall j, j' \in [d]$
{\small
\begin{eqnarray*}
&&\Pr\left(\left|\sum_{t,s=1}^{T}\sum_{k,l=1}^{N} a_{t-s} [u_{jk}(\mathcal{F}_t)u_{j'l}(\mathcal{F}_s)-E(u_{jk}(\mathcal{F}_t)u_{j'l}(\mathcal{F}_s))]\right| \geq x \right)\\
&\leq&   \frac{C_4 \log T}{x^{q/2}} \left(\frac{(T\ell)^{q/4}}{T^{(\alpha -1)\theta q/2}}+T\ell^{q/2-1-(\alpha-1)\theta q/2}+T\right) + \frac{C_5 }{(d\wedge T)^{M}} .
\end{eqnarray*}}
\end{itemize}

\end{lemma}

Notably, $\mu_u$ is guaranteed to be fixed under Assumption \ref{ASSUMPTION1}, and $M$ can be a sufficiently large constant. With this lemma in hand, we are ready to present the benchmark model and investigate its estimation and inference.

\subsection{The Benchmark Model \& Its Estimation}

In this section, we consider the following benchmark model:
\begin{eqnarray}
\mathbf{y}_{t} = \mathbf{X}_{t}\pmb{\beta}_0  + \mathbf{e}_{t},\label{EQUATION1} 
\end{eqnarray}
where $\mathbf{X}_t$ is a $N\times d$ matrix such that
\begin{eqnarray*}
\mathbf{X}_{t}=[\mathbf{g}_1(\mathcal{F}_t),\ldots, \mathbf{g}_d(\mathcal{F}_t)]=[\mathbf{x}_{1t},\ldots, \mathbf{x}_{Nt}]^\top \quad\text{with}\quad\mathbf{x}_{it} = [x_{1,it},\ldots,x_{d,it}]^\top.
\end{eqnarray*} 
Accordingly, $\mathbf{y}_{t}=[y_{1t},\ldots, y_{Nt}]^\top$, $\mathbf{e}_{t}=[e_{1t},\ldots, e_{Nt}]^\top$, and $\pmb{\beta}_0=[\beta_{0,1},\ldots,\beta_{0,d}]^\top$. Moreover, we let $J\coloneqq\{ j\in [d]\mid \beta_{0,j} \neq 0\}$ and $J^c \coloneqq [d] \setminus J$, and accordingly denote $s\coloneqq |J| $ and $\beta_{\min}\coloneqq \min_{j\in J} |\beta_{0,j}|$. 

Model \eqref{EQUATION1} has a simple form for us to understand how dependence over both dimensions influences typical HD analysis, such as \cite{babii2022machine} and  \cite{baek2023local}. It is straightforward to allow addictive fixed effects by using the usual demean procedure. In Section \ref{Sec3}, we extend the main ideas to discuss a class of HD panel data models associated with interactive fixed effects, which as mentioned in the introduction has drawn attentions (e.g., \citealp{vogt2022cce,belloni2023high}, and references therein) in recent years.

Before proceeding further, we introduce some suitable regularity conditions on the regressors and errors permitting non-Gaussianity, TSA, CSD, and heteroscedasticity.  

\begin{assumption}\label{ASSUMPTION2}
\item
\begin{enumerate}[leftmargin=*, itemsep=0.5pt, parsep=0.5pt, topsep=0.6pt]

\item Let $\mathbf{g}_{j}(\mathcal{F}_t)\circ\mathbf{e}_{t} \eqqcolon \mathbf{u}_j(\mathcal{F}_t)$ satisfy
$$\max_{j,j' \in [d]} \sup_{|\mathbf{w}|_2<\infty}\|\mathbf{w}^\top[\mathbf{g}_j(\mathcal{F}_t) \circ \mathbf{g}_{j'}(\mathcal{F}_t) - \mathbf{g}_j(\mathcal{F}_t^*) \circ \mathbf{g}_{j'}(\mathcal{F}_t^*)]\|_{q} = O(t^{-\alpha}).$$

\item Let $\mathbb{V}\coloneqq\{\mathbf{v}\in \mathbb{R}^{d} \mid \mathbf{v}\ne \mathbf{0}, |\mathbf{v}_{J^c}|_1 \leq 3|\mathbf{v}_J|_1\}$, where $\mathbf{v}_J$ and $\mathbf{v}_{J^c}$ include elements of $\mathbf{v}$ indexed by $J$ and $J^c$ respectively. Suppose that $\psi_{\pmb{\Sigma}_x}(J) = \min_{ \mathbf{v}\in \mathbb{V}} \frac{\mathbf{v}^\top\pmb{\Sigma}_x \mathbf{v}}{\mathbf{v}^\top\mathbf{v}} > 0$, where $\pmb{\Sigma}_{x} = \frac{1}{N}\sum_{i=1}^{N}E(\mathbf{x}_{it}\mathbf{x}_{it}^\top)$.
\end{enumerate}
\end{assumption}

Note that Assumption \ref{ASSUMPTION2}.1 is similar to Assumption \ref{ASSUMPTION1}. Assumption \ref{ASSUMPTION2}.2 formulates the compatibility condition, and enriches the literature by regulating the singular values of the second moment of $\mathbf{x}_{it}$. Typically, the literature imposes such a condition on a sample covariance matrix (e.g., Assumption A2 in \citealp{chernozhukov2021lasso}). 

Note also that Assumption \ref{ASSUMPTION2} allows that the elements of $\mathbf{x}_{it}$ can be both serially dependent on the time--series dimension and cross-sectionally dependent on the cross--sectional dimension. While we do not consider the case where there are dynamic features involved in our panel data, it is possible to modify our method and theory to allow for strong cross--sectional dependence and lagged dependent variables in $\mathbf{x}_{it}$ under strict exogeneity conditions on the errors. We plan to leave such topics for future research. 

We now come back to investigate the model \eqref{EQUATION1}. The first two steps are about penalized estimation and variable selection. 

\smallskip

\hrule \smallskip
\begin{enumerate}[leftmargin=*, itemsep=0.5pt, parsep=0.5pt, topsep=0.6pt]
\item[] \textbf{Step 1}: Estimate $\pmb{\beta}_0$ via LASSO:
\begin{equation}\label{EQUATION2}
\widehat{\pmb{\beta}} = \argmin_{\pmb{\beta}\in\mathbb{R}^{d}} \frac{1}{2NT}|\mathbf{y} - \mathbf{X}\pmb{\beta}|_2^2 + \omega_1 |\pmb{\beta}|_{1},
\end{equation}
where $\mathbf{y}= [\mathbf{y}_1^\top,\ldots,\mathbf{y}_T^\top]^\top$, $\mathbf{X} = [\mathbf{X}_1^\top,\ldots,\mathbf{X}_T^\top]^\top$, and $\omega_1$ is a tuning parameter at the order $\omega_1\asymp \sqrt{\log (d)/(NT)}$.

\item[] \textbf{Step 2}: Update the estimate via weighted LASSO:
\begin{equation}\label{EQUATION3}
\widehat{\pmb{\beta}}_{\omega} = \argmin_{\pmb{\beta}\in\mathbb{R}^{d}} \frac{1}{2NT}|\mathbf{y} - \mathbf{X}\pmb{\beta}|_2^2 + \omega_1 \sum_{j=1}^ {d}g_{j}|\beta_j|,
\end{equation}
where $\{g_{j}\}$ are a sequence of predetermined weights and may depend on $\widehat{\pmb{\beta}}$.
\end{enumerate}
\smallskip
\hrule 

\smallskip

Given $g_j = \frac{1}{|\widehat{\beta}_j|}$ or $g_j = \frac{\omega^*}{\max(|\widehat{\beta}_j|,\ \omega^*)}$ with a threshold $\omega^*$, model \eqref{EQUATION3} becomes either an adaptive LASSO (\citealp{zou2006adaptive}) or a conservative LASSO (\citealp{caner2018asymptotically}). In connection with Lemma \ref{LEMMA1}, we are now able to establish the following results.

\begin{lemma}\label{LEMMA2}
Suppose that Assumptions \ref{ASSUMPTION1} and \ref{ASSUMPTION2} hold. Let $C_0,C_1,C_2$ be positive constants such that $s\le \frac{\psi_{\pmb{\Sigma}_x}(J) }{2C_0}\sqrt{\frac{NT}{\log d}}$ and $1 - C_1\Big(\frac{d T^{1-q/2}}{(\log d)^{q/2}} +  d^{-C_2}\Big)\eqqcolon C_{NT}>0$. Then the following two results hold:
\begin{enumerate}[leftmargin=*, itemsep=0.5pt, parsep=0.5pt, topsep=0.6pt]
\item $|\widehat{\pmb{\beta}}-\pmb{\beta}_0|_2 \leq \frac{8\sqrt{s}\ \omega_1 }{\psi_{\pmb{\Sigma}_x}(J)}$ and $|\widehat{\pmb{\beta}}-\pmb{\beta}_0|_1 \leq \frac{32s\ \omega_1 }{\psi_{\pmb{\Sigma}_x}(J) }$
with probability larger than $C_{NT}$.

\item Suppose further that $s\leq \frac{\psi_{\mathrm{min}}(\pmb{\Sigma}_{x}) }{2C_0}\sqrt{\frac{NT}{\log d}}$, $\beta_{\min} > \frac{\sqrt{s}\omega_1}{\psi_{\pmb{\Sigma}_x}(J)} (1 + 2\max_{j\in J}|g_j|)$ and $\min_{j\in J^c}|g_j| \geq  (\frac{2s|\pmb{\Sigma}_x|_{\mathrm{max}}}{\psi_{\pmb{\Sigma}_x}(J)} + \frac{\psi_{\pmb{\Sigma}_x}(J)}{16} ) (\frac{1}{2}+\max_{j\in J}|g_j| )$, where $s=|J|$ is the same as defined in equation (\ref{EQUATION1}). Then $\sgn(\widehat{\pmb{\beta}}_{w}) = \sgn(\pmb{\beta}_0)$
with probability larger than $C_{NT}$.
\end{enumerate}
\end{lemma}

Under the condition:
\begin{eqnarray}\label{EQUATION4}
\frac{dT^{1-q/2}}{(\log d)^{q/2}}\to 0,
\end{eqnarray}
\noindent Lemma \ref{LEMMA2} presents the non-asymptotic error bounds of Step 1, and implies consistent variable selection of Step 2. Notably, Lemma \ref{LEMMA2} does not require specific tail behavior, and model \eqref{EQUATION4} infers that $d$ may grow at a polynomial order of the sample size. 

If we impose further, for example, an exponential tail assumption, $d$ can even diverge at an exponential rate as in \cite{van2014asymptotically}, wherein an i.i.d. assumption and an exponential tail are adopted to facilitate the development. There is a trade-off between the tail behavior of the error components and the  divergence rate of $d$.

\subsection{Estimation and Inference}\label{Sec2.2}

To proceed, we start to approximate the following two matrices:

\begin{enumerate}[leftmargin=*, itemsep=0.5pt, parsep=0.5pt, topsep=0.6pt]
\item the inverse of $\mathbf{X}^\top\mathbf{X}/(NT)$,

\item $\frac{1}{NT}\sum_{i, j=1}^{N}\sum_{t, s=1}^{T}E [\mathbf{x}_{it}\mathbf{x}_{js}^\top e_{it} e_{js} ]\eqqcolon \pmb{\Theta}$.
\end{enumerate}

The second one is a HD long-run covariance matrix permitting heavy-tailed errors. In what follows, we approximate these two matrices using node-wise LASSO and thresholded HAC covariance estimation.

\smallskip

We further introduce some notation. For $\forall j \in [d]$, let $\mathbf{X}_j$ and $\mathbf{X}_{-j}$ respectively be the $j^{th}$ column of $\mathbf{X}$ and the sub-matrix of $\mathbf{X}$ with the $j^{th}$ column removed. Accordingly, we let
\begin{eqnarray*}
\pmb{\gamma}_{j} = [ E(\mathbf{X}_{-j}^\top\mathbf{X}_{-j})]^{-1}
E(\mathbf{X}_{-j}^\top\mathbf{X}_{j}) \eqqcolon \pmb{\Sigma}_{x,-j,-j}^{-1}\pmb{\Sigma}_{x,-j,j}\quad\text{and}\quad
\pmb{\eta}_j = \mathbf{X}_{j} - \mathbf{X}_{-j} \pmb{\gamma}_j,
\end{eqnarray*}
and it is easy to check that $|\pmb{\gamma}_{j}|_2 <\infty$ and $E(\mathbf{X}_{-j}^\top\pmb{\eta}_j) = \mathbf{0}$. We then run the following node-wise LASSO regression to debias the LASSO estimate and perform inference:
\begin{equation}\label{EQUATION5}
\widehat{\pmb{\gamma}}_{j} = \argmin_{\mathbf{b}\in\mathbb{R}^{d-1}}\frac{1}{NT}|\mathbf{X}_{j} - \mathbf{X}_{-j}\mathbf{b}|_2^2 + 2\widetilde{\omega}_j|\mathbf{b}|_1,
\end{equation}
where $\{\widetilde{\omega}_j\}_{j=1}^{d}$ are a sequence of penalty terms at the order $\widetilde{\omega}_j \asymp \sqrt{\log (d)/(NT)}$. Finally, we propose the debiased LASSO estimator in Step 3 below.

\smallskip

\hrule \smallskip
\begin{enumerate}[leftmargin=*, itemsep=0.5pt, parsep=0.5pt, topsep=0.6pt]
\item[] \textbf{Step 3}: The debiased LASSO estimator is 
\begin{eqnarray}\label{EQUATION6}
\widehat{\pmb{\beta}}_{bc} = \widehat{\pmb{\beta}} + \widehat{\pmb{\Omega}}_x\frac{\mathbf{X}^\top(\mathbf{y}-\mathbf{X}\widehat{\pmb{\beta}})}{NT},
\end{eqnarray}
where {\small $\widehat{\pmb{\Omega}}_x = \widehat{\mathbf{T}}^{-1}\widehat{\mathbf{C}}$, $\widehat{\mathbf{T}} = \mathrm{diag}(\widehat{\tau}_1^2,\ldots,\widehat{\tau}_{d}^2)$ with  $\widehat{\tau}_j^2 = \frac{1}{NT}|\mathbf{X}_{j} - \mathbf{X}_{-j}\widehat{\pmb{\gamma}}_{j}|_2^2 + \widetilde{\omega}_j|\widehat{\pmb{\gamma}}_{j}|_1$, and $\widehat{\mathbf{C}} = (\widehat{c}_{j,k})_{d\times d}$ with $\widehat{c}_{j,k} =-\widehat{\gamma}_{j,k} \mathbb{I}(j\ne k)+\mathbb{I}(j=k)$, and $\widehat{\gamma}_{j,k}$ being the $k^{th}$ element of $\widehat{\pmb{\gamma}}_{j}$}.
\end{enumerate}
\smallskip
\hrule 

\smallskip

Here, $\widehat{\pmb{\Omega}}_x$ serves as an asymptotic approximation to the inverse of $\mathbf{X}^\top\mathbf{X}/(NT)$. To study Step 3, we impose the following conditions.

\begin{assumption}\label{ASSUMPTION3}
\item Let $\pmb{\Sigma}_x^{-1} \eqqcolon \pmb{\Omega}_x =(\Omega_{x,j,k} )_{d\times d} $. Suppose that 
\begin{enumerate}[leftmargin=*, itemsep=0.5pt, parsep=0.5pt, topsep=0.6pt]
\item $(s+ \max_{ j \in [d]}s_j )\log (d)/\sqrt{NT} \to 0$, where $s_j = |\{k\neq j\mid \Omega_{x,j,k} \neq 0 \}|$ denotes the sparsity with respect to the rows of $\pmb{\Omega}_x$, and $s=|J|$ is the same as defined in equation (\ref{EQUATION1}).

\item $\max_j|\pmb{\gamma}_j|_1<\infty$, $\psi_{\mathrm{min}}(\pmb{\Sigma}_x)>0$, and $\psi_{\mathrm{min}}(\pmb{\Theta})>0$.
\end{enumerate}
\end{assumption}

Assumption 3.1 basically requires $\frac{s\, \log(d)}{\sqrt{NT}}\rightarrow 0$, while Assumption 3.2 imposes some conditions on two population matrices. Using them, we establish Theorem \ref{THEOREM1}, which applies to all elements of $\pmb{\beta}_0$, including those 0's, after choosing certain $\rho$.

\begin{theorem}\label{THEOREM1}
Let \eqref{EQUATION4} and Assumptions \ref{ASSUMPTION1}-\ref{ASSUMPTION3} hold.  Let $\pmb{\rho}=[\rho_1,\ldots, \rho_d]^\top$ be a $d \times 1$ vector such that $|H| < \infty$, where $H = \{j \in [d]\mid \rho_j \neq 0\}$. As $(N,T) \to (\infty,\infty)$, 
\be
\sqrt{NT}(\pmb{\rho}^\top\pmb{\Omega}_x\pmb{\Theta}\pmb{\Omega}_x\pmb{\rho})^{-1/2}\pmb{\rho}^\top (\widehat{\pmb{\beta}}_{bc} - \pmb{\beta}_0 ) \to_D N\left(0,1\right).
\label{jitigao1}
\ee
\end{theorem}

Before we make the asymptotic normality in (\ref{jitigao1}) feasible for inferential purposes, we propose to estimate $\pmb{\Theta}$, and then develop a thresholded HAC covariance estimator. 

\smallskip

\hrule \smallskip
\begin{enumerate}[leftmargin=*, itemsep=0.5pt, parsep=0.5pt, topsep=0.6pt]
\item[] \textbf{Step 4}: Define the  estimator of $\pmb{\Theta}$ by
\begin{equation}\label{EQUATION7}
T_u(\widehat{\pmb{\Theta}}_{\ell}) = \left(\widehat{\Theta}_{\ell,kl} \mathbb{I}(|\widehat{\Theta}_{\ell,kl}|\geq u) \right)_{ k,l\leq d},
\end{equation}
where $\widehat{\pmb{\Theta}}_{\ell} \coloneqq  \frac{1}{NT}\sum_{i,j=1}^{T}\sum_{s,t=1}^{T}a((t-s)/\ell)\mathbf{x}_{it}\mathbf{x}_{js}^\top \widehat{e}_{it} \widehat{e}_{js}$, $\widehat{e}_{it} = y_{it} - \mathbf{x}_{it}^\top \widehat{\pmb{\beta}}$, and $u$ is the threshold parameter at the order $u \asymp  \sqrt{\ell \log (d) / T}$, in which $a(\cdot)$ and $\widehat{\Theta}_{\ell,kl}$ are assumed to satisfy Assumption 4 below.
\end{enumerate}
\smallskip
\hrule

\begin{assumption}\label{ASSUMPTION4}
\item
\begin{enumerate}[leftmargin=*, itemsep=0.5pt, parsep=0.5pt, topsep=0.6pt]
\item $a(\cdot)$ is a symmetric and Lipschitz continuous function defined on $[-1,1]$, $a(0)=1$ and $\lim_{|x|\to 0 } \frac{1-a(x)}{|x|^{q_a}} = C_{q_a}$ for $q_a \in \{1,2\}$ and  $0 < C_{q_a} < \infty$. Additionally, $\ell \to \infty$ and $\ell \log (d)/T \to 0$.

\item Let $|\pmb{\Gamma}_t|_2 = O(t^{-(q_a+\epsilon)})$ for some $\epsilon >1$, where $\pmb{\Gamma}_t \coloneqq E(\frac{1}{N}\sum_{i,j=1}^{N}\mathbf{x}_{i0}\mathbf{x}_{jt}^\top e_{i0}e_{jt})$. 

\item Let $\max_{k\in [d]}\sum_{l=1}^{d}|\Theta_{\ell,kl}^{p}| \leq C(d)$ for some $0\leq p < 1$, where $$\pmb{\Theta}_{\ell} = \left(\Theta_{\ell,kl}\right)_{k,l\leq d}\coloneqq \frac{1}{NT}\sum_{i, j=1}^{N}\sum_{t, s=1}^{T}a((t-s)/\ell)E [\mathbf{x}_{it}\mathbf{x}_{js}^\top e_{it} e_{js}].$$
\end{enumerate}
\end{assumption}

The conditions on $a(\cdot )$ of Assumption \ref{ASSUMPTION4}.1 are satisfied by a number of commonly used kernels, such as the Bartlett and Parzen kernels. The last two conditions of Assumption \ref{ASSUMPTION4}.1 ensure the spectral norm consistency of our thresholded HAC covariance matrix estimator. On top of Assumption \ref{ASSUMPTION1}, Assumption \ref{ASSUMPTION4}.2 further requires an algebraic decay rate of the auto-covariance matrix, which is used to derive the order of a bias term involved in truncating the long-run covariance matrix. Assumption \ref{ASSUMPTION4}.3 controls the order of elements in $\pmb{\Theta}_{\ell}$, and allows for the presence of many ``small'' but nonzero elements. 

\begin{theorem}\label{THEOREM2}
Let Assumption \ref{ASSUMPTION4} and the conditions of Theorem \ref{THEOREM1} hold with $q>4$. Additionally, let $E(e_{it} \mid \mathbf{X}) = 0$,  $\lim\sup_{T\rightarrow \infty} s^2\sqrt{\ell \log (d)/T}<\infty$ and $\frac{d^3T\log T}{(T\ell \log d)^{q/4} }\to 0$.  Then we have
$$ |T_u(\widehat{\pmb{\Theta}}_{\ell}) - \pmb{\Theta} |_2 = O(\ell^{-q_a}) + O_P\left((\ell \log d / T)^{(1-p)/2}C(d) \right).$$
\end{theorem}

Note that the technical conditions (i.e., $E(e_{it} \mid \mathbf{X}) = 0$,  $\lim\sup_{T\rightarrow \infty} s^2\sqrt{\ell \log (d)/T}<\infty$ and $\frac{d^3T\log T}{(T\ell \log d)^{q/4} }\to 0$) assumed in Theorem \ref{THEOREM2} are only used to ensure that the estimation errors in $\{\widehat{e}_{it}\}$ are negligible. Theorem \ref{THEOREM2} immediately infers that
\be
\sqrt{NT}(\pmb{\rho}^\top\widehat{\pmb{\Omega}}_xT_u(\widehat{\pmb{\Theta}}_{\ell})\widehat{\pmb{\Omega}}_x\pmb{\rho})^{-1/2}\pmb{\rho}^\top (\widehat{\pmb{\beta}}_{bc} - \pmb{\beta}_0 ) \to_D N\left(0,1\right).
\nonumber
\ee

Up to this point, our investigation about model \eqref{EQUATION1} is completed. Again, the above investigation considerably extends and enriches the relevant literature by allowing for the idiosyncratic errors to exhibit TSA, CSD, heteroskedasticity, as well as heavy-tailed behavior.

\section{Extension to HD Panel Data Models with Interactive  Effects}\label{Sec3}

In this section, we generalize the above investigation for model \eqref{EQUATION1} to a class of widely used interactive fixed--effect models of the form:
\begin{eqnarray}\label{EQUATION8}
y_{it} &=& \mathbf{x}_{it}^\top \pmb{\beta}_0 + \pmb{\lambda}_{0i}^\top \mathbf{f}_{0t}  + e_{it},
\end{eqnarray}
which admits the following matrix representation:
\begin{equation}\label{EQUATION9}
\mathbf{y} = \mathbf{X} \pmb{\beta}_0 + \mathrm{vec}\left(\pmb{\Xi}_0\right)  + \mathbf{e},
\end{equation}
where $\pmb{\Xi}_0 = \mathbf{F}_0\pmb{\Lambda}_0^\top$, $\mathbf{F}_0 = [\mathbf{f}_{01},\ldots,\mathbf{f}_{0T}]^\top$, $\pmb{\Lambda}_0 = [\pmb{\lambda}_{01},\ldots,\pmb{\lambda}_{0N}]^\top$, and $\mathbf{e}$ is defined accordingly. Note that $\pmb{\Xi}_0$ admits the singular value decomposition (SVD): $$\pmb{\Xi}_0 = \mathbf{U}_0\mathbf{D}_0\mathbf{V}_0^\top,$$ where $\mathbf{U}_0$ and $\mathbf{V}_0$ are $T\times T$ and $N\times N$ respectively. Let $\mathbf{U}_{0,[r]} \in \mathbb{R}^{T\times r} $ and $\mathbf{V}_{0,[r]} \in \mathbb{R}^{N\times r}$ be the sub-matrices of singular vectors associated with the largest $r$ singular values of $\pmb{\Xi}_0$, where $r$ is assumed to be fixed and finite throughout the rest of this paper.

The latent factor structure induces a nonconvex regression problem, and we propose to deal with it via a nuclear norm constraint (e.g., \citealp{moon2018nuclear,belloni2023high}). We again propose a multiple--step procedure to investigate \eqref{EQUATION9}: (i) In Step 1, we use an $\ell_1$-nuclear norm penalized estimation method to obtain consistent estimator, which however suffers from substantial shrinkage bias (see Proposition \ref{Thm5} for details); and (ii) In Step 2, we propose an iterative estimation method that removes the shrinkage bias in order to obtain robust estimation and inference.

\smallskip

\hrule\smallskip
\begin{enumerate}[leftmargin=*, itemsep=0.5pt, parsep=0.5pt, topsep=0.6pt]
\item[] \textbf{Step 1}: We estimate $\pmb{\beta}_0$ and $\pmb{\Xi}_0$ by
\begin{equation}\label{EQUATION10}
(\widetilde{\pmb{\beta}}, \widetilde{\pmb{\Xi}}) = \argmin_{\pmb{\beta},\ \pmb{\Xi}} \frac{1}{2NT}|\mathbf{y} - \mathbf{X}\pmb{\beta} - \mathrm{vec}\left(\pmb{\Xi}\right)|_2^2 + \omega_1 |\pmb{\beta}|_1 + \frac{\omega_2}{\sqrt{NT}}|\pmb{\Xi}|_{*},
\end{equation}
where $\widetilde{\pmb{\beta}}=[\widetilde{\beta}_1,\ldots,\widetilde{\beta}_d]^\top$, and two tuning parameters are at order $\omega_1\asymp \sqrt{\log(d)/NT}$ and $\omega_2\asymp\max(1/\sqrt{N},1/\sqrt{T})$.  

We then estimate $r$ by $\widehat{r} = \sum_{k=1}^{\min\{T,N\}}\mathbb{I} ( \psi_k(\widetilde{\pmb{\Xi}}) \geq (\omega_2\sqrt{NT}|\widetilde{\pmb{\Xi}}|_2)^{1/2} ),$ and obtain a preliminary estimator of $\pmb{\Lambda}_0$ by $
\widetilde{\pmb{\Lambda}} = \sqrt{N}\widetilde{\mathbf{V}}_{[\widehat{r}]}$, where $\psi_k(A)$ denotes the $k$--th largest eigenvalue of matrix $A$, $\widetilde{\mathbf{V}}_{[\widehat{r}]}$ includes the singular vectors associated with the largest $\widehat{r}$ singular values of $\widetilde{\mathbf{V}}$, and $\widetilde{\mathbf{V}}$ is obtained from the SVD decomposition $\widetilde{\pmb{\Xi}} = \widetilde{\mathbf{U}}\widetilde{\mathbf{D}}\widetilde{\mathbf{V}}^\top$.

\item[] \textbf{Step 2}: Set $\widehat{\pmb{\Lambda}}^{(0)} = \widetilde{\pmb{\Lambda}}$. For $l\geq 1$, we update the estimators by weighted LASSO:
\begin{equation*}
(\widehat{\pmb{\beta}}^{(l)}, \widehat{\mathbf{F}}^{(l)}) = \argmin_{\pmb{\beta},\ \mathbf{F}} \frac{1}{2NT} |\mathbf{y} - \mathbf{X} \pmb{\beta} - (\widehat{\pmb{\Lambda}}^{(l-1)} \otimes \mathbf{I}_T)\mathrm{vec}\left(\mathbf{F}\right) |_2^2 + \omega_3\sum_{j=1}^{d}g_j|\beta_j|,
\end{equation*}
where $g_j = \mathbb{I}(|\widetilde{\beta}_j| < \omega_3)$ for some $\omega_3>0$, and 
$\widehat{\pmb{\Lambda}}^{(l)}$ corresponds to the first $\widehat{r}$ eigenvalues of $\frac{1}{NT}\sum_{t=1}^{T}(\mathbf{y}_t-\mathbf{X}_t\widehat{\pmb{\beta}}^{(l)})(\mathbf{y}_t-\mathbf{X}_t\widehat{\pmb{\beta}}^{(l)})^\top$.  Iterate the above procedure until numerical convergence, and denote the final estimators by $\widehat{\pmb{\beta}}$, $\widehat{\mathbf{F}}$ and $\widehat{\pmb{\Lambda}}$.
\end{enumerate}
\smallskip
\hrule

\smallskip

To proceed, we need to introduce the following assumptions. 

\begin{assumption}\label{ASSUMPTION5}
\item
\begin{enumerate}[leftmargin=*, itemsep=0.5pt, parsep=0.5pt, topsep=0.6pt]

\item Recall $(\omega_1, \omega_2)$ of (\ref{EQUATION10}). \, Let $|\mathbf{E}|_2/\sqrt{NT}\leq \omega_2/2$, where  $\mathbf{E} = [\mathbf{e}_1,\ldots,\mathbf{e}_T]^\top$.

\item Define
\begin{eqnarray*}
\mathbb{C} \coloneqq \Big\{\pmb{\beta}\in\mathbb{R}^{d},\pmb{\Xi}\in\mathbb{R}^{T\times N} \mid && \omega_1 |\pmb{\beta}_{J^c}|_1 +  \frac{\omega_2}{\sqrt{NT}} |\mathbf{M}_{\mathbf{U}_{0,[r]}}\pmb{\Xi}\mathbf{M}_{\mathbf{V}_{0,[r]}} |_{*}  \\
&& \leq 3\omega_1 |\pmb{\beta}_J|_1+3\frac{\omega_2}{\sqrt{NT}} |\pmb{\Xi} - \mathbf{M}_{\mathbf{U}_{0,[r]}}\pmb{\Xi}\mathbf{M}_{\mathbf{V}_{0,[r]}} |_* \Big\}.
\end{eqnarray*}
For $\forall (\pmb{\beta},\pmb{\Xi}) \in \mathbb{C}$, there exists a constant $\kappa_c>0$ such that
$$
\frac{1}{NT}|\mathbf{X}\pmb{\beta}+\mathrm{vec}(\pmb{\Xi})|_2^2\geq \kappa_c|\pmb{\beta}|_2^2 + \kappa_c\frac{1}{NT}|\mathrm{vec}(\pmb{\Xi})|_2^2.
$$

\item Let $|\mathbf{F}_0^\top\mathbf{F}_0/T - \pmb{\Sigma}_{f}|_F = O_P(1/\sqrt{T})$ and $|\pmb{\Lambda}_0^\top\pmb{\Lambda}_0/N - \pmb{\Sigma}_{\lambda}|_F = O_P(1/\sqrt{N})$ for some positive definite matrices $\pmb{\Sigma}_{f}$ and $\pmb{\Sigma}_{\lambda}$. Suppose that there exist constants $n_1>\cdots>n_r>0$ such that $n_j$ equals the $j^{th}$ largest eigenvalue of $\pmb{\Sigma}_{\lambda}^{1/2}\pmb{\Sigma}_{f}\pmb{\Sigma}_{\lambda}^{1/2}$.
\end{enumerate}
\end{assumption}

Assumption \ref{ASSUMPTION5}.1 requires the idiosyncratic error matrix to have an operator norm of order $\max(\sqrt{N},\sqrt{T})$, and nests a class of high--dimensional MA($\infty$) processes as special cases (e.g., see Example \ref{Exam1} of Appendix \ref{AP.A1} for details). Assumption \ref{ASSUMPTION5}.2 is often referred to as the restricted strong convexity condition in the literature (e.g., \citealp{negahban2011estimation,miao2023high}). The first part of Assumption \ref{ASSUMPTION5}.3 is commonly used in the relevant literature, and the second part of Assumption \ref{ASSUMPTION5}.3 requires that the eigenvalues of $\pmb{\Sigma}_{\lambda}^{1/2}\pmb{\Sigma}_{f}\pmb{\Sigma}_{\lambda}^{1/2}$ are distinct in order to identify the corresponding eigenvectors.

\begin{assumption}\label{ASSUMPTION6}
\item
\begin{enumerate}[leftmargin=*, itemsep=0.5pt, parsep=0.5pt, topsep=0.6pt]
\item Let $(\omega_1, \omega_2)$ of (\ref{EQUATION10}) satisfy $\frac{\sqrt{s}\omega_3}{\beta_{\min}} \to 0$, $\frac{\max(\sqrt{s}\omega_1,\omega_2)}{\omega_3}\to0$, and $s \cdot \max(\sqrt{s}\omega_1, \omega_2) \to 0$ as $(N,T)\rightarrow (\infty, \infty)$, where $s=|J|$ is the same as defined in equation (\ref{EQUATION1}).

\item Let  $ \psi_{\mathrm{min}}\left(\pmb{\Sigma}_{J}\right) > 0$, where $\pmb{\Sigma}_{J} \coloneqq \plim\mathbf{D}(\pmb{\Lambda}_0)$, $\mathbf{D}(\pmb{\Lambda}_0) = \frac{\sum_{t=1}^{T}\widetilde{\mathbf{X}}_{J,t}^\top\mathbf{M}_{\pmb{\Lambda}_0}\widetilde{\mathbf{X}}_{J,t}}{NT}$, 
$\widetilde{\mathbf{X}}_{J,t} = \mathbf{X}_{J,t} -  \frac{\sum_{s=1}^{T}a_{st}\mathbf{X}_{J,s}}{T}$, and $a_{st} = \mathbf{f}_{0t}^\top(\frac{\mathbf{F}_0^\top\mathbf{F}_0}{T})\mathbf{f}_{0s}$.
\end{enumerate}
\end{assumption}

The first condition of Assumption \ref{ASSUMPTION6}.1 requires that the non-zero elements of $\pmb{\beta}_{0}$ are not too small. The second condition of Assumption \ref{ASSUMPTION6}.1, together with the first condition, ensures the consistency of variable selection. The third condition of Assumption \ref{ASSUMPTION6}.1 imposes an extra restriction on the diverging rate of the number of nonzero elements in $\pmb{\beta}_0$, which is used to validate the compatibility condition. Assumption \ref{ASSUMPTION6}.2 requires the matrix to be positive definite, and is standard.

The following proposition establishes the consistency of model selection and an asymptotic distribution.

{\small

\begin{proposition}\label{PROPOSITION1}
Let \eqref{EQUATION4} and Assumptions \ref{ASSUMPTION1}, \ref{ASSUMPTION2}, \ref{ASSUMPTION5} and \ref{ASSUMPTION6} hold.  As $(N,T) \to (\infty,\infty)$, we have the following results:
\begin{enumerate}[leftmargin=*, itemsep=0.5pt, parsep=0.5pt, topsep=0.6pt]
\item $\Pr(\widehat{r} = r) \to 1$ and $\Pr (\sgn(\widehat{\pmb{\beta}}^{(l)}) = \sgn(\pmb{\beta}_0) ) \to 1$ for any $l\geq 1$.

\item Suppose that $\mathbf{E}$ is independent of $\mathbf{X}$, $\pmb{\Lambda}_0$ and $\mathbf{F}_0$, and  satisfies the conditions of Example \ref{Exam1}.3 of Appendix \ref{AP.A1}. In addition, let $\max_{j \in J} E(x_{j,it}^4)<\infty$, $E|\mathbf{f}_{0t}|_2^4 < \infty$, $E|\pmb{\lambda}_{0i}|_2^4<\infty$, $N/T \to \alpha$ with $\alpha$ being a positive constant, and $s^{3/2}\max(1/\sqrt{N},1/\sqrt{T})\to 0$. Conditioning on the event $\{\sgn(\widehat{\pmb{\beta}}) = \sgn(\pmb{\beta}_0)\}$, we have 
$$
\sqrt{NT}\pmb{\rho}^\top(\widehat{\pmb{\beta}}_{J}-\pmb{\beta}_{0,J})\to_D N\left(\alpha^{1/2}\pmb{\rho}^\top\pmb{\mu}_{\xi} + \alpha^{-1/2}\pmb{\rho}^\top\pmb{\mu}_{\zeta} ,\pmb{\rho}^\top\pmb{\Sigma}_{J}^{-1}\pmb{\Theta}_{J}\pmb{\Sigma}_{J}^{-1}\pmb{\rho}\right),
$$
where {\small $\pmb{\Omega}_e = E(\mathbf{e}_t\mathbf{e}_t^\top)$,  $\pmb{\Theta}_{J} \coloneqq \lim \mathrm{Var} (\frac{1}{\sqrt{NT}}\sum_{t=1}^{T}\widetilde{\mathbf{X}}_{J,t}^\top\mathbf{M}_{\pmb{\Lambda}_0}\mathbf{e}_{t} )$ $\pmb{\mu}_{\xi} \coloneqq \plim \pmb{\xi}$, $\pmb{\mu}_{\zeta} \coloneqq \plim \pmb{\zeta}$, and 
\begin{eqnarray*}
\pmb{\xi} &=& - \mathbf{D}^{-1}(\pmb{\Lambda}_0) \frac{1}{NT}\sum_{t,s=1}^{T}\frac{\widetilde{\mathbf{X}}_{J,t}^\top\pmb{\Lambda}_0}{N} (\pmb{\Lambda}_0^\top\pmb{\Lambda}_0/N )^{-1} (\mathbf{F}_0^\top\mathbf{F}_0/T)^{-1}\mathbf{f}_{0s} \sum_{i=1}^{N}E(e_{it}e_{is}) , \\
\pmb{\zeta} &=& - \mathbf{D}^{-1}(\pmb{\Lambda}_0) \frac{1}{NT}\sum_{t=1}^{T}\mathbf{X}_{J,t}^\top\mathbf{M}_{\pmb{\Lambda}_0}\pmb{\Omega}_e\pmb{\Lambda}_0(\pmb{\Lambda}_0^\top\pmb{\Lambda}_0/N)^{-1}(\mathbf{F}_0^\top\mathbf{F}_0/T)^{-1}\mathbf{f}_{0t}.
\end{eqnarray*}}
\end{enumerate}
\end{proposition}
}

Finally, we deal with the two biases and establish valid inference in Step 3.

\smallskip

\hrule\smallskip
\begin{enumerate}[leftmargin=*, itemsep=0.5pt, parsep=0.5pt, topsep=0.6pt]
\item[] \textbf{Step 3}: We defined the bias corrected estimator as follows:
\begin{eqnarray*}
&&\widehat{\pmb{\beta}}_{J, \text{bc}} = \widetilde{\pmb{\beta}}_{J, \text{bc}} - \frac{1}{N}\widehat{\pmb{\mu}}_{\zeta},\quad \widetilde{\pmb{\beta}}_{J,\text{bc}} = 2\widehat{\pmb{\beta}}_{J}  -( \widehat{\pmb{\beta}}_{J,S_1} + \widehat{\pmb{\beta}}_{J,S_2})/2,\nonumber \\
&&\widehat{\pmb{\mu}}_{\zeta} = - \mathbf{D}(\widehat{\pmb{\Lambda}})^{-1} \frac{1}{NT}\sum_{t=1}^T \mathbf{X}_{J,t}^\top \mathbf{M}_{\widehat{\pmb{\Lambda}}} T_u (\widehat{\pmb{\Omega}}_e )  \widehat{\pmb{\Lambda}} \left(\frac{\widehat{\mathbf{F}}^\top \widehat{\mathbf{F}}}{T}\right)^{-1} \widehat{\mathbf{f}}_t,
\end{eqnarray*}
where $\widehat{\pmb{\beta}}_{J,S_1}$ and $\widehat{\pmb{\beta}}_{J,S_2}$ are obtained using sample from $\{(i,t)\mid i\in [N],t\in S_1\}$ and $\{(i,t)\mid i\in [N],t\in S_2\}$ respectively,  $S_1 = \{1,\ldots, \lfloor T/2\rfloor \} $, $S_2=  \{\lfloor T/2\rfloor+1,\ldots, T \}$,  $T_u(\cdot)$ is the same as in \eqref{EQUATION7} by replacing the order of $u$ with $u \asymp  \sqrt{\log (N) /T}$, $\widehat{\pmb{\Omega}}_{e} = \frac{1}{T}\sum_{t=1}^T \widehat{\mathbf{e}}_{t}\widehat{\mathbf{e}}_{t}^\top$, and $\widehat{\mathbf{e}}_{t} = \mathbf{y}_t- \mathbf{X}_{J,t}\widehat{\pmb{\beta}}_J - \widehat{\pmb{\Lambda}}\widehat{\mathbf{f}}_t$.
\end{enumerate}
\smallskip
\hrule

\smallskip

\begin{proposition}\label{PROPOSITION2}
Let the conditions of Proposition \ref{PROPOSITION1}.2 hold, and $\delta_{NT} \coloneqq \min (\sqrt{T},\sqrt{N} )$. Recall that $s=|J|$ is the same as defined in equation (\ref{EQUATION1}). Then the following results hold:

{\small
\begin{enumerate}[leftmargin=*]
\item If $\max_{i\in [N]}|\pmb{\lambda}_{0i}|_2=O_P(\sqrt{\log N})$ and $\max_{i\in [N]}\sum_{j=1}^{N}|\omega_{e,ij}^{p_e}| \leq C_e(N)$ for some $0\leq p_e < 1$ with $\pmb{\Omega}_e = (\omega_{e,ij})_{i,j\in [N]}$, $$|T_u (\widehat{\pmb{\Omega}}_e ) - \pmb{\Omega}_e|_2 = O_P\left((\log (N) /\delta_{NT}^2)^{(1-p_e)/2}C_e(N)\right).$$

\item If $\sqrt{s}(\log (N) / T)^{(1-p_e)/2}C_e(N) \to 0$, we have
$$\sqrt{NT}(\pmb{\rho}^\top\widehat{\pmb{\Sigma}}_{J}^{-1}\widehat{\pmb{\Theta}}_{J}\widehat{\pmb{\Sigma}}_{J}^{-1}\pmb{\rho})^{-1/2}\pmb{\rho}^\top(\widehat{\pmb{\beta}}_{J,\mathrm{bc}}-\pmb{\beta}_{0,J})\to_D N\left(0,1\right),$$ where {\small $\widehat{\pmb{\Sigma}}_{J} = \frac{\sum_{t=1}^{T}\widehat{\mathbf{X}}_{J,t}^\top\mathbf{M}_{\widehat{\pmb{\Lambda}}}\widehat{\mathbf{X}}_{J,t}}{NT}$, $\widehat{\pmb{\Theta}}_{J} = \frac{\sum_{t,s=1}^{T}a((t-s)/\ell) (\widehat{\mathbf{X}}_{J,t}^\top\mathbf{M}_{\widehat{\pmb{\Lambda}}}\widehat{\mathbf{e}}_{t} ) (\widehat{\mathbf{X}}_{J,s}^\top\mathbf{M}_{\widehat{\pmb{\Lambda}}}\widehat{\mathbf{e}}_{s} )^\top}{NT}$, and $\widehat{\mathbf{X}}_{J,t}$} is defined in the same way as $\widetilde{\mathbf{X}}_{J,t}$ with $\mathbf{F}_0$ and $\mathbf{f}_{0t}$ replaced by $\widehat{\mathbf{F}}$ and $\widehat{\mathbf{f}}_{t}$, respectively.
\end{enumerate}
}

\end{proposition}

Before we prove the theoretical results in Appendix B, we evaluate the finite--sample performance of the proposed estimation and inferential procedure by simulated and real datasets.

\section{Simulations}\label{Sec4}

A detailed numerical implementational procedure is given in Appendix  \ref{AP.A2}. Using it, we evaluate the results of Section \ref{Sec2.2} by considering the following data generating process:
$$
\textbf{DGP1}: \quad y_{it} = \alpha_i + \mathbf{x}_{it}^\top\pmb{\beta}_{0} + e_{it},\quad \mathbf{e}_t = \rho_e\mathbf{e}_{t-1} + \pmb{\Sigma}_{e}^{1/2}\pmb{\varepsilon}_{e,t},
$$
where $\alpha_i = \frac{1}{T}\sum_{t=1}^{T}(x_{1,it} + x_{2,it})$ is an individual fixed--effect, $\pmb{\Sigma}_{\varepsilon} = \{\delta_{\varepsilon_e}^{|i-j|}\}_{i,j\in [N]}$, $\pmb{\varepsilon}_{e,t}$ follows from an $N$-dimensional $t$-distribution with a degree freedom of 5, and $\rho_e, \delta_{\varepsilon_e}\in\{0.2,0.5\}$, corresponding to low and moderate dependence in the dynamics of error innovations.  

DGP1 also allows for both the serial and cross--sectional correlations in $\mathbf{x}_{it}$ as follows: $\mathbf{x}_{l,t} = \rho_x + 0.2\mathbf{x}_{l,t-1} + \pmb{\Sigma}_{x}^{1/2}\pmb{\varepsilon}_{l,t} \ \ \text{for} \ \ l\in [d]$, 
where $\pmb{\Sigma}_{x} = \{0.2^{|i-j|}\}_{i,j\in [N]}$, $\rho_x \sim N(0,1)$, $\pmb{\varepsilon}_{l,t}$ follows from an $N$-dimensional $t$-distribution with a degree freedom of 5, and $\{\pmb{\varepsilon}_{l,t}\}_{l}$ is mutually independent for $l \in [d]$. Here, we let $d \in\{50,500\}$, $\beta_{0,j} = 0.2+0.1j$ for $j\in [5]$ and $\beta_{0,j} = 0$ for $j\ge 6$. When running regression, we deal with the fixed effects $\alpha_i$ by removing the time mean on both sides of DGP1. Therefore, we  use the demeaned data to conduct the estimation procedure of Section \ref{Sec2.2}.

For DGP1, we consider two sets of sample sizes, which are $N, T \in (20,30,40)$ and $N,T \in (50,100,200,400)$. For each pair of $(N,T)$, we conduct $1000$ replications. In addition, we measure the accuracy of LASSO estimates by the root mean squared error (RMSE) $\text{RMSE}(\widehat{\pmb{\beta}}) \coloneqq \sqrt{\frac{1}{1000}\sum_{j=1}^{1000}|\widehat{\pmb{\beta}}^{(j)} - \pmb{\beta}_0|_F^2},$ where $\widehat{\pmb{\beta}}^{(j)}$ is the estimate of $\pmb{\beta}_0$ at the $j^{th}$ replication. In order to evaluate the finite sample performance of our estimation and inferential procedure, we calculate the empirical coverage rates (ECR) for the nonzero elements in $\pmb{\beta}_0$, i.e., $\beta_{0,1}$-$\beta_{0,5}$ based on Steps 3 \& 4 of Section \ref{Sec2.2}. We then take the average across these elements for ease of presentation. For comparison, we also report the empirical coverage rates (denoted by ECR2) using the HAC estimator considered in \cite{babii2022machine}, which is not robust in the presence of CSD. We also report the ratio of sign consistency (RSC) of the adaptive LASSO procedure (\citealp{zou2006adaptive}), i.e., the ratio of $\{\widehat{\pmb{\beta}}_{w}^{(j)} =_s \pmb{\beta}_0\}_{j=1}^{1000}$. Tables \ref{Table1.1}-\ref{Table1.3} report these results for the cases with $d = 50$ and $d = 500$ respectively.

\begin{center}

\fbox{Tables \ref{Table1.1}-\ref{Table1.3} near here}

\end{center}

In view of Tables \ref{Table1.1} and \ref{Table1.2}, a few facts emerge. First, as expected, RMSE of the LASSO estimator decreases as both $N$ and $T$ increase. Second, as expected, RMSE increases if either $\rho_e$ or $\delta_{\varepsilon_e}$ increases. Third, when $\rho_e = 0.2$, ECR is very close to its nominal level even when $T=20$, although the distortion in ECR increases with the increase of $\rho_e$ and alters with $\delta_{\varepsilon_e}$ slightly. Fourth, when $\rho_e = 0.5$, ECR converges to its nominal level only with an increase in $T$. This is consistent with our theoretical prediction that the estimation error of long-rung covariance matrix is independent of $N$. Fifth, ECR2 is always below its nominal level and the distortion in ECR2 increases with the increase of $\delta_{\varepsilon_e}$. This is not surprising since the HAC estimator in \cite{babii2022machine} is not robust to the presence of CSD and just includes a proportion of the asymptotic variance. Finally, the adaptive LASSO procedure can correctly identify the sparsity pattern as long as the sample size is not so small. When the sample size is relatively small, RSC converges to one rapidly as the sample size increases. In Tables \ref{Table1.1} and \ref{Table1.3}, we find similar patterns. Interestingly, for $d = 500$ and $\rho_e = 0.2$, ECR tends to be larger than its nominal level when $T$ is relatively small. In addition, RMSE for $d = 500$ is slightly larger than that for $d = 50$, which is consistent with that RMSE should be proportion to $\sqrt{\log d}$ according to Lemma \ref{LEMMA2}.1.

We then evaluate the findings of Section \ref{Sec3}. Consider the following DGP:
$$
\textbf{DGP2}: \quad y_{it} = \mathbf{x}_{it}^\top\pmb{\beta}_{0} + \pmb{\lambda}_{0i}^\top \mathbf{f}_{0t} + e_{it},\quad \mathbf{e}_t = \rho_e\mathbf{e}_{t-1} + \pmb{\Sigma}_{e}^{1/2}\pmb{\varepsilon}_{e,t},
$$
where $\pmb{\lambda}_{0i} = [x_{1,i1},x_{2,i1}]^\top$ and $\mathbf{f}_{0t} = [x_{3,1t},x_{4,1t}]^\top$. Here, $\mathbf{x}_{it}$, $\pmb{\beta}_0$ and $\{e_{it}\}$ are generated in exactly the same way as in DGP1. We compute the RMSE of the first and second stage estimators for DGP2, denoted them by $\text{RMSE1}$ and $\text{RMSE2}$ respectively. To evaluate the finite sample performance of our inference procedure, we compute the empirical coverage rates (ECR) of the non-zero elements in $\pmb{\beta}_0$, i.e., $\beta_{0,1}$-$\beta_{0,5}$. We then take the average across these elements. We also report the average shares of the relevant variables included (TPR, true positive rate) and the average shares of the irrelevant variables included (FPR, false positive rate) of the conservative LASSO procedure, as well as the exact estimation rate (EER) of the number of factors by using the singular value thresholding procedure. Tables \ref{Table2.1}-\ref{Table2.3} show the results of DGP2 for $d = 50$ and $d = 500$ respectively.

\begin{center}

\fbox{Tables \ref{Table2.1}-\ref{Table2.3} near here}

\end{center}

Tables \ref{Table2.1} and \ref{Table2.2} reveal some notable points. First, the first-stage $\ell_1$-nuclear norm penalized estimator has a larger RMSE due to the regularization, which slowly vanishes as the sample size increases. On the other hand, the second-stage has much smaller RMSE, which decreases as both $N$ and $T$ increase. Second, for $\rho_e = 0.5$, the finite sample coverage probabilities are smaller than their nominal level (95\%) when $T$ is small, but are quite close to 95\% as $T$ increases. Third,  CSD and TSA in $\{e_{it}\}$ significantly affect the accuracy of our model selection procedure. For each pair of $(N,T)$,  larger $\rho_e$ and $\delta_{\varepsilon_e}$ infer larger FPR. {Nevertheless, our procedure can pick up all the relevant variables when either $N$ or $T$ is larger than 40, and FPR converges to zero quickly when either $N$ or $T$ is large.} Finally, the proposed singular value thresholding procedure can correctly determine the number of factors {when the sample size is not too small}.

Tables \ref{Table2.1} and \ref{Table2.3} report the simulation results of DGP2 for $d = 500$. We find similar patterns. In an analogous way to DGP1, RMSE1 for $d = 500$ is slightly larger than that for $d = 50$, which is consistent with our theoretical prediction. Notably, increasing $d$ does not significantly affect the estimation accuracy of the number of factors.

\section{Empirical Application}\label{Sec5}

Firm level characteristics which potentially can predict future stock returns have drawn considerable attention in the literature of asset pricing (see, \cite{KELLY2019501}, \cite{ChenZimmermann2021}, \cite{belloni2023high}, and many references therein, for example). In this section, we aim to select the firm characteristics that provide incremental information about U.S. monthly stock returns and to estimate how selected characteristics affect expected returns. Importantly, we highlight the necessity of accounting for cross-sectional dependence and demonstrate its implications on inferring significant firm characteristics.

We collect the return data of different firms of S\&P 500 from Center for Research in Security Prices (CRSP) (available at \url{www.crsp.org}), and match them with firm characteristics assembled by \cite{ChenZimmermann2021}, which are available at \url{www.openassetpricing.com}. These data are recorded monthly, and we pay attention to the period from  Jan of 1990 to Dec of 2023  which in total gives 408 time periods.  After matching ``permno" code in both datasets and removing firm characteristics with missing values, we end up with 161 firms (i.e., $N=161$) and 60 characteristics (i.e., $d=60$). We  present the variable names with their means and standard deviations in Table \ref{tb1}, and refer interested readers to \cite{ChenZimmermann2021} for detailed definitions of these variables. As these variables are measured in different units and have significant differences in terms of their standard deviations, we normalize each variable (including stock return) to ensure mean 0 and standard deviation 1 before regression.

We begin by presenting the results using model \eqref{EQUATION1} with fixed effects as in the simulation studies. To run estimation, we always regress the return of firm $i$ at time $t$ ($y_{it}$) on the firm level characteristics of firm $i$ at time period $t-1$ ($\mathbf{x}_{i,t-1}$). The estimation procedure is identical to those presented in Section \ref{Sec3}, so we do not repeat the details here. To show the presence of the weak CSD of the estimation residuals, we conduct the CD test\footnote{The asymptotic distribution of the CD test follows the standard normal distribution, so at the 5\% significance level, the critical values are $\pm 1.96$. We refer interested readers to \cite{Pesaran2004} for more details.} of \cite{Pesaran2004} on the residuals, and obtain the test statistic 61.32 which is a sign of the existence of weak CSD. Additionally, we run Jarque-Bera test based on the estimation residuals for each $i$, and find that we reject the null for all $i$'s, which infers that no individual has normally distributed residuals. Therefore, it shows the necessity of accounting for non-Gaussian error process in theory. 

\begin{table}[hbt]
\setlength{\tabcolsep}{4pt} % Default value: 6pt
\renewcommand{\arraystretch}{0.6} % Default value: 1
\scriptsize\centering
\caption{Summary Statistics of Firm Characteristics}\label{tb1}
\begin{tabular}{lrrlrrlrr}
\hline
& Mean & Std &  & Mean & Std &  & Mean & Std \\
accruals & 0.040 & \multicolumn{1}{r|}{0.092} & delfinl & -0.018 & \multicolumn{1}{r|}{0.105} & pricedelaytstat & 1.784 & 1.418 \\
announcementreturn & 0.006 & \multicolumn{1}{r|}{0.085} & dellti & -0.007 & \multicolumn{1}{r|}{0.052} & rdipo & 0.000 & 0.022 \\
assetgrowth & -0.126 & \multicolumn{1}{r|}{0.991} & delnetfin & -0.008 & \multicolumn{1}{r|}{0.136} & rds & -822.066 & 22660.487 \\
bmdec & 0.640 & \multicolumn{1}{r|}{1.177} & dolvol & -3.819 & \multicolumn{1}{r|}{3.375} & realizedvol & -0.029 & 0.030 \\
beta & 0.883 & \multicolumn{1}{r|}{0.670} & earningsconsistency & 0.059 & \multicolumn{1}{r|}{1.511} & residualmomentum & -0.036 & 0.327 \\
betafp & 0.908 & \multicolumn{1}{r|}{0.523} & equityduration & -16.823 & \multicolumn{1}{r|}{19.766} & returnskew & -0.137 & 0.871 \\
betaliquidityps & 0.016 & \multicolumn{1}{r|}{0.397} & grltnoa & 0.035 & \multicolumn{1}{r|}{0.103} & returnskew3f & -0.117 & 0.810 \\
bidaskspread & 0.015 & \multicolumn{1}{r|}{0.024} & high52 & 0.804 & \multicolumn{1}{r|}{0.224} & roe & -0.065 & 7.261 \\
bookleverage & -3.531 & \multicolumn{1}{r|}{55.527} & idiovol3f & -0.024 & \multicolumn{1}{r|}{0.027} & sharerepurchase & 0.521 & 0.500 \\
cheq & -1.365 & \multicolumn{1}{r|}{10.219} & idiovolaht & -0.028 & \multicolumn{1}{r|}{0.024} & totalaccruals & -0.006 & 0.203 \\
chinv & -0.006 & \multicolumn{1}{r|}{0.041} & illiquidity & 0.000 & \multicolumn{1}{r|}{0.000} & volmkt & -0.128 & 0.257 \\
chnncoa & 0.003 & \multicolumn{1}{r|}{0.117} & maxret & -0.062 & \multicolumn{1}{r|}{0.081} & volsd & -11.827 & 49.032 \\
chnwc & 0.007 & \multicolumn{1}{r|}{0.087} & noa & -0.513 & \multicolumn{1}{r|}{0.742} & volumetrend & -0.006 & 0.018 \\
chtax & 0.001 & \multicolumn{1}{r|}{0.023} & netdebtfinance & -0.012 & \multicolumn{1}{r|}{0.086} & xfin & -0.009 & 0.159 \\
convdebt & -0.065 & \multicolumn{1}{r|}{0.247} & netequityfinance & 0.002 & \multicolumn{1}{r|}{0.128} & betavix & 0.000 & 0.019 \\
coskewacx & 0.142 & \multicolumn{1}{r|}{0.238} & opleverage & 0.828 & \multicolumn{1}{r|}{0.780} & dnoa & -0.070 & 0.858 \\
coskewness & 0.253 & \multicolumn{1}{r|}{0.348} & pctacc & 2.120 & \multicolumn{1}{r|}{23.208} & hire & -0.019 & 0.225 \\
delcoa & -0.013 & \multicolumn{1}{r|}{0.081} & pcttotacc & -0.407 & \multicolumn{1}{r|}{20.494} & zerotrade & 0.788 & 2.330 \\
delcol & -0.011 & \multicolumn{1}{r|}{0.061} & pricedelayrsq & 0.288 & \multicolumn{1}{r|}{0.317} & zerotradealt1 & 0.707 & 2.350 \\
delequ & -0.022 & \multicolumn{1}{r|}{0.188} & pricedelayslope & -1.224 & \multicolumn{1}{r|}{111.948} & zerotradealt12 & 0.794 & 2.287 \\
\hline
\end{tabular}
\end{table}

The weighted Lasso procedure identifies five significant determinants in future stock returns, which are Earnings announcement return, Pastor-Stambaugh liquidity beta, Idiosyncratic risk (3 factor), Maximum return over month, Realized volatility, respectively. In contrast, by examining the estimated confidence intervals for each element in the debiased Lasso estimates, we find that the predictor `Maximum return over month' is insignificant with a slope coefficient of -0.132 and a 95\% confidence interval of (-0.278, 0.014). It is not surprising that the predictor `Maximum return over month' survives in the weighted Lasso selection procedure as the value `-0.1321' is quite large in this study. In addition, the estimated confidence intervals indicate five other (marginally) significant predictors, which may be due to the well-known multiple-hypothesis-testing problem in this case. Note that even when there is no truly significant firm characteristics, we expect to identify about three significant predictors ($60\times 5\%$) due to pure sampling variation. Then, four firm level characteristics are selected, and are reported in Table \ref{tb2} above. 

The above selection result shares certain similarity with \cite{belloni2023high}.  For example, both studies acknowledge the importance of liquidity beta. While they focus on the selection at different quantile, we emphasize the importance of accounting for dependence of error terms. For the sake of comparison, we also examine the estimated confidence intervals using the HAC estimator considered in \cite{babii2022machine}, which is not robust to the presence of CSD. These estimated confidence intervals indicate that there are 21 significant firm characteristics, which is unusually large. We do expect that these estimated confidence intervals are quite narrow as the HAC estimator considered in \cite{babii2022machine} only just include a proportion of the asymptotic variance with the presence of CSD. This point can be also seen in the constructed confidence intervals reported in Table \ref{tb2}. Overall, the above estimation and selection results highlight the necessity of accounting for cross-sectional dependence when selecting significant firm characteristics.

{\small

\begin{table}
\setlength{\tabcolsep}{4pt} % Default value: 6pt
\renewcommand{\arraystretch}{0.6} % Default value: 1
\footnotesize
\caption{Selected Key Variables for FE Model}\label{tb2}
\begin{tabular}{llcccc}
\hline
Key Variables & Definitions & Est & \multicolumn{1}{c}{CI} & \multicolumn{1}{c}{CI (no CSD)} \\
announcementreturn & Earnings announcement return & 0.047 & \multicolumn{1}{r|}{(0.015, 0.079)}  & (0.031, 0.063) \\
betaliquidityps & Pastor-Stambaugh liquidity beta & -0.061 & \multicolumn{1}{r|}{(-0.088, -0.035)}  & (-0.073, -0.050) \\
idiovol3f &  Idiosyncratic risk (3 factor) & -0.378 & \multicolumn{1}{r|}{(-0.579, -0.178)}  & (-0.429, -0.328)\\
realizedvol &  Realized volatility & 0.492 & \multicolumn{1}{r|}{(0.257, 0.727)}  & (0.430, 0.555)\\
\hline
\end{tabular}
\end{table}

}

We next consider the model \eqref{EQUATION8} with unobservable common factors, of which the latter one is also considered in  \cite{KELLY2019501} but under the finite dimensional framework. We also conduct the CD test to detect the presence of the weak CSD of the estimation residuals, and obtain the test statistic of $-11.78$. For both models, we end up with $T=407$. This result indicates the presence of weak CSD even after accounting for unobservable common factors (strong CSD).

{\small

\begin{table}
\setlength{\tabcolsep}{4pt} % Default value: 6pt
\renewcommand{\arraystretch}{0.6} % Default value: 1
\footnotesize
\caption{Selected Key Variables for IFE Model}\label{tb3}
\begin{tabular}{ll ccc}
\hline
Key Variables & Definitions & Est & \multicolumn{1}{c}{CI}  \\
announcementreturn & Earnings announcement return & 0.042 & \multicolumn{1}{r|}{(0.019, 0.065)}  \\
betaliquidityps & Pastor-Stambaugh liquidity beta & -0.032 & \multicolumn{1}{r|}{(-0.058, -0.007)}  \\
high52 &  52 Week High & -0.056 & \multicolumn{1}{r|}{(-0.077, -0.035)}  \\
returnskew3f &  Idiosyncratic skewness & 0.029 & \multicolumn{1}{r|}{(-0.044, -0.015)}  \\
\hline
\end{tabular}
\end{table}

}

Again, since our weighted LASSO procedure may select irrelevant variables in finite sample, we further eliminate irrelevant firm level characteristics based on the constructed confidence intervals. Then, four firm level characteristics are selected, and are reported in Table \ref{tb3} above. Compared to the estimates associated with the selected variables using the fixed effects model, interestingly, we find that after accounting for the common factors, the estimates tend to have narrower confidence intervals overall. This point is important for detecting the significant firm characteristics because the literature (e.g., \citealp{rapach2013}) shows that certain predictors do provide useful signals for forecasting stock returns but these signals are of small magnitudes and are hidden by the large uncertainty of error innovations. Instead, the large uncertainty of error innovations can be captured by using the method of interactive fixed effects. It then shows the necessity of including the factor structure practically, and so demonstrates the practical relevance of Section \ref{Sec3}.

\section{Conclusion}\label{Sec6}

In this paper, we propose a robust inferential procedure for the proposed HD panel data models. Specifically, (i) we pay attention to non-Gaussian, serially and cross-sectionally correlated and heteroskedastic error processes; (ii) we develop an estimation method for high-dimensional long-run covariance matrix using a thresholded estimator; and (iii) we allow for the number of regressors to grow faster than the sample size. In order to establish the corresponding theory, we derive two Nagaev-types of sharp concentration inequalities, one for a partial sum and the other for a quadratic form, subject to a set of easily verifiable conditions. Leveraging these two inequalities, we develop a non-asymptotic bound for the LASSO estimator, establish an asymptotic normality via the nodewise LASSO regression, and derive a sharp convergence rate for the thresholded HAC estimator. 

We believe that our study provides the relevant literature with a complete toolkit for conducting estimation and inference for the parameters of interest within a class of HD panel data settings. We also demonstrate the practical relevance of these estimation and inferential methods by investigating a class of HD panel data models associated with interactive effects. Moreover, we conduct extensive numerical studies using simulated and real data examples.

{\footnotesize
\setlength{\bibsep}{2.pt plus 0ex}
\bibliography{HD-Panel}

@incollection{rapach2013,
	title={Forecasting Stock Returns},
	author={Rapach, David and Zhou, Guofu},
	booktitle={Handbook of Economic Forecasting},
	volume={2},
	pages={328--383},
	year={2013},
	publisher={Elsevier}
}

@article{gupta2023robust,
	title={Robust Inference on Infinite and Growing Dimensional Time-Series Regression},
	author={Gupta, Abhimanyu and Seo, Myung Hwan},
	journal={Econometrica},
	volume={91},
	number={4},
	pages={1333--1361},
	year={2023},
	publisher={Wiley Online Library}
}

@article{Pesaran2006,
author = {Pesaran, M. Hashem},
title = {Estimation and Inference in Large Heterogeneous Panels with a Multifactor Error Structure},
journal = {Econometrica},
volume = {74},
number = {4},
pages = {967-1012},
year = {2006}
}

@article{BAI2020,
title = {Standard errors for panel data models with unknown clusters},
journal = {Journal of Econometrics},
volume = {240},
number = {2},
pages = {105004},
year = {2024},
author = {Jushan Bai and Sung Hoon Choi and Yuan Liao}
}

@article{Pesaran2004,
    author = "M. H. Pesaran",
    title  = "General diagnostic tests for cross section dependence in panels",
    journal={Empirical Economics},
	volume={60},
	pages={13-50},
    year   = "2021"
}

@article{KELLY2019501,
title = {Characteristics are covariances: A unified model of risk and return},
journal = {Journal of Financial Economics},
volume = {134},
number = {3},
pages = {501-524},
year = {2019},
issn = {0304-405X},
author = {Bryan T. Kelly and Seth Pruitt and Yinan Su}
}

@article{ChenZimmermann2021,
  title={Open Source Cross-Sectional Asset Pricing},
  author={Chen, Andrew Y. and Tom Zimmermann},
  journal={Critical Finance Review},
  year={2022},
  volume={27},
  number={2},
  pages={207--264} 
}

@article{latala2005some,
	title={Some estimates of norms of random matrices},
	author={Lata{\l}a, Rafa{\l}},
	journal={Proceedings of the American Mathematical Society},
	volume={133},
	number={5},
	pages={1273--1282},
	year={2005}
}

@article{baek2023local,
	title={Local Whittle estimation of high-dimensional long-run variance and precision matrices},
	author={Baek, Changryong and D{\"u}ker, Marie-Christine and Pipiras, Vladas},
	journal={The Annals of Statistics},
	volume={51},
	number={6},
	pages={2386--2414},
	year={2023},
	publisher={Institute of Mathematical Statistics}
}

@article{adamek2023lasso,
	title={Lasso inference for high-dimensional time series},
	author={Adamek, Robert and Smeekes, Stephan and Wilms, Ines},
	journal={Journal of Econometrics},
	volume={235},
	number={2},
	pages={1114--1143},
	year={2023},
	publisher={Elsevier}
}

@article{bickel2008covariance,
	title={Covariance regularization by thresholding},
	author={Bickel, Peter J and Levina, Elizaveta},
	journal={The Annals of Statistics},
	volume={36},
	number={6},
	pages={2577-2604},
	year={2008}
}

@book{stock2020introduction,
	title={Introduction to Econometrics (3rd edition)},
	author={Stock, James H and Watson, Mark W},
	year={2017},
	publisher={Pearson}
}

@article{wang2009shrinkage,
	title={Shrinkage tuning parameter selection with a diverging number of parameters},
	author={Wang, Hansheng and Li, Bo and Leng, Chenlei},
	journal={Journal of the Royal Statistical Society Series B: Statistical Methodology},
	volume={71},
	number={3},
	pages={671--683},
	year={2009},
	publisher={Oxford University Press}
}

@article{moon2018nuclear,
	title={Nuclear norm regularized estimation of panel regression models},
	author={Moon, Hyungsik Roger and Weidner, Martin},
	journal={arXiv preprint arXiv:1810.10987},
	year={2018}
}

@article{negahban2011estimation,
	title={Estimation of (near) low-rank matrices with noise and high-dimensional scaling},
	author={Negahban, Sahand and Wainwright, Martin J},
	journal={The Annals of Statistics},
	volume={39},
	number={2},
	pages={1069--1097},
	year={2011}
}

@article{zou2006adaptive,
	title={The adaptive lasso and its oracle properties},
	author={Zou, Hui},
	journal={Journal of the American Statistical Association},
	volume={101},
	number={476},
	pages={1418--1429},
	year={2006},
	publisher={Taylor \& Francis}
}

@article{chernozhukov2021lasso,
	title={Lasso-driven inference in time and space},
	author={Chernozhukov, Victor and Karl H{\"a}rdle, Wolfgang and Huang, Chen and Wang, Weining},
	journal={The Annals of Statistics},
	volume={49},
	number={3},
	pages={1702--1735},
	year={2021},
	publisher={Institute of Mathematical Statistics}
}

@article{FLY2015,
author = {Fan, Jianqing and Liao, Yuan and Yao, Jiawei},
title = {Power Enhancement in High-Dimensional Cross-Sectional Tests},
journal = {Econometrica},
volume = {83},
number = {4},
pages = {1497-1541},
year = {2015}
}

@article{KP2019,
author = {Kock, Anders Bredahl and Preinerstorfer, David},
title = {Power in High-Dimensional Testing Problems},
journal = {Econometrica},
volume = {87},
number = {3},
pages = {1055-1069},
year = {2019}
}

@article{bai2009panel,
	title={Panel data models with interactive fixed effects},
	author={Bai, Jushan},
	journal={Econometrica},
	volume={77},
	number={4},
	pages={1229--1279},
	year={2009},
	publisher={Wiley Online Library}
}

@article{vogt2022cce,
	title={{CCE} Estimation of High-Dimensional Panel Data Models with Interactive Fixed Effects},
	author={Vogt, Michael and Walsh, Christopher and Linton, Oliver},
	journal={arXiv preprint arXiv:2206.12152},
	year={2022}
}

@article{feng2020taming,
	title={Taming the factor zoo: A test of new factors},
	author={Feng, Guanhao and Giglio, Stefano and Xiu, Dacheng},
	journal={The Journal of Finance},
	volume={75},
	number={3},
	pages={1327--1370},
	year={2020},
	publisher={Wiley Online Library}
}

@article{gao2023higher,
	title={Higher-order Expansions and Inference for Panel Data Models},
	author={Gao, Jiti and Peng, Bin and Yan, Yayi},
	year={2023},
	journal={Journal of the American Statistical Association},
	pages={forthcoming}
}

@article{babii2022machine,
	title={Machine learning panel data regressions with heavy-tailed dependent data: Theory and application},
	author={Babii, Andrii and Ball, Ryan T and Ghysels, Eric and Striaukas, Jonas},
	journal={Journal of Econometrics},
	volume={237},
	number={2},
	pages={105315},
	year={2023},
	publisher={Elsevier}
}

@article{belloni2023high,
	title={High-dimensional latent panel quantile regression with an application to asset pricing},
	author={Belloni, Alexandre and Chen, Mingli and Madrid Padilla, Oscar Hernan and Wang, Zixuan},
	journal={The Annals of Statistics},
	volume={51},
	number={1},
	pages={96--121},
	year={2023},
	publisher={Institute of Mathematical Statistics}
}

@article{haeusler1984exact,
	title={An exact rate of convergence in the functional central limit theorem for special martingale difference arrays},
	author={Haeusler, Erich},
	journal={Zeitschrift f{\"u}r Wahrscheinlichkeitstheorie und Verwandte Gebiete},
	volume={65},
	number={4},
	pages={523--534},
	year={1984},
	publisher={Springer}
}

@article{nagaev1979large,
	author = {S. V. Nagaev},
title = {{Large Deviations of Sums of Independent Random Variables}},
volume = {7},
journal = {The Annals of Probability},
number = {5},
pages = {745 -- 789},
year = {1979},
}

@article{wu2005nonlinear,
	title={Nonlinear system theory: Another look at dependence},
	author={Wu, Wei Biao},
	journal={Proceedings of the National Academy of Sciences},
	volume={102},
	number={40},
	pages={14150--14154},
	year={2005},
	publisher={National Acad Sciences}
}

@article{yuan2010high,
	title={High dimensional inverse covariance matrix estimation via linear programming},
	author={Yuan, Ming},
	journal={Journal of Machine Learning Research},
	volume={11},
	number={79},
	pages={2261--2286},
	year={2010},
	publisher={JMLR. org}
}

@article{van2014asymptotically,
	title={On asymptotically optimal confidence regions and tests for high-dimensional models},
	author={Van de Geer, Sara and B{\"u}hlmann, Peter and Ritov, Ya’acov and Dezeure, Ruben},
	journal={The Annals of Statistics},
	volume={42},
	number={3},
	pages={1166--1202},
	year={2014}
}

@article{caner2018asymptotically,
	title={Asymptotically honest confidence regions for high dimensional parameters by the desparsified conservative lasso},
	author={Caner, Mehmet and Kock, Anders Bredahl},
	journal={Journal of Econometrics},
	volume={203},
	number={1},
	pages={143--168},
	year={2018},
	publisher={Elsevier}
}

@article{miao2023high,
	title={High-dimensional {VAR}s with common factors},
	author={Miao, Ke and Phillips, Peter CB and Su, Liangjun},
	journal={Journal of Econometrics},
	volume={233},
	number={1},
	pages={155--183},
	year={2023},
	publisher={Elsevier}
}

@article{johnson1985best,
	author = {W. B. Johnson},
title = {{Best Constants in Moment Inequalities for Linear Combinations of Independent and Exchangeable Random Variables}},
volume = {13},
journal = {The Annals of Probability},
number = {1},
pages = {234 -- 253},
year = {1985}
}

@article{borovkov1973notes,
	title={Notes on inequalities for sums of independent variables},
	author={Borovkov, AA},
	journal={Theory of Probability and its Applications},
	volume={17},
	number={3},
	pages={556},
	year={1973},
	publisher={Society for Industrial and Applied Mathematics}
}
}

\bigskip

\renewcommand{\thesection}{A}

\section*{Appendix A}%\label{App.A}

In this Appendix, Appendix \ref{AP.A1} verifies several assumptions. A numerical implementation procedure is presented in Appendix \ref{AP.A2}. Tables for Section 4 are listed in Appendix \ref{AP.A3}. Appendix \ref{AP.A4} lists some technical lemmas, while their proofs are collected in Appendix \ref{AP.B1}. Appendix \ref{AP.A5} provide the full proofs of the main results listed in Section \ref{Sec2}.

\renewcommand{\theequation}{A.\arabic{equation}}
\renewcommand{\thesection}{A.\arabic{section}}
\renewcommand{\thefigure}{A.\arabic{figure}}
\renewcommand{\thetable}{A.\arabic{table}}
\renewcommand{\thelemma}{A.\arabic{lemma}}
\renewcommand{\theassumption}{A.\arabic{assumption}}
\renewcommand{\thetheorem}{A.\arabic{theorem}}
\renewcommand{\theproposition}{A.\arabic{proposition}}
% redefine the command that creates the equation no.
% reset counter
%\setcounter{page}{1}
\setcounter{equation}{0}
\setcounter{lemma}{0}
\setcounter{section}{0}
\setcounter{table}{0}
\setcounter{figure}{0}
\setcounter{assumption}{0}
\setcounter{proposition}{0}

\section{Justification of Assumptions}\label{AP.A1}

\begin{example}\label{Exam1}
Consider a high--dimensional MA($\infty$) process of the form: $\mathbf{e}_t = \sum_{j=0}^{\infty} \mathbf{B}_j\pmb{\varepsilon}_{t-j},$ where $\mathbf{B}_j$'s are $N\times N$ matrices. Suppose that (a) $|\mathbf{B}_j|_{2} = O(j^{-\alpha})$ for some $\alpha > 2$, and (b) $\{\varepsilon_{it}\}$ is independent over $(i,t)$ and $E|\varepsilon_{it}|^q <\infty$ for some $q>2$.  Then
\begin{itemize}[leftmargin=*]
\item [(i)] $\sup_{|\mathbf{w}|_2<\infty}\|\mathbf{w}^\top[\mathbf{e}_j - \mathbf{e}_j^*] \|_q = O(j^{-\alpha})$, where $\mathbf{e}_j^*$ the coupled version of $\mathbf{e}_j$ with $\pmb{\varepsilon}_0^*$ replacing $\pmb{\varepsilon}_0$;
\item [(ii)] $\left|\mathbf{E}\right|_2 = O_P (\max(\sqrt{N},\sqrt{T}) )$ if $E(\varepsilon_{it}^4) <\infty$, where $\mathbf{E} = [\mathbf{e}_1,\ldots,\mathbf{e}_{T}]^\top$;
\item [(iii)] Suppose further for some $|\rho|<1$, $|\mathbf{B}_j|_{2} = O(\rho^j)$, $|\mathbf{B}_j|_{1} = O(\rho^j)$, $|\mathbf{B}_j|_{\infty} = O(\rho^j)$ and $E(\varepsilon_{it}^8) <\infty$. Then $\{\mathbf{e}_t\}$ also satisfy Assumption C of \cite{bai2009panel} which regulates the correlation along both dimensions of $(i, t)$.
\end{itemize}
\end{example}

\begin{proof}[Verification of Example 1]

\noindent (i).	Without loss of generality, let $q=4$ in what follows. For notational simplicity, let $\mathbf{B}_j = \{B_{j,kl}\}_{k,l\in [N]}$ and $\mathbf{w}^\top \mathbf{B}_j = (B_{j,\, \centerdot 1},\ldots,B_{j,\, \centerdot N})$. As $\{\varepsilon_{it}\}$ are independent over $i$, we can write
\begin{eqnarray*}
&&\|\mathbf{w}^\top(\mathbf{e}_j - \mathbf{e}_j^*) \|_4^4 =E\left|\mathbf{w}^\top \mathbf{B}_j(\varepsilon_{0} - \varepsilon_{0}^*)\right|^4 = E\left|\sum_{l=1}^{N}B_{j,\, \centerdot l}^2(\varepsilon_{l,0} - \varepsilon_{l,0}^*)^2\right|^2 \nonumber \\
&&+ 4E\left|\sum_{l=1}^{N-1}\sum_{k=l+1}^{N}B_{j,\, \centerdot l}B_{j,\, \centerdot k}(\varepsilon_{l,0} - \varepsilon_{l,0}^*)(\varepsilon_{k,0} - \varepsilon_{k,0}^*) \right|^2\nonumber \\
&\le &O(1)\left(\sum_{l=1}^{N}B_{j,\, \centerdot l}^2\right)^2 +O(1) \sum_{l=1}^{N-1}\sum_{k=l+1}^{N}B_{j,\, \centerdot l}^2B_{j,\, \centerdot k}^2
\le O(1) \left(\mathbf{w}^\top \mathbf{B}_j\mathbf{B}_j^\top \mathbf{w} \right)^2 = O(|\mathbf{B}_j|_2^4),
\end{eqnarray*} 
where the first inequality follows from some direct calculation, and the second inequality follows from $\sum_{l=1}^{N-1}\sum_{k=l+1}^{N}B_{j,\, \centerdot l}^2B_{j,\, \centerdot k}^2\leq  (\sum_{l=1}^{N}B_{j,\, \centerdot l}^2)^2$.

Based on the above development, we have
$
\|\mathbf{w}^\top(\mathbf{e}_j - \mathbf{e}_j^*) \|_4 = O(j^{-\alpha})$.

\noindent (ii). Note that $\mathbf{E}^\top = \sum_{j=0}^{\infty} \mathbf{B}_j\pmb{\varepsilon}_{-j}$ with $\pmb{\varepsilon}_{-j} = [\pmb{\varepsilon}_{1-j},\ldots,\pmb{\varepsilon}_{T-j}]$. Since $\pmb{\varepsilon}_{-j}$ is an $N\times T$ random matrix consisting of mutually independent random elements such that $E(\varepsilon_{it}^4)<\infty$, we have $E|\pmb{\varepsilon}_{-j}|_2 = O(\max(\sqrt{N}, \sqrt{T}))$ (cf., \citealp{latala2005some}), which implies
$$
E|\mathbf{E}^\top|_2 \leq \sum_{j=0}^{\infty}|\mathbf{B}_j|_2 E|\pmb{\varepsilon}_{-j}|_2 = O(\max(\sqrt{N}, \sqrt{T})).
$$

\noindent (iii). Refer to Example 1 in \cite{gao2023higher}. 
\end{proof}

\section{Numerical Implementation}\label{AP.A2}

When estimating model \eqref{EQUATION1}, we use the following modified BIC (cf., \citealp{wang2009shrinkage}) to select the tuning parameter $\omega_1$ by $\widehat{\omega}_{1} = \argmin_{w_{1}}\frac{1}{NT}|\mathbf{y}-\mathbf{X}\widehat{\pmb{\beta}}_{\omega_1}|_2^2 + |J_{\omega_1}|\frac{\log(NT)}{NT}\log(\log (d))$,
where $\widehat{\pmb{\beta}}_{\omega_1}$ denotes the LASSO estimator obtained using the tuning parameter $\omega_{1}$ and $J_{\omega_1} = \{j\mid
\widehat{\beta}_{\omega_1,j} \neq 0\}$. When constructing $\widehat{\pmb{\Omega}}_x$, similarly, we choose $\widetilde{\omega}_j$ for $1\leq j \leq d$ using the above modified BIC. When estimating the long-run covariance matrix, we use the Bartlett kernel function and let $\ell = \lceil0.75 T^{1/3}\rceil$ following the suggestion of \cite{stock2020introduction}. To select the threshold level $u$ for thresholded HAC estimator, we use the 2-fold cross validation method as suggested by \cite{bickel2008covariance}.

To obtain the first-stage $\ell_1$-nuclear norm penalized estimator for model \eqref{EQUATION8}, following \cite{belloni2023high} we propose a modified BIC to select the best pair of tuning parameters ($\omega_1,\omega_2$). Given a pair ($\omega_{1},\omega_{2}$), our estimation method produces the estimates ($\widetilde{\pmb{\beta}}(\omega_1,\omega_2)$, $\widetilde{\pmb{\Xi}}(\omega_1,\omega_2)$). Then the estimates of $\omega_1$ and $\omega_2$ is given by
\begin{eqnarray*} 
(\widehat{\omega}_{1},\widehat{\omega}_{2}) &=& \argmin_{\omega_{1},\omega_{2}}\frac{1}{NT}|\mathbf{y}-\mathbf{X}\widetilde{\pmb{\beta}}(\omega_1,\omega_2)-\mathrm{vec}(\widetilde{\pmb{\Xi}}(\omega_1,\omega_2))|_2^2 \nonumber\\
&& + |J(\omega_1,\omega_2)|\frac{\log(NT)}{NT}\log(\log (d)) + |\widehat{r}(\omega_1,\omega_2)|\frac{N+T}{NT}  ,
\end{eqnarray*}
where $J(\omega_1,\omega_2) = \{j\mid
\widehat{\beta}_{j}(\omega_1,\omega_2) \neq 0\}$ and $\widehat{r}(\omega_1,\omega_2)$ denotes the rank of $\widetilde{\pmb{\Xi}}(\omega_1,\omega_2)$. For the second-stage iterated weighted LASSO estimator, we use the following modified BIC to select the tuning parameter $\omega_3$ by $\widehat{\omega}_{3} = \argmin_{\omega_{3}}\frac{1}{NT}|\mathbf{y}-\mathbf{X}\widehat{\pmb{\beta}}_{\omega_3}-\mathrm{vec}(\widehat{\mathbf{F}}_{\omega_3}\widehat{\pmb{\Lambda}}_{\omega_3}^\top)|_F^2 + |J_{\omega_3}|\frac{\log(NT)}{NT}\log(\log (d))$,
where $\widehat{\pmb{\beta}}_{\omega_3}$, $\widehat{\mathbf{F}}_{\omega_3}$ and $\widehat{\pmb{\Lambda}}_{\omega_3}$ denote the estimates of $\pmb{\beta}_0$, $\mathbf{F}_0$ and $\pmb{\Lambda}_0$ obtained using the tuning parameter $\omega_{3}$ and $J_{\omega_3} = \{j\mid \widehat{\beta}_{\omega_3,j} \neq 0\}$. In addition, we use the 2-fold cross validation method to select the penalty term $u$ when estimating $\pmb{\Omega}_{e}$.

{\small

\section{Tables for the Simulation Results in Section 4}\label{AP.A3}

\begin{table}[htb]
\renewcommand{\arraystretch}{0.6} % Default value: 1
\caption{Simulation Results of DGP 1}\label{Table1.1}
\begin{tabular}{cccccc c cccc }
\hline
$d = 50$&     & \multicolumn{4}{c}{$\rho_e=0.2,\delta_{\varepsilon_e}=0.2$} & & \multicolumn{4}{c}{$\rho_e=0.2,\delta_{\varepsilon_e}=0.5$}\\
\cline{3-6} \cline{8-11}
\hline
$N$ & $T$ & \text{RMSE} & ECR & ECR2 & RSC & & \text{RMSE} & ECR & ECR2 & RSC\\
\hline
20   & 20  & 0.264  & 0.931  & 0.893  & 0.478      &   & 0.286  & 0.920  & 0.840  & 0.478  \\
& 30  & 0.211  & 0.939  & 0.902  & 0.723      &   & 0.231  & 0.921  & 0.849  & 0.663  \\
& 40  & 0.183  & 0.938  & 0.906  & 0.873      &   & 0.195  & 0.931  & 0.854  & 0.848  \\
30   & 20  & 0.214  & 0.942  & 0.894  & 0.667      &   & 0.231  & 0.926  & 0.840  & 0.669  \\
& 30  & 0.173  & 0.939  & 0.903  & 0.904      &   & 0.188  & 0.929  & 0.850  & 0.886  \\
& 40  & 0.149  & 0.939  & 0.899  & 0.975      &   & 0.162  & 0.935  & 0.857  & 0.970  \\
40   & 20  & 0.187  & 0.945  & 0.897  & 0.820      &   & 0.202  & 0.927  & 0.845  & 0.789  \\
& 30  & 0.152  & 0.936  & 0.894  & 0.968      &   & 0.165  & 0.934  & 0.853  & 0.944  \\
& 40  & 0.128  & 0.944  & 0.910  & 0.992      &   & 0.140  & 0.938  & 0.850  & 0.991  \\ \hline
&     & \multicolumn{4}{c}{$\rho_e=0.5,\delta_{\varepsilon_e}=0.2$}& & \multicolumn{4}{c}{$\rho_e=0.5,\delta_{\varepsilon_e}=0.5$}\\
\cline{3-6} \cline{8-11}      
\hline
$N$ & $T$ & \text{RMSE} & ECR & ECR2 & RSC & & \text{RMSE} & ECR & ECR2 & RSC\\
\hline
20 & 20  & 0.297  & 0.892  & 0.865  & 0.496  &   &  0.332  & 0.867  & 0.805  & 0.467  \\
& 30  & 0.245  & 0.898  & 0.872  & 0.712  &   &  0.276  & 0.879  & 0.813  & 0.698  \\
& 40  & 0.214  & 0.895  & 0.873  & 0.829  &   &  0.214  & 0.895  & 0.873  & 0.829  \\
30 & 20  & 0.246  & 0.904  & 0.862  & 0.649  &   &  0.277  & 0.883  & 0.807  & 0.656  \\
& 30  & 0.200  & 0.905  & 0.880  & 0.865  &   &  0.230  & 0.892  & 0.821  & 0.867  \\
& 40  & 0.174  & 0.906  & 0.880  & 0.948  &   &  0.200  & 0.892  & 0.824  & 0.920  \\
40 & 20  & 0.212  & 0.897  & 0.868  & 0.814  &   &  0.239  & 0.882  & 0.801  & 0.815  \\
& 30  & 0.175  & 0.906  & 0.874  & 0.936  &   &  0.199  & 0.895  & 0.825  & 0.939  \\
& 40  & 0.153  & 0.916  & 0.887  & 0.985  &   &  0.173  & 0.900  & 0.827  & 0.985  \\ 
\hline   \hline 
$d=500$&     & \multicolumn{4}{c}{$\rho_e=0.2,\delta_{\varepsilon_e}=0.2$} & & \multicolumn{4}{c}{$\rho_e=0.2,\delta_{\varepsilon_e}=0.5$}\\
\cline{3-6} \cline{8-11}
\hline
$N$ & $T$ & \text{RMSE} & ECR & ECR2 & RSC & & \text{RMSE} & ECR & ECR2 & RSC\\
\hline
20   & 20  &     0.385  & 0.978  & 0.899  & 0.032       &   &  0.414  & 0.972  & 0.862  & 0.021  \\
& 30  &     0.307  & 0.972  & 0.908  & 0.117       &   &  0.332  & 0.964  & 0.864  & 0.096  \\
& 40  &     0.259  & 0.976  & 0.914  & 0.271       &   &  0.281  & 0.969  & 0.861  & 0.199  \\
30   & 20  &     0.314  & 0.983  & 0.908  & 0.098       &   &  0.339  & 0.977  & 0.853  & 0.078  \\
& 30  &     0.249  & 0.977  & 0.912  & 0.304       &   &  0.270  & 0.971  & 0.866  & 0.260  \\
& 40  &     0.215  & 0.973  & 0.911  & 0.565       &   &  0.233  & 0.966  & 0.860  & 0.463  \\
40   & 20  &     0.267  & 0.985  & 0.907  & 0.211       &   &  0.291  & 0.980  & 0.853  & 0.163  \\
& 30  &     0.217  & 0.980  & 0.910  & 0.532       &   &  0.236  & 0.974  & 0.863  & 0.428  \\
& 40  &     0.186  & 0.972  & 0.906  & 0.794       &   &  0.202  & 0.968  & 0.851  & 0.700  \\
\hline
&     & \multicolumn{4}{c}{$\rho_e=0.5,\delta_{\varepsilon_e}=0.2$}& & \multicolumn{4}{c}{$\rho_e=0.5,\delta_{\varepsilon_e}=0.5$}\\
\cline{3-6} \cline{8-11}     
\hline 
$N$ & $T$ & \text{RMSE} & ECR & ECR2 & RSC & & \text{RMSE} & ECR & ECR2 & RSC\\
\hline
20 & 20  &     0.432  & 0.958  & 0.875  & 0.027    &   &  0.458  & 0.946  & 0.823  & 0.033  \\
& 30  &     0.350  & 0.949  & 0.883  & 0.095    &   &  0.376  & 0.937  & 0.821  & 0.110  \\
& 40  &     0.296  & 0.951  & 0.884  & 0.209    &   &  0.320  & 0.939  & 0.831  & 0.208  \\
30 & 20  &     0.353  & 0.968  & 0.880  & 0.076    &   &  0.377  & 0.954  & 0.815  & 0.090  \\
& 30  &     0.283  & 0.957  & 0.890  & 0.236    &   &  0.307  & 0.947  & 0.839  & 0.232  \\
& 40  &     0.247  & 0.953  & 0.885  & 0.427    &   &  0.268  & 0.935  & 0.824  & 0.425  \\
40 & 20  &     0.301  & 0.975  & 0.881  & 0.149    &   &  0.324  & 0.961  & 0.823  & 0.150  \\
& 30  &     0.248  & 0.963  & 0.888  & 0.387    &   &  0.268  & 0.947  & 0.829  & 0.373  \\
& 40  &     0.212  & 0.951  & 0.881  & 0.642    &   &  0.232  & 0.937  & 0.817  & 0.601  \\
\hline 
\end{tabular}
\end{table}

\begin{table}[htb]
\renewcommand{\arraystretch}{0.6} % Default value: 1
\caption{Simulation Results of DGP 1 for $d = 50$}\label{Table1.2}
\begin{tabular}{cccccc c cccc }
\hline
&     & \multicolumn{4}{c}{$\rho_e=0.2,\delta_{\varepsilon_e}=0.2$} & & \multicolumn{4}{c}{$\rho_e=0.2,\delta_{\varepsilon_e}=0.5$}\\
\cline{3-6} \cline{8-11}
\hline
$N$ & $T$ & \text{RMSE} & ECR & ECR2 & RSC & & \text{RMSE} & ECR & ECR2 & RSC\\
50  & 50   &    0.106  & 0.943  & 0.905  & 1.000    &   &     0.115  & 0.931  & 0.844  & 0.999  \\
& 100  &    0.073  & 0.948  & 0.913  & 1.000    &   &     0.080  & 0.942  & 0.862  & 1.000  \\
& 200  &    0.053  & 0.946  & 0.921  & 1.000    &   &     0.058  & 0.944  & 0.867  & 1.000  \\
& 400  &    0.037  & 0.944  & 0.916  & 1.000    &   &     0.041  & 0.942  & 0.863  & 1.000  \\
100 & 50   &    0.075  & 0.946  & 0.908  & 1.000    &   &     0.082  & 0.944  & 0.856  & 1.000  \\
& 100  &    0.052  & 0.946  & 0.911  & 1.000    &   &     0.058  & 0.938  & 0.850  & 1.000  \\
& 200  &    0.037  & 0.946  & 0.918  & 1.000    &   &     0.041  & 0.946  & 0.859  & 1.000  \\
& 400  &    0.026  & 0.951  & 0.923  & 1.000    &   &     0.029  & 0.950  & 0.867  & 1.000  \\
200 & 50   &    0.053  & 0.948  & 0.906  & 1.000    &   &     0.059  & 0.939  & 0.851  & 1.000  \\
& 100  &    0.037  & 0.948  & 0.917  & 1.000    &   &     0.041  & 0.944  & 0.859  & 1.000  \\
& 200  &    0.027  & 0.949  & 0.916  & 1.000    &   &     0.029  & 0.949  & 0.864  & 1.000  \\
& 400  &    0.019  & 0.948  & 0.919  & 1.000    &   &     0.019  & 0.948  & 0.919  & 1.000  \\
400& 50   &    0.038  & 0.944  & 0.904  & 1.000    &   &     0.042  & 0.942  & 0.847  & 1.000  \\
& 100  &    0.027  & 0.951  & 0.913  & 1.000    &   &     0.029  & 0.945  & 0.854  & 1.000  \\
& 200  &    0.019  & 0.952  & 0.916  & 1.000    &   &     0.021  & 0.948  & 0.867  & 1.000  \\
& 400  &    0.013  & 0.950  & 0.917  & 1.000    &   &     0.015  & 0.948  & 0.862  & 1.000  \\
\hline
&     & \multicolumn{4}{c}{$\rho_e=0.5,\delta_{\varepsilon_e}=0.2$}& & \multicolumn{4}{c}{$\rho_e=0.5,\delta_{\varepsilon_e}=0.5$}\\
\cline{3-6} \cline{8-11}     
\hline 
$N$ & $T$ & \text{RMSE} & ECR & ECR2 & RSC & & \text{RMSE} & ECR & ECR2 & RSC\\
\hline
50  & 50   &    0.123  & 0.907  & 0.878  & 0.999   &   &     0.141  & 0.892  & 0.813  & 0.997  \\
& 100  &    0.087  & 0.924  & 0.892  & 1.000   &   &     0.100  & 0.921  & 0.833  & 1.000  \\
& 200  &    0.062  & 0.931  & 0.904  & 1.000   &   &     0.071  & 0.929  & 0.849  & 1.000  \\
& 400  &    0.044  & 0.936  & 0.911  & 1.000   &   &     0.050  & 0.932  & 0.854  & 1.000  \\
100 & 50   &    0.088  & 0.920  & 0.888  & 1.000   &   &     0.100  & 0.910  & 0.825  & 1.000  \\
& 100  &    0.062  & 0.922  & 0.895  & 1.000   &   &     0.071  & 0.917  & 0.835  & 1.000  \\
& 200  &    0.044  & 0.939  & 0.910  & 1.000   &   &     0.050  & 0.932  & 0.854  & 1.000  \\
& 400  &    0.031  & 0.939  & 0.911  & 1.000   &   &     0.035  & 0.937  & 0.855  & 1.000  \\
200 & 50   &    0.062  & 0.918  & 0.885  & 1.000   &   &     0.071  & 0.912  & 0.821  & 1.000  \\
& 100  &    0.044  & 0.932  & 0.903  & 1.000   &   &     0.050  & 0.927  & 0.840  & 1.000  \\
& 200  &    0.032  & 0.934  & 0.903  & 1.000   &   &     0.035  & 0.936  & 0.847  & 1.000  \\
& 400  &    0.023  & 0.939  & 0.908  & 1.000   &   &     0.025  & 0.936  & 0.855  & 1.000  \\
400 & 50   &    0.045  & 0.918  & 0.880  & 1.000   &   &     0.050  & 0.907  & 0.817  & 1.000  \\
& 100  &    0.032  & 0.930  & 0.894  & 1.000   &   &     0.036  & 0.925  & 0.833  & 1.000  \\
& 200  &    0.023  & 0.936  & 0.903  & 1.000   &   &     0.025  & 0.932  & 0.851  & 1.000  \\
& 400  &    0.016  & 0.940  & 0.910  & 1.000   &   &     0.018  & 0.939  & 0.855  & 1.000  \\
\hline 
\end{tabular}
\end{table}

\begin{table}[htb]
\renewcommand{\arraystretch}{0.6} % Default value: 1
\caption{Simulation Results of DGP 1 for $d = 500$}\label{Table1.3}
\begin{tabular}{cccccc c cccc }
\hline
&     & \multicolumn{4}{c}{$\rho_e=0.2,\delta_{\varepsilon_e}=0.2$} & & \multicolumn{4}{c}{$\rho_e=0.2,\delta_{\varepsilon_e}=0.5$}\\
\cline{3-6} \cline{8-11}
\hline
$N$ & $T$ & \text{RMSE} & ECR & ECR2 & RSC & & \text{RMSE} & ECR & ECR2 & RSC\\
\hline
50  & 50   &     0.148  & 0.968  & 0.913  & 0.993    &   &      0.163  & 0.966  & 0.852  & 0.953  \\
& 100  &     0.104  & 0.960  & 0.910  & 1.000    &   &      0.111  & 0.955  & 0.857  & 1.000  \\
& 200  &     0.073  & 0.959  & 0.916  & 1.000    &   &      0.079  & 0.955  & 0.866  & 1.000  \\
& 400  &     0.053  & 0.948  & 0.908  & 1.000    &   &      0.057  & 0.949  & 0.847  & 1.000  \\
100 & 50   &     0.105  & 0.970  & 0.902  & 1.000    &   &      0.113  & 0.959  & 0.850  & 1.000  \\
& 100  &     0.075  & 0.966  & 0.909  & 1.000    &   &      0.080  & 0.959  & 0.856  & 1.000  \\
& 200  &     0.052  & 0.957  & 0.917  & 1.000    &   &      0.056  & 0.953  & 0.861  & 1.000  \\
& 400  &     0.037  & 0.964  & 0.931  & 1.000    &   &      0.039  & 0.958  & 0.877  & 1.000  \\
200 & 50   &     0.075  & 0.970  & 0.902  & 1.000    &   &      0.081  & 0.966  & 0.850  & 1.000  \\
& 100  &     0.052  & 0.969  & 0.915  & 1.000    &   &      0.056  & 0.962  & 0.858  & 1.000  \\
& 200  &     0.037  & 0.959  & 0.915  & 1.000    &   &      0.040  & 0.954  & 0.862  & 1.000  \\
& 400  &     0.026  & 0.945  & 0.913  & 1.000    &   &      0.028  & 0.943  & 0.855  & 1.000  \\
400& 50   &     0.053  & 0.970  & 0.905  & 1.000    &   &      0.057  & 0.967  & 0.847  & 1.000  \\
& 100  &     0.038  & 0.973  & 0.916  & 1.000    &   &      0.040  & 0.968  & 0.864  & 1.000  \\
& 200  &     0.027  & 0.955  & 0.911  & 1.000    &   &      0.028  & 0.956  & 0.857  & 1.000  \\
& 400  &     0.019  & 0.953  & 0.918  & 1.000    &   &      0.020  & 0.953  & 0.869  & 1.000  \\
\hline
&     & \multicolumn{4}{c}{$\rho_e=0.5,\delta_{\varepsilon_e}=0.2$}& & \multicolumn{4}{c}{$\rho_e=0.5,\delta_{\varepsilon_e}=0.5$}\\
\cline{3-6} \cline{8-11}      
\hline
$N$ & $T$ & \text{RMSE} & ECR & ECR2 & RSC & & \text{RMSE} & ECR & ECR2 & RSC\\
\hline
50  & 50   &    0.174  & 0.951  & 0.894  & 0.914   &   &     0.191  & 0.945  & 0.828  & 0.864  \\
& 100  &    0.118  & 0.946  & 0.900  & 1.000   &   &     0.130  & 0.936  & 0.835  & 1.000  \\
& 200  &    0.083  & 0.948  & 0.905  & 1.000   &   &     0.092  & 0.942  & 0.844  & 1.000  \\
& 400  &    0.060  & 0.939  & 0.905  & 1.000   &   &     0.066  & 0.937  & 0.837  & 1.000  \\
100 & 50   &    0.119  & 0.948  & 0.885  & 1.000   &   &     0.130  & 0.937  & 0.825  & 1.000  \\
& 100  &    0.084  & 0.949  & 0.896  & 1.000   &   &     0.092  & 0.935  & 0.835  & 1.000  \\
& 200  &    0.059  & 0.948  & 0.912  & 1.000   &   &     0.065  & 0.937  & 0.850  & 1.000  \\
& 400  &    0.041  & 0.952  & 0.925  & 1.000   &   &     0.046  & 0.948  & 0.869  & 1.000  \\
200 & 50   &    0.084  & 0.955  & 0.885  & 1.000   &   &     0.092  & 0.943  & 0.825  & 1.000  \\
& 100  &    0.059  & 0.949  & 0.896  & 1.000   &   &     0.065  & 0.940  & 0.840  & 1.000  \\
& 200  &    0.042  & 0.943  & 0.906  & 1.000   &   &     0.046  & 0.939  & 0.854  & 1.000  \\
& 400  &    0.029  & 0.941  & 0.913  & 1.000   &   &     0.032  & 0.940  & 0.845  & 1.000  \\
400 & 50   &    0.060  & 0.950  & 0.879  & 1.000   &   &     0.065  & 0.938  & 0.817  & 1.000  \\
& 100  &    0.042  & 0.954  & 0.903  & 1.000   &   &     0.046  & 0.944  & 0.841  & 1.000  \\
& 200  &    0.030  & 0.943  & 0.904  & 1.000   &   &     0.033  & 0.941  & 0.846  & 1.000  \\
& 400  &    0.021  & 0.948  & 0.916  & 1.000   &   &     0.023  & 0.945  & 0.860  & 1.000  \\
\hline 
\end{tabular}
\end{table}

\begin{table}[htb]
\setlength{\tabcolsep}{4pt} % Default value: 6pt
\renewcommand{\arraystretch}{0.65} % Default value: 1
\caption{Simulation Results of DGP 2}\label{Table2.1}
\centering
\scalebox{0.85}{\begin{tabular}{cc cccccc c cccccc }
\hline
$d = 50$&     & \multicolumn{6}{c}{$\rho_e=0.2,\delta_{\varepsilon_e}=0.2$} & & \multicolumn{6}{c}{$\rho_e=0.2,\delta_{\varepsilon_e}=0.5$}\\
\cline{3-8} \cline{10-15}
\hline
$N$ & $T$ & RMSE1& RMSE2 & ECR &TPR & FPR&EER & &  RMSE1& RMSE2&TPR & FPR & RSC&EER\\
\hline
20   & 20  &      0.499  & 0.248  & 0.660   & 0.959 & 0.555  & 0.841  &  &   0.385  & 0.241  & 0.710   & 0.953 & 0.526  & 0.856  \\ 
& 30  &      0.378  & 0.165  & 0.768   & 0.994 & 0.364  & 0.898  &  &   0.284  & 0.163  & 0.796   & 0.993 & 0.333  & 0.908  \\ 
& 40  &      0.358  & 0.139  & 0.806   & 1.000 & 0.278  & 0.941  &  &   0.241  & 0.129  & 0.832   & 0.999 & 0.222  & 0.956  \\ 
30   & 20  &      0.388  & 0.175  & 0.734   & 0.995 & 0.370  & 0.901  &  &   0.308  & 0.169  & 0.758   & 0.989 & 0.369  & 0.914  \\ 
& 30  &      0.262  & 0.108  & 0.835   & 1.000 & 0.194  & 0.964  &  &   0.220  & 0.108  & 0.840   & 1.000 & 0.168  & 0.963  \\ 
& 40  &      0.219  & 0.081  & 0.873   & 1.000 & 0.101  & 0.983  &  &   0.188  & 0.084  & 0.873   & 1.000 & 0.096  & 0.990  \\ 
40   & 20  &      0.355  & 0.143  & 0.756   & 1.000 & 0.255  & 0.930  &  &   0.268  & 0.132  & 0.783   & 1.000 & 0.265  & 0.952  \\ 
& 30  &      0.224  & 0.082  & 0.860   & 1.000 & 0.105  & 0.982  &  &   0.194  & 0.083  & 0.860   & 1.000 & 0.122  & 0.983  \\ 
& 40  &      0.187  & 0.063  & 0.893   & 1.000 & 0.058  & 0.982  &  &   0.163  & 0.066  & 0.898   & 1.000 & 0.051  & 0.989  \\ 			\hline
&     & \multicolumn{5}{c}{$\rho_e=0.5,\delta_{\varepsilon_e}=0.2$} & & \multicolumn{5}{c}{$\rho_e=0.5,\delta_{\varepsilon_e}=0.5$}\\
\cline{3-8} \cline{10-15}
\hline
$N$ & $T$ & RMSE1& RMSE2 & ECR &TPR& FPR&EER &           &  RMSE1& RMSE2 & ECR &TPR& FPR&EER\\
\hline
20   & 20  &         0.481  & 0.266  & 0.640   & 0.949& 0.698   & 0.746    &  &  0.394  & 0.260  & 0.673   & 0.936 & 0.714  & 0.789  \\ 
& 30  &         0.377  & 0.185  & 0.739   & 0.988& 0.586   & 0.810    &  &  0.304  & 0.180  & 0.764   & 0.983 & 0.576  & 0.811  \\ 
& 40  &         0.376  & 0.157  & 0.766   & 1.000& 0.526   & 0.805    &  &  0.264  & 0.143  & 0.811   & 0.998 & 0.486  & 0.852  \\ 
30   & 20  &         0.345  & 0.183  & 0.713   & 0.987& 0.538   & 0.819    &  &  0.308  & 0.185  & 0.726   & 0.982 & 0.557  & 0.808  \\ 
& 30  &         0.257  & 0.121  & 0.804   & 0.999& 0.381   & 0.874    &  &  0.230  & 0.123  & 0.809   & 0.999 & 0.390  & 0.871  \\ 
& 40  &         0.227  & 0.096  & 0.846   & 1.000& 0.241   & 0.910    &  &  0.204  & 0.098  & 0.853   & 1.000 & 0.271  & 0.907  \\ 
40   & 20  &         0.295  & 0.148  & 0.737   & 0.999& 0.390   & 0.862    &  &  0.264  & 0.146  & 0.743   & 0.998 & 0.461  & 0.876  \\ 
& 30  &         0.218  & 0.094  & 0.832   & 1.000& 0.244   & 0.920    &  &  0.203  & 0.097  & 0.833   & 1.000 & 0.287  & 0.901  \\ 
& 40  &         0.193  & 0.075  & 0.869   & 1.000& 0.151   & 0.934    &  &  0.175  & 0.080  & 0.871   & 1.000 & 0.171  & 0.920  \\ 
\hline \hline
$d = 500$&     & \multicolumn{6}{c}{$\rho_e=0.2,\delta_{\varepsilon_e}=0.2$} & & \multicolumn{6}{c}{$\rho_e=0.2,\delta_{\varepsilon_e}=0.5$}\\
\cline{3-8} \cline{10-15}
\hline
$N$ & $T$ & RMSE1& RMSE2 & ECR & TPR&FPR&EER & &  RMSE1& RMSE2  & TPR&FPR& RSC&EER\\
\hline
20   & 20  &      0.794  & 0.438  & 0.463   & 0.837  & 0.800 & 0.650  &  &   0.614  & 0.457  & 0.552   & 0.726  & 0.735 & 0.773  \\ 
& 30  &      0.588  & 0.312  & 0.546   & 0.967  & 0.575 & 0.838  &  &   0.402  & 0.310  & 0.632   & 0.951  & 0.497 & 0.898  \\ 
& 40  &      0.563  & 0.245  & 0.577   & 0.995  & 0.451 & 0.856  &  &   0.327  & 0.226  & 0.690   & 0.987  & 0.373 & 0.934  \\ 
30   & 20  &      0.601  & 0.313  & 0.584   & 0.965  & 0.579 & 0.819  &  &   0.448  & 0.320  & 0.657   & 0.933  & 0.529 & 0.891  \\ 
& 30  &      0.352  & 0.192  & 0.711   & 0.999  & 0.345 & 0.945  &  &   0.302  & 0.200  & 0.722   & 0.997  & 0.366 & 0.949  \\ 
& 40  &      0.300  & 0.141  & 0.764   & 1.000  & 0.252 & 0.976  &  &   0.254  & 0.148  & 0.773   & 1.000  & 0.235 & 0.983  \\ 
40   & 20  &      0.540  & 0.243  & 0.619   & 0.995  & 0.435 & 0.867  &  &   0.387  & 0.233  & 0.687   & 0.993  & 0.427 & 0.928  \\ 
& 30  &      0.301  & 0.142  & 0.761   & 0.999  & 0.262 & 0.974  &  &   0.260  & 0.147  & 0.767   & 0.998  & 0.268 & 0.975  \\ 
& 40  &      0.246  & 0.097  & 0.833   & 1.000  & 0.166 & 0.986  &  &   0.218  & 0.104  & 0.836   & 1.000  & 0.186 & 0.987  \\ 
\hline
&     & \multicolumn{5}{c}{$\rho_e=0.5,\delta_{\varepsilon_e}=0.2$} & & \multicolumn{5}{c}{$\rho_e=0.5,\delta_{\varepsilon_e}=0.5$}\\
\cline{3-8} \cline{10-15}
\hline
$N$ & $T$ & RMSE1& RMSE2 & ECR  & TPR&FPR&EER &           &  RMSE1& RMSE2 & ECR  & TPR&FPR&EER\\
\hline
20   & 20  &      0.763  & 0.452  & 0.466   & 0.793 & 0.878  & 0.578   &  &    0.618  & 0.467  & 0.542  & 0.699 & 0.854   & 0.679  \\ 
& 30  &      0.591  & 0.327  & 0.549   & 0.960 & 0.791  & 0.695   &  &    0.429  & 0.325  & 0.626  & 0.925 & 0.739   & 0.781  \\ 
& 40  &      0.572  & 0.259  & 0.564   & 0.991 & 0.703  & 0.731   &  &    0.356  & 0.236  & 0.689  & 0.985 & 0.646   & 0.824  \\ 
30   & 20  &      0.517  & 0.319  & 0.610   & 0.943 & 0.710  & 0.726   &  &    0.430  & 0.323  & 0.649  & 0.912 & 0.709   & 0.784  \\ 
& 30  &      0.348  & 0.204  & 0.703   & 0.998 & 0.578  & 0.846   &  &    0.318  & 0.211  & 0.706  & 0.994 & 0.592   & 0.855  \\ 
& 40  &      0.307  & 0.157  & 0.748   & 1.000 & 0.463  & 0.908   &  &    0.274  & 0.161  & 0.755  & 0.999 & 0.503   & 0.894  \\ 
40   & 20  &      0.425  & 0.241  & 0.658   & 0.990 & 0.605  & 0.803   &  &    0.358  & 0.243  & 0.676  & 0.980 & 0.637   & 0.836  \\ 
& 30  &      0.291  & 0.155  & 0.747   & 0.999 & 0.490  & 0.883   &  &    0.273  & 0.162  & 0.743  & 0.997 & 0.534   & 0.882  \\ 
& 40  &      0.250  & 0.109  & 0.807   & 1.000 & 0.377  & 0.923   &  &    0.232  & 0.114  & 0.810  & 1.000 & 0.428   & 0.906  \\ 
\hline 
\end{tabular}}
\end{table}

\begin{table}[htb]
\setlength{\tabcolsep}{4pt} % Default value: 6pt
\renewcommand{\arraystretch}{0.65} % Default value: 1
\caption{Simulation Results of DGP 2 for $d=50$}\label{Table2.2}
\centering
\scalebox{0.95}{\begin{tabular}{cc cccccc c cccccc }
\hline
&     & \multicolumn{6}{c}{$\rho_e=0.2,\delta_{\varepsilon_e}=0.2$} & & \multicolumn{6}{c}{$\rho_e=0.2,\delta_{\varepsilon_e}=0.5$}\\
\cline{3-8} \cline{10-15}
\hline
$N$ & $T$ & RMSE1& RMSE2 & ECR &TPR & FPR&EER & &  RMSE1& RMSE2&TPR & FPR & RSC&EER\\
\hline
50  & 50   &             0.154  & 0.050  & 0.903 &       1.000    & 0.018  & 0.995          &            &        0.133  & 0.053  & 0.900  &1.000& 0.020  & 0.997  \\
& 100  &             0.119  & 0.035  & 0.920       &  1.000   & 0.007  & 1.000          &            &              0.089  & 0.037  & 0.925 &1.000 & 0.008  & 1.000  \\
& 200  &             0.126  & 0.024  & 0.938       & 1.000    & 0.003  & 0.997          &            &              0.063  & 0.026  & 0.935 &1.000 & 0.000  & 1.000  \\
& 400  &             0.090  & 0.017  & 0.945       &1.000     & 0.000  & 1.000          &            &              0.078  & 0.018  & 0.944&1.000  & 0.000  & 1.000  \\
100 & 50   &             0.115  & 0.034  & 0.906    &1.000        & 0.015  & 1.000          &            &           0.098  & 0.037  & 0.906 &1.000 & 0.010  & 1.000  \\
& 100  &             0.077  & 0.024  & 0.928       & 1.000    & 0.000  & 1.000          &            &              0.067  & 0.026  & 0.926 &1.000 & 0.000  & 1.000  \\
& 200  &             0.063  & 0.017  & 0.929       & 1.000    & 0.003  & 1.000          &            &              0.045  & 0.018  & 0.924&1.000  & 0.000  & 1.000  \\
& 400  &             0.059  & 0.012  & 0.945       & 1.000    & 0.003  & 1.000          &            &              0.031  & 0.013  & 0.941 &1.000 & 0.000  & 1.000  \\
200 & 50   &             0.132  & 0.024  & 0.916    &1.000         & 0.003  & 0.997          &            &           0.123  & 0.026  & 0.914  &1.000& 0.015  & 0.997  \\
& 100  &             0.062  & 0.017  & 0.914       &  1.000   & 0.000  & 1.000          &            &              0.050  & 0.019  & 0.913 &1.000 & 0.000  & 1.000  \\
& 200  &             0.038  & 0.012  & 0.943       & 1.000    & 0.000  & 1.000          &            &              0.033  & 0.013  & 0.944 &1.000 & 0.000  & 1.000  \\
& 400  &             0.034  & 0.009  & 0.931       & 1.000    & 0.000  & 1.000          &            &              0.023  & 0.009  & 0.931 &1.000 & 0.000  & 1.000  \\
400 & 50   &             0.088  & 0.017  & 0.902    & 1.000       & 0.000  & 1.000          &            &           0.089  & 0.019  & 0.902  &1.000& 0.000  & 1.000  \\
& 100  &             0.058  & 0.012  & 0.904       &  1.000  & 0.000  & 1.000          &            &              0.060  & 0.013  & 0.906 &1.000 & 0.000  & 1.000  \\
& 200  &             0.034  & 0.008  & 0.933       & 1.000   & 0.000  & 1.000          &            &              0.026  & 0.009  & 0.936 &1.000 & 0.000  & 1.000  \\
& 400  &             0.020  & 0.006  & 0.931       & 1.000   & 0.000  & 1.000          &            &              0.017  & 0.006  & 0.942 &1.000 & 0.000  & 1.000  \\
\hline
&     & \multicolumn{5}{c}{$\rho_e=0.5,\delta_{\varepsilon_e}=0.2$} & & \multicolumn{5}{c}{$\rho_e=0.5,\delta_{\varepsilon_e}=0.5$}\\
\cline{3-8} \cline{10-15}
\hline
$N$ & $T$ & RMSE1& RMSE2 & ECR &TPR& FPR&EER &           &  RMSE1& RMSE2 & ECR &TPR& FPR&EER\\
\hline
50  & 50   &       0.154  & 0.061  & 0.877  &1.000& 0.080  & 0.972              &                  &          0.142  & 0.064  & 0.864&1.000  & 0.120  & 0.952  \\
& 100  &             0.121  & 0.042  & 0.913&1.000  & 0.068  & 0.970        &                  &                0.099  & 0.045  & 0.903 &1.000 & 0.048  & 0.970  \\
& 200  &             0.145  & 0.029  & 0.926&1.000  & 0.010  & 0.995        &                  &                0.079  & 0.031  & 0.923 &1.000 & 0.030  & 0.987  \\
& 400  &             0.100  & 0.021  & 0.938 &1.000 & 0.005  & 1.000        &                  &                0.090  & 0.022  & 0.939 &1.000 & 0.008  & 1.000  \\
100 & 50   &          0.107  & 0.041  & 0.881 &1.000 & 0.033  & 0.985          &                  &              0.100  & 0.045  & 0.882 &1.000 & 0.050  & 0.980  \\
& 100  &             0.080  & 0.029  & 0.910 &1.000 & 0.013  & 0.997      &                  &                  0.072  & 0.031  & 0.909 &1.000 & 0.015  & 0.997  \\
& 200  &             0.063  & 0.020  & 0.922 &1.000 & 0.010  & 1.000      &                  &                  0.050  & 0.022  & 0.921 &1.000 & 0.010  & 1.000  \\
& 400  &             0.066  & 0.014  & 0.942 &1.000 & 0.008  & 1.000      &                  &                  0.041  & 0.016  & 0.934 &1.000 & 0.018  & 0.997  \\
200 & 50   &          0.126  & 0.029  & 0.892 &1.000 & 0.025  & 0.992          &                  &              0.080  & 0.031  & 0.894 &1.000 & 0.050  & 0.992  \\
& 100  &             0.054  & 0.021  & 0.906 &1.000 & 0.000  & 1.000       &                  &                 0.051  & 0.023  & 0.901 &1.000 & 0.003  & 1.000  \\
& 200  &             0.040  & 0.014  & 0.938 &1.000 & 0.003  & 1.000       &                  &                 0.037  & 0.015  & 0.942 &1.000 & 0.010  & 1.000  \\
& 400  &             0.034  & 0.010  & 0.930 &1.000 & 0.000  & 1.000       &                  &                 0.027  & 0.011  & 0.927&1.000  & 0.000  & 1.000  \\
400 & 50   &          0.090  & 0.021  & 0.876 &1.000 & 0.000  & 1.000          &                  &              0.091  & 0.023  & 0.876 &1.000 & 0.003  & 1.000  \\
& 100  &             0.061  & 0.015  & 0.895 &1.000 & 0.005  & 0.997       &                  &                 0.042  & 0.016  & 0.909 &1.000 & 0.010  & 1.000  \\
& 200  &             0.028  & 0.010  & 0.923 &1.000 & 0.000  & 1.000       &                  &                 0.027  & 0.011  & 0.926 &1.000 & 0.000  & 1.000  \\
& 400  &             0.021  & 0.007  & 0.929 &1.000 & 0.000  & 1.000       &                  &                 0.019  & 0.008  & 0.936 &1.000 & 0.000  & 1.000  \\
\hline 
\end{tabular}}
\end{table}

\begin{table}[htb]
\setlength{\tabcolsep}{4pt} % Default value: 6pt
\renewcommand{\arraystretch}{0.65} % Default value: 1
\centering
\caption{Simulation Results of DGP 2 for $d = 500$}\label{Table2.3}
\scalebox{0.95}{\begin{tabular}{cc cccccc c cccccc }
\hline
&     & \multicolumn{6}{c}{$\rho_e=0.2,\delta_{\varepsilon_e}=0.2$} & & \multicolumn{6}{c}{$\rho_e=0.2,\delta_{\varepsilon_e}=0.5$}\\
\cline{3-8} \cline{10-15}
\hline
$N$ & $T$ & RMSE1& RMSE2 & ECR & TPR&FPR&EER & &  RMSE1& RMSE2  & ECR  & TPR&FPR&EER\\
\hline
50  & 50   &            0.192  & 0.054  & 0.899 &1.000 & 0.028  & 0.995         &            &             0.172  & 0.056  & 0.900  &1.000& 0.035  & 0.995  \\
& 100  &            0.144  & 0.035  & 0.915 &1.000 & 0.008  & 0.997     &            &                 0.115  & 0.037  & 0.926  &1.000& 0.005  & 0.997  \\
& 200  &            0.163  & 0.024  & 0.923 &1.000 & 0.003  & 0.997    &            &                  0.080  & 0.026  & 0.930  &1.000& 0.000  & 1.000  \\
& 400  &            0.113  & 0.017  & 0.937 &1.000 & 0.003  & 1.000    &            &                  0.101  & 0.018  & 0.937  &1.000 & 0.005  & 1.000  \\
100 & 50   &            0.142  & 0.035  & 0.906 &1.000 & 0.025  & 0.995       &            &               0.124  & 0.037  & 0.912  &1.000 & 0.013  & 0.995  \\
& 100  &            0.095  & 0.024  & 0.908 &1.000 & 0.000  & 1.000     &            &                 0.085  & 0.026  & 0.919  &1.000 & 0.000  & 1.000  \\
& 200  &            0.074  & 0.017  & 0.934 &1.000 & 0.008  & 1.000    &            &                  0.058  & 0.018  & 0.941  &1.000 & 0.000  & 1.000  \\
& 400  &            0.075  & 0.012  & 0.942 &1.000 & 0.000  & 1.000    &            &                  0.040  & 0.013  & 0.939  &1.000 & 0.000  & 1.000  \\
200 & 50   &            0.164  & 0.024  & 0.900 &1.000 & 0.000  & 1.000       &            &               0.158  & 0.026  & 0.905  &1.000  & 0.000  & 1.000  \\
& 100  &            0.076  & 0.017  & 0.914 &1.000 & 0.003  & 1.000   &            &                   0.064  & 0.019  & 0.912  &1.000 & 0.000  & 1.000  \\
& 200  &            0.047  & 0.012  & 0.939 &1.000 & 0.000  & 1.000  &            &                    0.042  & 0.012  & 0.938  &1.000 & 0.000  & 1.000  \\
& 400  &            0.039  & 0.008  & 0.943 &1.000 & 0.000  & 1.000  &            &                    0.028  & 0.009  & 0.940  &1.000& 0.000  & 1.000  \\
400 & 50   &            0.112  & 0.017  & 0.899 &1.000 & 0.003  & 1.000    &            &                  0.109  & 0.019  & 0.900  &1.000 & 0.000  & 1.000  \\
& 100  &            0.075  & 0.012  & 0.909 &1.000 & 0.000  & 1.000    &            &                  0.074  & 0.013  & 0.911  &1.000& 0.000  & 1.000  \\
& 200  &            0.039  & 0.008  & 0.938 &1.000 & 0.000  & 1.000   &            &                   0.031  & 0.009  & 0.935  &1.000 & 0.000  & 1.000  \\
& 400  &            0.025  & 0.006  & 0.943 &1.000 & 0.000  & 1.000   &            &                   0.022  & 0.006  & 0.939  &1.000& 0.000  & 1.000  \\
\hline
&     & \multicolumn{5}{c}{$\rho_e=0.5,\delta_{\varepsilon_e}=0.2$} & & \multicolumn{5}{c}{$\rho_e=0.5,\delta_{\varepsilon_e}=0.5$}\\
\cline{3-8} \cline{10-15}
\hline
$N$ & $T$ & RMSE1& RMSE2 & ECR  & TPR&FPR&EER &           &  RMSE1& RMSE2 & ECR  & TPR&FPR&EER\\
\hline
50  & 50   &           0.197  & 0.062  & 0.887 &1.000 & 0.118  & 0.965         &                  &                 0.184  & 0.066  & 0.879&1.000  & 0.145  & 0.950  \\
& 100  &                 0.151  & 0.042  & 0.904  &1.000& 0.028  & 0.990     &                  &               0.129  & 0.044  & 0.910 &1.000 & 0.035  & 0.980  \\
& 200  &                 0.174  & 0.030  & 0.923 &1.000 & 0.008  & 0.997    &                  &                0.097  & 0.032  & 0.919 &1.000 & 0.035  & 0.970  \\
& 400  &                 0.125  & 0.021  & 0.928  &1.000& 0.005  & 1.000    &                  &                0.111  & 0.023  & 0.930 &1.000 & 0.020  & 0.992  \\
100 & 50   &              0.132  & 0.041  & 0.890 &1.000 & 0.035  & 0.990        &                  &               0.128  & 0.044  & 0.893 &1.000 & 0.048  & 0.980  \\
& 100  &                 0.098  & 0.029  & 0.909 &1.000 & 0.003  & 0.997     &                  &               0.091  & 0.031  & 0.914 &1.000 & 0.013  & 0.997  \\
& 200  &                 0.077  & 0.021  & 0.926 &1.000 & 0.015  & 1.000    &                  &                0.065  & 0.022  & 0.926 &1.000 & 0.015  & 1.000  \\
& 400  &                 0.083  & 0.014  & 0.936 &1.000 & 0.005  & 0.997    &                  &                0.050  & 0.016  & 0.932 &1.000 & 0.005  & 1.000  \\
200 & 50   &              0.161  & 0.029  & 0.884 &1.000 & 0.005  & 0.970       &                  &                0.098  & 0.031  & 0.882 &1.000 & 0.023  & 0.980  \\
& 100  &                 0.068  & 0.021  & 0.899 &1.000 & 0.005  & 1.000     &                  &               0.066  & 0.023  & 0.890 &1.000 & 0.010  & 1.000  \\
& 200  &                 0.049  & 0.014  & 0.930 &1.000 & 0.003  & 1.000    &                  &                0.046  & 0.015  & 0.932 &1.000 & 0.005  & 1.000  \\
& 400  &                 0.042  & 0.010  & 0.939 &1.000 & 0.008  & 1.000    &                  &                0.033  & 0.011  & 0.936 &1.000 & 0.005  & 1.000  \\
400 & 50   &              0.114  & 0.021  & 0.877 &1.000 & 0.003  & 1.000        &                  &               0.112  & 0.023  & 0.878 &1.000 & 0.020  & 0.997  \\
& 100  &                 0.076  & 0.015  & 0.898 &1.000 & 0.005  & 0.997    &                  &                0.050  & 0.016  & 0.908 &1.000 & 0.005  & 1.000  \\
& 200  &                 0.034  & 0.010  & 0.935 &1.000 & 0.000  & 1.000   &                  &                 0.033  & 0.011  & 0.934 &1.000 & 0.005  & 1.000  \\
& 400  &                 0.025  & 0.007  & 0.934 &1.000 & 0.000  & 1.000   &                  &                 0.024  & 0.008  & 0.939 &1.000 & 0.000  & 1.000  \\
\hline 
\end{tabular}}
\end{table}

\section{Technical Lemmas} \label{AP.A4}

In this section, we list several technical lemmas, and their proofs are given in Appendix C of the online document.

\begin{lemma}\label{LemmaB1}
Let $X_1,\ldots,X_T$ be mean zero independent random variables with $\max_{t\in [T]}\|X_t\|_q<\infty$ for some $q>2$. Then,
$$\textstyle
\Pr\left(\max_{t\in [T]}\sum_{s=1}^{t}X_s \geq x \right) \leq \left(1 + \frac{2}{q} \right)^{q} \frac{\sum_{t=1}^{T}\|X_t\|_q^q}{x^q} + 2\exp\left(  \frac{- 2e^{-q}(q+2)^{-2} x^2}{\sum_{t=1}^{T}\|X_t\|_2^2}\right).
$$
\end{lemma}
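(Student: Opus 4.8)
The plan is to prove this by truncation at a deviation-dependent level, splitting the event $\{\max_{t\le T}S_t\ge x\}$ (with $S_t=\sum_{s\le t}X_s$) according to whether some increment $X_s$ is ``large'' or all increments stay ``bounded.'' Because the $X_t$ have only $q$ finite moments, the moment generating function $E[e^{\lambda X_t}]$ need not exist, so one cannot apply Doob's maximal inequality to $e^{\lambda S_t}$ directly; truncation is what restores exponential integrability. Concretely, I would fix a threshold $y>0$ and write $X_t=X_t'+(X_t-X_t')$ with the one-sided truncation $X_t'=\min(X_t,y)$, the point of the one-sided choice being that $E[X_t']\le E[X_t]=0$, so the centering drift introduced by truncation has a sign that \emph{helps} the upper tail rather than hurting it. On the event that no increment exceeds $y$ one has $S_t=\sum_{s\le t}X_s'$, which gives the inclusion $\{\max_t S_t\ge x\}\subseteq\{\exists t: X_t>y\}\cup\{\max_t \sum_{s\le t}X_s'\ge x\}$.

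The two terms then correspond to the two terms in the statement. For the large-increment part, a union bound followed by Markov's inequality at order $q$ gives $\Pr(\exists t: X_t>y)\le\sum_t\Pr(|X_t|>y)\le y^{-q}\sum_t\|X_t\|_q^q$; choosing the truncation level $y=qx/(q+2)=x/(1+2/q)$ converts this into exactly $(1+2/q)^q\,x^{-q}\sum_t\|X_t\|_q^q$, which is the polynomial term. For the bounded part, since $\sum_{s\le t}E[X_s']\le 0$ I can pass to the centered running sum and apply Doob's maximal inequality to the exponential submartingale $\exp(\lambda\sum_{s\le t}(X_s'-E X_s'))$. The per-increment bound uses $|X_s'|\le y$ together with the elementary estimate $e^{u}-1-u\le \tfrac{u^2}{2}e^{\lambda y}$ valid for $u\le\lambda y$, which yields $E[e^{\lambda(X_s'-E X_s')}]\le \exp(\tfrac12\lambda^2 e^{\lambda y}E X_s^2)$ and hence, after multiplying over $s$,
$$
\Pr\Big(\max_{t\le T}\sum_{s\le t}(X_s'-E X_s')\ge x\Big)\le \exp\Big(-\lambda x+\tfrac12\lambda^2 e^{\lambda y}\textstyle\sum_t\|X_t\|_2^2\Big).
$$
It then remains to select $\lambda$, coupled to the already-fixed $y=qx/(q+2)$, so that the right-hand side collapses to $\exp(-2e^{-q}(q+2)^{-2}x^2/\sum_t\|X_t\|_2^2)$; the factor $e^{-q}$ and the power $(q+2)^{-2}$ are precisely what come out of this coordinated choice, and the leading constant $2$ absorbs the bookkeeping from the truncation-and-Doob step.

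The main obstacle is extracting the \emph{sharp} constants while keeping the bound valid across the whole relevant range of $x$. The subtlety is that the exponential term is not, by itself, an upper bound for the bounded-part probability at very large $x$ (where a Bennett-type estimate only decays polynomially in $x^2/\sum\|X_t\|_2^2$); rather, in the Fuk–Nagaev manner the exponential piece is used in its moderate-deviation range, and in the far tail the true bounded-part probability is genuinely higher-order small because reaching level $x$ with all increments capped at $y<x$ requires at least two medium deviations, whose probability is dominated by the single-jump (polynomial) term. Making this trade-off rigorous—i.e.\ verifying that the chosen $(y,\lambda)$ pair makes the sum of the two bounds dominate $\Pr(\max_t S_t\ge x)$ uniformly—is the delicate part, and is exactly where the careful coupling of the truncation level to the exponential parameter does the work. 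The remaining steps (the moment inequality for $X_t'$, Doob's inequality, and the algebra fixing $\lambda$) are routine once this coupling is in place.
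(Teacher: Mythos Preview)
The paper does not actually prove this lemma: immediately after the statement it simply records that the inequality for the terminal sum $\sum_{s\le T}X_s$ is Corollary~1.8 of Nagaev (1979), and that the extension to the running maximum $\max_{t\le T}\sum_{s\le t}X_s$ is due to Borovkov (1973). So there is no ``paper's proof'' to compare against beyond these citations.

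Your sketch is precisely the Fuk--Nagaev truncation argument that underlies Nagaev's Corollary~1.8: one-sided truncation $X_s'=\min(X_s,y)$ at level $y=qx/(q+2)$, a union bound plus order-$q$ Markov for the large-jump part, and an exponential Chernoff/Doob bound for the bounded part. Two remarks. First, by centering and applying Doob's submartingale inequality to $\exp\big(\lambda\sum_{s\le t}(X_s'-E X_s')\big)$ you obtain the maximal version in a single stroke, which is a mild streamlining of the two-step citation (Nagaev for the sum, Borovkov for the maximum). Your MGF estimate is correct: from $e^{u}-1-u\le \tfrac{u^2}{2}e^{\lambda y}$ for $u\le\lambda y$ one gets $E[e^{\lambda X_s'}]\le 1+\lambda E X_s'+\tfrac{\lambda^2}{2}e^{\lambda y}E X_s^2$, and multiplying by $e^{-\lambda E X_s'}$ and using $1+v\le e^v$ cancels the first-order term, yielding $E[e^{\lambda(X_s'-E X_s')}]\le\exp(\tfrac12\lambda^2 e^{\lambda y}E X_s^2)$ as you claim. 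Second, a small slip: you write ``$|X_s'|\le y$'', but only $X_s'\le y$ holds since the truncation is one-sided; this is harmless because your MGF bound only uses the upper bound $u\le\lambda y$. The remaining work---tuning $\lambda$ against $y=qx/(q+2)$ to land on the constants $(1+2/q)^q$ and $2e^{-q}(q+2)^{-2}$---is exactly the optimization carried out in Nagaev's paper, and you rightly flag it as the place where care is needed.
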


Corollary 1.8 in \cite{nagaev1979large} shows that the above inequality holds for $\sum_{s=1}^{T}X_s$, while \cite{borovkov1973notes} proves that the above inequality also holds for $\max_{t\in [T]}\sum_{s=1}^{t}X_s$.

\begin{lemma}\label{LemmaB2}
Let $X_1,\ldots,X_T$ be mean zero independent random variables, and let $x > y$ and $y \geq (4\sum_{t=1}^{T}\|X_t\|_q^q)^{1/q}$.
\begin{itemize}[leftmargin=*, itemsep=0.5pt, parsep=0.5pt, topsep=0.6pt]
\item [1.] 	 Suppose $\max_{t\in [T]}\|X_t\|_q<\infty$ for some $1\leq q \leq2$. Then,
$$\textstyle
\Pr\left(\sum_{t=1}^{T}X_t \geq x \right) \leq \sum_{t=1}^{T}\Pr\left(X_t \geq y \right) + \left(\frac{e^2 \sum_{t=1}^{T}\|X_t\|_q^q}{xy^{q-1}} \right)^{x/(2y)};
$$

\item [2.] 	 Suppose $\max_{t\in [T]}\|X_t\|_q<\infty$ for some $q \geq 2$, $\beta = q/(q+2)$ and $\alpha = 1-\beta$. Then,
$$\textstyle
\Pr\left(\sum_{t=1}^{T}X_t \geq x \right) \leq \sum_{t=1}^{T}\Pr\left(X_t \geq y \right) + \exp\left(\frac{-\alpha^2x^2}{2e^{t}\sum_{t=1}^{T}\|X_t\|_2^2 }\right)+ \left(\frac{ \sum_{t=1}^{T}\|X_t\|_q^q}{\beta xy^{q-1}} \right)^{\beta x/y}.
$$
\end{itemize}
\end{lemma}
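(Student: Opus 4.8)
The plan is to prove both parts by the classical Fuk--Nagaev truncation method, handling the heavy-tailed exceedances and the truncated bulk separately. Throughout write $A_q = \sum_{t=1}^{T}\|X_t\|_q^q$ and $V = \sum_{t=1}^{T}\|X_t\|_2^2$, and set $\bar X_t = \min(X_t, y)$. The first step is the truncation decomposition: on the event $\{\max_{1\le t\le T} X_t \le y\}$ one has $X_t = \bar X_t$ for every $t$, so a union bound gives
$$\Pr\Big(\sum_{t=1}^{T} X_t \ge x\Big) \le \Pr\Big(\max_{1\le t\le T} X_t > y\Big) + \Pr\Big(\sum_{t=1}^{T} \bar X_t \ge x\Big) \le \sum_{t=1}^{T}\Pr(X_t \ge y) + \Pr\Big(\sum_{t=1}^{T} \bar X_t \ge x\Big).$$
This already produces the first term in both displays, and it remains to bound the truncated probability.

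For the truncated sum I would use the exponential Chernoff bound $\Pr(\sum_t \bar X_t \ge x) \le e^{-hx}\prod_{t=1}^{T} E e^{h\bar X_t}$ for a free $h>0$, together with the fact that $\bar X_t \le X_t$ forces $E\bar X_t \le E X_t = 0$. The crux is the moment generating function estimate: since $\bar X_t \le y$, an elementary convexity inequality (monotonicity of $z\mapsto (e^{hz}-1-hz)/z^{2}$ together with the truncation bound $E\bar X_t^{k}\le y^{k-2}E\bar X_t^{2}$ for $k\ge 2$) yields, after using $1+u\le e^u$ and $E\bar X_t\le 0$, the Bennett-type bound $E e^{h\bar X_t} \le \exp\big(\frac{e^{hy}-1-hy}{y^{2}}\|X_t\|_2^2\big)$, and in parallel a $q$-th-moment version $E e^{h\bar X_t}\le \exp\big(\frac{e^{hy}-1-hy}{y^{q}}\|X_t\|_q^q\big)$ valid for the admissible range of $q$. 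Taking products gives one bound controlled by $V$ and one controlled by $A_q$.

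It remains to optimize over $h$. For part 1 ($1\le q\le 2$) only the $A_q$-bound is available, so I would combine $e^{-hx}$ with $\exp\big(\frac{e^{hy}-1-hy}{y^{q}}A_q\big)$ and choose $h$ of the form $y^{-1}\log(x y^{q-1}/A_q)$; the hypothesis $y\ge (4A_q)^{1/q}$ makes $h>0$ and absorbs the lower-order linear piece, collapsing the exponent to $x/(2y)$ and leaving the base $e^{2}A_q/(xy^{q-1})$. For part 2 ($q\ge 2$) I would split the target deviation as $x=\alpha x+\beta x$ with $\beta=q/(q+2)$, $\alpha=1-\beta$: the portion $\alpha x$ is matched against the Bennett bound controlled by $V$, producing the sub-Gaussian term $\exp(-\alpha^{2}x^{2}/(cV))$ for a fixed constant $c$, while the portion $\beta x$ is matched against the $A_q$-bound, producing the Poisson-type term $(A_q/(\beta x y^{q-1}))^{\beta x/y}$; the value $\beta=q/(q+2)$ is precisely the weight that balances the two regimes.

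The conceptual content lies entirely in the truncation and the two moment generating function estimates; the main obstacle is bookkeeping the constants through the two optimizations. I expect the delicate points to be (i) verifying the elementary inequality for $(e^{hz}-1-hz)/|z|^{q}$ in the relevant range of $q$ and pushing it through the possibly unbounded negative tail of $\bar X_t$, and (ii) in part 2, calibrating $\alpha$ and $\beta$ so that the sub-Gaussian and Poisson-type contributions emerge simultaneously with exactly the stated exponents $\alpha^2 x^2/(cV)$ and $\beta x/y$, which is where the specific choice $\beta=q/(q+2)$ must be tracked with care.
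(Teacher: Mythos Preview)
The paper does not actually prove this lemma: its entire ``proof'' is the one-line citation that part (1) is Corollary~1.6 and part (2) is Corollary~1.7 of \cite{nagaev1979large}. Your proposal goes well beyond this and sketches the classical Fuk--Nagaev argument (one-sided truncation at level $y$, union bound on the exceedances, Chernoff/Bennett bound on the truncated MGF, then optimization in $h$), which is essentially how Nagaev himself derived those corollaries. So there is no discrepancy of method to discuss; you are simply reproducing the source the paper cites, while the paper treats the result as a black box.

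One caution on your sketch for part (1): the ``$q$-th moment version'' of the MGF bound, $E e^{h\bar X_t}\le \exp\big((e^{hy}-1-hy)y^{-q}\|X_t\|_q^q\big)$, is not obtained by the same convexity/monotonicity trick as the $q=2$ case, because for $1\le q<2$ the function $(e^{hz}-1-hz)/|z|^{q}$ is not monotone on the whole line and the unbounded negative tail of $\bar X_t$ has to be handled separately (this is exactly the point you flagged in (i)). Nagaev's route at this step is slightly different from the Bennett-style argument you outline; if you intend to write out the details, it is worth consulting the original proof of Corollary~1.6 rather than trying to push the $q=2$ template through.
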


Lemma \ref{LemmaB2} is Corollaries 1.6--1.7 in \cite{nagaev1979large}.

\smallskip

Before proceeding further, we define 	$\delta_{q}(u_j,t) \coloneqq  \sup_{|\mathbf{w}|_2<\infty}\|\mathbf{w}^\top[\mathbf{u}_j (\mathcal{F}_t) -\mathbf{u}_j (\mathcal{F}_t^*)] \|_q$,
\begin{eqnarray*}
\delta_{q}(x_jx_{j'},t) &\coloneqq & \sup_{|\mathbf{w}|_2<\infty}\|\mathbf{w}^\top[\mathbf{g}_j(\mathcal{F}_t) \circ \mathbf{g}_{j'}(\mathcal{F}_t) - \mathbf{g}_j(\mathcal{F}_t^*) \circ \mathbf{g}_{j'}(\mathcal{F}_t^*)]\|_{q},\nonumber \\
\Delta_{q}(u_j,m) &\coloneqq & \sum_{k=m}^\infty \delta_{q}(u_j,k),\quad\Delta_{\nu}(x_jx_{j'},m) \coloneqq  \sum_{k=m}^\infty \delta_{\nu}(x_jx_{j'},k).
\end{eqnarray*}

\begin{lemma}\label{LemmaB3}
\item
\begin{enumerate}[leftmargin=*, itemsep=0.5pt, parsep=0.5pt, topsep=0.6pt]
\item Let Assumption \ref{ASSUMPTION1} hold. Then for some constant $c_q>0$ only depending on $q$
$$
\|\max_{t\in [T]}|S_{j,Nt}| \|_q 
\leq c_q \sqrt{NT} \Delta_{2}(u_j,0) + c_q \sqrt{N}T^{1/q}\sum_{m=1}^{\infty}(m\wedge T)^{1/2-1/q}\delta_q(u_j,m),
$$
where $S_{j,Nt} = \sum_{s=1}^{t}\sum_{i=1}^{N}x_{j,is}e_{is}$;

\item Let Assumption \ref{ASSUMPTION2}.1 hold. Then for some constant $c_q>0$ only depending on $q$
\begin{eqnarray*}
&&\left\|\max_{t\in [T]}\left|\sum_{s=1}^{t}\sum_{i=1}^{N}[x_{j,is}x_{j',is}-E(x_{j,is}x_{j',is})]\right|\ \right\|_{q}\\
&\leq&c_{q} \sqrt{NT} \Delta_{2}(x_jx_{j'},0) + c_{q} \sqrt{N}T^{1/q}\sum_{m=1}^{\infty}(m\wedge T)^{1/2-1/q}\delta_q(x_jx_{j'},m).
\end{eqnarray*}

\end{enumerate}	
\end{lemma}

\begin{lemma}\label{LemmaB4}
Let Assumption \ref{ASSUMPTION2}.1 hold. Then for any $x \geq \sqrt{NT}\mu_x^{1+1/q}$ and some positive constants $C_1,C_2,C_3 > 0$ not depending on $x,N,T$  
$$
\Pr\left(\max_{t\in [T]}\left|\sum_{k=1}^{t}\sum_{i=1}^{N}[x_{j,ik}x_{j',ik} - E(x_{j,ik}x_{j',ik})]\right|\geq x\right)
\leq  C_1\frac{TN^{q/2}}{x^q} + C_2 \exp\left( - C_3 \frac{x^2}{TN} \right),
$$
where $\mu_{x} \coloneqq \sum_{m=1}^{\infty}\mu_{x,m} < \infty$ and $\mu_{x,m} \coloneqq \left(m^{q/2-1}\delta_{q}^{q}(x_jx_{j'},m)\right)^{1/(q+1)}$.
\end{lemma}

\begin{lemma}\label{LemmaB5}
Let Assumption \ref{ASSUMPTION2} hold.
\begin{enumerate}[leftmargin=*, itemsep=0.5pt, parsep=0.5pt, topsep=0.6pt]
\item Let $\widehat{\pmb{\Sigma}}_{x} = \mathbf{X}^\top\mathbf{X}/(NT)$. For any $\mathbf{v}\in\mathbb{V}$, we have
$$
\mathbf{v}^\top \widehat{\pmb{\Sigma}}_{x} \mathbf{v} \geq \mathbf{v}^\top \pmb{\Sigma}_{x} \mathbf{v} -  C_0s\sqrt{\log d/(NT)}|\mathbf{v}|_2^2
$$
with probability larger than $1 - C_1 (\frac{dT^{1-q/2}}{(\log d)^{q/2}} +  d^{-C_2} )$
for some constant $C_0,C_1,C_2>0$ not depending on $N$ and $T$.

\item $|\mathbf{X}_{-j}^\top\pmb{\eta}_j/(NT)|_{\infty} = O_P(\sqrt{\log d/(NT)})$,  

$|\widehat{\pmb{\gamma}}_j - \pmb{\gamma}_j|_2 = O_P(\sqrt{s_j \log d/(NT)})$,

and $|\widehat{\pmb{\gamma}}_j - \pmb{\gamma}_j|_1 = O_P(s_j \sqrt{ \log d/(NT)})$ for $j=1,2,\ldots,d$,  

where $s_j = |\{k\neq j\mid \Omega_{j,k} \neq 0 \}|$ denotes the row sparsity with respect to rows of $\pmb{\Omega}_x = \pmb{\Sigma}_x^{-1}$.

\item Let $\pmb{\Omega}_{x,j}$ be the $j^{th}$ row of $\pmb{\Omega}_x$,  $\max_{j\in [d]}s_j = o(\sqrt{NT/\log d})$ and $\widetilde{\omega}_j= O(\sqrt{\log d/(NT)})$ for $1\leq j\leq d$, then $1/\widehat{\tau}_j^2 = O_P(1)$,  $|\widehat{\pmb{\Omega}}_{x,j} - \pmb{\Omega}_{x,j}|_1 = O_P(s_j\sqrt{\log d/(NT)})$, $|\widehat{\pmb{\Omega}}_{x,j} - \pmb{\Omega}_{x,j}|_2 = O_P(\sqrt{s_j\log d/(NT)})$, $|\pmb{\Omega}_{x,j}|_1 = O(s_j^{1/2})$, and  $|\widehat{\pmb{\Omega}}_{x,j}|_1 = O_P(s_j^{1/2})$.
\end{enumerate}
\end{lemma}

We further let $\widetilde{\mathbf{u}}_{it} \coloneqq E[\mathbf{u}_{it} \mid \mathcal{F}_{t-m,t}]$, $\mathbf{u}_{it} \coloneqq \mathbf{x}_{it}e_{it}$, $\overline{\mathbf{u}}_t \coloneqq \frac{1}{\sqrt{N}}\sum_{i=1}^{N}\mathbf{u}_{it}$, $\overline{u}_{j,t}$ be the $j^{th}$ element of $\overline{\mathbf{u}}_t$,  $\widetilde{\overline{\mathbf{u}}}_t \coloneqq \frac{1}{\sqrt{N}}\sum_{i=1}^{N}\widetilde{\mathbf{u}}_{it}$,  $\widetilde{\overline{u}}_{j,t}$ be the $j^{th}$ element of $\widetilde{\overline{\mathbf{u}}}_t$, $\mathcal{F}_{t-m,t} \coloneqq (\pmb{\varepsilon}_t,\ldots,\pmb{\varepsilon}_{t-m})$, and $\Psi_{q}(u_j,m) \coloneqq  (\sum_{k = m}^{\infty}\delta_{q}^{q'}(u_j,k) )^{1/q'}$ with $q' = 2\wedge q$. 

\begin{lemma}\label{LemmaB6}
Let Assumption \ref{ASSUMPTION1} hold. Then
\begin{itemize}[leftmargin=*, itemsep=0.5pt, parsep=0.5pt, topsep=0.6pt]
\item [1.] $\|\sum_{t=1}^{T}b_t\overline{u}_{j,t}\|_q = O ((\sum_{t=1}^{T}b_t^2)^{1/2} )$ and  $\|\sum_{t=1}^{T}b_t(\overline{u}_{j,t}-\widetilde{\overline{u}}_{j,t})\|_q = O ((\sum_{t=1}^{T}b_t^2)^{1/2}\Delta_q(u_j,m+1) )$ for any fixed sequences $\{b_t\}_{t\in [T]}$.

\item [2.] Let $a_1,a_2, \ldots\in \mathbb{R}$, $A_T = (\sum_{t=1}^{T}a_t^2)^{1/2}$, and $d_{j,m,q} = \sum_{t=0}^{\infty}\min(\delta_{q}(u_j,t),\Psi_q(u_j,m+1))$. Further, denote $L_{NT,j,j'} = \sum_{1\leq s < t \leq T}a_{t-s} \overline{u}_{j,t}\overline{u}_{j',s}$ and $\widetilde{L}_{NT,j,j'} = \sum_{1\leq s < t \leq T}a_{t-s} \widetilde{\overline{u}}_{j,t}\widetilde{\overline{u}}_{j',s}$, for some $q\geq 4$ we have
$$
\|L_{NT,j,j'} - E(L_{NT,j,j'}) - \widetilde{L}_{NT,j,j'} - E(\widetilde{L}_{NT,j,j'})\|_{q/2}
= O(\sqrt{T}A_T (d_{j,m,q}+d_{j',m,q} ));
$$

\item [3.] $\|\sum_{s,t=1}^{T}a_{s,t}\left(\overline{u}_{j,t}\overline{u}_{j',s}-E(\overline{u}_{j,t}\overline{u}_{j',s})\right)\|_{q/2} = O (\sqrt{T}B_T )$ for some $\forall q\geq 4$ and any given fixed sequences $\{a_{s,t}\}_{1\leq s,t\leq T}$, where $B_T^2 = \max \{\max_{t\in [T]}\sum_{s=1}^Ta_{s,t}^2, \max_{s\in [T]}\sum_{t=1}^Ta_{s,t}^2 \}$.
\end{itemize}
\end{lemma}

\begin{lemma}\label{LemmaB7}
Let Assumption \ref{ASSUMPTION1} hold with $q > 4$.
\begin{itemize}[leftmargin=*, itemsep=0.5pt, parsep=0.5pt, topsep=0.6pt]
\item [1.] Let $x_T > 0$ with $T^{1+\delta}/x_T\to 0$ for some $\delta > 0$. For $0 < \theta < 1$, there exists a constant $c_{q,\delta,\theta}$ only depending on $q$, $\delta$ and $\theta$ such that
$$
\Pr\left(|L_{NT,j,j'} - E(L_{NT,j,j'})| \geq x_T \right) \leq c_{q,\delta,\theta}x_T^{-q/2} (\log T) (T^{q/2-(\alpha-1)\theta q/2}+T).
$$

\item [2.] Let $a_{k} = 0$ for $k>\ell$ and $x_T > 0$ such that $T^{\delta}\sqrt{T\ell}/x_T\to 0$ for some $\delta > 0$. For $0 < \theta < 1$, there exists a constant $c_{q,\delta,\theta}$ only depending on $q$, $\delta$ and $\theta$ such that
$$
\Pr\left(|L_{NT,j,j'} - E(L_{NT,j,j'})| \geq x_T \right)
\leq c_{q,\delta,\theta}x_T^{-q/2} (\log T) ((T\ell)^{q/4}T^{-(\alpha-1)\theta q/2}+T\ell^{q/2-1-(\alpha-1)\theta q/2}+T).
$$
\end{itemize}
\end{lemma}

\begin{lemma}\label{LemmaB8}
Let Assumptions \ref{ASSUMPTION1}--\ref{ASSUMPTION4} hold with $q>4$.
\begin{itemize}[leftmargin=*, itemsep=0.5pt, parsep=0.5pt, topsep=0.6pt]
\item [1.] If $\frac{d^2T\log T}{(T\ell \log d)^{q/4} }\to 0$,  
$\displaystyle\max_{  k,l\in [d]} |\frac{1}{T}\sum_{t,s=1}^{T}a((t-s)/\ell) (\overline{u}_{k,t}\overline{u}_{l,s} - E(\overline{u}_{k,t}\overline{u}_{l,s}) ) | = O_P(\sqrt{\ell \log d /T}).$

\item [2.] If $\frac{s^2\sqrt{\ell \log d}}{\sqrt{T}} < \infty$, $\displaystyle
\max_{ k,l\in [d]} |\frac{1}{T}\sum_{t,s=1}^{T}a((t-s)/\ell) (\widehat{\overline{u}}_{k,t}-\overline{u}_{k,t} ) (\widehat{\overline{u}}_{l,s} - \overline{u}_{l,s} ) | = O_P(\sqrt{\ell \log d /T}).$

\item [3.] If $\frac{d^3T\log T}{(T\ell \log d)^{q/4} }\to 0$, $\displaystyle
\max_{ k,l\in [d]} |\frac{1}{T}\sum_{t,s=1}^{T}a((t-s)/\ell) (\widehat{\overline{u}}_{k,t}-\overline{u}_{k,t} ) \overline{u}_{l,s} | = o_P(\sqrt{\ell \log d /T}).
$
\end{itemize}
\end{lemma}

\section{Results of Section \ref{Sec2}}\label{AP.A5}
\begin{proof}[Proof of Lemma \ref{LEMMA1}]
\item

\noindent (1).
For notational simplicity, let $\mu_u = \sum_{m=1}^{\infty}\mu_{u,m}$ and $ \mu_{u,m} =  (m^{q/2-1}\delta_q^q(u_j,m) )^{1/(q+1)}$.

We apply the martingale approximation technique and approximate $u_{j,it} = x_{j,it}e_{it}$ by $m$-dependent processes. Then we prove this lemma by using delicate block techniques and the results on independent random variables.

Let $\{\lambda_m\}_{m=1}^T$ be a positive sequence such that $\sum_{m=1}^{T}\lambda_m \leq 1$. For any $t\geq 1$ and $m\geq 0$, define $S_{j,Nt,m} = \sum_{i=1}^{N}\sum_{k=1}^{t}u_{j,ik,m}$ and $u_{j,ik,m} = E\left[x_{j,ik}e_{ik} \mid \mathcal{F}_{k,k-m}\right]$, where $\mathcal{F}_{k,k-m} = (\pmb{\varepsilon}_k,\pmb{\varepsilon}_{k-1},\ldots,\pmb{\varepsilon}_{k-m})$. Therefore, $u_{j,ik,m}$ and $u_{j,ik',m}$ are independent if $|k-k'|>m$. Write $
u_{j,it} = u_{j,it} - u_{j,it,T} + \sum_{m=1}^{T}(u_{j,it,m}-u_{j,it,m-1}) + u_{j,it,0}$. Let $M_{t,m} = \sum_{k=1}^{t}\sum_{i=1}^{N}(u_{j,ik,m}-u_{j,ik,m-1})$ and $M_{T,m}^* = \max_{1\leq t\leq T}|M_{t,m}|$, and thus $S_{j,NT,T}-S_{j,NT,0} = \sum_{t=1}^{T}\sum_{i=1}^{N}(u_{j,it,T}-u_{j,it,0}) = \sum_{m=1}^{T}M_{T,m}$. Then, write
\begin{eqnarray*}
\Pr\left(\max_{1\leq t\leq T}|S_{j,Nt}| \geq 5x \right) &\leq& \sum_{m=1}^{T}\Pr\left(M_{T,m}^*\geq 3\lambda_m x\right)  + \Pr\left(\max_{1\leq t \leq T} | S_{j,Nt,0}|\geq x\right)\\
&&+  \Pr\left(\max_{1\leq t \leq T} | S_{j,Nt} - S_{j,Nt,T}| \geq x\right) 
\eqqcolon  I_1 + I_2 + I_3.
\end{eqnarray*}

Consider $I_1$. Let $Y_{l,m} = \sum_{i=1}^{N}\sum_{k=1+(l-1)m}^{\min(lm,T)}(u_{j,ik,m}-u_{j,ik,m-1})$, $b = \lfloor T/m \rfloor + 1$, $W_{s,m}^{e} = \sum_{l=1}^{s}((1+(-1)^l)/2) \cdot Y_{l,m}$,  $W_{s,m}^{o} = \sum_{l=1}^{s}((1-(-1)^l)/2) \cdot Y_{l,m}$ and $\lfloor t \rfloor_m = \lfloor t/m \rfloor \times m$. Hence, for any $x > 0$ we have
\begin{eqnarray*}
\Pr\left(M_{T,m}^*\geq 3\lambda_m x\right)
&\leq& \Pr\left(\max_{1\leq s\leq b}|W_{s,m}^e|\geq \lambda_m x\right) + \Pr\left(\max_{1\leq s\leq b}|W_{s,m}^o|\geq \lambda_m x\right)\\
&& + \frac{T}{m} \Pr\left(\max_{1\leq t \leq m}|M_{t,m}| \geq \lambda_m x\right) \eqqcolon  I_{1,1} + I_{1,2} + I_{1,3}.
\end{eqnarray*}

For $I_{1,1}$, since $Y_{2,m},Y_{4,m},\ldots$ are independent, then by using Nagaev inequality for i.i.d. random variables (see Lemma \ref{LemmaB1}), we have
\begin{eqnarray*}
\textstyle	\Pr\left(\max_{1\leq s\leq b}|W_{s,m}^e|\geq \lambda_m x\right) \leq \left(1 + 2/q\right)^q\frac{\sum_{l=1}^{b}\|Y_{l,m}\|_q^q}{(\lambda_mx)^q} + 2\exp\left( - \frac{2(\lambda_mx)^2}{e^q(q+2)^2\sum_{l=1}^{b}\|Y_{l,m}\|_2^2} \right).
\end{eqnarray*}
In addition, by the proof of Lemma \ref{LemmaB3}, we have $$
\|Y_{l,m}\|_q \leq \sqrt{(q-1)(\min(lm,T)-(l-1)m)}\sqrt{N} \delta_q(u_j,m),$$
which follows that
$$
I_{1,1} \leq \left(1 + 2/q\right)^q(q-1)^{q/2} \frac{T}{x^q} \frac{N^{q/2}m^{q/2-1}\delta_q^q(u_j,m)}{\lambda_m^q}+ 2\exp\left( - \frac{2(\lambda_mx)^2}{e^q(q+2)^2 NT\delta_2^2(u_j,m)}\right).
$$
Similarly,
$$
I_{1,2} \leq \left(1 + 2/q\right)^q(q-1)^{q/2} \frac{T}{x^q} \frac{N^{q/2}m^{q/2-1}\delta_q^q(u_j,m)}{\lambda_m^q}+ 2\exp\left( - \frac{2(\lambda_m x)^2}{e^q(q+2)^2 NT\delta_2^2(u_j,m)}\right).
$$
Consider $I_{1,3}$. By Doob $L_p$ maximal inequality and Burkholder inequality, we have
\begin{eqnarray*}
E\left(\max_{1\leq t\leq m}|M_{t,m}|^q\right) &=& E\left(\max_{1\leq t\leq m}|\sum_{k=t}^{m}\sum_{i=1}^{N}(u_{j,ik,m}-u_{j,ik,m-1})|^q\right)\\
&\leq& (q/(q-1))^q E\left(|\sum_{k=1}^{m}\sum_{i=1}^{N}(u_{j,ik,m}-u_{j,ik,m-1})|^q\right)\\
&\leq& (q/(q-1))^q (q-1)^{q/2} N^{q/2} m^{q/2}\delta_{q}^q(u_j,m),
\end{eqnarray*}
which follows that
\begin{eqnarray*}
\Pr\left(\max_{1\leq t \leq m}|M_{t,m}| \geq \lambda_m x \right) &\leq&  (q/(q-1))^q (q-1)^{q/2} \frac{N^{q/2} m^{q/2}\delta_{q}^q(u_j,m)}{x^q\lambda_m^q}.
\end{eqnarray*}
Combining the above analysis and choose $\lambda_m = \mu_{u,m} / \mu_u$, we have
\begin{eqnarray*}
\sum_{m=1}^{T}\Pr\left(M_{T,m}^*\geq 3\lambda_m x\right) &\leq& c_q \frac{TN^{q/2}}{x^q}\sum_{m=1}^{T} \frac{m^{q/2-1}\delta_q^q(u_j,m)}{\lambda_m^q}+ 4\sum_{m=1}^{T}\exp\left( -c_q \frac{2(\lambda_mx)^2}{NT\delta_2^2(u_j,m)}\right)\\
&\leq& c_q \frac{TN^{q/2}}{x^q}\mu^{q+1}+ 4\sum_{m=1}^{T}\exp\left( -c_q \frac{2(\mu_{u,m}x)^2}{NT\mu^2\delta_2^2(u_j,m)}\right)
\end{eqnarray*}
for some constant $c_q > 0$ only depending on $q$. 

Consider $I_2$. Note that $\{u_{j,it,0}\}$ are independent random variables, by using Nagaev inequality and $\|\sum_{i=1}^{N}u_{j,it,0} \|_q \leq \sqrt{N}\delta_q(u_j,0)$, we have
\begin{eqnarray*}
\Pr\left(\max_{1\leq t \leq T} | S_{j,Nt,0}|\geq x\right)
&\leq& \textstyle \left(1 + 2/q\right)^q\frac{T\|\sum_{i=1}^{N}u_{j,it,0} \|_q^q}{x^q} + 2\exp\left( - \frac{2x^2}{e^q(q+2)^2T\| \sum_{i=1}^{N}u_{j,it,0} \|_2^2} \right)\\
&\leq& \textstyle\left(1 + 2/q\right)^q\frac{TN^{q/2}\delta_q^q(u_j,0)}{x^q} + 2\exp\left( - \frac{2x^2}{e^q(q+2)^2TN\delta_2^2(u_j,0)} \right).
\end{eqnarray*}

Finally, consider $I_3$. Since $\|\max_{1\leq t \leq T} | S_{j,Nt} - S_{j,Nt,T}| \|_q \leq 3\sqrt{q-1}\sqrt{NT}\sum_{m=T+1}^{\infty}\delta_q(u_j,m)$ by the proof of Lemma \ref{LemmaB3}, we have
\begin{eqnarray*}
\Pr\left(\max_{1\leq t \leq T} | S_{j,Nt} - S_{j,Nt,T}|\geq x\right) &\leq& \frac{1}{x^q} (3\sqrt{q-1}\sqrt{NT}\sum_{m=T+1}^{\infty}\delta_q(u_j,m))^q\\
&\leq& c_q\frac{1}{x^q} N^{q/2}T(\sum_{m=T+1}^{\infty}\mu_{u,m})^{q+1}
\end{eqnarray*}
in view of 
\begin{eqnarray*}
&&	(\sum_{m=T+1}^{\infty}\delta_q(u_j,m))^q\leq(\sum_{m=T+1}^{\infty}\delta_q^{q/(q+1)}(u_j,m))^{q+1}\\
&\leq& (\sum_{m=T+1}^{\infty}(m/T)^{\frac{q/2-1}{q+1}}\delta_q^{q/(q+1)}(u_j,m))^{q+1}
=T^{1-q/2}(\sum_{m=T+1}^{\infty}\mu_{u,m})^{q+1}.
\end{eqnarray*}
Combining the above analysis, we then have for any $x  > 0$,
\begin{eqnarray*}
\Pr\left(\max_{1\leq t\leq T}|S_{j,Nt}|\geq x\right) &\leq& c_q\frac{TN^{q/2}}{x^q}\left(\mu^{q+1}+\delta_q^q(u_j,0)\right) + 4\sum_{m=1}^{\infty}\exp\left( -c_q \frac{\mu_{u,m}^2x^2}{NT\mu_u^2\delta_2^2(u_j,m)}\right)\\
&&+2\exp\left( - \frac{c_qx^2}{TN\delta_2^2(u_j,0)} \right).
\end{eqnarray*}
Note that if $x = \sqrt{NT}\mu_{u}^{1+1/q} y$ with $y>0$, then $\mu_{u,m}^2x^2 /(NT\mu_u^2\delta_2^2(u_j,m))\geq m^{1-2/q}y^2$ and thus
$$
\sum_{m=1}^{\infty}\exp\left( -c_q \frac{\mu_{u,m}^2x^2}{NT\mu_u^2\delta_2^2(u_j,m)}\right) \leq \sum_{m=1}^{\infty}\exp\left( -c_q m^{1-2/q}y^2\right).
$$
In addition, since for any $s>0$$$\sup_{y\geq 1}\sum_{m=1}^{\infty}\exp\left( -m ^sy^2\right) \exp(y^2) = \sum_{m=1}^{\infty}\exp\left( -m ^s\right) \exp(1)$$ and $$\sum_{m=1}^{\infty}\exp\left( -m ^sy^2\right) \leq \sum_{m=1}^{\infty}\exp\left( -m^s\right)\exp(1-y^2),$$ thus for some $c_q'>0$ $$\sum_{m=1}^{\infty}\exp\left( -c_q \frac{\mu_{u,m}^2x^2}{NT\mu_u^2\delta_2^2(u_j,m)}\right) \leq c_q^\prime\exp(-c_qx^2(NT\mu_u^{2+2/q})).$$

\smallskip

\noindent (2). Recall that $\widetilde{\mathbf{u}}_{it} = E[\mathbf{u}_{it} \mid \mathcal{F}_{t-m,t}]$, $\mathbf{u}_{it} = \mathbf{x}_{it}e_{it}$, $\overline{\mathbf{u}}_t = \frac{1}{\sqrt{N}}\sum_{i=1}^{N}\mathbf{u}_{it}$, $\overline{u}_{j,t}$ is the $j^{th}$ element of $\overline{\mathbf{u}}_t$,  $\widetilde{\overline{\mathbf{u}}}_t = \frac{1}{\sqrt{N}}\sum_{i=1}^{N}\widetilde{\mathbf{u}}_{it}$ and  $\widetilde{\overline{u}}_{j,t}$ is the $j^{th}$ element of $\widetilde{\overline{\mathbf{u}}}_t$. For notation simplicity, let $\ell = T^{\gamma}$ and $L_{NT,jj'} = \sum_{1\leq s < t \leq T}a_{t-s} \overline{u}_{j,t}\overline{u}_{j',s}$. For $\gamma < \theta <1$, let $m_T = \lfloor T^{\sqrt{\theta}} \rfloor$, $\widetilde{\overline{u}}_{j,t,\theta} = \frac{1}{\sqrt{N}}\sum_{i=1}^{N}\widetilde{u}_{j,it,\theta}$ and $Q_{NT,\theta} = \sum_{1\leq s < t \leq T}a_{t-s} \widetilde{\overline{u}}_{j,t,\theta}\widetilde{\overline{u}}_{j',s,\theta}$. By using Lemmas \ref{LemmaB6} (2) and (3), we have for any $x_T>0$
\begin{eqnarray*}
\Pr(|L_{NT,j,j'} - E(L_{NT,j,j'}) - Q_{NT,\theta} + E(Q_{NT,\theta})|\geq x_T/2) \leq C_{q,M} x_{T}^{-q/2}(T\ell)^{q/4}T^{- (\alpha-1)\sqrt{\theta} q/2}.
\end{eqnarray*}
Split $[T]$ into blocks $B_1,\ldots,B_{b_T}$ with block size $2m_T$ and define $$Q_{NT,\theta,k} =\sum_{t\in B_k}\sum_{1\leq s \leq t}a_{t-s} \widetilde{\overline{u}}_{j,t,\theta}\widetilde{\overline{u}}_{j',s,\theta}.$$
By using Lemma \ref{LemmaB2} (2) and Lemma \ref{LemmaB6} (3), we have for any $M>1$
\begin{eqnarray*}
&&\Pr(|Q_{NT,\theta} - E(Q_{NT,\theta})|\geq x_T/2)\\
&\leq& \sum_{k=1}^{b_T} \Pr(|Q_{NT,\theta,k} - E(Q_{NT,\theta,k})|\geq x_T/C_{q,M,\theta}) + \left\{ \frac{C_{q,M,\theta}Tm_{T}^{-1}(m_T\ell)^{q/4}}{(T\ell)^{q/4}}\right\}^{C_{q,M,\theta}} \\
&& + C_{\theta} \exp\left\{ \frac{c_{q}^2 \log d}{C_q}\right\} \leq \sum_{k=1}^{b_T} \Pr(|Q_{NT,\theta,k} - E(Q_{NT,\theta,k})|\geq x_T/C_{q,M,\theta}) + C_{q,M,\theta}(d^{-M}+T^{-M}).
\end{eqnarray*}
By using Lemma \ref{LemmaB7} (2), we have
\begin{eqnarray*}
&&\Pr(|Q_{NT,\theta,k} - E(Q_{NT,\theta,k})|\geq x_T/C_{q,M,\theta})\\ &\leq& O(1)x_T^{-q/2}\log T \left((T^{\sqrt{\theta}}\ell)^{q/4}T^{-(\alpha-1)\theta q/2} T^{\sqrt{\theta}} \ell^{q/2-1-(\alpha-1)\theta q/2} + T^{\sqrt{\theta}}\right).
\end{eqnarray*}
Combining the above analysis, the proof is complete.
\end{proof}

\begin{proof}[Proof of Lemma \ref{LEMMA2}]
\item
\noindent (1). This proof is conditional on the following two events
$\mathcal{A}_{NT} = \left\{\left|\frac{1}{NT}\mathbf{e}^\top \mathbf{X}\right|_\infty \leq \omega_1/2\right\}$ and  $\mathcal{B}_{NT} = \left\{|\widehat{\pmb{\Sigma}}_{x} - \pmb{\Sigma}_{x}|_{\mathrm{max}} \leq \frac{C_0}{16}\sqrt{\log d/(NT)} \right\}$ for some constant $C_0 > 0$. Note that by using Lemma \ref{LEMMA1} (1) and $\omega_1  \asymp\sqrt{\log d/(NT)}$, the event $\mathcal{A}_{NT} = \left\{|\frac{1}{NT}\mathbf{e}^\top \mathbf{X}|_\infty \leq \omega_1/2\right\}$ holds with probability larger than $ C_{NT}$ for some $C_1,C_2>0$. Similarly, by using Lemma \ref{LemmaB4} (1) and $x = \frac{C_0}{16}\sqrt{\log d/(NT)}$ for some constant $C_0 > 0$, $\mathcal{B}_{NT} = \left\{|\widehat{\pmb{\Sigma}}_{x} - \pmb{\Sigma}_x|_{\mathrm{max}} \leq x \right\}$ holds with probability larger than $C_{NT}$ for some $C_1,C_2>0$. Hence, we have
$$
\Pr(\mathcal{A}_{NT}\cap\mathcal{B}_{NT}) = \Pr(\mathcal{A}_{NT}) + \Pr(\mathcal{B}_{NT}) - \Pr(\mathcal{A}_{NT} \cup \mathcal{B}_{NT})\geq C_{NT}.
$$

By the nature of minimizing procedure in \eqref{EQUATION2}, we have
$$
\frac{1}{2NT}|\mathbf{y} - \mathbf{X}\widehat{\pmb{\beta}}|_2^2 +  \omega_1|\widehat{\pmb{\beta}}|_1 \leq  \frac{1}{2NT}|\mathbf{y} - \mathbf{X}\pmb{\beta}_0|_2^2 +  \omega_1|\pmb{\beta}_0|_1.
$$

Since $\frac{1}{NT}\mathbf{e}^\top \mathbf{X}(\widehat{\pmb{\beta}}-\pmb{\beta}_0) \leq |\frac{1}{NT}\mathbf{e}^\top \mathbf{X}|_\infty |\widehat{\pmb{\beta}}-\pmb{\beta}_0|_1 \leq \frac{\omega_1}{2} |\widehat{\pmb{\beta}}-\pmb{\beta}_0|_1$, we have
$$
\frac{1}{NT}|\mathbf{X}(\widehat{\pmb{\beta}}-\pmb{\beta}_0)|_2^2 +  2\omega_1|\widehat{\pmb{\beta}}|_1 \leq  \omega_1|\widehat{\pmb{\beta}}-\pmb{\beta}_0|_1 +  2\omega_1|\pmb{\beta}_0|_1.
$$

Note that $|\widehat{\pmb{\beta}}-\pmb{\beta}_0|_1 + |\pmb{\beta}_0|_1 - |\widehat{\pmb{\beta}}|_1 = |\widehat{\pmb{\beta}}_{J}-\pmb{\beta}_{0,J}|_1 + |\pmb{\beta}_{0,J}|_1 - |\widehat{\pmb{\beta}}_J|_1 \leq 2 |\widehat{\pmb{\beta}}_{J}-\pmb{\beta}_{0,J}|_1$, we have
$$
\frac{1}{NT}|\mathbf{X}(\widehat{\pmb{\beta}}-\pmb{\beta}_0)|_2^2 + \omega_1|\widehat{\pmb{\beta}}-\pmb{\beta}_0|_1 \leq 4\omega_1 |\widehat{\pmb{\beta}}_{J}-\pmb{\beta}_{0,J}|_1,
$$
and thus
$
\frac{1}{NT}|\mathbf{X}(\widehat{\pmb{\beta}}-\pmb{\beta}_0)|_2^2 + \omega_1|\widehat{\pmb{\beta}}_{J^c}-\pmb{\beta}_{0,J^c}|_1 \leq 3 \omega_1 |\widehat{\pmb{\beta}}_{J}-\pmb{\beta}_{0,J}|_1.
$

By using Lemma \ref{LemmaB5} (1) and the condition $s\leq \frac{\psi_{\pmb{\Sigma}_x}(J) }{2C_0}\sqrt{\frac{NT}{\log d}}$, we have
$$
\frac{1}{NT}|\mathbf{X}(\widehat{\pmb{\beta}}-\pmb{\beta}_0)|_2^2\geq (\psi_{\pmb{\Sigma}_x}(J)- C_0s\sqrt{\log d/(NT)})|\widehat{\pmb{\beta}}-\pmb{\beta}_0|_2^2\geq \frac{\psi_{\pmb{\Sigma}_x}(J)}{2}|\widehat{\pmb{\beta}}-\pmb{\beta}_0|_2^2
$$
conditional on the event $\mathcal{A}_{NT}\cap\mathcal{B}_{NT}$.

Hence,
$$
\frac{\psi_{\pmb{\Sigma}_x}(J)}{2} |\widehat{\pmb{\beta}}-\pmb{\beta}_0|_2^2\leq 4\omega_1 |\widehat{\pmb{\beta}}_{J}-\pmb{\beta}_{0,J}|_1 \leq 4\omega_1 \sqrt{s}|\widehat{\pmb{\beta}}_{J}-\pmb{\beta}_{0,J}|_2
\leq 4\omega_1 \sqrt{s}|\widehat{\pmb{\beta}}-\pmb{\beta}_0|_2,
$$
which follows that $
|\widehat{\pmb{\beta}}-\pmb{\beta}_0|_2 \leq \frac{8\sqrt{s}\omega_1 }{\psi_{\pmb{\Sigma}_x}(J)} $
and
$
|\widehat{\pmb{\beta}}-\pmb{\beta}_0|_1 \leq 4 |\widehat{\pmb{\beta}}_{J}-\pmb{\beta}_{0,J}|_1\leq 4\sqrt{s}|\widehat{\pmb{\beta}}_{J}-\pmb{\beta}_{0,J}|_2\leq \frac{32s\omega_1}{\psi_{\pmb{\Sigma}_x}(J)}.
$

\smallskip
\noindent (2). Similar to part (1), this proof is conditional on the event $\mathcal{A}_{NT}\cap\mathcal{B}_{NT}$.

Define $\vec{\mathbf{b}} = (\sgn(\beta_j)g_j)_{j\in J}$. Due to the property of convex optimization, $\widehat{\pmb{\beta}}_{w} \in \mathbb{R}^{d}$ is a solution if and only if there exists a subgradient
$$
\vec{\mathbf{g}} \in \partial\sum_{j=1}^{d}g_j|\widehat{\beta}_j| = \left\{\mathbf{z} \in \mathbb{R}^{d} \mid z_j = \sgn(\widehat{\beta}_j)g_j \ \text{for}\ \widehat{\beta}_j \neq 0 \ \text{and} \ |z_j|\leq g_j \ \text{elsewhere} \right\}
$$
such that $\frac{1}{NT}\mathbf{X}^\top\mathbf{X}\widehat{\pmb{\beta}}_{w} - \frac{1}{NT}\mathbf{X}^\top\mathbf{Y} + \omega_1\vec{\mathbf{g}}  = \mathbf{0}$.

Hence, $\sgn(\widehat{\pmb{\beta}}_{w}) = \sgn(\pmb{\beta}_0)$ if and only if
$$
\frac{1}{NT}\mathbf{X}_{J^c}^\top\mathbf{X}_J\left(\widehat{\pmb{\beta}}_{w,J} -\pmb{\beta}_{0,J} \right) - \frac{1}{NT}\mathbf{X}_{J^c}^\top\mathbf{e}  = -  \omega_1\vec{\mathbf{g}}_{J^c},
$$
$$
\frac{1}{NT}\mathbf{X}_{J}^\top\mathbf{X}_J\left(\widehat{\pmb{\beta}}_{w,J} -\pmb{\beta}_{0,J} \right) - \frac{1}{NT}\mathbf{X}_{J}^\top\mathbf{e}  = -  \omega_1\vec{\mathbf{g}}_{J} = -  \omega_1\vec{\mathbf{b}},
$$
$$
\sgn(\widehat{\pmb{\beta}}_{w,J}) = \sgn(\pmb{\beta}_{0,J})\quad \text{and}\quad \widehat{\pmb{\beta}}_{w,J^c} = \pmb{\beta}_{0,J^c} = \mathbf{0},
$$
where $\mathbf{X}_{J}$ denotes the sub-matrix of $\mathbf{X}$ only containing the columns indexed by $J$, and $\pmb{\beta}_{0,J}$ denotes the sub-vector of $\pmb{\beta}$ containing the elements indexed by $J$.

By standard results in matrix perturbation theory, Lemma \ref{LemmaB5} (1) and $s\leq \frac{\psi_{\pmb{\Sigma}_x}(J) }{2C_0}\sqrt{\frac{NT}{\log d}}$, we have
\begin{eqnarray*}
&&\left|\psi_{\mathrm{min}}((NT)^{-1}\mathbf{X}_{J}^\top\mathbf{X}_J)  - \psi_{\mathrm{min}}(\pmb{\Sigma}_{x,J,J}) \right|\\
&\leq& \left|(NT)^{-1}\mathbf{X}_{J}^\top\mathbf{X}_J - \pmb{\Sigma}_{x,J,J} \right|_2 \leq s \left|(NT)^{-1}\mathbf{X}_{J}^\top\mathbf{X}_J - \pmb{\Sigma}_{x,J,J} \right|_{\mathrm{max}}\leq \frac{\psi_{\pmb{\Sigma}_x}(J)}{2}.
\end{eqnarray*}
where $\pmb{\Sigma}_{x,J,J}$ is the sub-matrix of $\pmb{\Sigma}_{x}$.

Therefore, conditional on the set $\mathcal{B}_{NT} = \left\{|\widehat{\pmb{\Sigma}}_{x} - {\pmb{\Sigma}}_x|_{\mathrm{max}} \leq \frac{C_0}{16}\sqrt{\log d/(NT)} \right\}$, we have $$\psi_{\mathrm{min}}((NT)^{-1}\mathbf{X}_{J}^\top\mathbf{X}_J) \geq \frac{\psi_{\pmb{\Sigma}_x}(J)}{2}$$ and thus  $(NT)^{-1}\mathbf{X}_{J}^\top\mathbf{X}_J$ is nonsingular. 

Given the invertibility of $(NT)^{-1}\mathbf{X}_{J}^\top\mathbf{X}_J$, to prove $\sgn(\widehat{\pmb{\beta}}_{w}) = \sgn(\pmb{\beta}_0)$, it suffices to show 
\begin{equation}\label{Eq.A1}
\sgn\left(\widehat{\pmb{\beta}}_{w,J}\right) = \sgn\left[\pmb{\beta}_{0,J} + \left((NT)^{-1}\mathbf{X}_{J}^\top\mathbf{X}_J\right)^{-1}\left((NT)^{-1}\mathbf{X}_{J}^\top\mathbf{e}-  \omega_1\vec{\mathbf{b}}\right)\right]= \sgn(\pmb{\beta}_{0,J} )
\end{equation}
and for any $j \in J^c$
\begin{eqnarray}\label{Eq.A2}
&&\left|(NT)^{-1}\mathbf{X}_{j}^\top\mathbf{X}_J\left((NT)^{-1}\mathbf{X}_{J}^\top\mathbf{X}_J\right)^{-1}\left((NT)^{-1}\mathbf{X}_{J}^\top\mathbf{e}-  \omega_1\vec{\mathbf{b}}\right) -  \frac{1}{NT}\mathbf{X}_{j}^\top\mathbf{e}\right|\nonumber \\
&=& |-\omega_1\vec{g}_{j}|\leq \omega_1g_j.
\end{eqnarray}

Consider \eqref{Eq.A1} first. It is sufficient to show $|(\frac{1}{NT}\mathbf{X}_{J}^\top\mathbf{X}_J)^{-1}(\frac{1}{NT}\mathbf{X}_{J}^\top\mathbf{e}-  \omega_1\vec{\mathbf{b}})|_{\infty} < \beta_{\mathrm{min}}$.
Since $|(\frac{1}{NT}\mathbf{X}_{J}^\top\mathbf{X}_J)^{-1}|_{2}\leq\frac{2}{\psi_{\pmb{\Sigma}_x}(J)}$ and $\beta_{\min} > \frac{\sqrt{s}\omega_1}{\psi_{\pmb{\Sigma}_x}(J)} (1 + 2\max_{j\in J}|g_j|)$, we have
\begin{eqnarray*}
&&\left|\left(\frac{1}{NT}\mathbf{X}_{J}^\top\mathbf{X}_J\right)^{-1}\left(\frac{1}{NT}\mathbf{X}_{J}^\top\mathbf{e}-  \omega_1\vec{\mathbf{b}}\right)\right|_{\infty}
\leq \left|\left(\frac{1}{NT}\mathbf{X}_{J}^\top\mathbf{X}_J\right)^{-1}\right|_{\infty}\left|\frac{1}{NT}\mathbf{X}_{J}^\top\mathbf{e}-  \omega_1\vec{\mathbf{b}}\right|_{\infty}\\
&\leq&\frac{2\sqrt{s}}{\psi_{\pmb{\Sigma}_x}(J) }\left(\left|\frac{1}{NT}\mathbf{X}_{J}^\top\mathbf{e}\right|_{\infty} + |\omega_1\vec{\mathbf{b}}|_{\infty} \right)
\leq \frac{2\sqrt{s}}{\psi_{\pmb{\Sigma}_x}(J)}\left(\omega_1/2+ \omega_1\max_{j\in J}|g_j| \right) < \beta_{\min}.
\end{eqnarray*}

Next consider \eqref{Eq.A2}. By the triangle inequality, it suffices to show for any $j \in J^c$
$$
\left|\frac{1}{NT}\mathbf{X}_{j}^\top\mathbf{X}_J(\frac{1}{NT}\mathbf{X}_{J}^\top\mathbf{X}_J)^{-1}(\frac{1}{NT}\mathbf{X}_{J}^\top\mathbf{e}-  \omega_1\vec{\mathbf{b}}) \right| + \left|  \frac{1}{NT}\mathbf{X}_{j}^\top\mathbf{e}\right| \leq \omega_1 g_j.
$$
The first term is bounded by
\begin{eqnarray*}
&&\left|\frac{1}{NT}\mathbf{X}_{j}^\top\mathbf{X}_J(\frac{1}{NT}\mathbf{X}_{J}^\top\mathbf{X}_J)^{-1}(\frac{1}{NT}\mathbf{X}_{J}^\top\mathbf{e}-  \omega_1\vec{\mathbf{b}}) \right|\\
&\leq& \left|\frac{1}{NT}\mathbf{X}_{j}^\top\mathbf{X}_J(\frac{1}{NT}\mathbf{X}_{J}^\top\mathbf{X}_J)^{-1}\right|_1 \times\left( \left|\frac{1}{NT}\mathbf{X}_{J}^\top\mathbf{e}\right|_{\infty} + |\omega_1\vec{\mathbf{b}}|_{\infty}\right)\\
&\leq&\frac{2s}{\psi_{\pmb{\Sigma}_x}(J)}\left|\frac{1}{NT}\mathbf{X}_{j}^\top\mathbf{X}_J\right|_\infty \times\left( \left|\frac{1}{NT}\mathbf{X}_{J}^\top\mathbf{e}\right|_{\infty} + |\omega_1\vec{\mathbf{b}}|_{\infty}\right).
\end{eqnarray*}
Since $\max_{j\in J^c}|\frac{1}{NT}\mathbf{X}_{j}^\top\mathbf{X}_J|_\infty \leq |\pmb{\Sigma}_x|_{\mathrm{max}} + \frac{C_0}{16}\sqrt{\log d/(NT)}$ by Lemma \ref{LemmaB5} (1), the left hand of \eqref{Eq.A2} is bounded by 
$$ 
\left(\frac{2s|\pmb{\Sigma}_x|_{\mathrm{max}}}{\psi_{\pmb{\Sigma}_x}(J)} + \frac{\psi_{\pmb{\Sigma}_x}(J)}{16}\right)\left(\frac{\omega_1}{2}+\omega_1 \max_{j\in J}|g_j|\right).
$$

The proof is completed given
$\min_{j\in J^c}|g_j| \geq  \left(\frac{2s|\pmb{\Sigma}_x|_{\mathrm{max}}}{\psi_{\pmb{\Sigma}_x}(J)} + \frac{\psi_{\pmb{\Sigma}_x}(J)}{16}\right)\left(\frac{1}{2}+\max_{j\in J}|g_j|\right).$
\end{proof}

\begin{proof}[Proof of Theorem \ref{THEOREM1}]
\item
\noindent (1). 
Due to the property of convex optimization, $\widehat{\pmb{\beta}} \in \mathbb{R}^{d}$ is a solution if and only if there exists a subgradient
$$
\vec{\mathbf{g}} \in \partial\sum_{j=1}^{d}|\widehat{\beta}_j| = \left\{\mathbf{z} \in \mathbb{R}^{d} \mid z_j = \sgn(\widehat{\beta}_j) \ \text{for}\ \widehat{\beta}_j \neq 0 \ \text{and} \ |z_j|\leq 1 \ \text{elsewhere} \right\}
$$
such that $|\vec{\mathbf{g}}|_{\infty} \leq 1$ and $\frac{1}{NT}\mathbf{X}^\top\mathbf{X}\widehat{\pmb{\beta}} - \frac{1}{NT}\mathbf{X}^\top\mathbf{Y} + \omega_1\vec{\mathbf{g}}  = \mathbf{0}$.

Hence, we have
\begin{eqnarray*}
\sqrt{NT}\pmb{\rho}^\top\left(\widehat{\pmb{\beta}}_{bc} - \pmb{\beta}_0\right) &=& \pmb{\rho}^\top\widehat{\pmb{\Omega}}_x\frac{1}{\sqrt{NT}}\mathbf{X}^\top\mathbf{e} + \sqrt{NT}\pmb{\rho}^\top\left(\mathbf{I}_{d} - \widehat{\pmb{\Omega}}_x\widehat{\pmb{\Sigma}}_{x} \right)(\widehat{\pmb{\beta}} -\pmb{\beta}_0)\\
&\eqqcolon & J_1 + J_2.
\end{eqnarray*}
Here, $J_2$ is the error resulting from using an approximate inverse of $\widehat{\pmb{\Sigma}}_{x}$, which will be proved asymptotically negligible. In addition, the bias term $\omega_1\widehat{\pmb{\Omega}}\vec{\mathbf{g}}$ resulting from the penalization of the parameters is known to be equal to $\frac{1}{NT}\mathbf{X}^\top\left(\mathbf{Y}-\mathbf{X}\widehat{\pmb{\beta}}\right)$.

Hence, we have
$$
\sqrt{NT}\pmb{\rho}^\top\left(\widehat{\pmb{\beta}}_{bc} - \pmb{\beta}_0\right) = \pmb{\rho}^\top\widehat{\pmb{\Omega}}\frac{1}{\sqrt{NT}}\mathbf{X}^\top\mathbf{e} + \sqrt{NT}\pmb{\rho}^\top\left(\mathbf{I}_{d} - \widehat{\pmb{\Omega}}\widehat{\pmb{\Sigma}}_{x} \right)(\widehat{\pmb{\beta}} -\pmb{\beta}_0).
$$

We will then provide: (1) a central limit theorem for $J_1$; and (2) a verification of asymptotic negligibility of $J_2$.

Consider $J_2$ first. By \cite{yuan2010high}, we have
$$
\Omega_{x,j,j} = [{\Sigma}_{x,j,j}-\pmb{\Sigma}_{x,j,-j}\pmb{\Sigma}_{x,-j,-j}^{-1}\pmb{\Sigma}_{x,-j,j}]^{-1} \quad \text{and} \quad \pmb{\Omega}_{x,j,-j} =-\Omega_{x,j,j}\pmb{\Sigma}_{x,j,-j}\pmb{\Sigma}_{x,-j,-j}^{-1},
$$
where $\Omega_{x,j,j}$ denotes the $j^{th}$ diagonal entry of $\pmb{\Omega}_x$, $\pmb{\Omega}_{x,j,-j}$ is the $1 \times (d-1)$ vector obtained by removing the $j^{th}$ entry of the $j^{th}$ row of $\pmb{\Omega}_x$, $\Sigma_{x,j,j}$ is the $j^{th}$ diagonal entry of $\pmb{\Sigma}_x$, $\pmb{\Sigma}_{x,-j,-j}$ is the submatrix of $\pmb{\Sigma}_x$ with the $j^{th}$ row and column removed, $\pmb{\Sigma}_{x,j,-j}$ is the $j^{th}$ row of $\pmb{\Sigma}_x$ with its $j^{th}$ entry removed, $\pmb{\Sigma}_{x,-j,j}$ is the $j^{th}$ column of $\pmb{\Sigma}_x$ with its $j^{th}$ entry removed. Define the $(d-1)\times 1$ vector
\begin{equation*}
\pmb{\gamma}_{j} = \argmin_{\mathbf{b}\in\mathbb{R}^{d-1}}\frac{1}{NT}E|\mathbf{X}_{j} - \mathbf{X}_{-j}\mathbf{b}|_2^2
\end{equation*}
such that
$$
\pmb{\gamma}_{j} = \left(\frac{1}{NT}E(\mathbf{X}_{-j}^\top\mathbf{X}_{-j})\right)^{-1}
\frac{1}{NT}E(\mathbf{X}_{-j}^\top\mathbf{X}_{j}) = \pmb{\Sigma}_{x,-j,-j}^{-1}\pmb{\Sigma}_{x,-j,j}
$$
and $|\pmb{\gamma}_{j}|_2\leq |\pmb{\Sigma}_{x,-j,-j}^{-1}|_{2}|\pmb{\Sigma}_{x,-j,j}|_2<\infty$.
Define $\pmb{\eta}_j = \mathbf{X}_{j} - \mathbf{X}_{-j} \pmb{\gamma}_j$ and thus $E(\mathbf{X}_{-j}^\top\pmb{\eta}_j) = \mathbf{0}$. Define $\tau_j^2 = \frac{1}{NT}E|\mathbf{X}_{j} - \mathbf{X}_{-j}\pmb{\gamma}_{j}|_2^2 = {\Sigma}_{x,j,j}-\pmb{\Sigma}_{x,j,-j}\pmb{\Sigma}_{x,-j,-j}^{-1}\pmb{\Sigma}_{x,-j,j} = 1/\Omega_{x,j,j}$. Observe that $\pmb{\Omega}_{x,j,-j} = - \pmb{\gamma}_j^\top/\tau_j^2$, and thus we can write $\pmb{\Omega}_x = \mathbf{T}^{-2}\mathbf{C}$, where $\mathbf{T}$ and $\mathbf{C}$ is defined similarly to $\widehat{\mathbf{T}}^{-1}$ and $\widehat{\mathbf{C}}$ but with $\tau_j^2$ and $\pmb{\gamma}_j$ replacing $\widehat{\tau}_j^2$ and $\widehat{\pmb{\gamma}}_j$.

Let $\widehat{\pmb{\Omega}}_{x,j}$ and $\widehat{\mathbf{C}}_{x,j}$ denote the $j^{th}$ row of $\widehat{\pmb{\Omega}}_x$ and $\widehat{\mathbf{C}}_x$, respectively, and thus $\widehat{\pmb{\Omega}}_{x,j} = \widehat{\tau}_j^{-2}\widehat{\mathbf{C}}_j$. By the properties of Karush–Kuhn–Tucker (KKT) condition, we have
$$
|\widehat{\pmb{\Sigma}}_{x}\widehat{\pmb{\Omega}}_{x,j}^\top - \mathbf{e}_j|_{\infty} \leq\lambda_{j}/\widehat{\tau}_j^{2}=O_P(\sqrt{\log d/(NT)}),
$$
where $\mathbf{e}_j$ is the $j^{th}$ unit column vector. Hence, by using Lemma \ref{LemmaB5} (3), we have
\begin{eqnarray*}
|J_2| &\leq&  |\pmb{\rho}|_1\times |\sqrt{NT}(\widehat{\pmb{\beta}} -\pmb{\beta}_0)|_1\times\max_{j\in H}|\widehat{\pmb{\Sigma}}_{x}\widehat{\pmb{\Omega}}_{x,j}^\top - \mathbf{e}_j|_{\infty} \\
&=&O_P(s\sqrt{\log d}) O_P(\sqrt{\log d/(NT)}) = o_P(1). 
\end{eqnarray*}

Next, consider $J_1$. Write
$$
J_1 = \pmb{\rho}^\top (\widehat{\pmb{\Omega}}_x-\pmb{\Omega}_x )\frac{1}{\sqrt{NT}}\mathbf{X}^\top\mathbf{e} + \pmb{\rho}^\top\pmb{\Omega}_x\frac{1}{\sqrt{NT}}\mathbf{X}^\top\mathbf{e} \eqqcolon J_{1,1} + J_{1,2}.
$$
For $J_{1,1}$, by using Lemma \ref{LemmaB5} (3), we have
\begin{eqnarray*}
|J_{1,1}| &\leq& |\sum_{j\in H}(\widehat{\pmb{\Omega}}_{x,j} - \pmb{\Omega}_{x,j})\rho_j|_{1} \times |\frac{1}{\sqrt{NT}}\mathbf{X}^\top\mathbf{e}|_{\infty}\\
&=& O_P(\sum_{j\in H}s_j\sqrt{\log d/(NT)}) O_P(\sqrt{\log d})=o_P(1). 
\end{eqnarray*}

To complete the proof, it is sufficient to show that \ $J_{1,1} \to_D N(0,1)$.

We next prove the asymptotic normality by using martingale apprxoimation technique and the martingale central limit theorem. Define $S_{j,NT} = \frac{1}{\sqrt{T}}\sum_{t=1}^{T}\overline{u}_{j,t}$ and $\overline{u}_{j,t} = \frac{1}{\sqrt{N}}\sum_{i=1}^{N}X_{j,it}e_{it}$. Hence, $J_{1,1} = \sum_{k\in H}\sum_{j=1}^{d} \rho_k \Omega_{x,k,j} S_{j,NT}$.

Let $S_{j,NT,L} = \frac{1}{\sqrt{T}}\sum_{t=1}^{T}\sum_{l=0}^{L-1}\mathcal{P}_{t-l} (\overline{u}_{j,t})$ and $\widehat{S}_{j,NT,L} = \frac{1}{\sqrt{T}}\sum_{t=1}^{T}\sum_{l=0}^{L-1}\mathcal{P}_{t} (\overline{u}_{j,t+l})$, in which  $\mathcal{P}_{t}(\cdot) = E[\cdot \mid \mathcal{F}_t] - E[\cdot \mid \mathcal{F}_{t-1}]$ , $L\to \infty$ and $L/T\to0$. Note that $\overline{u}_{j,t} = \sum_{l=0}^{\infty}\mathcal{P}_{t-l} (\overline{u}_{j,t})$ and $\{\mathcal{P}_{t-l} (\overline{u}_{j,t})\}_{t=1}^T$ is a sequence of martingale differences, and thus
\begin{eqnarray*}
\|S_{j,NT,L} - S_{j,NT}\|_2 &=&\left\|\frac{1}{\sqrt{T}}\sum_{t=1}^{T}\sum_{l=L}^{\infty}\mathcal{P}_{t-l} (\overline{u}_{j,t}) \right\|_2 \leq \frac{1}{\sqrt{T}}\sum_{l=L}^{\infty}\left\|\sum_{t=1}^{T} \mathcal{P}_{t-l} (\overline{u}_{j,t})\right\|_2 \\
&\leq&  O(1)\frac{1}{\sqrt{T}}\sum_{l=L}^{\infty}\left\{\sum_{t=1}^{T} E\left[ \left(\mathcal{P}_{t-l} (\overline{u}_{j.t})\right)^2\right]\right\}^{1/2} \le  O(1)\sum_{l=L}^{\infty}\delta_{2}(u_j,l) \to 0,
\end{eqnarray*}
where the second inequality follows from Burkholder inequality, and the last step follows from $L\to \infty$. Similarly, by using Burkholder inequality, we have as $L\to \infty$ and $L/T\to0$
\begin{eqnarray*}
\|\widehat{S}_{j,NT,L} - S_{j,NT,L}\|_2 &\leq& \frac{1}{\sqrt{T}}\sum_{l=0}^{L-1} \left\|\sum_{t=1}^{l}\mathcal{P}_{t-l} (\overline{u}_{j,t})\right\|_2  + \frac{1}{\sqrt{T}}\sum_{l=0}^{L-1}\left\|\sum_{t=T-l+1}^{T}\mathcal{P}_{t} (\overline{u}_{j,t+l})\right\|_2  \to 0.
\end{eqnarray*}
Hence, we have $\|\widehat{S}_{j,NT,L} - S_{j,NT}\|_2 \to 0$. Note that $\{\sum_{l=0}^{L-1}\mathcal{P}_{t} (\overline{u}_{j,t+l})\}_{t=1}^{T}$ is a sequence of martingale differences subject to $\mathscr{F}_t$, so the asymptotic normality can be easily obtained by using a standard martingale central limit theory. 

The proof is now completed.
\end{proof}

\begin{proof}[Proof of Theorem \ref{THEOREM2}]
\item 
Write
\begin{eqnarray*}
T_u(\widehat{\pmb{\Theta}}_{\ell}) - \pmb{\Theta} &=& \left(T_u(\widehat{\pmb{\Theta}}_{\ell}) - E(\widetilde{\pmb{\Theta}}_{\ell})\right)  + \left( E(\widetilde{\pmb{\Theta}}_{\ell}) - \pmb{\Theta}\right) \eqqcolon J_3 + J_4,
\end{eqnarray*}
where $\widetilde{\pmb{\Theta}}_{\ell} = \left(\Theta_{\ell,kl}\right)_{k,l\leq d} = \frac{1}{NT}\sum_{i, j=1}^{N}\sum_{t, s=1}^{T}a((t-s)/\ell) \mathbf{x}_{it}\mathbf{x}_{js}^\top e_{it} e_{js}$.

Consider the bias term $J_4$ first. Note that $\pmb{\Gamma}_k = E(\overline{\mathbf{u}}_0\overline{\mathbf{u}}_{k}^\top)$ and thus
\begin{eqnarray*}
J_4 &=& \sum_{k=-\ell}^{\ell} [a(k/\ell) - 1] \pmb{\Gamma}_k - \sum_{k=-\ell}^{\ell}\frac{k}{T} [a(k/\ell) - 1] \pmb{\Gamma}_k -  \sum_{k=\ell+1}^{T-1}\frac{T-k}{k} \left(\pmb{\Gamma}_k + \pmb{\Gamma}_k^\top\right)\\
&\eqqcolon &J_{4,1} + J_{4,2} + J_{4,3}.
\end{eqnarray*}
For $J_{4,1}$, by Assumption \ref{ASSUMPTION4} (1), for $\forall\epsilon > 0$, we choose $\nu_\epsilon > 0$ such that  
\begin{eqnarray*}
|k/\ell| < \nu_\epsilon\quad \text{and}\quad \left|\frac{1-a(k/\ell)}{|k/\ell|^{q_a}}-C_q \right| < \epsilon.
\end{eqnarray*}
Letting $\ell_T^* = \lfloor\nu_\epsilon\ell \rfloor$, write
\begin{eqnarray*}
\ell^{q_a} \times J_{4,1} &=& \sum_{k=-\ell_T^*}^{\ell_T^*}\frac{ a\left(k/\ell\right)-1}{\left|k/\ell\right|^{q_a}}|k|^{q_a} \pmb{\Gamma}_k + \sum_{k=\ell_T^*+1}^{\ell}\frac{ a\left(k/\ell\right)-1}{\left|k/\ell\right|^q}|k|^q (\pmb{\Gamma}_k + \pmb{\Gamma}_k^\top).
\end{eqnarray*}
Since $|\pmb{\Gamma}_k|_2 = O(k^{-(q_a+\epsilon)})$ for some $\epsilon > 1$, the first term on the RHS converges to $-C_q\sum_{k=-\infty}^{\infty}|k|^{q_a}\pmb{\Gamma}_k$ in spectral norm. For the second term, since $|a(\cdot)|\leq M$, $\left|\frac{1-a(k/\ell)}{|k/\ell|^{q_a}} \right| \leq (M+1)/\nu_\epsilon^{q_a}$ due to the fact that $k/\ell\geq \nu_\epsilon$. Then this term in spectral norm is bounded by
\begin{eqnarray*}
(M+1)/\nu_\epsilon^{q_a} \sum_{k=\ell_T^*+1}^{\infty}k^2|\pmb{\Gamma}_k+\pmb{\Gamma}_k^\top|_2 \to 0.
\end{eqnarray*}
Therefore, $|J_{4,1}|_2 = O(\ell^{-q_a})$. Similarly, it can be shown that $|J_{4,2}|_2 = o(\ell^{-q_a})$ and $|J_{4,3}|_2 = o(\ell^{-q_a})$, which implies that $|J_4|_2 = O(\ell^{-q_a})$.

Now consider $J_3$. We first prove $\max_{1\leq k,l\leq d}|\widehat{\Theta}_{\ell,kl} - \Theta_{\ell,kl}| = O_P(\sqrt{\ell \log d / T})$. Write
\begin{eqnarray*}
\widehat{\Theta}_{\ell,kl} - \Theta_{\ell,kl} &=& \frac{1}{T}\sum_{t,s=1}^{T}a((t-s)/\ell)\left(\widehat{\overline{u}}_{k,t}\widehat{\overline{u}}_{l,s} - \overline{u}_{k,t}\overline{u}_{l,s}\right) + \frac{1}{T}\sum_{t,s=1}^{T}a((t-s)/\ell)\left(\overline{u}_{k,t}\overline{u}_{l,s} - E(\overline{u}_{k,t}\overline{u}_{l,s})\right) \\
&\eqqcolon & J_5 + J_6.
\end{eqnarray*}
By using Lemma \ref{LemmaB8} (1), we have $\max_{1\leq k,l\leq d}|J_6| = O_P(\sqrt{\ell \log d /T})$. For $J_5$, write
\begin{eqnarray*}
J_5&=&\frac{1}{T}\sum_{t,s=1}^{T}a((t-s)/\ell)\left(\widehat{\overline{u}}_{k,t}-\overline{u}_{k,t}\right)\left(\widehat{\overline{u}}_{l,s} - \overline{u}_{l,s}\right) + \frac{1}{T}\sum_{t,s=1}^{T}a((t-s)/\ell)\left(\widehat{\overline{u}}_{k,t}-\overline{u}_{k,t}\right)\overline{u}_{l,s}\\
&& + \frac{1}{T}\sum_{t,s=1}^{T}a((t-s)/\ell)\overline{u}_{k,s}\left(\widehat{\overline{u}}_{l,t}-\overline{u}_{l,t}\right).
\end{eqnarray*}
Applying Lemmas \ref{LemmaB8} (2)-(3) to the above three terms, we have $J_5 = o_P(\sqrt{\ell\log d/T})$. Hence, we have proved $\max_{1\leq k,l\leq d}|\widehat{\Theta}_{\ell,kl} - \Theta_{\ell,kl}| = O_P(\sqrt{\ell \log d / T})$. Next, write
$$
J_3 = \left|T_u(\widehat{\pmb{\Theta}}_{\ell}) - E(\widetilde{\pmb{\Theta}}_{\ell})\right|_2 \leq \left|T_u(\widehat{\pmb{\Theta}}_{\ell}) - T_u(E(\widetilde{\pmb{\Theta}}_{\ell}))\right|_2+\left|T_u(E(\widetilde{\pmb{\Theta}}_{\ell})) - E(\widetilde{\pmb{\Theta}}_{\ell})\right|_2 \eqqcolon J_{3,1} + J_{3,2}.
$$
The term $J_{3,2}$ is bounded by
$$
\max_{1\leq k\leq d}\sum_{l=1}^{d}|\Theta_{\ell,kl}|\mathbb{I}(|\Theta_{\ell,kl}|\leq u) \leq u^{1-p} C(d) = O\left( (\ell\log d/T)^{(1-p)/2}C(d)\right).
$$
For $J_{3,1}$,
\begin{eqnarray*}
J_{3,1} &\leq& \max_{1\leq k\leq d}\sum_{l=1}^{d}|\widehat{\Theta}_{\ell,kl}|\mathbb{I}(|\widehat{\Theta}_{\ell,kl}| \geq u,|\Theta_{\ell,kl}| < u) + \max_{1\leq k\leq d}\sum_{l=1}^{d}|\Theta_{\ell,kl}|\mathbb{I}(|\widehat{\Theta}_{\ell,kl}| < u,|\Theta_{\ell,kl}| \geq u)\\
&& + \max_{1\leq k\leq d}\sum_{l=1}^{d}|\widehat{\Theta}_{\ell,kl}- {\Theta}_{\ell,kl}|\mathbb{I}(|\widehat{\Theta}_{\ell,kl}| \geq u,|\Theta_{\ell,kl}| \geq u) \eqqcolon J_{3,11} + J_{3,12} + J_{3,13}.
\end{eqnarray*}
For $J_{3,13}$, by using $\max_{1\leq k,l\leq d}|\widehat{\Theta}_{\ell,kl} - \Theta_{\ell,kl}| = O_P(\sqrt{\ell \log d / T})$, we have
$$
J_{3,13}\leq \max_{1\leq k,l\leq d}|\widehat{\Theta}_{\ell,kl} - \Theta_{\ell,kl}| \max_{1\leq k\leq d} \sum_{l=1}^{d} |\Theta_{\ell,kl}|^{p}u^{-p} =  O_P\left( (\ell\log d/T)^{(1-p)/2}C(d)\right).
$$
For $J_{3,11}$, write
\begin{eqnarray*}
J_{3,11}&\leq& \max_{1\leq k\leq d}\sum_{l=1}^{d}|\widehat{\Theta}_{\ell,kl}-{\Theta}_{\ell,kl}|\mathbb{I}(|\widehat{\Theta}_{\ell,kl}| \geq u,|\Theta_{\ell,kl}| < u) + \max_{1\leq k\leq d}\sum_{l=1}^{d}|{\Theta}_{\ell,kl}|\mathbb{I}(|\Theta_{\ell,kl}| < u)\\
&\eqqcolon & J_{3,111} + J_{3,112}.
\end{eqnarray*}
For $J_{3,112}$, we have $J_{3,112} \leq u^{1-p} C(d).$ For $J_{3,111}$, take some $a \in (0,1)$, then
\begin{eqnarray*}
J_{3,112}&\leq& \max_{1\leq k\leq d}\sum_{l=1}^{d}|\widehat{\Theta}_{\ell,kl}-{\Theta}_{\ell,kl}|\mathbb{I}(|\widehat{\Theta}_{\ell,kl}| \geq u,|\Theta_{\ell,kl}| < au)\\
&& + \max_{1\leq k\leq d}\sum_{l=1}^{d}|\widehat{\Theta}_{\ell,kl}-{\Theta}_{\ell,kl}| \mathbb{I}(|\widehat{\Theta}_{\ell,kl}| \geq u,au\leq|\Theta_{\ell,kl}| < u) \\
&\leq& \max_{1\leq k,l\leq d}|\widehat{\Theta}_{\ell,kl}-{\Theta}_{\ell,kl}|\times\max_{1\leq k\leq d}\sum_{l=1}^{d}\mathbb{I}(|\widehat{\Theta}_{\ell,kl}-\Theta_{\ell,kl}| > (1-a)u) \\
&& + \max_{1\leq k,l\leq d}|\widehat{\Theta}_{\ell,kl}-{\Theta}_{\ell,kl}| \times C(d)(au)^{1-p}.
\end{eqnarray*} 
Note that since $\max_{1\leq k,l\leq d}|\widehat{\Theta}_{\ell,kl} - \Theta_{\ell,kl}| = O_P(\sqrt{\ell \log d / T})$, then for some large $C>0$ and $u= C\sqrt{\ell \log d / T}$, we have
$$
\Pr\left(\max_{1\leq k\leq d}\sum_{l=1}^{d}\mathbb{I}(|\widehat{\Theta}_{\ell,kl}-\Theta_{\ell,kl}| > au) > 0 \right) = \Pr\left(\max_{1\leq k,l\leq d}|\widehat{\Theta}_{\ell,kl}-{\Theta}_{\ell,kl}| > (1-a)u \right)\to0.
$$
Combining the above analysis, we have $J_{3,11} = O_P\left( (\ell\log d/T)^{(1-p)/2}C(d)\right)$.

For $J_{3,12}$, we have
\begin{eqnarray*}
J_{3,12} &\leq& \max_{1\leq k\leq d}\sum_{l=1}^{d}\left(|\widehat{\Theta}_{\ell,kl}-\Theta_{\ell,kl}|+|\widehat{\Theta}_{\ell,kl}|\right)\mathbb{I}(|\widehat{\Theta}_{\ell,kl}| < u,|\Theta_{\ell,kl}| \geq u)\\
&\leq&\max_{1\leq k,l\leq d}|\widehat{\Theta}_{\ell,kl}-{\Theta}_{\ell,kl}|\times\max_{1\leq k\leq d}\sum_{l=1}^{d}\mathbb{I}(|\Theta_{\ell,kl}| \geq u) + u \max_{1\leq k\leq d}\sum_{l=1}^{d}\mathbb{I}(|\Theta_{\ell,kl}| \geq u) \nonumber \\
&=& O_P\left( (\ell\log d/T)^{(1-p)/2}C(d)\right).
\end{eqnarray*}

Combining the above results, the proof is now completed.
\end{proof}

\newpage

\setcounter{page}{1}

\begin{center}
{\large \textbf{Online Supplementary Appendices B and C to  \\``Robust Estimation and Inference for \\High--Dimensional Panel Data Models"}\blfootnote{

$^*$Department of Econometrics and Business Statistics,  Monash University. %Email: \url{jiti.gao@monash.edu}.

$^\dag$School of Finance, Nankai University. %Email: \url{fei.liu@nankai.edu.cn}.

$^\ddag$School of Statistics and Management, Shanghai University of Finance and Economics. }%Email: \url{yanyayi@mail.shufe.edu.cn}. }
\bigskip}

\bigskip

{\sc Jiti Gao$^{\ast}$} and {\sc Fei Liu$^{\dag}$}  and {\sc Bin Peng$^{\ast}$} and {\sc Yayi Yan$^{\ddag}$}

\bigskip

\end{center}

In this file, we provide two appendices. Specifically, in Appendix B, Appendix \ref{AP.B1} presents the proofs of technical lemmas collected in Appendix \ref{AP.A4}. Appendix \ref{AP.B2} provide the full proofs of the main results listed in Section \ref{Sec3}. Appendix C provides some technical lemmas, which are used to prove the results of Section \ref{Sec3} and the asymptotic properties of the oracle least squares estimator for the HD panel models with interactive fixed effects.

\renewcommand{\thesection}{B}

\section*{Appendix B} 

\renewcommand{\theequation}{B.\arabic{equation}}
\renewcommand{\thesection}{B.\arabic{section}}
\renewcommand{\thefigure}{B.\arabic{figure}}
\renewcommand{\thetable}{B.\arabic{table}}
\renewcommand{\thelemma}{B.\arabic{lemma}}
\renewcommand{\theassumption}{B.\arabic{assumption}}
\renewcommand{\thetheorem}{B.\arabic{theorem}}
\renewcommand{\theproposition}{B.\arabic{proposition}}

% redefine the command that creates the equation no.
% reset counter
\setcounter{equation}{0}
\setcounter{lemma}{0}
\setcounter{section}{0}
\setcounter{table}{0}
\setcounter{figure}{0}
\setcounter{assumption}{0}
\setcounter{proposition}{0}
%\numberwithin{equation}{section}

\section{Proofs of the Technical Lemmas}\label{AP.B1}

\begin{proof}[Proof of Lemma \ref{LemmaB3}]
\item
\noindent (1). We apply the martingale approximation technique and approximate $x_{j,it}e_{it}$ by $m$-dependent processes. Then we prove this lemma by using delicate block techniques and the results on independent random variables. 

For $t\geq 1$ and $m\geq 0$, define $S_{j,Nt,m} \coloneqq \sum_{i=1}^{N}\sum_{k=1}^{t}u_{j,ik,m}$ and $u_{j,ik,m} \coloneqq E\left[x_{j,ik}e_{ik} \mid \mathcal{F}_{k,k-m}\right]$, where $\mathcal{F}_{k,k-m} = (\pmb{\varepsilon}_k,\pmb{\varepsilon}_{k-1},\ldots,\pmb{\varepsilon}_{k-m})$. Therefore, $u_{j,ik,m}$ and $u_{j,ik',m}$ are independent if $|k-k'|>m$. Write
$$
u_{j,it} = u_{j,it} - u_{j,it,T} + \sum_{m=1}^{T}(u_{j,it,m}-u_{j,it,m-1}) + u_{j,it,0},
$$
which follows that
\begin{eqnarray*}
\|\max_{1\leq t\leq T}|S_{j,Nt}| \|_q &\leq& \|\max_{1\leq t \leq T} | S_{j,Nt} - S_{j,Nt,T}| \|_q + \sum_{m=1}^{T} \|\max_{1\leq t \leq T} | S_{j,Nt,m} - S_{j,Nt,m-1}| \|_q \\
&& + \|\max_{1\leq t \leq T} | S_{j,Nt,0} | \|_q \nonumber \\
&\coloneqq & I_1 + I_2 + I_3.
\end{eqnarray*}

Consider $I_2$ first. By the stationarity of $\{\sum_{i=1}^{N}u_{j,ik,m}\}_{k=1}^{T}$, we have
$$
\|\max_{1\leq t \leq T} | S_{j,Nt,m} - S_{j,Nt,m-1}| \|_q 
= \|\max_{0 \leq t \leq T-1} | \sum_{k = T-t}^{T} \sum_{i=1}^{N}(u_{j,ik,m} - u_{j,ik,m-1})| \|_q.
$$
Note that $\{u_{j,i(T-k),m} - u_{j,i(T-k),m-1}\}_{k=0}^{T-1}$ are martingale differences, and thus 
$$
\left\{\left|\sum_{k = T-t}^{T} \sum_{i=1}^{N}(u_{j,ik,m} - u_{j,ik,m-1})\right|\right\}_{t=0}^{T-1}
$$
is a positive submartingale with respect to $(\pmb{\varepsilon}_{T-t-m},\pmb{\varepsilon}_{T-t-m+1},\ldots)$. Therefore, by Doob $L_p$ maximal inequality, we have
$$
\|\max_{1\leq t \leq T} | S_{j,Nt,m} - S_{j,Nt,m-1}| \|_q \leq q/(q-1) \|S_{j,NT,m} - S_{j,NT,m-1}\|_q.
$$
Define $Y_{l,m} = \sum_{i=1}^{N}\sum_{k=1+(l-1)m}^{\min(lm,T)}(u_{j,ik,m}-u_{j,ik,m-1})$ and $b = \lfloor T/m \rfloor + 1$, we have
$$
|S_{j,NT,m} - S_{j,NT,m-1}| = \left| \sum_{l = 1}^{b}Y_{l,m} \right|.
$$
Note that $Y_{1,m},Y_{3,m},\ldots$ are independent and $Y_{2,m},Y_{4,m},\ldots$ are also independent, then by Rosenthal inequality for independent variables (e.g., \citealp{johnson1985best}),
$$
\| \sum_{l \text{ is odd}}Y_{l,m}\|_q \leq \frac{14.5q}{\log q}\left( \| \sum_{l \text{ is odd}}Y_{l,m}\|_2 + \left(\sum_{l \text{ is odd}}\| Y_{l,m}\|_q^q\right)^{1/q}\right),
$$
for $q \geq 2$ we have
\begin{eqnarray*}
\|S_{j,NT,m} - S_{j,NT,m-1}\|_q &\leq& \frac{14.5q}{\log q}\left[\| \sum_{l \text{ is odd}}Y_{l,m}\|_2 + \|\sum_{l \text{ is even}}Y_{l,m}\|_2 \right. \\
&&\left. + \left(\sum_{l \text{ is odd}}\| Y_{l,m}\|_q^q\right)^{1/q} + \left(\sum_{l \text{ is even}}\| Y_{l,m}\|_q^q\right)^{1/q} \right].
\end{eqnarray*}
In addition, by Burkholder inequality for martingale differences, we have
$$
\|Y_{l,m}\|_q^2 \leq (q-1)\sum_{k = 1 + (l-1)m}^{\min(lm,T)} \|\sum_{i=1}^{N}(u_{j,ik,m}-u_{j,ik,m-1})\|_q^2.
$$
For $\|\sum_{i=1}^{N}(u_{j,ik,m}-u_{j,ik,m-1})\|_q$, by Assumption \ref{ASSUMPTION1} we have
\begin{eqnarray*}
\|\sum_{i=1}^{N}(u_{j,ik,m}-u_{j,ik,m-1})\|_q &=& \|\sum_{i=1}^{N}E\left(u_{j,i}(\mathcal{F}_k)\mid \mathcal{F}_{k,k-m}\right)-E\left(u_{j,i}(\mathcal{F}_k)\mid\mathcal{F}_{k,k-m+1}\right) \|_q\\
&=&\|\sum_{i=1}^{N}E\left(u_{j,i}(\mathcal{F}_m)\mid \mathcal{F}_{m,0}\right)-E\left(u_{j,i}(\mathcal{F}_m)\mid\mathcal{F}_{m,1}\right) \|_q\\
&=&\|\sum_{i=1}^{N}E\left(u_{j,i}(\mathcal{F}_m)\mid \mathcal{F}_{m,0}\right)-E\left(u_{j,i}(\mathcal{F}_m^*)\mid\mathcal{F}_{m,0}\right) \|_q\\
&\leq&\sqrt{N} \delta_q(u_j,m).
\end{eqnarray*}
Hence, for $1\leq l \leq b$ we have
$$
\|Y_{l,m}\|_q \leq \sqrt{(q-1)(\min(lm,T)-(l-1)m)}\sqrt{N} \delta_q(u_j,m).
$$
Then, for $1\leq m\leq T$, we have
$$
\|S_{j,NT,m} - S_{j,NT,m-1}\|_q\leq \frac{29 q}{\log q}\left(\sqrt{NT} \delta_2(u_j,m) + \sqrt{N}\sqrt{q-1}T^{1/q}m^{1/2-1/q}\delta_q(u_j,m)\right).
$$
By the above developments and $q/(q-1)\leq 2$ when $q\geq 2$, we have
\begin{eqnarray*}
&&\sum_{m=1}^{T}\|\max_{1\leq t \leq T} | S_{j,Nt,m} - S_{j,Nt,m-1}| \|_q \\
&\leq& \frac{87q}{\log q}\left(\sqrt{NT} \sum_{m=1}^{T}\delta_2(u_j,m)+\sqrt{N}\sqrt{q-1}T^{1/q}\sum_{m=1}^{T}m^{1/2-1/q}\delta_q(u_j,m)\right).
\end{eqnarray*}

Similar to the above developments, for $I_1$, we have
$$
\|\max_{1\leq t \leq T} | S_{j,Nt} - S_{j,Nt,T}| \|_q \leq 3\sqrt{q-1}\sqrt{NT}\sum_{m=T+1}^{\infty}\delta_q(u_j,m).
$$

For $I_3$, note that $\{u_{j,it,0}\}$ are independent random variables, by Doob $L_p$ maximal inequality and Rosenthal inequality for independent variables, we have
\begin{eqnarray*}
\|\max_{1\leq t \leq T} | S_{j,Nt,0} | \|_q &\leq& q/(q-1) \|S_{j,NT,0} \|_q\\
&\leq & q/(q-1)\frac{14.5q}{\log q}\left( \sqrt{T}\| \sum_{i=1}^{N}u_{j,it,0} \|_2 + T^{1/q}\left(\| \sum_{i=1}^{N}u_{j,it,0} \|_q^q\right)^{1/q}\right).
\end{eqnarray*}
Note that $u_{j,it,0} = E(u_{j,i}(\mathcal{F}_t) \mid \mathcal{F}_{t,t}) =_D E(u_{j,i}(\mathcal{F}_0) \mid \pmb{\varepsilon}_0)$ and $E(u_{j,i}(\mathcal{F}_0^*) \mid \pmb{\varepsilon}_0) = 0$, so we have
$$
\|\sum_{i=1}^{N}u_{j,it,0} \|_q \leq \sqrt{N}\delta_q(u_j,0).
$$
Hence, $
\|\max_{1\leq t \leq T} | S_{j,Nt,0} | \|_q \leq\frac{29}{\log q}\left( \sqrt{NT}\delta_q(u_j,0) + T^{1/q}\sqrt{N}\delta_q(u_j,0)\right)$.

Combining the above analysis, we have
\begin{eqnarray*}
\|\max_{1\leq t\leq T}|S_{j,Nt}| \|_q &\leq& \sqrt{NT} \left[ \frac{87q}{\log q}\sum_{m=1}^{T}\delta_2(u_j,m) + 3\sqrt{q-1}\sum_{m=T+1}^{\infty}\delta_q(u_j,m)+\frac{29q}{\log q}\delta_2(u_j,0)\right]\\
&& + \sqrt{N}T^{1/q}\left[\frac{87q\sqrt{q-1}}{\log q}\sum_{m=1}^{T}m^{1/2-1/q}\delta_q(u_j,m) +  \frac{29q}{\log q}\delta_q(u_j,0)\right]\\
&\leq&c_q \sqrt{NT} \Delta_{2}(u_j,0) + c_q \sqrt{N}T^{1/q}\sum_{m=1}^{\infty}(m\wedge T)^{1/2-1/q}\delta_q(u_j,m).
\end{eqnarray*}
The proof of part (1) is now complete.

\smallskip

\noindent (2). The proof is similar with that of part (1), so it is omitted here.
\end{proof}

\begin{proof}[Proof of Lemma \ref{LemmaB4}]
\item
The proof of Lemma \ref{LemmaB4} is similar with that of Lemma \ref{LEMMA1} (1), so it is omitted here.
\end{proof}

\begin{proof}[Proof of Lemma \ref{LemmaB5}]
\item 
\noindent (1). Note that
$$
|\mathbf{v}|_1 = |\mathbf{v}_{\mathbf{J}}|_1 + |\mathbf{v}_{\mathbf{J}^c}|_1 \leq 4|\mathbf{v}_{\mathbf{J}}|_1\leq 4\sqrt{s}|\mathbf{v}_{\mathbf{J}}|_2,
$$
which follows that
\begin{eqnarray*}
\mathbf{v}^\top \widehat{\pmb{\Sigma}}_{x} \mathbf{v} &\geq& \mathbf{v}^\top {\pmb{\Sigma}}_x \mathbf{v} - |\widehat{\pmb{\Sigma}}_{x} - {\pmb{\Sigma}}_x|_{\mathrm{max}}|\mathbf{v}|_1^2\geq \mathbf{v}^\top {\pmb{\Sigma}}_{x} \mathbf{v} - 16s|\widehat{\pmb{\Sigma}}_{x} - {\pmb{\Sigma}}_x|_{\mathrm{max}}|\mathbf{v}|_2^2.
\end{eqnarray*}

Using Lemma \ref{LemmaB4} with $x = \frac{C_0}{16}\sqrt{\log d/(NT)}$ for some constant $C_0 > 0$, $\mathcal{B}_{NT} = \{|\widehat{\pmb{\Sigma}}_{x} - {\pmb{\Sigma}}_{x}|_{\mathrm{max}} \leq x  \}$ holds with probability larger than $1 - C_1 (\frac{dT^{1-q/2}}{(\log d)^{q/2}} +  d^{-C_2} )$ for some $C_1,C_2>0$.

\smallskip

\noindent (2). Given the condition $|\pmb{\gamma}_j|_1<\infty$, we first show the weak dependence properties of $\pmb{\eta}_j$. By the construction of $\pmb{\eta}_j$, we have $\eta_{j,it} = x_{j,it} - \mathbf{x}_{-j,it}^\top \pmb{\gamma}_j$, where $\mathbf{x}_{-j,it}$ is the sub-vector of $\mathbf{x}_{it}$ with the $j^{th}$ column removed. Then, by using Assumption \ref{ASSUMPTION2}.1 and $|\pmb{\gamma}_j|_1<\infty$, for any $j'\neq j$, we have
\begin{eqnarray*}
&&\left\|\frac{1}{\sqrt{N}}\sum_{i=1}^{N}(x_{j',it}\eta_{j,it} - x_{j',it}^*\eta_{j,it}^*)\right\|_{q}\\
&\leq& \left\|\frac{1}{\sqrt{N}}\sum_{i=1}^{N}(x_{j',it}x_{j,it} - x_{j',it}^*x_{j,it}^*)\right\|_{q} + \sum_{l=1,\neq j}^{d}\left\|\frac{1}{\sqrt{N}}\sum_{i=1}^{N}(x_{j',it}x_{l,it}- x_{j',it}^*x_{l,it}^*)\right\|_{q} |\lambda_{j,l}|\\
&\leq&\max_{1\leq j\leq d}\left\|\frac{1}{\sqrt{N}}\sum_{i=1}^{N}(x_{j',it}x_{j,it} - x_{j',it}^*x_{j,it}^*)\right\|_{q} (1 + |\pmb{\gamma}_j|_1) =O(t^{-\alpha}).
\end{eqnarray*}

Given the weak dependence properties of $x_{j',it}\eta_{j,it}$ and $\widetilde{\omega}_j\asymp \sqrt{\log d/(NT)}$, the proof of part (2) is then similar with that of Lemma \ref{LEMMA2}, so it is omitted here.

\smallskip

\noindent (3). The Karush-Kuhn-Tucker (KKT) condition of the optimization problem \eqref{EQUATION5} implies that there exists a subgradient $\vec{\mathbf{k}}_j$ such that
\begin{eqnarray*}
-\frac{1}{NT}\mathbf{X}_{-j}^\top\left(\mathbf{X}_j - \mathbf{X}_{-j}\widehat{\pmb{\gamma}}_{j}\right) + \widetilde{\omega}_j\vec{\mathbf{k}}_j  &=& \mathbf{0}\\
-\widehat{\pmb{\gamma}}_j^\top\frac{1}{NT}\mathbf{X}_{-j}^\top\left(\mathbf{X}_j - \mathbf{X}_{-j}\widehat{\pmb{\gamma}}_{j}\right) + \widetilde{\omega}_j|\widehat{\pmb{\gamma}}_j|_1  &=& \mathbf{0}
\end{eqnarray*}
which implies that
\begin{eqnarray*}
\widehat{\tau}_j^2 &=& (\mathbf{X}_j - \mathbf{X}_{-j}\widehat{\pmb{\gamma}}_j)^\top\mathbf{X}_{j}/(NT) \\
&=& \frac{\pmb{\eta}_j^\top\pmb{\eta}_j}{NT} + \frac{\pmb{\eta}_j^\top\mathbf{X}_{-j}\pmb{\gamma}_j}{NT} - \frac{(\widehat{\pmb{\gamma}}_j - \pmb{\gamma}_j)^\top\mathbf{X}_{-j}^\top\mathbf{X}_{-j}\pmb{\gamma}_j}{NT} - \frac{(\widehat{\pmb{\gamma}}_j - \pmb{\gamma}_j)^\top\mathbf{X}_{-j}^\top\pmb{\eta}_j}{NT}.
\end{eqnarray*}

Hence, for all $j \in [d]$, we have
\begin{eqnarray*}
|\widehat{\tau}_j^2 - \tau_j^2| &\leq& \left|\frac{\pmb{\eta}_j^\top\pmb{\eta}_j}{NT} - \tau_j^2\right| + \left|\frac{\pmb{\eta}_j^\top\mathbf{X}_{-j}\pmb{\gamma}_j}{NT}\right| + \left|\frac{(\widehat{\pmb{\gamma}}_j - \pmb{\gamma}_j)^\top\mathbf{X}_{-j}^\top\mathbf{X}_{-j}\pmb{\gamma}_j}{NT}\right| + \left|\frac{(\widehat{\pmb{\gamma}}_j - \pmb{\gamma}_j)^\top\mathbf{X}_{-j}^\top\pmb{\eta}_j}{NT}\right| \\
&\coloneqq & I_4 + I_5 + I_6 + I_{7}.
\end{eqnarray*}

For $I_4$, it can be shown that $ |\frac{\pmb{\eta}_j^\top\pmb{\eta}_j}{NT} - \tau_j^2 | = O_P((NT)^{-1/2})$ by using similar arguments as the proof of Lemma \ref{LemmaB6} (1). For $I_5$, by part (2) of this lemma, we have
$$
I_5\leq |\pmb{\eta}_j^\top\mathbf{X}_{-j}/(NT)|_\infty|\pmb{\gamma}_j|_1\leq \sqrt{s_j}|\pmb{\eta}_j^\top\mathbf{X}_{-j}/(NT)|_\infty|\pmb{\gamma}_j|_2=O_P(\sqrt{s_j\log d/(NT)}).
$$

Consider $I_6$. Note that by the KKT condition, we have
\begin{eqnarray*}
-\frac{1}{NT}\mathbf{X}_{-j}^\top\left(\mathbf{X}_j - \mathbf{X}_{-j}\widehat{\pmb{\gamma}}_{j}\right) + \lambda_{j}\vec{\mathbf{k}}_j  &=& \mathbf{0}\\
-\frac{1}{NT}\mathbf{X}_{-j}^\top\left(\mathbf{X}_{-j}\pmb{\gamma}_j+\pmb{\eta}_j - \mathbf{X}_{-j}\widehat{\pmb{\gamma}}_{j}\right) + \lambda_{j}\vec{\mathbf{k}}_j &=& \mathbf{0}.
\end{eqnarray*}
Hence, by using part (2) of this lemma, we have
$$
\left|\frac{(\widehat{\pmb{\gamma}}_j - \pmb{\gamma}_j)^\top\mathbf{X}_{-j}^\top\mathbf{X}_{-j}}{NT}\right|_\infty = |\mathbf{X}_{-j}^\top\pmb{\eta}_j/(NT) + \lambda_{j}\vec{\mathbf{k}}_j|_\infty = O_P(\sqrt{\log d/(NT)})
$$
and
$$
I_6\leq \left|\frac{(\widehat{\pmb{\gamma}}_j - \pmb{\gamma}_j)^\top\mathbf{X}_{-j}^\top\mathbf{X}_{-j}}{NT}\right|_\infty |\pmb{\gamma}_j|_1 = O_P(\sqrt{s_j\log d/(NT)}).
$$
Similarly, $I_{7} = O_P(s_j\log d/(NT))$. Therefore, $|\widehat{\tau}_j^2 - \tau_j^2| = O_P(\sqrt{s_j\log d/(NT)})$.

Note that $\tau_j^2 = 1/ \Omega_{x,j,j} \geq 1/\psi_{\mathrm{max}}(\pmb{\Omega}_x)  = \psi_{\mathrm{min}}(\pmb{\Sigma}_x) > 0$, and thus $\widehat{\tau}_j^2 \geq \tau_j^2 - |\widehat{\tau}_j^2 - \tau_j^2| >0$ with probability approaching to 1 as $|\widehat{\tau}_j^2 - \tau_j^2| = O_P(\sqrt{s_j\log d/(NT)})$, which also implies that
$$
\left|\frac{1}{\widehat{\tau}_j^2} - \frac{1}{\tau_j^2}\right| = \frac{|\widehat{\tau}_j^2 - \tau_j^2|}{\widehat{\tau}_j^2\tau_j^2} = O_P(\sqrt{s_j\log d/(NT)})  .
$$

Next, consider $|\widehat{\pmb{\Omega}}_{x,j} - \pmb{\Omega}_{x,j}|_1$. As $\widehat{\pmb{\Omega}}_{x,j} = \widehat{\mathbf{C}}_j/\widehat{\tau}_j^2$, by using part (2) of this lemma and $|\pmb{\gamma}_j|_1 = O(\sqrt{s_j})$, we have
\begin{eqnarray*}
|\widehat{\pmb{\Omega}}_{x,j} - \pmb{\Omega}_{x,j}|_1 &=& |\widehat{\mathbf{C}}_j/\widehat{\tau}_j^2 - \mathbf{C}_j/\tau_j^2|_1 \leq |1/\widehat{\tau}_j^2 - 1/\tau_j^2| + |\widehat{\pmb{\gamma}}_j/\widehat{\tau}_j^2 - \pmb{\gamma}_j/\tau_j^2|_1\\
&\leq& |1/\widehat{\tau}_j^2 - 1/\tau_j^2| + |\widehat{\pmb{\gamma}}_j/\widehat{\tau}_j^2 - \pmb{\gamma}_j/\widehat{\tau}_j^2|_1+|\pmb{\gamma}_j/\widehat{\tau}_j^2 - \pmb{\gamma}_j/\tau_j^2|_1\\
&=&  O_P(\sqrt{s_j\log d/(NT)}) + O_P(s_j\sqrt{\log d/(NT)}) +  O_P(s_j\sqrt{\log d/(NT)})\\
&=&  O_P(s_j\sqrt{\log d/(NT)}).
\end{eqnarray*}
Similarly, by using $|\pmb{\gamma}_j|_2 = O(1)$, we have $|\widehat{\pmb{\Omega}}_{x,j} - \pmb{\Omega}_{x,j}|_2 =  O_P(\sqrt{s_j\log d/(NT)})$.
\end{proof}

\begin{proof}[Proof of Lemma \ref{LemmaB6}]
\item 
\noindent (1). Consider $\|\sum_{t=1}^{T}b_t(\overline{u}_{j,t}-\widetilde{\overline{u}}_{j,t})\|_q$ first. Define $D_{t,k} = E[\overline{u}_{j,t} \mid \mathcal{F}_{t-k,t}] - E[\overline{u}_{j,t} \mid \mathcal{F}_{t-k+1,t}]$, then $\{D_{t,k},\ t = T,\ldots,1\}$ form martingale differences with respect to $\mathcal{F}_{t-k,\infty}$. In addition, by Jensen inequality, we have $\|D_{t,k}\|_q \leq \delta_q(u_j,k)$.

By using Burkholder inequality and Minkowski inequality, for $q \geq 2$, we have
\begin{eqnarray*}
\left\|\sum_{t=1}^{T}b_tD_{t,k}\right\|_q^2 \leq O(1)\sum_{t=1}^{T}\|b_tD_{t,k} \|_q^2 = O(1)\sum_{t=1}^{T} b_t^2 \delta_q^2(u_j,k).
\end{eqnarray*}
Since $\overline{u}_{j,t} - \widetilde{\overline{u}}_{j,t} = \sum_{k=m+1}^{\infty}D_{t,k}$, the result follows.

The proof of $\|\sum_{t=1}^{T}b_t\overline{u}_{j,t}\|_q$ is similar to above, so omitted here.

\smallskip

\noindent (2). In what follows, let
\begin{eqnarray*}
Z_{j,t} &=& \sum_{s=1}^{t}a_{t+1-s}\overline{u}_{j,s},\quad \widetilde{Z}_{j',t} = \sum_{s=1}^{t}a_{t+1-s}\widetilde{\overline{u}}_{j',s},\\
L_{T,j,j'}^{\diamond} &=& \sum_{1\leq s < t \leq T}a_{t-s} \overline{u}_{j,t}\widetilde{\overline{u}}_{j',s}= \sum_{t=2}^{T} \overline{u}_{j,t}\widetilde{Z}_{j',t-1}.
\end{eqnarray*}
Also, let $Z_{j,t,\{k\}}, \overline{u}_{j,t,\{k\}}$ be the coupled version of $Z_t$ and $x_t$ replacing $\pmb{\varepsilon}_k$ with $\pmb{\varepsilon}_k^\prime$. 

Define the projection operator $\mathcal{P}_{t}(\cdot) = E[\cdot \mid \mathcal{F}_t] - E[\cdot \mid \mathcal{F}_{t-1}]$. By Jensen inequality,
\begin{eqnarray*}
&&\|\mathscr{P}_k(L_{T,j,j'} - L_{T,j,j'}^{\diamond})\|_{q/2} \\
&=& \left\|E\left[\sum_{t=2}^{T}[\overline{u}_{j,t}(Z_{j',t-1}-\widetilde{Z}_{j',t-1}) \mid \mathcal{F}_{k}\right] - E\left[\sum_{t=2}^{T}[\overline{u}_{j,t,\{k\}}(Z_{j',t-1,\{k\}}-\widetilde{Z}_{j',t-1,\{k\}}) \mid \mathcal{F}_{k}\right] \right\|_{q/2}\\
&\leq&\left\|\sum_{t=2}^{T}\overline{u}_{j,t,\{k\}}(Z_{j',t-1}-\widetilde{Z}_{j',t-1}-Z_{j',t-1,\{k\}}+\widetilde{Z}_{j',t-1,\{k\}})\right\|_{q/2}\\
&& + \left\|\sum_{t=2}^{T}[\overline{u}_{j,t}- \overline{u}_{j,t,\{k\}}](Z_{j',t-1}-\widetilde{Z}_{j',t-1})\right\|_{q/2}\\
&\eqqcolon &I_{8} + I_{9}.
\end{eqnarray*}

We consider $I_{8}$ first. Since $\|\overline{u}_{j',t} -  \overline{u}_{j',t,\{k\}}\|_q\leq\delta_q(u_{j'},t-k)$ and $\|\overline{u}_{j',t}- \widetilde{\overline{u}}_{j',t}\|_q\leq\Psi_{q}(u_j',m+1)$, we thus obtain that
\begin{eqnarray*}
\|\overline{u}_{j',t} - \widetilde{\overline{u}}_{j',t} - \overline{u}_{j',t,\{k\}} +  \widetilde{\overline{u}}_{j',t,\{k\}} \|_q\leq 2\min\left(\delta_q(u_{j'},t-k),\Psi_{q}(u_{j'},m+1)\right).
\end{eqnarray*}
In addition, by Minkowski inequality, Cauchy-Schwarz inequality and part (1) of this lemma,  we have
\begin{eqnarray*}
I_{8} &=& \left \|\sum_{s=1}^{T-1}(\overline{u}_{j',s} - \widetilde{\overline{u}}_{j',s} - \overline{u}_{j',s,\{k\}} +  \widetilde{\overline{u}}_{j',s,\{k\}}) \sum_{t=s+1}^{T}a_{t-s}\overline{u}_{j,t,\{k\}}\right\|_{q/2}\\
&\leq& \sum_{s=1}^{T-1}\left\|\overline{u}_{j',s} - \widetilde{\overline{u}}_{j',s} - \overline{u}_{j',s,\{k\}} +  \widetilde{\overline{u}}_{j',s,\{k\}} \right\|_q\cdot \left\|\sum_{t=s+1}^{T}a_{t-s}\overline{u}_{j,t,\{k\}} \right\|_q\\
&=& O(A_T)\sum_{s=1}^{T-1}\min \left(\delta_q(u_{j'},s-k),\Psi_q(u_{j'},m+1)\right).
\end{eqnarray*}

We now consider $I_{9}$. By part (1) of this lemma, we have  
\begin{eqnarray*}
\max_{1\leq t\leq T}\|Z_{j',t-1}-\widetilde{Z}_{j',t-1}\|_4\leq O(1)A_T\Delta_q(u_{j'},m+1).
\end{eqnarray*}
Also, note that $I_{9}$ is actually $I_{9, k}$, and the sub-index $k$ is suppressed previously for notational simplicity. In addition, by , we have
\begin{eqnarray*}
\sum_{k=-\infty}^{T}I_{9,k}^2 &\leq &O(1)\sum_{k=-\infty}^{T}A_T^2\Delta_q^2(u_{j'},m+1)\left(\sum_{t=1}^{T}\delta_q(u_{j},t-k)\right)^2 \\
&\leq& O(1)A_T^2\Delta_q^2(u_{j'},m+1) \sum_{k=-\infty}^{T}\sum_{t=1}^{T}\delta_q(u_j,t-k)
\\ 
&=&O(1)TA_T^2\Delta_q^2(u_{j'},m+1).
\end{eqnarray*}
Similarly, $\sum_{k=-\infty}^{T}I_{8,k}^2 = O(1)TA_T^2d_{j',m,q}^2$. Since $\{\mathscr{P}_k(L_T - L_T^{\diamond})\}_{k}$ is a sequence of martingale differences, by Burkholder inequality and $\Delta_q(u_{j'},m+1) \leq d_{j',m,q}$, we have
\begin{eqnarray*}
&&\|L_{T,j,j'} -E(L_{T,j,j'}) -L_{T,j,j'}^{\diamond} -E(L_{T,j,j'}^{\diamond})\|_{q/2}^2 \nonumber \\
& = & \left\|\sum_{k=-\infty}^{T}\mathscr{P}_k(L_{T,j,j'} - L_{T,j,j'}^{\diamond})\right\|_{q/2}^2\\
&\leq& O(1)\sum_{k=-\infty}^{T}\|\mathscr{P}_k(L_{T,j,j'} - L_{T,j,j'}^{\diamond})\|_{q/2}^2\nonumber \\
&=& O(1)TA_T^2d_{j',m,q}^2.
\end{eqnarray*}
Similarly, we have $
\|\widetilde{L}_{T,j,j'} -E(\widetilde{L}_{T,j,j'}) -L_{T,j,j'}^{\diamond} -E(L_{T,j,j'}^{\diamond})\|_{q/2}^2 = O(1)TA_T^2d_{j,m,q}^2$.

The proof is now complete.

\smallskip

\noindent (3). The proof is similar to that of part (2), so omitted here.
\end{proof}

\begin{proof}[Proof of Lemma \ref{LemmaB7}]
\item 
\noindent (1). Let $m = \lfloor T^{\theta} \rfloor$ for some $0< \theta < 1$. By Lemmas \ref{LemmaB6} (2) and (3) and using Markov inequality, we have
$$
\Pr(|L_{NT,j,j'} - E(L_{NT,j,j'}) - \widetilde{L}_{NT,j,j'} - E(\widetilde{L}_{NT,j,j'})|\geq x_T/2) \leq C_q x_{T}^{-q/2}T^{q/2 - (\alpha-1)\theta q/2}.
$$
Then to complete the proof, it is sufficient to show
$$
\Pr(|\widetilde{L}_{NT,j,j'} - E(\widetilde{L}_{NT,j,j'})|\geq x_T/2) \leq C_{q,\delta,\theta} x_{T}^{-q/2}(T \log T)(m^{q/2 -1-(\alpha-1)\theta q/2} + 1).
$$
since
\begin{eqnarray*}
&&\Pr(|L_{NT,j,j'} - E(L_{NT,j,j'})|\geq x_T) \nonumber \\ 
&\leq& \Pr(|L_{NT,j,j'} - E(L_{NT,j,j'}) - \widetilde{L}_{NT,j,j'} - E(\widetilde{L}_{NT,j,j'})|\geq x_T/2)\\
&&+\Pr(| \widetilde{L}_{NT,j,j'} - E(\widetilde{L}_{NT,j,j'})|\geq x_T/2).
\end{eqnarray*}

\smallskip

Let $z_T$ satisfy $T^{1+\delta} / z_T \to 0$ and $l_T = \lfloor -\log(\log T)/ (\log \theta) \rfloor$ such that $T^{\theta^{l_T}} \leq e$. Let $y_T = z_T/(2l_T)$ and $1<m<T/4$.

Define $\widetilde{\mathbf{u}}_{it,\theta} = E[\mathbf{u}_{it} \mid \mathcal{F}_{t-\lfloor m^{\theta}\rfloor,t}]$, $\widetilde{\overline{\mathbf{u}}}_{t,\theta} = \frac{1}{\sqrt{N}}\sum_{i=1}^{N}\widetilde{\mathbf{u}}_{it,\theta}$, $Y_{j',t,1} = \sum_{s=1}^{t-3m-1}a_{s,t}\widetilde{\overline{u}}_{j',s}$ and $Z_{j',t,1} = \sum_{s=1 \vee (t-3m)}^{t}a_{s,t}\widetilde{\overline{u}}_{j',s}$. Define $Y_{j',t,2}$ and $Z_{j',t,2}$ similarly by replacing $\widetilde{\overline{u}}_{j',s}$ with $\widetilde{\overline{u}}_{j',s,\theta}$. Observe that $\widetilde{\overline{u}}_{j,t}$ and $\widetilde{\overline{u}}_{j,t,\theta}$ are independent of $Y_{j',t,l}$ for $l=1,2$. 

We first consider $\sum_{t=1}^{T}(\widetilde{\overline{u}}_{j,t}Z_{j',t,1} - \widetilde{\overline{u}}_{j,t,\theta}Z_{j',t,2})$. Split $[T]$ into blocks $B_1,\ldots,B_{b_T}$ with block size $4m$ and define $W_{T,b} = \sum_{t\in B_b}(\widetilde{\overline{u}}_{j,t}Z_{j',t,1} - \widetilde{\overline{u}}_{j,t,\theta}Z_{j',t,2})$. Let $y_T$ satisfy $y_T < z_T /2$ and $T^{1+\delta/2} / y_T \to 0$. Since $W_{T,b}$ and $W_{T,b'}$ are independent if $|b-b'| > 1$, by Lemma \ref{LemmaB2} (1), Lemma \ref{LemmaB6} (3) and Lemma \ref{LemmaB6} (2), for any $M > 1$, there exists a constant $C_{q,M,\delta,\theta}$ such that
\begin{eqnarray*}
&& \Pr\left(\left|\sum_{t=1}^{T}(\widetilde{\overline{u}}_{j,t}Z_{j',t,1} - \widetilde{\overline{u}}_{j,t,\theta}Z_{j',t,2}) - E\left(\sum_{t=1}^{T}(\widetilde{\overline{u}}_{j,t}Z_{j',t,1} - \widetilde{\overline{u}}_{j,t,\theta}Z_{j',t,2})\right) \right| \geq y_T  \right)\\
&\leq& C_{q,M,\delta,\theta} y_T^{-M} + \sum_{b=1}^{b_T} \Pr\left(|W_{T,b}-E(W_{T,b})| \geq y_T /C_{M,\delta} \right)\\
&\leq& C_{q,M,\delta,\theta} y_T^{-M} + C_{q,M,\delta,\theta} y_T^{-q/2}  T m^{q/2-1-(\alpha-1)\theta q/2}.
\end{eqnarray*}

We next deal with the term $\sum_{t=1}^{T}(\widetilde{\overline{u}}_{j,t}Y_{j',t,1} - \widetilde{\overline{u}}_{j,t,\theta}Y_{j',t,2})$. Split $[T]$ into blocks $B_{1}^*,\ldots,B_{b_T}^*$ with size $m$. Define $R_{T,b} = \sum_{t\in B_{b}^*  } (\widetilde{\overline{u}}_{j,t}Y_{j',t,1} - \widetilde{\overline{u}}_{j,t,\theta}Y_{j',t,2})$. Let $\xi_b$ be the $\sigma$-fields generated by $(\pmb{\varepsilon}_{l_b},\pmb{\varepsilon}_{l_{b-1}},\ldots)$, where $l_b = \max\{B_{b}^*\}$. Note that $\{R_{T,b}\}_{b \text{ is odd}}$ is a martingale sequence with respect to $\{\xi_b\}_{b \text{ is odd}}$, and so are $\{R_{T,b}\}_{b \text{ is even}}$ and $\{\xi_b\}_{b \text{ is even}}$. By Lemma 1 of \cite{haeusler1984exact}, for any $M>1$ there exists a constant $C_{M,\delta}$ such that 
\begin{eqnarray*}
&&\Pr\left(\left|\sum_{t=1}^{T}(\widetilde{\overline{u}}_{j,t}Y_{j',t,1} - \widetilde{\overline{u}}_{j,t,\theta}Y_{j',t,2})\right| \geq y_T\right)\\
&\leq& C_{M,\delta} y_T^{-M} + 4\Pr\left(\left|\sum_{b=1}^{B_T^*}E( R_{T,b}^2 \mid \xi_{b-2})\right| > y_T^2/(\log T)^{3/2}\right)\\
&& + \sum_{b=1}^{B_T^*}\Pr\left(\left|R_{T,b}\right| \geq y_T / \log T\right)\nonumber \\
&\coloneqq & I_{10} + I_{11} + I_{12}.
\end{eqnarray*}

Consider $I_{12}$ first. Since $({\overline{u}}_{j,t}, {\overline{u}}_{j,t,\theta})$ and  $(Y_{j',t,1},Y_{j',t,2})$ are independent, by using similar arguments as Lemma \ref{LemmaB6} (2), we have
$$
\|R_{T,b}\|_{q} \leq C_q (mT)^{q/2} m^{-(\alpha-1) \theta q},
$$
which implies that
$$
I_{12} \leq C_q y_T^{-q} (\log y_T)^{q} T^{q/2+1}m^{q/2-1-(\alpha-1)\theta q}.
$$

Consider $I_{11}$. Let $r_{j,s-t} = E(\widetilde{\overline{u}}_{j,t}\widetilde{\overline{u}}_{j,s})$ and $r_{j,s-t,\theta} = E(\widetilde{\overline{u}}_{j,t,\theta}\widetilde{\overline{u}}_{j,s,\theta})$, then we have 
\begin{eqnarray*}
\sum_{b=1}^{b_T^*}E(R_{T,b}^2 \mid \xi_{b,2}) &\leq& 2\sum_{b=1}^{b_T^*} \left[\sum_{s,t \in B_b^*} (r_{j,s-t} Y_{j',s,1}Y_{j',t,1}+ r_{j,s-t,\theta}Y_{j',s,2}Y_{j',t,2} )\right]\\
&=& \sum_{1\leq s\leq t\leq T}a_{j,s,t} \widetilde{\overline{u}}_{j',t}\widetilde{\overline{u}}_{j',s} +  \sum_{1\leq s\leq t\leq T}a_{j,s,t,\theta} \widetilde{\overline{u}}_{j',t,\theta}\widetilde{\overline{u}}_{j',s,\theta},
\end{eqnarray*}
in which $|a_{j,s,t}|\leq CT$ and $|a_{j,s,t,\theta}|\leq CT$ since $\sum_{l\in \mathbb{Z}} |r_{j,l}| < \infty$ and $\sum_{l\in \mathbb{Z}} |r_{j,l,\theta}| < \infty$. Define $U(T,m,x_T) = \sup_{\{a_{j,s,t}\}} \Pr\left(\left|\sum_{1\leq s\leq t\leq T}a_{j,s,t} \widetilde{\overline{u}}_{j',t}\widetilde{\overline{u}}_{j',s} - E\left(\sum_{1\leq s\leq t\leq T}a_{s,t} \widetilde{\overline{u}}_{j',t}\widetilde{\overline{u}}_{j',s}\right) \right| \geq x_T\right)$, where the supremum is taken over all arrays $\{a_{s,t}\}$ such that $|a_{s,t}|\leq 1$. Therefore, we have
$$
I_{11} \leq C_{\theta} U(T,m,y_T^2/(T(\log T)^2)) + C_{\theta} U(T,\lfloor m^{\theta} \rfloor,y_T^2/(T(\log T)^2)).
$$

Combining the above analysis, we have shown that $U(T,m,z_T)$ is bounded from above by (up to a constant)
\begin{eqnarray}\label{EqB.1}
&&U(T,\lfloor m^{\theta} \rfloor,z_T - 2y_T) + U(T,m,y_T^2/(T(\log T)^2))  \nonumber \\
&&+U(T,\lfloor m^{\theta} \rfloor,y_T^2/(T(\log T)^2)) +  y_T^{-M} + y_T^{-q/2}  T m^{q/2-1-(\alpha-1)\theta q/2} \nonumber \\
&& + y_T^{-q} (\log y_T)^{q} T^{q/2+1}m^{q/2-1-(\alpha-1)\theta q}.
\end{eqnarray}

Since $\sup_{a_{j,s,t}}\|L_{T,j,j'} - E(L_{T,j,j'}) - \widetilde{L}_{T,j,j'}\|_{q/2} = O(T)$ by Lemma \ref{LemmaB6} (3), by applying \eqref{EqB.1} recursively to deal with the term $U(T,m,y_T^2/(T(\log T)^2))$ for $p$ times such that $(y_t/T)^{-2^p q} = O(y_T^{-(M+1)})$, we have (up to a constant)
\begin{eqnarray}\label{EqB.2}
U(T,m,z_T) &\leq&U(T,\lfloor m^{\theta} \rfloor,z_T - 2y_T) + y_T^{-M} + y_T^{-q/2}  T m^{q/2-1-(\alpha-1)\theta q/2} \nonumber \\
&& + y_T^{-q} (\log y_T)^{q} T^{q/2+1}m^{q/2-1-(\alpha-1)\theta q}.
\end{eqnarray}
Using similar arguments, we can show for $1\leq m\leq 3$
$$
U(T,m,z_T/(2l_T)) \leq O \left(z_T^{-M} + z_T^{-q/2}  T \log T
+ z_T^{-q} (\log z_T)^{q+1} T^{q/2+1}\right).
$$

Using \eqref{EqB.2} for $l_T-1$ times, we have
\begin{eqnarray}\label{EqB.3}
U(T, m , z_T) &\leq& O(1) (\log z_T)^{q+1} \left(z_T^{-q/2}  T
+ z_T^{-q} T^{q/2+1}\right) (m^{q/2-1-(\alpha-1) \theta q/2}+1).
\end{eqnarray}
Plugging \eqref{EqB.3} back into \eqref{EqB.1} for the terms $U(T,m,y_T^2/(T(\log T)^2))$ and
$U(T,\lfloor m^{\theta} \rfloor,y_T^2/(T(\log T)^2))$ and using $T^{1+\delta/2}/y_T\to 0$, we have
\begin{eqnarray}\label{EqB.4}
U(T, m , z_T) &\leq& U(T,\lfloor m^{\theta} \rfloor,z_T - 2y_T) +O(1) z_T^{-q} T (m^{q/2-1-(\alpha-1) \theta q/2}+1).
\end{eqnarray}
Finally, by using \eqref{EqB.4} for $l_T-1$ times, we obtain
$$
U(T, m , z_T) \leq O(1)z_T^{-q} T \log T (m^{q/2-1-(\alpha-1) \theta q/2}+1).
$$
The proof is now complete.

\smallskip

\noindent (2). For $l\geq1$, define $m_{T,l} = \lfloor T^{\theta^l}\rfloor$, $\widetilde{\mathbf{u}}_{it,l} = E[\mathbf{u}_{it} \mid \mathcal{F}_{t-m_{T,l},t}]$, $\widetilde{\overline{\mathbf{u}}}_{t,l} = \frac{1}{\sqrt{N}}\sum_{i=1}^{N}\widetilde{\mathbf{u}}_{it,l}$ and $Q_{NT,l} = \sum_{1\leq s < t \leq T}a_{t-s} \widetilde{\overline{u}}_{j,t,l}\widetilde{\overline{u}}_{j',s,l}$. Let $l_T = \lceil - \log(\log T)/\log \theta \rceil$ and thus $m_{T,l_T} \leq e$. By using Lemmas \ref{LemmaB6} (2) and (3), we have
$$
\Pr(|L_{NT,j,j'} - E(L_{NT,j,j'}) - Q_{NT,1} + E(Q_{NT,1})|\geq x_T/l_T) \leq C_{q,\theta} (\log T)^{1/2} x_{T}^{-q/2}(T\ell)^{q/4}T^{- (\alpha-1)\theta q/2}.
$$
Let $l_T'$ be the smallest $l$ such that $m_{T,l}\leq \ell/4$. For $1\leq l \leq l_T'$, split $[1,T]$ into blocks $B_{1}^l,\ldots,B_{b_{T,l}}^l$ with size $\ell + m_{T,l}$. Define $W_{T,l,b} = \sum_{t\in B_b^l}\sum_{1\leq s \leq t}a_{t-s} \widetilde{\overline{u}}_{j,t,l}\widetilde{\overline{u}}_{j',s,l}$ and $$W_{T,l,b}' = \sum_{t\in B_b^l}\sum_{1\leq s \leq t}a_{t-s} \widetilde{\overline{u}}_{j,t,l+1}\widetilde{\overline{u}}_{j',s,l+1}.$$ Then, by using Lemma \ref{LemmaB2} (1) and using  Lemma \ref{LemmaB6} (3), we have for any $C>2$
\begin{eqnarray}\label{EqB.5}
&&\Pr(|Q_{NT,l} - E(Q_{NT,l}) - (Q_{NT,l+1} - E(Q_{NT,l+1}))|\geq x_T/(2l_T)) \nonumber \\
&\leq&\sum_{b=1}^{b_{T,l}} \Pr(|W_{T,l,b} - E(W_{T,l,b}) - (W_{T,l,b}' - E(W_{T,l,b}'))|\geq x_T/(Cl_T)) \nonumber \\
&& + O(1) \left[\frac{T\ell l_T^2}{x_T^2}\right]^{C/4}.
\end{eqnarray}
Note that for any $M>1$, there exists a constant $C_{M,\delta,\theta}$ such that the second term in the right hand of \eqref{EqB.5} is less than $C_{M,\delta,\theta} x_T^{-M}$. For the first term, by using Lemmas \ref{LemmaB6} (2) and (3), we have
\begin{eqnarray}\label{EqB.6}
&& \sum_{b=1}^{b_{T,l}} \Pr(|W_{T,l,b} - E(W_{T,l,b}) - (W_{T,l,b}' - E(W_{T,l,b'}))|\geq x_T/(Cl_T)) \nonumber\\
&\leq& O(1) Tm_{T,l}^{-1}\sqrt{\log T} x_T^{-q/2}(m_{T,l}\ell)^{q/4} m_{T,l+1}^{-(\alpha-1) q/2} \nonumber \\
&\leq& O(1)x_T^{-q/2} \sqrt{\log T} T\ell^{q/4} m_{T,l}^{q/4-1-(\alpha-1) \theta q/2} \nonumber \\
&\leq& O(1)x_T^{-q/2} \sqrt{\log T} \left((T\ell)^{q/4}T^{-(\alpha-1)\theta q/2} + T\ell^{q/2-1-(\alpha-1)\theta q/2}\right).
\end{eqnarray}
Combining \eqref{EqB.5} and \eqref{EqB.6}, we have
\begin{eqnarray*}
&&\Pr(|L_{NT,j,j'} - E(L_{NT,j,j'})|\geq x_T)\nonumber \\
&\leq & \Pr(|Q_{NT,l_T'} - E(Q_{NT,l_T'})|\geq x_T/2) + O(1)x_T^{-M}\\
&&+ O(1)x_T^{-q/2} \sqrt{\log T} \left((T\ell)^{q/4}T^{-(\alpha-1)\theta q/2} + T\ell^{q/2-1-(\alpha-1)\theta q/2}\right).
\end{eqnarray*}

For $\Pr(|Q_{NT,l_T'} - E(Q_{NT,l_T'})|\geq x_T/2)$, split $[T]$ into blocks $B_{1},\ldots,B_{b_{T}}$ with block size $2\ell$ and define
$W_{T,l_T',b} = \sum_{t\in B_b}\sum_{1\leq s \leq t}a_{t-s} \widetilde{\overline{u}}_{j,t,l_T'}\widetilde{\overline{u}}_{j',s,l_T'}$. Similarly, we have
$$
\Pr(|Q_{NT,l_T'} - E(Q_{NT,l_T'})|\geq x_T/2)\leq\sum_{b=1}^{b_{T}} \Pr(|W_{T,l_T',b} - E(W_{T,l_T',b})|\geq x_T/C)+Cx_T^{-M}.
$$
By using part (1) of this lemma, we have
$$
\Pr(|W_{T,l_T',b} - E(W_{T,l_T',b})|\geq x_T/C) \leq O(1)x_T^{-q/2}(\log T) (\ell^{q/2-(\alpha-1)\theta q/2}+\ell),
$$
which follows that
$$
\Pr(|Q_{NT,l_T'} - E(Q_{NT,l_T'})|\geq x_T/2)\leq O(1)x_T^{-q/2}(T\log T) (\ell^{q/2-1-(\alpha-1)\theta q/2}+1).
$$

The proof is now complete.
\end{proof}

\begin{proof}[Proof of Lemma \ref{LemmaB8}]
\item 
\noindent (1). By using Lemma \ref{LEMMA1} (2) and letting $x_T = M\sqrt{T\ell \log d}$ for some sufficient large but fixed $M$ and $\theta = 1/(\alpha-1)$, we have
\begin{eqnarray*}
&&\Pr\left(\max_{1\leq k,l\leq d}\left|\sum_{t,s=1}^{T}a((t-s)/\ell)\left(\overline{u}_{k,t}\overline{u}_{l,s} - E(\overline{u}_{k,t}\overline{u}_{l,s})\right)\right| \geq x_T\right)\\
&\leq&\sum_{k,l=1}^{d}\Pr\left(\left|\sum_{t,s=1}^{T}a((t-s)/\ell)\left(\overline{u}_{k,t}\overline{u}_{l,s} - E(\overline{u}_{k,t}\overline{u}_{l,s})\right)\right| \geq x_T\right)\\
&\leq&O(1)d^2 x_T^{-q/2}\log T(T^{-q/4}\ell^{q/4} + T\ell^{-1} +T ) + O(1)d^{2-M} + O(1)d^{2}T^{-M}\to 0 
\end{eqnarray*}
if $d^2 T \log T /(T\ell \log d)^{q/4} \to 0$. Hence, we have proved
$$
\max_{1\leq k,l\leq d}\left|T^{-1}\sum_{t,s=1}^{T}a((t-s)/\ell)\left(\overline{u}_{k,t}\overline{u}_{l,s} - E(\overline{u}_{k,t}\overline{u}_{l,s})\right)\right| = O_P(\sqrt{\ell \log d/T}).
$$

\noindent (2). By using Cauchy-Schwarz inequality, we have
\begin{eqnarray*}
&&\max_{1\leq k,l,l'\leq d} \left| \frac{1}{T}\sum_{t=1}^{T}\frac{1}{N^2}\sum_{i,j=1}^{N}x_{k,it}x_{k,jt}x_{l,it}x_{l',jt}\right|\\
&\leq&\max_{1\leq k,l\leq d} \left| \frac{1}{T}\sum_{t=1}^{T}\left(\frac{1}{N}\sum_{i=1}^{N}x_{k,it}x_{l,it}\right)^2\right| \leq \max_{1\leq k \leq d} \left| \frac{1}{T}\sum_{t=1}^{T}\left\{\frac{1}{N}\sum_{i=1}^{N}x_{k,it}^2\right\}^2\right| = O_P(1).
\end{eqnarray*}

Hence, by using Cauchy-Schwarz inequality, if $s^2\sqrt{\ell \log d/T}< \infty$, we have
\begin{eqnarray*}
&&\max_{1\leq k,l\leq d}\left|\frac{1}{T}\sum_{t,s=1}^{T}a((t-s)/\ell)\left(\widehat{\overline{u}}_{k,t}-\overline{u}_{k,t}\right)\left(\widehat{\overline{u}}_{l,s} - \overline{u}_{l,s}\right)\right| \\
&\leq& (2\ell+1) \max_{1\leq k\leq d}\left|\frac{1}{T}\sum_{t=1}^{T}\left(\widehat{\overline{u}}_{k,t}-\overline{u}_{k,t}\right)^2\right|\\
&=&N(2\ell+1) \max_{1\leq k\leq d} \left| \sum_{1\leq l,l'\leq d}\frac{1}{T}\sum_{t=1}^{T}\frac{1}{N^2}\sum_{i,j=1}^{N}x_{k,it}x_{k,jt}x_{l,it}x_{l',jt}(\widehat{\beta}_{l}-\beta_{l})(\widehat{\beta}_{l'}-\beta_{l'})\right| \\
&\leq&N(2\ell+1)|\widehat{\pmb{\beta}} - \pmb{\beta}|_1^2\max_{1\leq k\leq d}\max_{l,l'\in J} \left| \frac{1}{T}\sum_{t=1}^{T}\frac{1}{N^2}\sum_{i,j=1}^{N}x_{k,it}x_{k,jt}x_{l,it}x_{l',jt}\right|\\
&=& O_P(\ell s^2 \log d/T) = O_P(\sqrt{\ell\log d/T}).
\end{eqnarray*}
The proof of part (2) is now complete.

\smallskip

\noindent (3). By using similar arguments of Lemma \ref{LEMMA1} (2), if $\frac{d^3T\log T}{(T\ell \log d)^{q/4} }\to 0$ we have
\begin{eqnarray*}
&&\max_{1\leq k,l,l'\leq d}\left|\frac{1}{T}\sum_{t,s=1}^{T}a((t-s)/\ell)\left(N^{-1/2}\sum_{i=1}^{N}x_{k,it}x_{l',it}\overline{u}_{l,s} - E((N^{-1/2}\sum_{i=1}^{N}x_{k,it}x_{l',it}\overline{u}_{l,s})\right)\right|\\
& = &O_P(\sqrt{\ell\log d/T}).
\end{eqnarray*}
Since $E(N^{-1/2}\sum_{i=1}^{N}x_{k,it}x_{l,it}\overline{u}_{l,s}) = 0$ (implied by $E(e_{it}\mid \mathbf{X})=0$), then we have
\begin{eqnarray*}
&&\max_{1\leq k,l\leq d}\left|\frac{1}{T}\sum_{t,s=1}^{T}a((t-s)/\ell)\left(\widehat{\overline{u}}_{k,t}-\overline{u}_{k,t}\right) \overline{u}_{l,s}\right|\\
&\leq& \max_{1\leq k,l,l'\leq d}\left|\frac{1}{T}\sum_{t,s=1}^{T}a((t-s)/\ell)N^{-1/2}\sum_{i=1}^{N}x_{k,it}x_{l',it}\overline{u}_{l,s}\right| \times |\widehat{\pmb{\beta}} - \pmb{\beta}|_1\\
&=&O_P(\sqrt{\ell\log d/T})\times O_P(s\sqrt{\log d/(NT)}) = o_P(\sqrt{\ell\log d/T}).
\end{eqnarray*}
The proof is now complete.
\end{proof}

\section{Results of Section \ref{Sec3}}\label{AP.B2}
\begin{lemma}\label{L7}
Let \eqref{EQUATION4} and Assumptions \ref{ASSUMPTION1}, \ref{ASSUMPTION2} and \ref{ASSUMPTION5} hold, and $s=o ( \max(T^{-1/2},N^{-1/2})NT/\log d )$. Then we have
\begin{itemize}[leftmargin=*, itemsep=0.5pt, parsep=0.5pt, topsep=0.6pt]
\item [1.] $ |\frac{1}{NT}\sum_{t=1}^{T}\mathbf{X}_{J,t}^\top\mathbf{M}_{\widehat{\pmb{\Lambda}}^{(l)}}\mathbf{e}_t |_{\infty} = O_P\left(\max(\sqrt{s}\omega_1, \omega_2)\right)$;

\item [2.] $ |\frac{1}{NT}\sum_{t=1}^{T}\mathbf{X}_{J,t}^\top\mathbf{M}_{\widehat{\pmb{\Lambda}}^{(l)}}\pmb{\Lambda}_0\mathbf{f}_{0t} |_{\infty} = O_P\left(\max(\sqrt{s}\omega_1, \omega_2)\right)$;

\item [3.] $|\widetilde{\pmb{\Sigma}}^{(l)}-\pmb{\Sigma}|_{\max} = O_P(\max(\sqrt{s}\omega_1, \omega_2)$, \\
where $\widetilde{\pmb{\Sigma}}^{(l)} = \frac{1}{NT}\sum_{t=1}^{T}\mathbf{X}_t^\top\mathbf{M}_{\widehat{\pmb{\Lambda}}^{(l)}}\mathbf{X}_t$ and $\pmb{\Sigma} = \plim \frac{1}{NT}\sum_{t=1}^{T}\mathbf{X}_t^\top\mathbf{M}_{\pmb{\Lambda}_0}\mathbf{X}_t$.
\end{itemize}
\end{lemma}

\begin{proposition}\label{Thm5}
Let \eqref{EQUATION4} and Assumptions \ref{ASSUMPTION1}, \ref{ASSUMPTION2} and \ref{ASSUMPTION5} hold with $s=o ( \max(T^{-1/2},N^{-1/2})NT/\log d )$. Then the following results hold.
\begin{enumerate}[leftmargin=*]
\item $|\widetilde{\pmb{\beta}}-\pmb{\beta}_0|_{2} = O_P\left(\max(\sqrt{s}\omega_1,\omega_2)\right)$ and  $\frac{1}{\sqrt{NT}}|\widetilde{\pmb{\Xi}}-\pmb{\Xi}_0|_{F} = O_P\left(\max(\sqrt{s}\omega_1,\omega_2)\right)$.

\item $|\widetilde{\pmb{\Lambda}}-\pmb{\Lambda}_0\widetilde{\mathbf{H}}|_F/\sqrt{N} = O_P\left(\max(\sqrt{s}\omega_1,\omega_2)\right)$ for some rotation matrix $\widetilde{\mathbf{H}}$ depending on $(\mathbf{F}_0,\pmb{\Lambda}_0)$.
\end{enumerate}
\end{proposition}

\begin{proof}[Proof of Lemma \ref{L7}]
\item 
\noindent (1). Write
$$
\left|\frac{1}{NT}\sum_{t=1}^{T}\mathbf{X}_{J,t}^\top\mathbf{M}_{\widehat{\pmb{\Lambda}}^{(l)}}\mathbf{e}_t\right|_{\infty}\leq \left|\frac{1}{NT}\sum_{t=1}^{T}\mathbf{X}_{J,t}^\top\mathbf{e}_t\right|_{\infty} + \left|\frac{1}{N^2T}\sum_{t=1}^{T}\mathbf{X}_{J,t}^\top\widehat{\pmb{\Lambda}}^{(l)}\widehat{\pmb{\Lambda}}^{(l),\top}\mathbf{e}_t\right|_{\infty}.
$$
By using Lemma \ref{LemmaB4} (1), the first term is of order $O_P(\sqrt{\log d/(NT)})$. For the second term, by using Cauchy-Schwarz inequality, we have 
\begin{eqnarray*}
\left|\frac{1}{N^2T}\sum_{t=1}^{T}\mathbf{X}_{J,t}^\top\widehat{\pmb{\Lambda}}^{(l)}\widehat{\pmb{\Lambda}}^{(l),\top}\mathbf{e}_t\right|_{\infty} &\leq& \left\{\max_{j \in J}\frac{1}{T}\sum_{t=1}^{T}\left|\mathbf{X}_{j,t}^\top\widehat{\pmb{\Lambda}}^{(l)}/N \right|_F^2\right\}^{1/2} \left\{\frac{1}{T}\sum_{t=1}^{T}\left|\frac{1}{N}\sum_{i=1}^{N}\widehat{\pmb{\lambda}}_i^{(l)}e_{it} \right|_F^2\right\}^{1/2}.
\end{eqnarray*}
Note that $\left|\mathbf{X}_{j,t}^\top\widehat{\pmb{\Lambda}}^{(l)}/N\right|_F^2 \leq \widehat{r}N^{-1}\sum_{i=1}^{N}x_{j,it}^2$ and $$\max_{j \in J}\left|\frac{1}{NT}\sum_{t=1}^{T}\sum_{i=1}^{N}\left(x_{j,it}^2-E(x_{j,it}^2)\right)\right|=O_P(\sqrt{\log d/(NT)}),$$
we then have
\begin{eqnarray*}
\max_{j \in J} \frac{1}{T}\sum_{t=1}^{T}\left|\mathbf{X}_{j,t}^\top\widehat{\pmb{\Lambda}}^{(l)}/N \right|_F^2 \leq \max_{j \in J}\frac{r}{NT}\sum_{t=1}^{T}\sum_{i=1}^{N}x_{j,it}^2 =O_P(1).
\end{eqnarray*}
In addition, by using Cauchy-Schwarz inequality, we have 
\begin{eqnarray*}
\frac{1}{T}\sum_{t=1}^{T}\left|\frac{1}{N}\sum_{i=1}^{N}\widehat{\pmb{\lambda}}_i^{(l)}e_{it} \right|_F^2
&\leq& \frac{2}{T}|\widetilde{\mathbf{H}}|_F^2\sum_{t=1}^{T}\left|\frac{1}{N}\sum_{i=1}^{N}\pmb{\lambda}_{0i}e_{it} \right|_F^2 + \frac{2}{T}\sum_{t=1}^{T}\left|\frac{1}{N}\sum_{i=1}^{N}(\pmb{\lambda}_{0i}-\widetilde{\mathbf{H}}^\top\widehat{\pmb{\lambda}}_i^{(l)})e_{it} \right|_F^2\\
&\leq&O_P(1/N) + \frac{2}{T}\sum_{t=1}^{T}\frac{1}{N}\sum_{i=1}^{N}\left|\pmb{\lambda}_{0i}-\widetilde{\mathbf{H}}^\top\widehat{\pmb{\lambda}}_i^{(l)}\right|_F^2\frac{1}{N}\sum_{i=1}^{N}e_{it}^2 \\
&=& O_P(1/N + \max(s\omega_1^2, \omega_2^2)).
\end{eqnarray*}
Combining the above analyses, we have proved part (1).

\smallskip

\noindent (2). Note that $|\mathbf{M}_{\widehat{\pmb{\Lambda}}^{(l)}}|_2 = 1$, then
\begin{eqnarray*}
\left|\frac{1}{NT}\sum_{t=1}^{T}\mathbf{X}_{J,t}^\top\mathbf{M}_{\widehat{\pmb{\Lambda}}^{(l)}}\pmb{\Lambda}_{0}\mathbf{f}_{0t}\right|_{\infty}
&=& \max_{j\in J}\left|\frac{1}{NT}\sum_{t=1}^{T} \mathbf{X}_{j,t}^\top\mathbf{M}_{\widehat{\pmb{\Lambda}}^{(l)}}(\pmb{\Lambda}_0\widetilde{\mathbf{H}}-\widehat{\pmb{\Lambda}}^{(l)}) \widetilde{\mathbf{H}}^{-1}\mathbf{f}_t\right|\\
&\leq& \max_{j\in J}\frac{1}{T}\sum_{t=1}^{T} |N^{-1/2}\mathbf{X}_{j,t}|_F|\mathbf{f}_t|_F|\mathbf{M}_{\widehat{\pmb{\Lambda}}^{(l)}}|_2N^{-1/2}|\pmb{\Lambda}_{0}\widetilde{\mathbf{H}}-\widehat{\pmb{\Lambda}}^{(l)}|_F |\widetilde{\mathbf{H}}^{-1}|_F \\
&=& O_P(\max(\sqrt{s}\omega_1, \omega_2)).
\end{eqnarray*}

\smallskip

\noindent (3). Note that $\left|\mathbf{P}_{\widetilde{\pmb{\Lambda}}} - \mathbf{P}_{\pmb{\Lambda}_0}\right|_F = O_P(\max(\sqrt{s}\omega_1,\omega_2))$ by the proof of Proposition \ref{Thm5}, and by using Cauchy-Schwarz inequality, we have
\begin{eqnarray*}
&&\left|\frac{1}{NT}\sum_{t=1}^{T}\mathbf{X}_t^\top\mathbf{M}_{\widehat{\pmb{\Lambda}}^{(l)}}\mathbf{X}_t-\frac{1}{NT}\sum_{t=1}^{T}\mathbf{X}_t^\top\mathbf{M}_{\pmb{\Lambda}_0}\mathbf{X}_t\right|_{\max} \\
&\leq& \left|\mathbf{P}_{\widetilde{\pmb{\Lambda}}} - \mathbf{P}_{\pmb{\Lambda}_0}\right|_F\max_{1\leq j\leq d} \frac{1}{NT}\sum_{i=1}^{N}\sum_{t=1}^{T}\mathbf{x}_{j,it}^2 = O_P(\max(\sqrt{s}\omega_1,\omega_2)).
\end{eqnarray*}
In addition by using Lemma \ref{LemmaB4}, we can show 
$$
\left|\frac{1}{NT}\sum_{t=1}^{T}\mathbf{X}_t^\top\mathbf{M}_{\pmb{\Lambda}_0}\mathbf{X}_t - \pmb{\Sigma} \right|_{\max} = O_P(\sqrt{\log d/(NT)}).
$$
The proof is now complete.
\end{proof}

\begin{proof}[Proof of Proposition \ref{Thm5}]
\item

\noindent (1). We first evaluate the difference of the objective function defined in \eqref{EQUATION10} at $(\widetilde{\pmb{\beta}}, \widetilde{\pmb{\Xi}})$ and $( \pmb{\beta}_0,  \pmb{\Xi}_0)$. Thus, by  the definition of $(\widetilde{\pmb{\beta}}, \widetilde{\pmb{\Xi}})$, we have
\begin{eqnarray*}
&&\frac{1}{2NT}\left(|\mathbf{y}-\mathbf{X}\widetilde{\pmb{\beta}}-\mathrm{vec}(\widetilde{\pmb{\Xi}})|_2^2 - |\mathbf{e}|_2^2\right) + \omega_1(|\widetilde{\pmb{\beta}}|_1-|\pmb{\beta}_0|_1) + \frac{\omega_2}{\sqrt{NT}}\left(|\widetilde{\pmb{\Xi}}|_* -|\pmb{\Xi}_0|_* \right)\\
&\eqqcolon & J_7 + J_8 + J_9 \leq 0,
\end{eqnarray*}
where the definitions of $J_7$, $J_8$ and $J_9$ are self-evident.

Consider $J_7$. Note that the event $\mathcal{A}_{NT} = \left\{\left|\frac{1}{NT}\mathbf{e}^\top \mathbf{X}\right|_\infty \leq \omega_1/2\right\}$ holds with probability larger than $C_{NT}$.
Conditional on $\mathcal{A}_{NT}$, using Assumption \ref{ASSUMPTION5}.1, $(\mathrm{vec}(\mathbf{A}))^\top\mathrm{vec}(\mathbf{B}) = \mathrm{tr}(\mathbf{A}^\top\mathbf{B})$ and $\mathrm{tr}(\mathbf{A}^\top\mathbf{B})\leq|\mathbf{A}|_2|\mathbf{B}|_*$, we have
\begin{eqnarray*}
J_7&\geq&\frac{1}{2NT} |\mathbf{X}(\widetilde{\pmb{\beta}}-\pmb{\beta}_0) + \mathrm{vec}(\widetilde{\pmb{\Xi}}-\pmb{\Xi}_0)|_2^2 -  |\frac{1}{NT}\mathbf{X}^\top\mathbf{e}|_{\infty}|\widetilde{\pmb{\beta}} - \pmb{\beta}_0|_1- \frac{1}{NT}|\mathbf{E}|_{2} |\widetilde{\pmb{\Xi}} - \pmb{\Xi}_0|_{*}\\
&\geq& \frac{1}{2NT} |\mathbf{X}(\widetilde{\pmb{\beta}}-\pmb{\beta}_0) + \mathrm{vec}(\widetilde{\pmb{\Xi}}-\pmb{\Xi}_0)|_2^2 - \frac{\omega_1}{2}|\widetilde{\pmb{\beta}} - \pmb{\beta}_0|_1-\frac{\omega_2}{2\sqrt{NT}}|\widetilde{\pmb{\Xi}} - \pmb{\Xi}_0|_*.
\end{eqnarray*}

For $J_8$, we have $J_8 = \omega_1 (|\widetilde{\pmb{\beta}}_{J}|_1+|\widetilde{\pmb{\beta}}_{J^c}|_1 -|\pmb{\beta}_{0,J}|_1) \geq \omega_1(|\widetilde{\pmb{\beta}}_{J^c}|_1 - |\widetilde{\pmb{\beta}}_J - \pmb{\beta}_{0,J}|_1) $.

Consider $J_9$. Let $\mathbb{P}(\pmb{\Xi}) = \mathbf{M}_{\mathbf{U}_{0,[r]}}\pmb{\Xi}\mathbf{M}_{\mathbf{V}_{0,[r]}}$ and $ \mathbb{M}(\pmb{\Xi}) = \pmb{\Xi} -  \mathbf{M}_{\mathbf{U}_{0,[r]}}\pmb{\Xi}\mathbf{M}_{\mathbf{V}_{0,[r]}}$. Note that $|\mathbf{A}+\mathbf{B}|_* = |\mathbf{A}|_* + |\mathbf{B}|_*$ if $\mathbf{A}^\top\mathbf{B} = 0$ and $\mathbf{A}\mathbf{B}^\top = 0$, and thus we have
\begin{eqnarray*}
|\widetilde{\pmb{\Xi}}|_* &=& |\widetilde{\pmb{\Xi}} - \pmb{\Xi}_0 + \pmb{\Xi}_0|_* = |\pmb{\Xi}_0 + \mathbb{P}(\widetilde{\pmb{\Xi}} - \pmb{\Xi}_0) + \mathbb{M}(\widetilde{\pmb{\Xi}} - \pmb{\Xi}_0)|_* \\
&\geq& |\pmb{\Xi}_0 + \mathbb{P}(\widetilde{\pmb{\Xi}} - \pmb{\Xi}_0)|_* - |\mathbb{M}(\widetilde{\pmb{\Xi}} - \pmb{\Xi}_0)|_* =|\pmb{\Xi}_0|_* + |\mathbb{P}(\widetilde{\pmb{\Xi}} - \pmb{\Xi}_0)|_* - |\mathbb{M}(\widetilde{\pmb{\Xi}} - \pmb{\Xi}_0)|_*,
\end{eqnarray*}
which follows that 
$$J_9 \geq  \frac{\omega_2}{\sqrt{NT}}|\mathbb{P}(\widetilde{\pmb{\Xi}} - \pmb{\Xi}_0)|_* - \frac{\omega_2}{\sqrt{NT}}|\mathbb{M}(\widetilde{\pmb{\Xi}} - \pmb{\Xi}_0)|_*.
$$

Combing the above analyses, we have 
\begin{eqnarray*}
&&\frac{1}{NT} |\mathbf{X}(\widetilde{\pmb{\beta}}-\pmb{\beta}_0) + \mathrm{vec}(\widetilde{\pmb{\Xi}}-\pmb{\Xi}_0)|_2^2 + \omega_1|\widetilde{\pmb{\beta}}_{J^c}|_1 + \frac{\omega_2}{\sqrt{NT}}|\mathbb{P}(\widetilde{\pmb{\Xi}} - \pmb{\Xi}_0)|_* \\
&\leq& 3\omega_1|\widetilde{\pmb{\beta}}_J - \pmb{\beta}_{0,J}|_1 + 3\frac{\omega_2}{\sqrt{NT}}|\mathbb{M}(\widetilde{\pmb{\Xi}} - \pmb{\Xi}_0)|_*.
\end{eqnarray*}
Hence, $(\widetilde{\pmb{\beta}}-\pmb{\beta}_0,\widetilde{\pmb{\Xi}} - \pmb{\Xi}_0)\in \mathbb{C}$ and thus by Assumption \ref{ASSUMPTION5}.2, we have
\begin{eqnarray*}
\kappa_c|\widetilde{\pmb{\beta}}-\pmb{\beta}_0|_2^2 + \kappa_c\frac{1}{NT}|\mathrm{vec}(\widetilde{\pmb{\Xi}}-\pmb{\Xi}_0)|_2^2 \leq \frac{1}{NT} |\mathbf{X}(\widetilde{\pmb{\beta}}-\pmb{\beta}_0) + \mathrm{vec}(\widetilde{\pmb{\Xi}}-\pmb{\Xi}_0)|_2^2.
\end{eqnarray*}
In addition, since $|\pmb{\Xi}|_F^2 = |\mathbb{P}(\pmb{\Xi})|_F^2+|\mathbb{M}(\pmb{\Xi})|_F^2 + 2\mathrm{tr}(\mathbb{P}(\pmb{\Xi})^\top\mathbb{M}(\pmb{\Xi}))=|\mathbb{P}(\pmb{\Xi})|_F^2+|\mathbb{M}(\pmb{\Xi})|_F^2$, $|\pmb{\Xi}|_*^2\leq |\pmb{\Xi}|_F^2\mathrm{rank}(\pmb{\Xi})$ and $\mathrm{rank}(\mathbb{M}(\widetilde{\pmb{\Xi}} - \pmb{\Xi}_0))\leq 2 r$ (which we will prove in the following), we have
\begin{eqnarray*}
&&\kappa_c|\widetilde{\pmb{\beta}}-\pmb{\beta}_0|_2^2 + \kappa_c\frac{1}{NT}|\mathrm{vec}(\widetilde{\pmb{\Xi}}-\pmb{\Xi}_0)|_2^2\\
&\leq& 3\omega_1|\widetilde{\pmb{\beta}}_J - \pmb{\beta}_{0,J}|_1 + 3\frac{\omega_2}{\sqrt{NT}}|\mathbb{M}(\widetilde{\pmb{\Xi}} - \pmb{\Xi}_0)|_*\\
&\leq& 3\sqrt{s}\omega_1 |\widetilde{\pmb{\beta}}-\pmb{\beta}_0|_2 + 3\sqrt{2r}\frac{\omega_2}{\sqrt{NT}}|\widetilde{\pmb{\Xi}}-\pmb{\Xi}_0|_F\\
&\leq&\max\left\{6\sqrt{s}\omega_1, 6\sqrt{2r}\omega_2\right\}\sqrt{|\widetilde{\pmb{\beta}}-\pmb{\beta}_0|_2^2 + \frac{1}{NT}|\widetilde{\pmb{\Xi}}-\pmb{\Xi}_0|_F^2}.
\end{eqnarray*}

We now prove $\mathrm{rank}(\mathbb{M}(\widetilde{\pmb{\Xi}} - \pmb{\Xi}_0))\leq 2 r$. Define the matrix $\pmb{\Delta} = \mathbf{U}_0^\top (\widetilde{\pmb{\Xi}}-\pmb{\Xi}_0) \mathbf{V}_0 = \mathbf{U}_0^\top \mathbb{M}(\widetilde{\pmb{\Xi}}-\pmb{\Xi}_0) \mathbf{V}_0 + \mathbf{U}_0^\top \mathbb{P}(\widetilde{\pmb{\Xi}}-\pmb{\Xi}_0) \mathbf{V}_0$, and write it in block form as
$$
\pmb{\Delta} = \left[\begin{matrix}
\pmb{\Delta}_{11} & \pmb{\Delta}_{12}\\
\pmb{\Delta}_{21} & \pmb{\Delta}_{22}
\end{matrix} \right],
$$
where $\pmb{\Delta}_{11} \in \mathbb{R}^{r\times r}$ and $\pmb{\Delta}_{11} \in \mathbb{R}^{(T-r)\times (N-r)}$. Since $\mathbf{U}_0^\top\mathbf{M}_{\mathbf{U}_{0,[r]}} = [\mathbf{0}_{T\times r}, \mathbf{U}_{0,[T-r]}]^\top$ and $\mathbf{V}_0^\top\mathbf{M}_{\mathbf{V}_{0,[r]}} = [\mathbf{0}_{N\times r}, \mathbf{V}_{0,[N-r]}]^\top$, we have
$$
\mathbf{U}_0^\top \mathbb{M}(\widetilde{\pmb{\Xi}}-\pmb{\Xi}_0) \mathbf{V}_0 = \left[\begin{matrix}
\pmb{\Delta}_{11} & \pmb{\Delta}_{12}\\
\pmb{\Delta}_{21} & \mathbf{0}
\end{matrix} \right],
$$
which implies that
\begin{eqnarray*}
&&\mathrm{rank}(\mathbb{M}(\widetilde{\pmb{\Xi}}-\pmb{\Xi}_0)) = \mathrm{rank}(\mathbf{U}_0^\top \mathbb{M}(\widetilde{\pmb{\Xi}}-\pmb{\Xi}_0) \mathbf{V}_0)\\ &\leq& \mathrm{rank}\left(\left[\begin{matrix}
\mathbf{0} & \pmb{\Delta}_{12}\\
\mathbf{0} & \mathbf{0}
\end{matrix} \right]\right) + \mathrm{rank}\left(\left[\begin{matrix}
\pmb{\Delta}_{11} & \mathbf{0}\\
\pmb{\Delta}_{21} & \mathbf{0}
\end{matrix} \right]\right) \leq 2r.
\end{eqnarray*}

Hence, we have proved $$
|\widetilde{\pmb{\beta}}-\pmb{\beta}_0|_{2} \leq \max\left\{\frac{6\sqrt{s}\omega_1}{\kappa_c}, \frac{6\sqrt{2r}\omega_2}{\kappa_c}\right\}$$
and
$$\frac{1}{\sqrt{NT}}|\widetilde{\pmb{\Xi}}-\pmb{\Xi}_0|_{F} \leq \max\left\{\frac{6\sqrt{s}\omega_1}{\kappa_c}, \frac{6\sqrt{2r}\omega_2}{\kappa_c}\right\}
$$ 
with probability larger than $1 - C_1\left(\frac{dT^{1-q/2}}{(\log d)^{q/2}} +  d^{-C_2}\right)$.

\smallskip

\noindent (2). Let $\widehat{\pmb{\Sigma}}_{\lambda} = \pmb{\Lambda}_0^\top\pmb{\Lambda}_0/N$, $\widehat{\pmb{\Sigma}}_{f} = \mathbf{F}_0^\top\mathbf{F}_0/T$ and $\widehat{n}_1\geq\cdots\geq\widehat{n}_r$ be the $r$ nonzero eigenvalues of $\pmb{\Xi}_0^\top \pmb{\Xi}_0/(NT) = \pmb{\Lambda}_0\widehat{\pmb{\Sigma}}_{f}\pmb{\Lambda}_0^\top/N$. Note that the nonzero eigenvalues of $\pmb{\Lambda}_0\widehat{\pmb{\Sigma}}_{f}\pmb{\Lambda}_0^\top/N$ are the same as those of $\widehat{\pmb{\Sigma}}_{\lambda}^{1/2}\widehat{\pmb{\Sigma}}_{f}\widehat{\pmb{\Sigma}}_{\lambda}^{1/2}$. Then with probability approaching to 1 (w.p.a.1), for some $0<C<\infty$ and $1\leq j \leq r$, by using the Weyl's theorem, we have
$$
|\widehat{n}_j - n_j|\leq |\widehat{\pmb{\Sigma}}_{\lambda}^{1/2}\widehat{\pmb{\Sigma}}_{f}\widehat{\pmb{\Sigma}}_{\lambda}^{1/2} - \pmb{\Sigma}_{\lambda}^{1/2}\pmb{\Sigma}_{f}\pmb{\Sigma}_{\lambda}^{1/2}|_2 \leq C(T^{-1/2}+N^{-1/2}),
$$
which also implies that $|\pmb{\Xi}_0|_2/\sqrt{NT} = \sqrt{n_1 + O_P(T^{-1/2}+N^{-1/2})}$.

Let $\widetilde{n}_1\geq \cdots \geq\widetilde{n}_{N \wedge T}$ be the eigenvalues of $\widetilde{\pmb{\Xi}}^\top\widetilde{\pmb{\Xi}}/(NT)$. Again, by using the Weyl's theorem, part (1) of this lemma and the condition $s=o\left( \max(T^{-1/2},N^{-1/2})NT/\log d \right)$, we have $$|\widetilde{n}_j - n_j|\leq C\cdot \max(\sqrt{s}\omega_1,\omega_2) = o(C\sqrt{\omega_2})$$ w.p.a.1 for all $j\geq 1$ and $|\widetilde{\pmb{\Xi}}|_2/\sqrt{NT} = \sqrt{n_1 + o_P(C\sqrt{\omega_2})}$. In addition, since $\psi_j(\widetilde{\pmb{\Xi}})/\sqrt{NT} = \sqrt{\widetilde{n}_j}$, we have $\psi_j(\widetilde{\pmb{\Xi}})/\sqrt{NT} \geq \sqrt{n_{j}-C \sqrt{\omega_2}}$ w.p.a.1 for $1\leq j \leq r$. The above results imply that $\psi_r(\widetilde{\pmb{\Xi}}) \geq (\omega_2\sqrt{NT}|\widetilde{\pmb{\Xi}}|_2)^{1/2}$ and $\psi_{r+1}(\widetilde{\pmb{\Xi}}) < (\omega_2\sqrt{NT}|\widetilde{\pmb{\Xi}}|_2)^{1/2}$ w.p.a.1.

\smallskip

We next prove $|\widetilde{\pmb{\Lambda}} - \pmb{\Lambda}_0\widetilde{\mathbf{H}}|_F/\sqrt{N} = O_P\left(\max(\sqrt{s}\omega_1,\omega_2)\right)$ for some rotation matrix $\widetilde{\mathbf{H}}$ depending on $(\mathbf{F}_0,\pmb{\Lambda}_0)$. Let $\mathbf{V}$ be the $r\times r$ matrix whose columns are the eigenvectors of $\widehat{\pmb{\Sigma}}_{\lambda}^{1/2}\widehat{\pmb{\Sigma}}_{f}\widehat{\pmb{\Sigma}}_{\lambda}^{1/2}$. Then $\mathbf{D} = \mathbf{V}^\top\widehat{\pmb{\Sigma}}_{\lambda}^{1/2}\widehat{\pmb{\Sigma}}_{f}\widehat{\pmb{\Sigma}}_{\lambda}^{1/2}\mathbf{V}$ is a diagonal matrix of the eigenvalues of $\widehat{\pmb{\Sigma}}_{\lambda}^{1/2}\widehat{\pmb{\Sigma}}_{f}\widehat{\pmb{\Sigma}}_{\lambda}^{1/2}$. Let $\widetilde{\mathbf{H}} = \widehat{\pmb{\Sigma}}_{\lambda}^{-1/2}\mathbf{V}$ and then
\begin{eqnarray*}
\frac{1}{NT}\pmb{\Xi}_0^\top\pmb{\Xi}_0 \pmb{\Lambda}_0\widetilde{\mathbf{H}} &=& \pmb{\Lambda}_0\widehat{\pmb{\Sigma}}_{f} \widehat{\pmb{\Sigma}}_{\lambda}^{1/2}\mathbf{V} = \pmb{\Lambda}_0\widehat{\pmb{\Sigma}}_{\lambda}^{-1/2}\widehat{\pmb{\Sigma}}_{\lambda}^{1/2}\widehat{\pmb{\Sigma}}_{f} \widehat{\pmb{\Sigma}}_{\lambda}^{1/2}\mathbf{V}\\
&=&\pmb{\Lambda}_0\widehat{\pmb{\Sigma}}_{\lambda}^{-1/2} \mathbf{V}\mathbf{D} = \pmb{\Lambda}_0\widetilde{\mathbf{H}}\mathbf{D}.
\end{eqnarray*}
In addition, we have $(\pmb{\Lambda}_0\widetilde{\mathbf{H}})^\top\pmb{\Lambda}_0\widetilde{\mathbf{H}}/N = \mathbf{I}_r$ and thus the columns of $\pmb{\Lambda}_0\widetilde{\mathbf{H}}/\sqrt{N}$ are the eigenvectors of $\pmb{\Xi}^\top\pmb{\Xi}$ with the associated eigenvalues in $\mathbf{D}$. In addition, conditional on the event $\widehat{r} = r$, by using Davis-Kahan sin($\Theta$) theorem, we have
$$
|\widetilde{\pmb{\Lambda}}-\pmb{\Lambda}_0\widetilde{\mathbf{H}}|_F/\sqrt{T} \leq \frac{(NT)^{-1}|\widetilde{\pmb{\Xi}}^\top\widetilde{\pmb{\Xi}}-{\pmb{\Xi}_0}^\top{\pmb{\Xi}_0}|_2}{\min_{j\leq r}\min(|\widehat{n}_{j-1}-\widetilde{n}_j|,|\widetilde{n}_j-\widehat{n}_{j+1}|)} = O_P(\max(\sqrt{s}\omega_1,\omega_2)).
$$
In addition, we have
\begin{eqnarray*}
\left|\mathbf{P}_{\widetilde{\pmb{\Lambda}}} - \mathbf{P}_{\pmb{\Lambda}_0}\right|_F&\leq& \left|\pmb{\Lambda}_0\widetilde{\mathbf{H}}\widetilde{\mathbf{H}}^\top\pmb{\Lambda}_0^\top/N- \pmb{\Lambda}_0(\pmb{\Lambda}_0^\top\pmb{\Lambda}_0)^{-1}\pmb{\Lambda}_0^\top\right|_F+\left|\widetilde{\pmb{\Lambda}}-\pmb{\Lambda}_0\widetilde{\mathbf{H}}\right|_F^2/N\\
&& + 2\left|\widetilde{\pmb{\Lambda}}-\pmb{\Lambda}_0\widetilde{\mathbf{H}}\right|_F\left| \pmb{\Lambda}_0\widetilde{\mathbf{H}}\right|_F/N = O_P(\max(\sqrt{s}\omega_1,\omega_2)).
\end{eqnarray*}

The proof is now complete.
\end{proof}

\begin{proof}[Proof of Proposition \ref{PROPOSITION1}]
\item
\noindent (1). Note that by the proof of Proposition \ref{Thm5} (2), we have proved $\psi_r(\widetilde{\pmb{\Xi}}) \geq (\omega_2\sqrt{NT}|\widetilde{\pmb{\Xi}}|_2)^{1/2}$ and $\psi_{r+1}(\widetilde{\pmb{\Xi}}) < (\omega_2\sqrt{NT}|\widetilde{\pmb{\Xi}}|_2)^{1/2}$ w.p.a.1., which implies that $\Pr(\widehat{r}=r) \to 1$. 

We next prove the sign consistency. Let $\mathbf{G} = \mathrm{diag}(g_1, \ldots,g_{d})$. Note that $$
\max_{j\in J^c}|\widetilde{\beta}_j| = \max_{j\in J^c}|\widetilde{\beta}_j - \beta_{0,j}|\leq |\widehat{\pmb{\beta}} - \pmb{\beta}_0|_2 = O_P(\max(\sqrt{s}\omega_1,\omega_2)) =o_P(\omega_3)$$ by Assumption \ref{ASSUMPTION6}, which follows that $\mathbf{G}_{J^c} = \mathbf{I}_{d-s}$ w.p.a.1. In addition, since $\max_{j\in J}|\widetilde{\beta}_j| \geq \beta_{\min} - |\widetilde{\pmb{\beta}} - \pmb{\beta}_0|_2 \geq \omega_3$ w.p.a.1 by using Assumptions \ref{ASSUMPTION6}, which implies that $\mathbf{G}_{J} = \mathbf{0}_{s}$ w.p.a.1.

By concentrating out $\mathbf{F}$, the estimator $\widehat{\pmb{\beta}}^{(l)}$ can be re-written as
$$
\widehat{\pmb{\beta}}^{(l)} = \argmin_{\pmb{\beta}\in\mathbb{R}^{d}} \frac{1}{2NT}(\mathbf{y} - \mathbf{X}\pmb{\beta})^\top(\mathbf{M}_{\widehat{\pmb{\Lambda}}^{(l-1)}}\otimes \mathbf{I}_T)(\mathbf{y} - \mathbf{X}\pmb{\beta}) + \omega_3\sum_{j=1}^{d}g_j|\beta_j|.
$$
By the properties of convex optimization, we have
$$
\frac{1}{NT}\mathbf{X}^\top(\mathbf{M}_{\widehat{\pmb{\Lambda}}^{(l-1)}}\otimes \mathbf{I}_T)\mathbf{X}\left(\widehat{\pmb{\beta}}^{(l)} -\pmb{\beta}_0 \right) - \frac{1}{NT}\mathbf{X}^\top(\mathbf{M}_{\widehat{\pmb{\Lambda}}^{(l-1)}}\otimes \mathbf{I}_T)\left(\mathrm{vec}(\mathbf{F}_0\pmb{\Lambda}_0^\top)+\mathbf{e}\right)+ \omega_3\mathbf{G}\vec{\mathbf{g}} = \mathbf{0},
$$
where $|\vec{\mathbf{g}}|_{\infty} \leq 1$ and $\vec{g}_j = \sgn(\widehat{\beta}_j^{(l)})$ if $\widehat{\beta}_j^{(l)}\neq0$ for $j=1,\ldots,d$. 

Hence, $\sgn(\widehat{\pmb{\beta}}^{(l)}) = \sgn(\pmb{\beta}_0)$ if and only if
\begin{eqnarray*}
&&\frac{1}{NT}\mathbf{X}_{J^c}^\top(\mathbf{M}_{\widehat{\pmb{\Lambda}}^{(l-1)}}\otimes \mathbf{I}_T)\mathbf{X}_J\left(\widehat{\pmb{\beta}}_{J}^{(l)} -\pmb{\beta}_{0,J} \right) - \frac{1}{NT}\mathbf{X}_{J^c}^\top(\mathbf{M}_{\widehat{\pmb{\Lambda}}^{(l-1)}}\otimes \mathbf{I}_T)\left(\mathrm{vec}(\mathbf{F}_0\pmb{\Lambda}_0^\top)+\mathbf{e}\right)\nonumber \\
& =& -  \omega_3\mathbf{G}_{J^c}\vec{\mathbf{g}}_{J^c},
\end{eqnarray*}
\begin{eqnarray*}
&&\frac{1}{NT}\mathbf{X}_{J}^\top(\mathbf{M}_{\widehat{\pmb{\Lambda}}^{(l-1)}}\otimes \mathbf{I}_T)\mathbf{X}_J\left(\widehat{\pmb{\beta}}_{J}^{(l)} -\pmb{\beta}_{0,J} \right) - \frac{1}{NT}\mathbf{X}_{J}^\top(\mathbf{M}_{\widehat{\pmb{\Lambda}}^{(l-1)}}\otimes \mathbf{I}_T)\left(\mathrm{vec}(\mathbf{F}_0\pmb{\Lambda}_0^\top)+\mathbf{e}\right) \nonumber \\
&=& - \omega_3 \mathbf{G}_{J}\vec{\mathbf{g}}_{J},
\end{eqnarray*}
\begin{eqnarray*}
\sgn(\widehat{\pmb{\beta}}_{J}^{(l)}) = \sgn(\pmb{\beta}_{0,J})\quad \text{and}\quad \widehat{\pmb{\beta}}_{J^c}^{(l)} = \pmb{\beta}_{0,J^c} = \mathbf{0}.
\end{eqnarray*}

Let $\widetilde{\pmb{\Sigma}}^{(l)} = \frac{1}{NT}\mathbf{X}^\top(\mathbf{M}_{\widehat{\pmb{\Lambda}}^{(l)}}\otimes \mathbf{I}_T)\mathbf{X}$. By standard results in matrix perturbation theory, Lemma \ref{L7} (3) and Assumption \ref{ASSUMPTION6}, we have
\begin{eqnarray*}
&&\left|\psi_{\mathrm{min}}((NT)^{-1}\mathbf{X}_{J}^\top(\mathbf{M}_{\widehat{\pmb{\Lambda}}^{(l-1)}}\otimes \mathbf{I}_T)\mathbf{X}_J)  - \psi_{\mathrm{min}}(\pmb{\Sigma}_{J,J}) \right|\\
&\leq& \left|(NT)^{-1}\mathbf{X}_{J}^\top(\mathbf{M}_{\widehat{\pmb{\Lambda}}^{(l-1)}}\otimes \mathbf{I}_T)\mathbf{X}_J - \pmb{\Sigma}_{J,J} \right|_2 \\
&\leq& s \left|(NT)^{-1}\mathbf{X}_{J}^\top(\mathbf{M}_{\widehat{\pmb{\Lambda}}^{(l-1)}}\otimes \mathbf{I}_T)\mathbf{X}_J - \pmb{\Sigma}_{J,J} \right|_{\mathrm{max}}\\
&=& s\times O_P(\max(\sqrt{s}\omega_1,\omega_2)) = o_P(1).
\end{eqnarray*}
Hence, given the invertibility of $\widetilde{\pmb{\Sigma}}_{J,J}^{(l)}$, to prove $\sgn(\widehat{\pmb{\beta}}^{(l)}) = \sgn(\pmb{\beta}_0)$, it suffices to show w.p.a.1.
\begin{equation}\label{Eq.A3}
\left| \widetilde{\pmb{\Sigma}}_{J,J}^{(l-1),-1}\left((NT)^{-1}\mathbf{X}_{J}^\top(\mathbf{M}_{\widehat{\pmb{\Lambda}}^{(l-1)}}\otimes \mathbf{I}_T)\left(\mathrm{vec}(\mathbf{F}_0\pmb{\Lambda}_0^\top)+\mathbf{e}\right) - \omega_3 \mathbf{G}_{J}\vec{\mathbf{g}}_{J}\right)\right|_{\infty} < \beta_{\min}
\end{equation}
and for any $j\in J^c$
\begin{eqnarray}\label{Eq.A4}
&&\left|(NT)^{-1}\mathbf{X}_{j}^\top(\mathbf{M}_{\widehat{\pmb{\Lambda}}^{(l-1)}}\otimes \mathbf{I}_T)\mathbf{X}_{J}\left(\widehat{\pmb{\beta}}_{J}^{(l)} -\pmb{\beta}_{0,J} \right) -  \frac{1}{NT}\mathbf{X}_{j}^\top(\mathbf{M}_{\widehat{\pmb{\Lambda}}^{(l-1)}}\otimes \mathbf{I}_T)\left(\mathrm{vec}(\mathbf{F}_0\pmb{\Lambda}_0^\top)+\mathbf{e}\right)\right|_{\infty} \nonumber \\
& \leq &\omega_3 g_j.
\end{eqnarray}

Consider \eqref{Eq.A3} first, by using Lemmas \ref{L7} (1)--(2) and Assumption \ref{ASSUMPTION6}.1, we have
\begin{eqnarray*}
&&\left| \widetilde{\pmb{\Sigma}}_{J,J}^{(l-1),-1}\left((NT)^{-1}\mathbf{X}_{J}^\top(\mathbf{M}_{\widehat{\pmb{\Lambda}}^{(l-1)}}\otimes \mathbf{I}_T)\left(\mathrm{vec}(\mathbf{F}_0\pmb{\Lambda}_0^\top)+\mathbf{e}\right) - \omega_3 \mathbf{G}_{J}\vec{\mathbf{g}}_{J}\right)\right|_{\infty}\\
&\leq& |\widetilde{\pmb{\Sigma}}_{J,J}^{(l-1),-1}|_{\infty}\Big( |(NT)^{-1}\mathbf{X}_{J}^\top(\mathbf{M}_{\widehat{\pmb{\Lambda}}^{(l-1)}}\otimes \mathbf{I}_T)\mathrm{vec}(\mathbf{F}_0\pmb{\Lambda}_0^\top)|_{\infty} \nonumber \\
&&+|(NT)^{-1}\mathbf{X}_{J}^\top(\mathbf{M}_{\widehat{\pmb{\Lambda}}^{(l-1)}}\otimes \mathbf{I}_T)\mathbf{e}|_{\infty}+\omega_3\Big)\\
&\leq&\sqrt{s}|\widetilde{\pmb{\Sigma}}_{J,J}^{(l-1),-1}|_2\times \left(O_P(\max(\sqrt{s}\omega_1,\omega_2))+\omega_3\right) = o_P(\beta_{\min}),
\end{eqnarray*}
which implies that $\eqref{Eq.A3}$ holds w.p.a.1. Similarly, using Lemma \ref{L7}, $\eqref{Eq.A4}$ holds w.p.a.1. We then have proved $
\Pr\left(\sgn(\widehat{\pmb{\beta}}^{(l)}) = \sgn(\pmb{\beta}_0)\right) \to 1$ as $(N,T) \to (\infty,\infty)$.

\smallskip

\noindent (2). By the proof of part (1), we have
$$
\widehat{\pmb{\beta}}_{J} -\pmb{\beta}_{0,J} =\widehat{\pmb{\Sigma}}_{J,J}^{-1} \frac{1}{NT}\mathbf{X}_{J}^\top(\mathbf{M}_{\widehat{\pmb{\Lambda}}}\otimes \mathbf{I}_T)\left(\mathrm{vec}(\mathbf{F}_0\pmb{\Lambda}_0^\top)+\mathbf{e}\right) - \omega_3 \mathbf{G}_{J}\vec{\mathbf{g}}_{J},
$$
and
$$
\frac{1}{NT}\sum_{t=1}^{T}(\mathbf{y}_t-\mathbf{X}_{J,t}\widehat{\pmb{\beta}}_J)(\mathbf{y}_t-\mathbf{X}_{J,t}\widehat{\pmb{\beta}}_J)^\top \widehat{\pmb{\Lambda}} = \widehat{\pmb{\Lambda}} \mathbf{V}_{NT},
$$
where $\widehat{\pmb{\Sigma}}_{J,J} = \frac{1}{NT}\mathbf{X}_{J}^\top(\mathbf{M}_{\widehat{\pmb{\Lambda}}}\otimes \mathbf{I}_T)\mathbf{X}_{J}$ and $\mathbf{V}_{NT}$ is a diagonal matrix that consists of the $\widehat{r}$ eigenvalues of $\frac{1}{NT}\sum_{t=1}^{T}(\mathbf{y}_t-\mathbf{X}_{J,t}\widehat{\pmb{\beta}}_J)(\mathbf{y}_t-\mathbf{X}_{J,t}\widehat{\pmb{\beta}}_J)^\top$. By part (1), $\max_{j\in J} g_j = 0$ w.p.a.1 and thus $\pmb{\rho}^\top \omega_3 \mathbf{G}_{J}\vec{\mathbf{g}}_{J} = o_P\left( (NT)^{-1/2} \right)$. Then we can follow the analysis of oracle least squares estimator to establish the asymptotic distribution of $\widehat{\pmb{\beta}}_J$. Specifically, by using Proposition \ref{PropB1}, if $s^{3/2}\times \max(1/\sqrt{N},1/\sqrt{T})\to 0$, we have 
$$
\sqrt{NT}\pmb{\rho}^\top(\widehat{\pmb{\beta}}_J - \pmb{\beta}_{0,J})
= \sqrt{N/T}\pmb{\rho}^\top\pmb{\xi} + \sqrt{T/N}\pmb{\rho}^\top\pmb{\zeta} +  \pmb{\rho}^\top\mathbf{D}^{-1}(\pmb{\Lambda}_0) \frac{1}{\sqrt{NT}}\sum_{t=1}^{T}\widetilde{\mathbf{X}}_{J,t}^\top\mathbf{M}_{\pmb{\Lambda}_0} \mathbf{e}_t 
+ o_P(1).
$$
In addition, by using a similar arguments as the proof of Theorem \ref{THEOREM1}, we have 
$$
\pmb{\rho}^\top\mathbf{D}^{-1}(\pmb{\Lambda}_0) \frac{1}{\sqrt{NT}}\sum_{t=1}^{T}\widetilde{\mathbf{X}}_{J,t}^\top\mathbf{M}_{\pmb{\Lambda}_0} \mathbf{e}_t  \to_D N\left(0,\pmb{\rho}^\top\pmb{\Sigma}_{J}^{-1}\pmb{\Theta}_{J}\pmb{\Sigma}_{J}^{-1}\pmb{\rho}\right).
$$

The proof is now complete.
\end{proof}

\begin{proof}[Proof of Proposition \ref{PROPOSITION2}]
\item
\noindent (1). Similar to the proof of Theorem \ref{THEOREM1}, to show $|T_u\left(\widehat{\pmb{\Omega}}_e\right) - \pmb{\Omega}_e|_2 = O_P\left((\log N /T)^{(1-p_e)/2}C_e(N)\right)$ it suffices to prove 
$$
\max_{1\leq i,j\leq N}\left|\frac{1}{T}\sum_{t=1}^{T}(\widehat{e}_{it}\widehat{e}_{jt} - E(e_{it}e_{jt}))\right| = O_P\left(\sqrt{\log N /T}\right).
$$
Write
\begin{eqnarray*}
&& \max_{1\leq i,j\leq N}\left|\frac{1}{T}\sum_{t=1}^{T}(\widehat{e}_{it}\widehat{e}_{jt} - E(e_{it}e_{jt}))\right|\\
&\leq& \max_{1\leq i,j\leq N}\left|\frac{1}{T}\sum_{t=1}^{T}(\widehat{e}_{it}\widehat{e}_{jt} - e_{it}e_{jt})\right| + \max_{1\leq i,j\leq N}\left|\frac{1}{T}\sum_{t=1}^{T}(e_{it}e_{jt} - E(e_{it}e_{jt}))\right|\\
& \eqqcolon & J_{10} + J_{11}.
\end{eqnarray*}

By using concentration inequality in Lemma \ref{LemmaB4}, we have $J_{11} = O_P\left(\sqrt{\log N /T}\right)$.

Now consider $J_{10}$, by using Cauchy-Schwarz inequality, we have
\begin{eqnarray*}
&&\max_{1\leq i,j\leq N}\left|\frac{1}{T}\sum_{t=1}^{T}(\widehat{e}_{it}\widehat{e}_{jt} - e_{it}e_{jt})\right| \\
&\leq& \max_{1\leq i\leq N}\left|\frac{1}{T}\sum_{t=1}^{T}(\widehat{e}_{it} - e_{it})^2\right| + 2\max_{1\leq i\leq N}\left\{\frac{1}{T}\sum_{t=1}^{T}(\widehat{e}_{it} - e_{it})^2\right\}^{1/2}\max_{1\leq i\leq N}\left\{\frac{1}{T}\sum_{t=1}^{T} e_{it}^2\right\}^{1/2}\\
&=&O_P\left(\max_{1\leq i\leq N}\left\{\frac{1}{T}\sum_{t=1}^{T}(\widehat{e}_{it} - e_{it})^2\right\}^{1/2}\right).
\end{eqnarray*}
For $\max_{1\leq i\leq N}\frac{1}{T}\sum_{t=1}^{T}(\widehat{e}_{it} - e_{it})^2$, by using Lemma \ref{L.B4} (1), we have
\begin{eqnarray*}
&&\max_{1\leq i\leq N}\frac{1}{T}\sum_{t=1}^{T}(\widehat{e}_{it} - e_{it})^2\\ &\leq& 3|\widehat{\pmb{\beta}}_{J} - \pmb{\beta}_{0,J}|_2^2\max_{1\leq i\leq N}\frac{1}{T}\sum_{t=1}^{T}|\mathbf{x}_{J,it}|_F^2 + 3\max_{1\leq i\leq N}|\pmb{\lambda}_{0i}^\top\mathbf{H}|_F^2\frac{1}{T}\sum_{t=1}^{T}|\widehat{\mathbf{f}}_t - \mathbf{H}^{-1}\mathbf{f}_{0t}|_F^2 \\
&& + 3\max_{1\leq i\leq N}|\widehat{\pmb{\lambda}}_i-\mathbf{H}^\top\pmb{\lambda}_{0i}|_F^2\frac{1}{T}\sum_{t=1}^{T}|\widehat{\mathbf{f}}_t|_F^2\\
&=& O_P\left(s^2/(NT) + \log N/ \delta_{NT}^2\right).
\end{eqnarray*}
provided that $\max_{1\leq i\leq N}|\widehat{\pmb{\lambda}}_i-\mathbf{H}^\top\pmb{\lambda}_{0i}|_F^2 = O_P(\log N/ \delta_{NT}^2)$. By the proof of Lemma \ref{L.B1} (1), we have
\begin{eqnarray*}
\widehat{\pmb{\lambda}}_i-\mathbf{H}^\top\pmb{\lambda}_{0i}&=& \frac{1}{NT}\sum_{t=1}^{T}\sum_{j=1}^{N}\mathbf{x}_{J,it}^\top(\widehat{\pmb{\beta}}_J - \pmb{\beta}_{0,J})(\widehat{\pmb{\beta}}_J - \pmb{\beta}_{0,J})^\top\mathbf{x}_{J,jt}\widehat{\pmb{\lambda}}_j^\top \mathbf{V}_{NT}^{-1}\\
&& + \frac{1}{NT}\sum_{t=1}^{T}\sum_{j=1}^{N}\mathbf{x}_{J,it}^\top(\widehat{\pmb{\beta}}_J - \pmb{\beta}_{0,J})\mathbf{f}_{0t}^\top\pmb{\lambda}_{0j}\widehat{\pmb{\lambda}}_j^\top \mathbf{V}_{NT}^{-1}\\
&& + \frac{1}{NT}\sum_{t=1}^{T}\sum_{j=1}^{N}\mathbf{x}_{J,it}^\top(\widehat{\pmb{\beta}}_J - \pmb{\beta}_{0,J})e_{jt}\widehat{\pmb{\lambda}}_j^\top \mathbf{V}_{NT}^{-1}\\
&& + \frac{1}{NT}\sum_{t=1}^{T}\sum_{j=1}^{N}\mathbf{f}_{0t}^\top\pmb{\lambda}_{0i}(\widehat{\pmb{\beta}}_J - \pmb{\beta}_{0,J})^\top\mathbf{x}_{J,jt}\widehat{\pmb{\lambda}}_j^\top \mathbf{V}_{NT}^{-1}\\
&& + \frac{1}{NT}\sum_{t=1}^{T}\sum_{j=1}^{N}e_{it}(\widehat{\pmb{\beta}}_J - \pmb{\beta}_{0,J})^\top\mathbf{x}_{J,jt}\widehat{\pmb{\lambda}}_j^\top \mathbf{V}_{NT}^{-1} + \frac{1}{NT}\sum_{t=1}^{T}\sum_{j=1}^{N}\mathbf{f}_{0t}^\top\pmb{\lambda}_{0i}e_{jt}\widehat{\pmb{\lambda}}_j^\top \mathbf{V}_{NT}^{-1}\\
&& + \frac{1}{NT}\sum_{t=1}^{T}\sum_{j=1}^{N}e_{it}\mathbf{f}_{0t}^\top\pmb{\lambda}_{0j}\widehat{\pmb{\lambda}}_j^\top \mathbf{V}_{NT}^{-1} + \frac{1}{NT}\sum_{t=1}^{T}\sum_{j=1}^{N}e_{it}e_{jt}\widehat{\pmb{\lambda}}_j^\top \mathbf{V}_{NT}^{-1}\\
&\eqqcolon &J_{10,1} + \cdots + J_{10,8}.
\end{eqnarray*}
For $J_{10,1}$, by using Cauchy-Schwarz inequality, we have
\begin{eqnarray*}
\max_{1\leq i\leq N}|J_{10,1}|_F^2 &\leq& |\widehat{\pmb{\beta}}_{J} - \pmb{\beta}_{0,J}|_2^4\max_{1\leq i\leq N}\frac{1}{T}\sum_{t=1}^{T}|\mathbf{x}_{J,it}|_F^2 \times \frac{1}{NT}\sum_{t=1}^{T}\sum_{j=1}^{N}|\mathbf{x}_{J,jt}|_F^2 \frac{1}{N}\sum_{j=1}^{N}|\widehat{\pmb{\lambda}}_j^\top \mathbf{V}_{NT}^{-1}|_F^2\\
&=&O_P(s^2/(NT)^2)=o_P(\log N/ \delta_{NT}^2).
\end{eqnarray*}
Similarly, using Cauchy-Schwarz inequality, we have $$\max_{1\leq i\leq N}|J_{10,l}|_F^2=O_P(s/(NT))=o_P(\log N/ \delta_{NT}^2)$$ for $2\leq l\leq 5$.

For $J_{10,6}$, by using Cauchy-Schwarz inequality, we have
$$
\max_{1\leq i\leq N}|J_{10,6}|_F^2 \leq \max_{1\leq i\leq N}|\pmb{\lambda}_{0i}|_F^2 \frac{1}{T}\sum_{t=1}^{T}|\mathbf{f}_{0t}|_F^2\times \frac{1}{T}\sum_{t=1}^{T}|\frac{1}{N}\sum_{j=1}^{N}\widehat{\pmb{\lambda}}_{j}e_{jt}|_F^2 \times |\mathbf{V}_{NT}^{-1}|^2=O_P(\log N/ \delta_{NT}^2)
$$
since by using Lemma \ref{LemmaB3} (1)
\begin{eqnarray*}
&&\frac{1}{T}\sum_{t=1}^{T}|\frac{1}{N}\sum_{j=1}^{N}\widehat{\pmb{\lambda}}_{j}e_{jt}|_F^2\\
&\leq& 2|\mathbf{H}|_F^2 \frac{1}{T}\sum_{t=1}^{T}|\frac{1}{N}\sum_{j=1}^{N}\pmb{\lambda}_{j}e_{jt}|_F^2 + 2 \frac{1}{NT}\sum_{t=1}^{T}\sum_{j=1}^{N}|e_{jt}|^2\times \frac{1}{N}\sum_{j=1}^{N}|\widehat{\pmb{\lambda}}_{j} - \mathbf{H}^\top\pmb{\lambda}_{j}|_F^2\\
&=&O_P(1/N + s/(NT) +\delta_{NT}^{-2}).
\end{eqnarray*}
Similarly, we can show that $\max_{1\leq i\leq N}|J_{10,7}|_F^2$ and $\max_{1\leq i\leq N}|J_{10,8}|_F^2$ are both $O_P(\log N/\delta_{NT}^{2})$. 

Combining the above results, we have proved part (1).

\smallskip

\noindent (2). We first show the consistency of $\widehat{\pmb{\mu}}_{\zeta}$. By using Lemma \ref{L.B3} (1), part (1) of this proposition and the condition $\sqrt{s}(\log N / T)^{(1-p_e)/2}C_e(N) \to 0$, it suffices to consider the following term:
$$
\frac{1}{NT}\sum_{t=1}^T \mathbf{X}_{J,t}^\top \mathbf{M}_{\widehat{\pmb{\Lambda}}} \pmb{\Omega}_e  \widehat{\pmb{\Lambda}} \left(\frac{\widehat{\mathbf{F}}^\top \widehat{\mathbf{F}}}{T}\right)^{-1} \widehat{\mathbf{f}}_t.
$$
since
\begin{eqnarray*}
&&\left|\frac{1}{NT}\sum_{t=1}^T \mathbf{X}_{J,t}^\top \mathbf{M}_{\widehat{\pmb{\Lambda}}} \left(T_u(\widehat{\pmb{\Omega}}_e) - \pmb{\Omega}_e\right)  \widehat{\pmb{\Lambda}} \left(\frac{\widehat{\mathbf{F}}^\top \widehat{\mathbf{F}}}{T}\right)^{-1} \widehat{\mathbf{f}}_t\right|_F\\
&\leq& \frac{1}{T}\sum_{t=1}^T \left|N^{-1/2}\mathbf{X}_{J,t}\right|_F\left|\widehat{\mathbf{f}}_t\right|_F \left|\mathbf{M}_{\widehat{\pmb{\Lambda}}}\right|_2 \left|T_u(\widehat{\pmb{\Omega}}_e) - \pmb{\Omega}_e\right|_2  \left|N^{-1/2}\widehat{\pmb{\Lambda}}\right|_F \left|\left(\frac{\widehat{\mathbf{F}}^\top \widehat{\mathbf{F}}}{T}\right)^{-1}\right|_F  \\
& = & O_P(\sqrt{s}(\log N / T)^{(1-p_e)/2}C_e(N)).
\end{eqnarray*}

To proceed, we first note that
$$
\frac{\widehat{\mathbf{F}}^\top\widehat{\mathbf{F}}}{T} = \frac{1}{T}\sum_{t=1}^T\widehat{\mathbf{f}}_t\widehat{\mathbf{f}}_t^\top = \frac{1}{N^2}\widehat{\pmb{\Lambda}}^\top\frac{1}{T}\sum_{t=1}^T(\mathbf{y}_t- \mathbf{X}_{J,t}\widehat{\pmb{\beta}}_J)(\mathbf{y}_t- \mathbf{X}_{J,t}\widehat{\pmb{\beta}}_J)^\top \widehat{\pmb{\Lambda}}=\mathbf{V}_{NT},
$$
where $\mathbf{V}_{NT}$ includes the largest $r$ eigenvalues of $\frac{1}{T}\sum_{t=1}^T(\mathbf{y}_t- \mathbf{X}_{J,t}\widehat{\pmb{\beta}}_J)(\mathbf{y}_t- \mathbf{X}_{J,t}\widehat{\pmb{\beta}}_J)^\top$ in descending order, the second equality follows from the definition of $\widehat{\mathbf{f}}_t$, and the third equality follows from the definition of PCA.  Second, by Lemmas \ref{L.B1} (1)-(2), we have
$$
\frac{1}{N}\left|\widehat{\pmb{\Lambda}}^\top(\widehat{\pmb{\Lambda}} - \pmb{\Lambda}_0 \pmb{\mathbf{H}}) \right|_F = O_P\left(\sqrt{s/(NT)}+\delta_{NT}^{-2}\right) \quad \text{and}\quad\frac{1}{\sqrt{N}}\left|\widehat{\pmb{\Lambda}}\mathbf{H}^{-1} -\pmb{\Lambda}_0\right|_F = O_P(\sqrt{s/(NT)}+\delta_{NT}^{-1}),
$$
where $\mathbf{H} = (\mathbf{F}_0^\top\mathbf{F}_0/T)(\pmb{\Lambda}_0^\top\widehat{\pmb{\Lambda}}/N)\mathbf{V}_{NT}^{-1}$.

In addition, since $|\mathbf{M}_{\widehat{\pmb{\Lambda}}} - \mathbf{M}_{\pmb{\Lambda}_0}|_F = O_P(\delta_{NT}^{-1})$ by Lemma \ref{L.B1} (4), we have
\begin{eqnarray*}
&&\frac{1}{NT}\sum_{t=1}^T \mathbf{X}_{J,t}^\top \mathbf{M}_{\widehat{\pmb{\Lambda}}}\pmb{\Omega}_e  \widehat{\pmb{\Lambda}} \left(\frac{\widehat{\mathbf{F}}^\top \widehat{\mathbf{F}}}{T}\right)^{-1} \widehat{\mathbf{f}}_t\\
&=&\frac{1}{NT}\sum_{t=1}^T \mathbf{X}_{J,t}^\top \mathbf{M}_{\pmb{\Lambda}_0}\pmb{\Omega}_e  \widehat{\pmb{\Lambda}} \left(\frac{\widehat{\mathbf{F}}^\top \widehat{\mathbf{F}}}{T}\right)^{-1} \widehat{\mathbf{f}}_t+O_P(\sqrt{s}/\delta_{NT})\\
&=&\frac{1}{NT}\sum_{t=1}^T \mathbf{X}_{J,t}^\top \mathbf{M}_{\pmb{\Lambda}_0} \pmb{\Omega}_e  \widehat{\pmb{\Lambda}} \mathbf{V}_{NT}^{-1} \frac{1}{N}\widehat{\pmb{\Lambda}}^\top\pmb{\Lambda}_0\mathbf{f}_t+\frac{1}{NT}\sum_{t=1}^T \mathbf{X}_{J,t}^\top \mathbf{M}_{\pmb{\Lambda}_0} \pmb{\Omega}_e  \widehat{\pmb{\Lambda}} \mathbf{V}_{NT}^{-1} \frac{1}{N}\widehat{\pmb{\Lambda}}^\top\mathbf{e}_t\\
&&+\frac{1}{NT}\sum_{t=1}^T \mathbf{X}_{J,t}^\top \mathbf{M}_{\pmb{\Lambda}_0} \pmb{\Omega}_e  \widehat{\pmb{\Lambda}} \mathbf{V}_{NT}^{-1} \frac{1}{N}\widehat{\pmb{\Lambda}}^\top\mathbf{X}_{J,t}(\pmb{\beta}_{J}-\widehat{\pmb{\beta}}_{J}) + O_P(\sqrt{s}/\delta_{NT})\\
&\eqqcolon &J_{12} + J_{13} + J_{14} + O_P(\sqrt{s}/\delta_{NT}).
\end{eqnarray*}
It is straightforward to show that $J_{13}$ and $J_{14}$ are both $O_P(\sqrt{s}/\delta_{NT})$, we next investigate $J_{12}$. By the proofs of Lemmas \ref{L.B1} (1) and \ref{L.B3} (3), we have
$$
|\mathbf{V}_{NT} - (\widehat{\pmb{\Lambda}}^\top\pmb{\Lambda}_{0}/N)(\mathbf{F}_{0}^\top\mathbf{F}_{0}/T)(\pmb{\Lambda}_{0}^\top\widehat{\pmb{\Lambda}}/N)|_F = O_P(\delta_{NT}^{-1}),
$$
and
$$
N^{-1/2}\left( \widehat{\pmb{\Lambda}}\left(\frac{\pmb{\Lambda}_{0}^\top\widehat{\pmb{\Lambda}}}{N}\right)^{-1} - \pmb{\Lambda}_{0}\left(\frac{\pmb{\Lambda}_{0}^\top\pmb{\Lambda}_{0}}{N}\right)^{-1}\right) = O_P(\delta_{NT}^{-1}),
$$
which follows that
\begin{eqnarray*}
J_{12} &=& \frac{1}{NT}\sum_{t=1}^T \mathbf{X}_{J,t}^\top \mathbf{M}_{\pmb{\Lambda}_0} \pmb{\Omega}_e  \widehat{\pmb{\Lambda}} (\pmb{\Lambda}_{0}^\top\widehat{\pmb{\Lambda}}/N)^{-1} (\mathbf{F}_{0}^\top\mathbf{F}_{0}/T)^{-1}\mathbf{f}_t + O_P(\sqrt{s}\delta_{NT}^{-1})\\
&=&\frac{1}{NT}\sum_{t=1}^T \mathbf{X}_{J,t}^\top \mathbf{M}_{\pmb{\Lambda}_0} \pmb{\Omega}_e\pmb{\Lambda}_0(\pmb{\Lambda}_{0}^\top\pmb{\Lambda}_0/N)^{-1} (\mathbf{F}_{0}^\top\mathbf{F}_{0}/T)^{-1}\mathbf{f}_t + O_P(\sqrt{s}\delta_{NT}^{-1}).
\end{eqnarray*}

We next prove consider the consistency of $\widehat{\pmb{\Sigma}}_J$. Write
$$
\widehat{\pmb{\Sigma}}_J = \frac{1}{NT}\sum_{t=1}^{T}\mathbf{X}_{J,t}^\top\mathbf{M}_{\widehat{\pmb{\Lambda}}}\mathbf{X}_{J,t} - \frac{1}{NT^2}\sum_{t,s=1}^{T}\mathbf{X}_{J,t}^\top\mathbf{M}_{\widehat{\pmb{\Lambda}}}\mathbf{X}_{J,s}\widehat{a}_{st}
$$
where $\widehat{a}_{st} = \widehat{\mathbf{f}}_{t}^\top(\widehat{\mathbf{F}}^\top\widehat{\mathbf{F}}/T)^{-1}\widehat{\mathbf{f}}_{s}$. For the first term, by using Lemma \ref{L.B1} (4), we have
$$
\left|\frac{1}{NT}\sum_{t=1}^{T}\mathbf{X}_{J,t}^\top(\mathbf{M}_{\widehat{\pmb{\Lambda}}}-\mathbf{M}_{\pmb{\Lambda}_0})\mathbf{X}_{J,t}\right|_F \leq \frac{1}{NT}\sum_{t=1}^{T}\left|\mathbf{X}_{J,t}\right|_F^2\left|\mathbf{M}_{\widehat{\pmb{\Lambda}}}-\mathbf{M}_{\pmb{\Lambda}_0}\right|_F =O_P(s/\delta_{NT}).
$$
For the second term, it suffices to show 
$$
\frac{1}{NT^2}\sum_{t,s=1}^{T}\mathbf{X}_{J,t}^\top\mathbf{M}_{\widehat{\pmb{\Lambda}}}\mathbf{X}_{J,s}(\widehat{a}_{st} - a_{st}) = O_P(s/\delta_{NT}).
$$
Adding and subtracting terms yields
\begin{eqnarray*}
\widehat{a}_{st} - a_{st} &=& (\widehat{\mathbf{f}}_{t} - \mathbf{H}^{-1}\mathbf{f}_{0t})^\top(\widehat{\mathbf{F}}^\top\widehat{\mathbf{F}}/T)^{-1}\widehat{\mathbf{f}}_{s} \\
&& + \mathbf{f}_{0t}^\top\mathbf{H}^{-1,\top}\left[(\widehat{\mathbf{F}}^\top\widehat{\mathbf{F}}/T)^{-1} - \mathbf{H}^\top(\mathbf{F}_0^\top\mathbf{F}_0/T)^{-1}\mathbf{H}\right]\widehat{\mathbf{f}}_{s} \\
&& +  \mathbf{f}_{0t}^\top(\mathbf{F}_0^\top\mathbf{F}_0/T)^{-1}\mathbf{H}(\widehat{\mathbf{f}}_{s} - \mathbf{H}^{-1}\mathbf{f}_{0s})\\
&\eqqcolon & b_{st} + c_{st} + d_{st}.
\end{eqnarray*}
For the term $b_{st}$, by using Cauchy-Schwarz inequality, we have
\begin{eqnarray*}
&& \left|\frac{1}{NT^2}\sum_{t,s=1}^{T}\mathbf{X}_{J,t}^\top\mathbf{M}_{\widehat{\pmb{\Lambda}}}\mathbf{X}_{J,s}b_{st}\right|_F\\
&\leq& \left|(\widehat{\mathbf{F}}^\top\widehat{\mathbf{F}}/T)^{-1}\right|_F \frac{1}{T}\sum_{t=1}^{T}\left|N^{-1/2}\mathbf{X}_{J,t}\right|_F\left|\widehat{\mathbf{f}}_{t} - \mathbf{H}^{-1}\mathbf{f}_{0t}\right|_F\frac{1}{T}\sum_{s=1}^{T}\left|N^{-1/2}\mathbf{X}_{J,s}\right|_F|\widehat{\mathbf{f}}_{t}|_F \\
&\leq&O_P(s) \left(\frac{1}{T}\sum_{t=1}^{T}|\widehat{\mathbf{f}}_t - \mathbf{H}^{-1}\mathbf{f}_{0t}|_F^2\right)^{1/2}  = O_P(s/\delta_{NT}).
\end{eqnarray*}
Similarly, we can show $\left|\frac{1}{NT^2}\sum_{t,s=1}^{T}\mathbf{X}_{J,t}^\top\mathbf{M}_{\widehat{\pmb{\Lambda}}}\mathbf{X}_{J,s}d_{st}\right|_F = O_P(s/\delta_{NT})$. For the term $c_{st}$, by using Lemma \ref{L.B4} (4), we have $$\left|\frac{1}{NT^2}\sum_{t,s=1}^{T}\mathbf{X}_{J,t}^\top\mathbf{M}_{\widehat{\pmb{\Lambda}}}\mathbf{X}_{J,s}c_{st}\right|_F = O_P(s^2/\sqrt{NT}+s\delta_{NT}^{-2})=o_P(s/\delta_{NT}).$$

In addition, by using similar arguments as the proof of Theorem \ref{THEOREM1} and using moments inequality in Lemma \ref{LemmaB6} (3), we have $\left|\widehat{\pmb{\Theta}}_{J} - \pmb{\Theta}_{J}\right|_2 = O(\ell^{-q_a}) + O_P\left[s(\ell / T)^{1/2} \right]$.

Combining the above results, we have now completed the proof.
\end{proof}

\section*{Appendix C} 

\renewcommand{\theequation}{C.\arabic{equation}}
\renewcommand{\thesection}{C.\arabic{section}}
\renewcommand{\thefigure}{C.\arabic{figure}}
\renewcommand{\thetable}{C.\arabic{table}}
\renewcommand{\thelemma}{C.\arabic{lemma}}
\renewcommand{\theassumption}{C.\arabic{assumption}}
\renewcommand{\thetheorem}{C.\arabic{theorem}}
\renewcommand{\theproposition}{C.\arabic{proposition}}

% redefine the command that creates the equation no.
% reset counter
\setcounter{equation}{0}
\setcounter{lemma}{0}
\setcounter{section}{0}
\setcounter{table}{0}
\setcounter{figure}{0}
\setcounter{assumption}{0}
\setcounter{proposition}{0}
%\numberwithin{equation}{section}

Given the information of set $J$, the oracle least squares estimator can be written as
$$
(\widehat{\pmb{\beta}}_J,\widehat{\pmb{\Lambda}} ) = \argmin_{\pmb{\beta}\in\mathbb{R}^{s}, \pmb{\Lambda}\in \mathbb{R}^{N\times r}}\sum_{t=1}^{T}(\mathbf{y}_t-\mathbf{X}_{J,t}\pmb{\beta})^\top \mathbf{M}_{\pmb{\Lambda}}(\mathbf{y}_t-\mathbf{X}_{J,t}\pmb{\beta}).
$$
Similar to \cite{bai2009panel}, interchanging the $i$ and $t$ dimensions, we have the following relationship:
\begin{eqnarray*}
\widehat{\pmb{\beta}}_J &=& \left(\sum_{t=1}^{T}\mathbf{X}_{J,t}^\top\mathbf{M}_{\widehat{\pmb{\Lambda}}}\mathbf{X}_{J,t} \right)^{-1}\sum_{t=1}^{T}\mathbf{X}_{J,t}^\top\mathbf{M}_{\widehat{\pmb{\Lambda}}}\mathbf{y}_{t},\\
\widehat{\pmb{\Lambda}} \mathbf{V}_{NT} &=& \left(\frac{1}{NT}\sum_{t=1}^{T}(\mathbf{y}_t - \mathbf{X}_{J,t}\widehat{\pmb{\beta}}_J)(\mathbf{y}_t - \mathbf{X}_{J,t}\widehat{\pmb{\beta}}_J)^\top \right) \widehat{\pmb{\Lambda}},
\end{eqnarray*}
where $\mathbf{V}_{NT}$ is a diagonal matrix that consists of the $r$ largest eigenvalues of the matrix $\frac{1}{NT}\sum_{t=1}^{T}(\mathbf{y}_t - \mathbf{X}_{J,t}\widehat{\pmb{\beta}}_J)(\mathbf{y}_t - \mathbf{X}_{J,t}\widehat{\pmb{\beta}}_J)^\top$ arranged in a descending order.

\begin{proposition}\label{PropB1}
Let $\{\mathbf{e}_t\}$ satisfy the conditions of Assumption C in \cite{bai2009panel}. Suppose further that (1) $E(x_{j,it}^4)<\infty$ and $\inf_{\pmb{\Lambda}} \mathbf{D}(\pmb{\Lambda}) >0$ with $\pmb{\Lambda}^\top \pmb{\Lambda} /N =\mathbf{I}_r$; (2) $\frac{1}{T}\mathbf{F}_0^\top \mathbf{F}_0 \to_P \pmb{\Sigma}_f > 0$ with $E|\mathbf{f}_{0t}|_F^4 < \infty$, and $\frac{1}{N}\pmb{\Lambda}_0^\top \pmb{\Lambda}_0 \to_P\pmb{\Lambda}_{\lambda}>0 $ with $E|\pmb{\lambda}_{0i}|_F^4<\infty$; (3) $\{\mathbf{e}_t\}$ is independent of $\mathbf{X}_J$, $\pmb{\Lambda}_0$ and $\mathbf{F}_0$. Then
\begin{eqnarray*}
\sqrt{NT}\pmb{\rho}^\top(\widehat{\pmb{\beta}}_J - \pmb{\beta}_{0,J})
&=& \sqrt{N/T}\pmb{\rho}^\top\pmb{\xi} + \sqrt{T/N}\pmb{\rho}^\top\pmb{\zeta} +  \pmb{\rho}^\top\mathbf{D}^{-1}(\pmb{\Lambda}_0) \frac{1}{\sqrt{NT}}\sum_{t=1}^{T}\widetilde{\mathbf{X}}_{J,t}^\top\mathbf{M}_{\pmb{\Lambda}_0} \mathbf{e}_t \\
&&+ O_P(s^2/\sqrt{NT} + s^{3/2}\delta_{NT}^{-1}),
\end{eqnarray*}
where $\delta_{NT} = \min(\sqrt{N},\sqrt{T})$, $\mathbf{D}(\pmb{\Lambda}) = \frac{1}{NT}\sum_{t=1}^{T}\widetilde{\mathbf{X}}_{J,t}^\top\mathbf{M}_{\pmb{\Lambda}}\widetilde{\mathbf{X}}_{J,t}$, $\widetilde{\mathbf{X}}_{J,t} = \mathbf{X}_{J,t} - \frac{1}{T}\sum_{s=1}^{T}a_{st}\mathbf{X}_{J,s}$, $a_{st} = \mathbf{f}_{0t}^\top(\mathbf{F}_0^\top\mathbf{F}_0/T)\mathbf{f}_{0s}$,
$$
\pmb{\xi} = - \mathbf{D}^{-1}(\pmb{\Lambda}_0) \frac{1}{NT}\sum_{t,s=1}^{T}\frac{\widetilde{\mathbf{X}}_{J,t}^\top\pmb{\Lambda}_0}{N}\left(\frac{\pmb{\Lambda}_0^\top\pmb{\Lambda}_0}{N}\right)^{-1} \left(\frac{\mathbf{F}_0^\top\mathbf{F}_0}{T}\right)^{-1}\mathbf{f}_{0s}\left(\sum_{i=1}^{N}E(e_{it}e_{is})\right),
$$
and
$$
\pmb{\zeta} = - \mathbf{D}^{-1}(\pmb{\Lambda}_0) \frac{1}{NT}\sum_{t=1}^{T}\mathbf{X}_{J,t}^\top\mathbf{M}_{\pmb{\Lambda}_0}\pmb{\Omega}_e\pmb{\Lambda}_0(\pmb{\Lambda}_0^\top\pmb{\Lambda}_0/N)^{-1}(\mathbf{F}_0^\top\mathbf{F}_0/T)^{-1}\mathbf{f}_{0t}.
$$
\end{proposition}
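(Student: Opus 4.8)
The plan is to adapt the projection-based argument of \citet{bai2009panel} to a diverging number $s=|J|$ of relevant regressors, while carefully isolating the two bias terms $\bm{\xi}$ and $\bm{\zeta}$ generated by serial and cross-sectional dependence in $\{\mathbf{e}_t\}$. The starting point is the concentrated normal equation: substituting $\mathbf{y}_t=\mathbf{X}_{J,t}\bm{\beta}_{0,J}+\bm{\Lambda}_0\mathbf{f}_{0t}+\mathbf{e}_t$ into the closed form for $\widehat{\bm{\beta}}_J$ gives
$$
\widehat{\bm{\beta}}_J-\bm{\beta}_{0,J}=\mathbf{D}(\widehat{\bm{\Lambda}})^{-1}\,\frac{1}{NT}\sum_{t=1}^{T}\mathbf{X}_{J,t}^\top\mathbf{M}_{\widehat{\bm{\Lambda}}}\left(\bm{\Lambda}_0\mathbf{f}_{0t}+\mathbf{e}_t\right),
$$
with $\mathbf{D}(\widehat{\bm{\Lambda}})=\frac{1}{NT}\sum_{t}\mathbf{X}_{J,t}^\top\mathbf{M}_{\widehat{\bm{\Lambda}}}\mathbf{X}_{J,t}$. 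First I would record the preliminary rates: consistency of $\widehat{\bm{\beta}}_J$ and of $\widehat{\bm{\Lambda}}$ up to a rotation $\mathbf{H}$, sharpened to expose the $s$-dependence, in the spirit of Proposition~\ref{Thm5} and Lemma~\ref{L7}. Together with $\inf_{\bm{\Lambda}}\mathbf{D}(\bm{\Lambda})>0$ this controls $|\mathbf{D}(\widehat{\bm{\Lambda}})^{-1}-\mathbf{D}(\bm{\Lambda}_0)^{-1}|$, so that $\mathbf{D}(\widehat{\bm{\Lambda}})^{-1}$ may be replaced by $\mathbf{D}(\bm{\Lambda}_0)^{-1}$ at the cost of a remainder absorbed into $O_P(s^2/\sqrt{NT}+s^{3/2}\delta_{NT}^{-1})$.

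The heart of the argument is the expansion of the score $\frac{1}{NT}\sum_t\mathbf{X}_{J,t}^\top\mathbf{M}_{\widehat{\bm{\Lambda}}}(\bm{\Lambda}_0\mathbf{f}_{0t}+\mathbf{e}_t)$. Writing $\mathbf{M}_{\widehat{\bm{\Lambda}}}=\mathbf{M}_{\bm{\Lambda}_0}+(\mathbf{M}_{\widehat{\bm{\Lambda}}}-\mathbf{M}_{\bm{\Lambda}_0})$ and using $\mathbf{M}_{\bm{\Lambda}_0}\bm{\Lambda}_0=\mathbf{0}$, the leading factor contribution vanishes, and what survives is the clean noise term plus cross terms carrying $\mathbf{M}_{\widehat{\bm{\Lambda}}}-\mathbf{M}_{\bm{\Lambda}_0}$. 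To control these I would feed in the eigen-relation $\widehat{\bm{\Lambda}}\mathbf{V}_{NT}=\big(\frac{1}{NT}\sum_t(\mathbf{y}_t-\mathbf{X}_{J,t}\widehat{\bm{\beta}}_J)(\mathbf{y}_t-\mathbf{X}_{J,t}\widehat{\bm{\beta}}_J)^\top\big)\widehat{\bm{\Lambda}}$ to obtain an explicit first-order expansion of $\widehat{\bm{\Lambda}}-\bm{\Lambda}_0\mathbf{H}$ in terms of $\mathbf{e}$, $\mathbf{X}_J(\widehat{\bm{\beta}}_J-\bm{\beta}_{0,J})$, and $(\bm{\Lambda}_0,\mathbf{F}_0)$. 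Substituting this expansion and projecting out the time-direction factor space (which converts $\mathbf{X}_{J,t}$ into the doubly-defactored $\widetilde{\mathbf{X}}_{J,t}=\mathbf{X}_{J,t}-T^{-1}\sum_s a_{st}\mathbf{X}_{J,s}$ and produces $\mathbf{D}(\bm{\Lambda}_0)$) isolates the martingale-type term $\mathbf{D}^{-1}(\bm{\Lambda}_0)\frac{1}{\sqrt{NT}}\sum_t\widetilde{\mathbf{X}}_{J,t}^\top\mathbf{M}_{\bm{\Lambda}_0}\mathbf{e}_t$, whose weak-dependence moments are governed by Lemma~\ref{L4} under Assumption~C of \citet{bai2009panel}.

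The two bias terms then emerge from the \emph{expectations} of the quadratic-in-$\mathbf{e}$ pieces of the $\widehat{\bm{\Lambda}}-\bm{\Lambda}_0\mathbf{H}$ expansion once they interact with $\mathbf{X}_{J,t}$. Replacing the $\frac{1}{N}\sum_i$ and $\frac{1}{T}\sum_t$ sums of $\mathbf{e}$-products by their means generates, on the one hand, the serial-correlation bias $\bm{\xi}$ of order $\sqrt{N/T}$, carrying $\sum_{i=1}^N E(e_{it}e_{is})$, and on the other hand the cross-sectional bias $\bm{\zeta}$ of order $\sqrt{T/N}$, carrying $\bm{\Omega}_e=E(\mathbf{e}_t\mathbf{e}_t^\top)$; the fourth-moment and dependence hypotheses, together with the independence of $\mathbf{e}$ from $(\mathbf{X}_J,\bm{\Lambda}_0,\mathbf{F}_0)$, guarantee that the deviations of these sums from their means are of smaller order. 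Collecting the leading term, $\bm{\xi}$, $\bm{\zeta}$, and the remainders yields the stated expansion.

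I expect the main obstacle to be bookkeeping the dependence on the diverging $s$ uniformly across all remainder terms so as to match the rate $O_P(s^2/\sqrt{NT}+s^{3/2}\delta_{NT}^{-1})$. Unlike the fixed-dimension treatment in \citet{bai2009panel}, every quadratic form in $\widetilde{\mathbf{X}}_{J,t}$, every perturbation of $\mathbf{D}(\cdot)^{-1}$, and every cross term must be bounded with an explicit power of $s$, which requires repeatedly invoking the $\ell_2$ and $\ell_\infty$ bounds of Lemma~\ref{L7} together with the concentration inequalities of Lemmas~\ref{L2} and~\ref{L4} rather than crude scalar arguments; ensuring these powers do not accumulate beyond $s^2$ (respectively $s^{3/2}$) is the delicate part.
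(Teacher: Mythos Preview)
Your proposal is correct and follows essentially the same route as the paper. The paper organizes the argument by packaging the preliminary loading rates and the eigen-expansion of $\widehat{\bm{\Lambda}}-\bm{\Lambda}_0\mathbf{H}$ into Lemma~\ref{L.B1}, the score expansion (producing $\widetilde{\mathbf{X}}_{J,t}$, $\bm{\zeta}_{NT}$, $\bm{\xi}_{NT}$, and the $\sqrt{s/(NT)}$ rate for $\widehat{\bm{\beta}}_J$) into Lemma~\ref{L.B2}, and the replacement of $\mathbf{D}(\widehat{\bm{\Lambda}})$, $\bm{\xi}_{NT}$, $\bm{\zeta}_{NT}$ by their $\bm{\Lambda}_0$-based counterparts into Lemma~\ref{L.B3}; but the content and the order of the steps match what you describe, including your identification of where the two biases arise and your emphasis on tracking the explicit powers of $s$.
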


\begin{lemma}\label{L.B0}
Under the conditions of Proposition \ref{PropB1}, then
\begin{itemize}[leftmargin=*, itemsep=0.5pt, parsep=0.5pt, topsep=0.6pt]
\item [1.] $E\left|\frac{1}{T}\sum_{t=1}^{T}\mathbf{f}_{0t}e_{it}\right|_F^2 = O(1/T)$ and $E\left|\frac{1}{T}\sum_{t=1}^{T}e_{it}\right|^2 = O(1/T)$;

\item [2.] $E\left|\frac{1}{T}\sum_{t=1}^{T}[e_{it}e_{jt} - E(e_{it}e_{jt})]\right|^2 = O(1/T)$ and \\ $E\left|\frac{1}{NT}\sum_{i=1}^{N}\sum_{t=1}^{T}[e_{it}e_{is} - E(e_{it}e_{is})]\right|^2 = O(1/(NT));$

\item [3.] $E\left|\frac{1}{NT}\sum_{i=1}^{N}\sum_{t=1}^{T}\mathbf{f}_{0t}\pmb{\lambda}_{0i}^\top e_{it}\right|_F^2 = O(1/(NT))$, $E\left|\frac{1}{NT}\sum_{i=1}^{N}\sum_{t=1}^{T}\pmb{\lambda}_{0i}e_{it}\right|_F^2 = O(1/(NT))$, and\\ $E\left|\frac{1}{NT}\sum_{i=1}^{N}\sum_{t=1}^{T}x_{j,is}e_{it}\right|_F^2 = O(1/(NT));$

\item [4.] $E\left|\frac{1}{NT}\sum_{i=1}^{N}\sum_{s=1}^{T}x_{j,it}e_{is}\mathbf{f}_{0s}^\top\right|_F^2 = O(1/(NT))$;

\item [5.] $E|\frac{1}{N}\sum_{i=1}^{N}x_{j,it}e_{is}|^2 = O(1/N)$ and $E|\frac{1}{N}\sum_{i=1}^{N}\pmb{\lambda}_{0i}e_{it}|_F^2 = O(1/N)$;

\item [6.] $E|\frac{1}{T}\sum_{s=1}^{T}e_{ks}\frac{1}{N}\sum_{i=1}^{N}x_{j,is}\pmb{\lambda}_{0i}^\top|_F^2 = O(1/T)$;

\item [7.] $E\left|\frac{1}{T}\sum_{s=1}^{T}\frac{1}{N}\sum_{i=1}^{N}\sum_{j=1}^{N}x_{k,it}({e}_{is}{e}_{js}-E({e}_{is}{e}_{js}))\pmb{\lambda}_{0j}^\top \right|_F^2 = O(1/T)$;

\item [8.] $E\left|\frac{1}{NT}\sum_{s=1}^{T}\sum_{i=1}^{N}x_{k,it}({e}_{is}{e}_{js}-E({e}_{is}{e}_{js}))\right|_F^2 = O(1/(NT))$;

\item [9.] $E\left|\frac{1}{T}\sum_{s=1}^{T}\frac{1}{N}\sum_{i=1}^{N}\sum_{j=1}^{N}\pmb{\lambda}_{0i}\pmb{\lambda}_{0j}^\top({e}_{is}{e}_{js}-E({e}_{is}{e}_{js})) \right|_F^2 = O(1/T)$;
\end{itemize}
\end{lemma}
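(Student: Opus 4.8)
The plan is to treat all nine bounds uniformly by exploiting two structural features: the independence of $\{\mathbf{e}_t\}$ from $(\mathbf{X}_J,\bm{\Lambda}_0,\mathbf{F}_0)$, which lets us condition on the latter and factor the expectations; and the weak serial and cross-sectional dependence of $\{e_{it}\}$ encoded in Assumption C of \cite{bai2009panel}, which supplies uniformly summable covariance bounds such as $\sum_{s}|E(e_{it}e_{is})|\le M$ and $\sum_{j}|E(e_{it}e_{jt})|\le M$, together with the eighth-moment and fourth-cumulant controls needed for the quadratic-in-error terms. The finite fourth-moment assumptions $E|\mathbf{f}_{0t}|_F^4<\infty$, $E|\bm{\lambda}_{0i}|_F^4<\infty$ and $E(x_{j,it}^4)<\infty$ then bound the non-error factors after separation. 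For each statement I would expand the squared Frobenius norm into a double (or quadruple) sum, split the expectation using independence, and read off the rate by counting how many summation indices remain genuinely free once the dependence bounds collapse the rest.

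Concretely, for the linear-in-$e$ items (parts 1, 3, 4, 5, 6) I would write, e.g.\ for part 1,
\[
E\Big|\tfrac{1}{T}\textstyle\sum_{t}\mathbf{f}_{0t}e_{it}\Big|_F^2=\tfrac{1}{T^2}\sum_{t,s}E(\mathbf{f}_{0t}^\top\mathbf{f}_{0s})\,E(e_{it}e_{is}),
\]
and bound $|E(\mathbf{f}_{0t}^\top\mathbf{f}_{0s})|$ by a constant via Cauchy--Schwarz and $E|\mathbf{f}_{0t}|_F^4<\infty$; the serial-dependence bound $\sum_s|E(e_{it}e_{is})|\le M$ collapses the inner sum, leaving $O(T)$ and hence $O(1/T)$. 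The items that additionally average a mean-zero error array over the cross-section (the second halves of parts 3 and 5, the $i$-averaged statement in part 2, and part 4) gain an extra factor $1/N$ in exactly the same way, since the cross-sectional covariances $E(e_{it}e_{jt})$ are summable in $j$. Each of these reduces to a two- or four-fold sum in which all but one index is killed by a summable covariance, so the bookkeeping is mechanical.

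The genuinely delicate items are the quadratic-in-$e$ statements (parts 2, 7, 8, 9), where the summand is the centered product $e_{is}e_{js}-E(e_{is}e_{js})$. Here I would expand $E|\cdot|^2$ into a fourth-order error moment and invoke the fourth-cumulant and summability conditions of Assumption C of \cite{bai2009panel}, which guarantee that $\sum_{s,s'}\big|\mathrm{Cov}(e_{is}e_{js},e_{is'}e_{js'})\big|=O(T)$ once the cross-sectional indices are fixed, and similarly that the corresponding cross-sectional sums of fourth moments remain bounded. The main obstacle is precisely the careful handling of these fourth-order terms: one must separate the contributions of distinct index patterns (diagonal versus off-diagonal, serial versus cross-sectional) and verify that each pattern is controlled by the relevant clause of Assumption C before counting free indices. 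Once that accounting is in place, parts 7 and 9 follow by attaching the independent, finite-fourth-moment factors $x_{k,it}$ and $\bm{\lambda}_{0i}\bm{\lambda}_{0j}^\top$ and applying Cauchy--Schwarz, while parts 2 and 8 are the pure-error and $(N,T)$-averaged specializations. I would present part 1 and one representative quadratic case (say part 7) in full detail and note that the remaining items follow by identical manipulations.
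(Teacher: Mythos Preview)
Your proposal is correct and in fact more explicit than the paper's own treatment: the paper's proof of this lemma consists of a single sentence deferring to the weak-dependence conditions in Assumption C of \cite{bai2009panel} and pointing to the proof of Lemma A.2 there as a template. Your plan---factor expectations via the independence of $\{\mathbf{e}_t\}$ from $(\mathbf{X}_J,\bm{\Lambda}_0,\mathbf{F}_0)$, collapse sums using the summable serial and cross-sectional covariances, and handle the quadratic-in-$e$ items through the fourth-cumulant clauses of Assumption C---is precisely the argument underlying that reference, so you are reconstructing what the paper leaves implicit.
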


\begin{lemma}\label{L.B1}
Under the conditions of Proposition \ref{PropB1}, then
\begin{itemize}[leftmargin=*, itemsep=0.5pt, parsep=0.5pt, topsep=0.6pt]
\item[1.] $N^{-1/2}|\widehat{\pmb{\Lambda}} - \pmb{\Lambda}_0\mathbf{H}|_F = O_P\left(|\widehat{\pmb{\beta}}_J-\pmb{\beta}_{0,J}|_F+\delta_{NT}^{-1}\right)$, where $\mathbf{H} = (\mathbf{F}_0^\top\mathbf{F}_0/T)(\pmb{\Lambda}_0^\top\widehat{\pmb{\Lambda}}/N)\mathbf{V}_{NT}^{-1}$ and $\mathbf{V}_{NT}$ is a diagonal matrix that consists of the $r$ largest eigenvalues of the matrix $\frac{1}{NT}\sum_{t=1}^{T}(\mathbf{y}_t - \mathbf{X}_{J,t}\widehat{\pmb{\beta}}_J)(\mathbf{y}_t - \mathbf{X}_{J,t}\widehat{\pmb{\beta}}_J)^\top$;

\item[2.] $N^{-1}\pmb{\Lambda}_0^\top(\widehat{\pmb{\Lambda}} - \pmb{\Lambda}_0\mathbf{H}) = O_P\left(|\widehat{\pmb{\beta}}_J-\pmb{\beta}_{0,J}|_F+\delta_{NT}^{-2}\right)$ and\\ $N^{-1}\widehat{\pmb{\Lambda}}^\top(\widehat{\pmb{\Lambda}} - \pmb{\Lambda}_0\mathbf{H}) = O_P\left(|\widehat{\pmb{\beta}}_J-\pmb{\beta}_{0,J}|_F+\delta_{NT}^{-2}\right);$

\item[3.] $\mathbf{H}\mathbf{H}^\top = \left(\pmb{\Lambda}_0^\top\pmb{\Lambda}_0/N\right)^{-1} + O_P(|\widehat{\pmb{\beta}}_J - \pmb{\beta}_{0,J}|_F+\delta_{NT}^{-2})$;

\item[4.] $|\mathbf{P}_{\widehat{\pmb{\Lambda}}} - \mathbf{P}_{\pmb{\Lambda}_0}|_F = O_P(|\widehat{\pmb{\beta}}_J - \pmb{\beta}_{0,J}|_F + \delta_{NT}^{-1})$;

\item[5.] $\left|N^{-1}\mathbf{X}_{J,t}^\top(\widehat{\pmb{\Lambda}} - \pmb{\Lambda}_0\mathbf{H})\right|_F = O_P\left(\sqrt{s}|\widehat{\pmb{\beta}}_J-\pmb{\beta}_{0,J}|_F+\sqrt{s}\delta_{NT}^{-2}\right)$ for $1\leq t \leq T$;

\item[6.] $\left|\frac{1}{NT}\sum_{t=1}^{T}\mathbf{X}_{J,t}^\top\mathbf{M}_{\widehat{\pmb{\Lambda}}}(\widehat{\pmb{\Lambda}} - \pmb{\Lambda}_0\mathbf{H})\right|_F = O_P\left(\sqrt{s}|\widehat{\pmb{\beta}}_J-\pmb{\beta}_{0,J}|_F+\sqrt{s}\delta_{NT}^{-2}\right)$;

\item[7.] $N^{-1}\mathbf{e}_{t}^\top(\widehat{\pmb{\Lambda}} - \pmb{\Lambda}_0\mathbf{H}) = O_P\left(\sqrt{s}\delta_{NT}^{-1}|\widehat{\pmb{\beta}}_J-\pmb{\beta}_{0,J}|_F+\delta_{NT}^{-2}\right)$ for $1\leq t \leq T$;

\item[8.] $N^{-1}T^{-1/2}\sum_{t=1}^{T}\mathbf{e}_{t}^\top(\widehat{\pmb{\Lambda}} - \pmb{\Lambda}_0\mathbf{H}) = O_P\left(\sqrt{s}\delta_{NT}^{-1}|\widehat{\pmb{\beta}}_J-\pmb{\beta}_{0,J}|_F+T^{-1/2}+\delta_{NT}^{-2}\right)$;

\item[9.] $(TN)^{-1}\sum_{t=1}^{T}\mathbf{f}_{0t}\mathbf{e}_{t}^\top(\widehat{\pmb{\Lambda}} - \pmb{\Lambda}_0\mathbf{H}) = O_P\left(\sqrt{s/(NT)}|\widehat{\pmb{\beta}}_J-\pmb{\beta}_{0,J}|_F+T^{-1}+T^{-1/2}\delta_{NT}^{-2}\right)$;

\item[10.]		 $(TN)^{-1}\sum_{t=1}^{T}(\mathbf{X}_{J,t}^\top\pmb{\Lambda}_0/N)(\pmb{\Lambda}_0^\top\pmb{\Lambda}_0/N)(\widehat{\pmb{\Lambda}}\mathbf{H}^{-1} - \pmb{\Lambda}_0)^\top\mathbf{e}_{t} = \frac{1}{T^2}\sum_{t=1}^{T}\sum_{s=1}^{T}(\mathbf{X}_{J,t}^\top\pmb{\Lambda}_0/N)(\pmb{\Lambda}_0^\top\pmb{\Lambda}_0/N)\\\times(\mathbf{F}_0^\top\mathbf{F}_0/T)^{-1}\mathbf{f}_{0s}(\frac{1}{N}\sum_{i=1}^{N}e_{is}e_{it}) + O_P\left(\sqrt{s/(NT)}|\widehat{\pmb{\beta}}_J-\pmb{\beta}_{0,J}|_F+\sqrt{s/T}\delta_{NT}^{-2}\right)$;
\end{itemize}
\end{lemma}

\begin{lemma}\label{L.B2}
Under the conditions of Proposition \ref{PropB1}, then
\begin{itemize}[leftmargin=*, itemsep=0.5pt, parsep=0.5pt, topsep=0.6pt]
\item [1.]  
\begin{eqnarray*}
&&\frac{1}{N^2T^2}\sum_{s=1}^{T}\sum_{t=1}^{T}\mathbf{X}_{J,t}^\top\mathbf{M}_{\widehat{\pmb{\Lambda}}}(\mathbf{e}_s\mathbf{e}_s^\top-\pmb{\Omega}_{e})\widehat{\pmb{\Lambda}}\mathbf{G}\mathbf{f}_{0t}\\
&=& O_P\left( \sqrt{s/(NT)}|\widehat{\pmb{\beta}}_J-\pmb{\beta}_{0,J}|_F +\sqrt{s/T}|\widehat{\pmb{\beta}}_J-\pmb{\beta}_{0,J}|_F^2 + \sqrt{s/(NT)}\delta_{NT}^{-1} \right),
\end{eqnarray*}
where $\mathbf{G} = (\pmb{\Lambda}_0^\top\widehat{\pmb{\Lambda}}/N)^{-1}(\mathbf{F}_0^\top\mathbf{F}_0/T)^{-1}$ and $\pmb{\Omega}_{e}=E(\mathbf{e}_t\mathbf{e}_t^\top)$;

\item [2.]  
\begin{eqnarray*}
\sqrt{NT}(\widehat{\pmb{\beta}}_J - \pmb{\beta}_{0,J}) &=& \mathbf{D}^{-1}(\widehat{\pmb{\Lambda}}) \frac{1}{\sqrt{NT}}\sum_{t=1}^{T}\left[\mathbf{X}_{J,t}^\top\mathbf{M}_{\widehat{\pmb{\Lambda}}} - \frac{1}{T}\sum_{s=1}^{T}a_{st}\mathbf{X}_{J,s}^\top\mathbf{M}_{\widehat{\pmb{\Lambda}}}\right] \mathbf{e}_t + \sqrt{\frac{T}{N}}\pmb{\zeta}_{NT}\\ 
&&+ O_P(\sqrt{sNT}|\widehat{\pmb{\beta}}_J - \pmb{\beta}_{0,J}|_F^2 + \sqrt{sNT}\delta_{NT}^{-1}|\widehat{\pmb{\beta}}_J - \pmb{\beta}_{0,J}|_F + \sqrt{sNT}\delta_{NT}^{-3}),
\end{eqnarray*}
where $a_{ts} = \mathbf{f}_{0t}^\top(\mathbf{F}_0^\top\mathbf{F}_0/T)\mathbf{f}_{0s}$, $\mathbf{D}(\widehat{\pmb{\Lambda}}) = \frac{1}{NT}\sum_{t=1}^{T}\widetilde{\mathbf{X}}_{J,t}^\top\mathbf{M}_{\widehat{\pmb{\Lambda}}}\widetilde{\mathbf{X}}_{J,t}$, $\widetilde{\mathbf{X}}_{J,t} = \mathbf{X}_{J,t} - \frac{1}{s}\sum_{s=1}^{T}a_{st}\mathbf{X}_{J,s}$ and $\pmb{\zeta}_{NT} = - \mathbf{D}^{-1}(\widehat{\pmb{\Lambda}}) \frac{1}{NT}\sum_{t=1}^{T}\mathbf{X}_{J,t}^\top\mathbf{M}_{\widehat{\pmb{\Lambda}}}\pmb{\Omega}_e\widehat{\pmb{\Lambda}}\mathbf{G}\mathbf{f}_{0t}$;

\item [3.] 
\begin{eqnarray*}
&& \frac{1}{\sqrt{NT}}\sum_{t=1}^{T}\left[\mathbf{X}_{J,t}^\top\mathbf{M}_{\widehat{\pmb{\Lambda}}} - \frac{1}{N}\sum_{s=1}^{T}a_{ts}\mathbf{X}_{J,s}^\top\mathbf{M}_{\widehat{\pmb{\Lambda}}}\right] \mathbf{e}_t \\
&=& \frac{1}{\sqrt{NT}}\sum_{t=1}^{T}\left[\mathbf{X}_{J,t}^\top\mathbf{M}_{\pmb{\Lambda}_0} - \frac{1}{N}\sum_{s=1}^{T}a_{ts}\mathbf{X}_{J,s}^\top\mathbf{M}_{\pmb{\Lambda}_0}\right] \mathbf{e}_t \\
&& - \sqrt{\frac{N}{T}}\frac{1}{NT}\sum_{t=1}^{T}\sum_{s=1}^{T}\frac{\widetilde{\mathbf{X}}_{J,t}^\top\pmb{\Lambda}_0}{N}\left(\frac{\pmb{\Lambda}_0^\top\pmb{\Lambda}_0}{N}\right)^{-1} \left(\frac{\mathbf{F}_0^\top\mathbf{F}_0}{T}\right)^{-1}\mathbf{f}_{0s}\left(\sum_{i=1}^{N}e_{it}e_{is}\right)\\
&& + O_P(\sqrt{Ns}|\widehat{\pmb{\beta}}_J - \pmb{\beta}_{0,J}|_F^2 + \sqrt{s}|\widehat{\pmb{\beta}}_J - \pmb{\beta}_{0,J}|_F + \sqrt{Ns}\delta_{NT}^{-2});
\end{eqnarray*}

\item [4.] $|\widehat{\pmb{\beta}}_J - \pmb{\beta}_{0,J}|_F = O_P(\sqrt{s/(NT)})$.
\end{itemize}
\end{lemma}

\begin{lemma}\label{L.B3}
Under the conditions of Proposition \ref{PropB1}, then
\begin{itemize}[leftmargin=*, itemsep=0.5pt, parsep=0.5pt, topsep=0.6pt]
\item [1.] $\left|\mathbf{D}(\widehat{\pmb{\Lambda}}) - \mathbf{D}(\pmb{\Lambda}_0) \right|_F= O_P(s^{3/2}/\sqrt{NT} + s\delta_{NT}^{-1})$;

\item [2.] $|\pmb{\xi}_{NT} - \pmb{\xi}|_F = O_P(\sqrt{s/N} + s^2/\sqrt{NT} + s^{3/2}\delta_{NT}^{-1})$, where 
$$
\pmb{\xi}_{NT} = -\mathbf{D}^{-1}(\widehat{\pmb{\Lambda}}) \frac{1}{NT}\sum_{t=1}^{T}\sum_{s=1}^{T}\frac{\widetilde{\mathbf{X}}_{J,t}^\top\pmb{\Lambda}_0}{N}\left(\frac{\pmb{\Lambda}_0^\top\pmb{\Lambda}_0}{N}\right)^{-1} \left(\frac{\mathbf{F}_0^\top\mathbf{F}_0}{T}\right)^{-1}\mathbf{f}_{0s}\left(\sum_{i=1}^{N}e_{it}e_{is}\right),
$$
and 
$$
\pmb{\xi}=-\mathbf{D}^{-1}(\pmb{\Lambda}_0)\frac{1}{NT}\sum_{t=1}^{T}\sum_{s=1}^{T}\frac{\widetilde{\mathbf{X}}_{J,t}^\top\pmb{\Lambda}_0}{N}\left(\frac{\pmb{\Lambda}_0^\top\pmb{\Lambda}_0}{N}\right)^{-1} \left(\frac{\mathbf{F}_0^\top\mathbf{F}_0}{T}\right)^{-1}\mathbf{f}_{0s}\left(\sum_{i=1}^{N}E(e_{it}e_{is})\right);
$$

\item [3.] $|\pmb{\zeta}_{NT} - \pmb{\zeta}|_F = O_P(s^2/\sqrt{NT} + s^{3/2}\delta_{NT}^{-1})$, where
$$
\pmb{\zeta}=-\mathbf{D}^{-1}(\pmb{\Lambda}_0) \frac{1}{NT}\sum_{t=1}^{T}\mathbf{X}_{J,t}^\top\mathbf{M}_{\pmb{\Lambda}_0}\pmb{\Omega}_e\pmb{\Lambda}_0(\pmb{\Lambda}_0^\top\pmb{\Lambda}_0/N)^{-1}(\mathbf{F}_0^\top\mathbf{F}_0/T)^{-1}\mathbf{f}_{0t}.
$$
\end{itemize}
\end{lemma}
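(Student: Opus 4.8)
The plan is to read all three bounds as plug-in (substitution) errors: each quantity with a hat ($\mathbf{M}_{\widehat{\bm{\Lambda}}}$, $\widehat{\bm{\Lambda}}\mathbf{G}$, $\mathbf{D}^{-1}(\widehat{\bm{\Lambda}})$) is swapped for its population analogue, and every swap is charged to a deviation bound already supplied by Lemmas \ref{L.B1} and \ref{L.B2}. The engine is Part 1, which I would establish first. Writing $\mathbf{D}(\widehat{\bm{\Lambda}}) - \mathbf{D}(\bm{\Lambda}_0) = \frac{1}{NT}\sum_{t=1}^{T}\widetilde{\mathbf{X}}_{J,t}^\top(\mathbf{M}_{\widehat{\bm{\Lambda}}} - \mathbf{M}_{\bm{\Lambda}_0})\widetilde{\mathbf{X}}_{J,t}$ and using $\mathbf{M}_{\widehat{\bm{\Lambda}}} - \mathbf{M}_{\bm{\Lambda}_0} = -(\mathbf{P}_{\widehat{\bm{\Lambda}}} - \mathbf{P}_{\bm{\Lambda}_0})$, an entrywise Cauchy--Schwarz step gives $|\widetilde{\mathbf{X}}_{J,t}^\top(\mathbf{P}_{\widehat{\bm{\Lambda}}} - \mathbf{P}_{\bm{\Lambda}_0})\widetilde{\mathbf{X}}_{J,t}|_F \le |\mathbf{P}_{\widehat{\bm{\Lambda}}} - \mathbf{P}_{\bm{\Lambda}_0}|_F\,|\widetilde{\mathbf{X}}_{J,t}|_F^2$, hence $|\mathbf{D}(\widehat{\bm{\Lambda}}) - \mathbf{D}(\bm{\Lambda}_0)|_F \le |\mathbf{P}_{\widehat{\bm{\Lambda}}} - \mathbf{P}_{\bm{\Lambda}_0}|_F\cdot\frac{1}{NT}\sum_{t}|\widetilde{\mathbf{X}}_{J,t}|_F^2$. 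The averaged factor is $O_P(s)$ since each of the $s$ diagonal blocks $\frac{1}{NT}\sum_{i,t}\widetilde{x}_{k,it}^2 = O_P(1)$ under $E(x_{j,it}^4)<\infty$, while $|\mathbf{P}_{\widehat{\bm{\Lambda}}} - \mathbf{P}_{\bm{\Lambda}_0}|_F = O_P(\sqrt{s/(NT)} + \delta_{NT}^{-1})$ by combining Lemma \ref{L.B1}.4 with $|\widehat{\bm{\beta}}_J - \bm{\beta}_{0,J}|_F = O_P(\sqrt{s/(NT)})$ from Lemma \ref{L.B2}.4. The product yields exactly $O_P(s^{3/2}/\sqrt{NT} + s\delta_{NT}^{-1})$.

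For Parts 2 and 3 I would first observe that, under the conditions of Proposition \ref{PropB1}, the uniform lower bound $\inf_{\bm{\Lambda}}\mathbf{D}(\bm{\Lambda})>0$ (with $\bm{\Lambda}^\top\bm{\Lambda}/N=\mathbf{I}_r$) applies to the normalized $\widehat{\bm{\Lambda}}$, so $\psi_{\mathrm{min}}(\mathbf{D}(\widehat{\bm{\Lambda}}))$ and $\psi_{\mathrm{min}}(\mathbf{D}(\bm{\Lambda}_0))$ are bounded away from $0$ and both inverses have operator norm $O_P(1)$. The resolvent identity $\mathbf{D}^{-1}(\widehat{\bm{\Lambda}}) - \mathbf{D}^{-1}(\bm{\Lambda}_0) = -\mathbf{D}^{-1}(\widehat{\bm{\Lambda}})[\mathbf{D}(\widehat{\bm{\Lambda}}) - \mathbf{D}(\bm{\Lambda}_0)]\mathbf{D}^{-1}(\bm{\Lambda}_0)$ then transfers Part 1 to give $|\mathbf{D}^{-1}(\widehat{\bm{\Lambda}}) - \mathbf{D}^{-1}(\bm{\Lambda}_0)|_F = O_P(s^{3/2}/\sqrt{NT} + s\delta_{NT}^{-1})$. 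For Part 3 I would split $\bm{\zeta}_{NT}-\bm{\zeta}$ into (i) this $\mathbf{D}^{-1}$-difference multiplied by the residual $s$-vector $\frac{1}{NT}\sum_t\mathbf{X}_{J,t}^\top\mathbf{M}_{\widehat{\bm{\Lambda}}}\bm{\Omega}_e\widehat{\bm{\Lambda}}\mathbf{G}\mathbf{f}_{0t}$, whose Euclidean norm is $O_P(\sqrt{s})$, and (ii) $\mathbf{D}^{-1}(\bm{\Lambda}_0)$ (operator norm $O_P(1)$) times the difference of the two middle vectors, where the replacements $\mathbf{M}_{\widehat{\bm{\Lambda}}}\to\mathbf{M}_{\bm{\Lambda}_0}$ and $\widehat{\bm{\Lambda}}\mathbf{G}\to\bm{\Lambda}_0(\bm{\Lambda}_0^\top\bm{\Lambda}_0/N)^{-1}(\mathbf{F}_0^\top\mathbf{F}_0/T)^{-1}$ are controlled by Lemma \ref{L.B1}.1--2 and .5--6. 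Both pieces collapse to the claimed $O_P(s^2/\sqrt{NT} + s^{3/2}\delta_{NT}^{-1})$.

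Part 2 repeats this structure but carries one genuinely stochastic extra term, because $\bm{\xi}_{NT}$ and $\bm{\xi}$ differ in the inner factor through $\sum_i e_{it}e_{is}$ versus $\sum_i E(e_{it}e_{is})$. After isolating the $\mathbf{D}^{-1}$-difference piece (again $O_P(s^2/\sqrt{NT} + s^{3/2}\delta_{NT}^{-1})$), the remaining term is $\mathbf{D}^{-1}(\bm{\Lambda}_0)$ times $\frac{1}{NT}\sum_{t,s}\frac{\widetilde{\mathbf{X}}_{J,t}^\top\bm{\Lambda}_0}{N}(\cdots)\mathbf{f}_{0s}\sum_i(e_{it}e_{is}-E(e_{it}e_{is}))$. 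I would bound its Euclidean norm componentwise using the second-moment estimates of Lemma \ref{L.B0} (the independence of $\{\mathbf{e}_t\}$ from $\mathbf{X}_J,\bm{\Lambda}_0,\mathbf{F}_0$ lets the cross expectations factorize), which delivers $O_P(\sqrt{s/N})$ and so produces the additional $\sqrt{s/N}$ appearing only in Part 2.

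The main obstacle will be the bookkeeping of the powers of $s$. Since $\mathbf{X}_{J,t}$ carries $s$ columns, every substitution step and every intermediate average can donate a factor $\sqrt{s}$ (or $s$), and the claimed exponents $s^{3/2}$ and $s^2$ are tight only if at each matrix product one uses $|\mathbf{A}\mathbf{v}|_F\le|\mathbf{A}|_2\,|\mathbf{v}|_F$ for the $s\times s$ resolvent factors (keeping them $O_P(1)$ rather than $O_P(\sqrt{s})$) while allowing the single $\sqrt{s}$ from the genuinely $s$-dimensional regressor blocks. Keeping the operator norm and Frobenius norm roles straight throughout the nested products is the one place where a careless estimate would inflate the rate.
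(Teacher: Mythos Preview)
Your proposal is correct and follows essentially the same approach as the paper: Part 1 via the projector difference $|\mathbf{P}_{\widehat{\bm{\Lambda}}}-\mathbf{P}_{\bm{\Lambda}_0}|_F$ times $\frac{1}{NT}\sum_t|\widetilde{\mathbf{X}}_{J,t}|_F^2=O_P(s)$, Parts 2 and 3 via the resolvent identity for $\mathbf{D}^{-1}$ plus add-and-subtract on the remaining factors, with the $O_P(\sqrt{s/N})$ in Part 2 coming from the centered double sum $\sum_i[e_{it}e_{is}-E(e_{it}e_{is})]$. The only cosmetic difference is that for the $\widehat{\bm{\Lambda}}\mathbf{G}\to\bm{\Lambda}_0(\bm{\Lambda}_0^\top\bm{\Lambda}_0/N)^{-1}(\mathbf{F}_0^\top\mathbf{F}_0/T)^{-1}$ swap in Part 3 the paper uses the compact identity $N^{-1/2}\bigl(\widehat{\bm{\Lambda}}(\bm{\Lambda}_0^\top\widehat{\bm{\Lambda}}/N)^{-1}-\bm{\Lambda}_0(\bm{\Lambda}_0^\top\bm{\Lambda}_0/N)^{-1}\bigr)=(\mathbf{P}_{\widehat{\bm{\Lambda}}}-\mathbf{P}_{\bm{\Lambda}_0})\,N^{-1/2}\widehat{\bm{\Lambda}}(\bm{\Lambda}_0^\top\widehat{\bm{\Lambda}}/N)^{-1}$ rather than invoking Lemma \ref{L.B1}.1--2, but both routes land on the same rate.
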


\begin{lemma}\label{L.B4}
Under the conditions of Proposition \ref{PropB1}, then
\begin{itemize}[leftmargin=*, itemsep=0.5pt, parsep=0.5pt, topsep=0.6pt]
\item [1.] $\frac{1}{T}\sum_{t=1}^{T}|\widehat{\mathbf{f}}_t - \mathbf{H}^{-1}\mathbf{f}_{0t}|_F^2 = O_P\left(s^2/(NT) + \delta_{NT}^{-2}\right)$;

\item [2.] $\frac{1}{T}\sum_{t=1}^{T}(\widehat{\mathbf{f}}_t - \mathbf{H}^{-1}\mathbf{f}_{0t})\mathbf{f}_{0t}^\top = O_P\left(s/\sqrt{NT} + \delta_{NT}^{-2}\right)$;

\item [3.] $\widehat{\mathbf{F}}^\top\widehat{\mathbf{F}}/T - \mathbf{H}^{-1} (\mathbf{F}_0^\top\mathbf{F}_0/T)\mathbf{H}^{-1,\top}= O_P\left(s/\sqrt{NT} + \delta_{NT}^{-2}\right)$;

\item [4.] $ (\widehat{\mathbf{F}}^\top\widehat{\mathbf{F}}/T )^{-1} - \mathbf{H}^{\top} (\mathbf{F}_0^\top\mathbf{F}_0/T)^{-1}\mathbf{H} = O_P\left(s/\sqrt{NT} + \delta_{NT}^{-2}\right)$;
\end{itemize}
\end{lemma}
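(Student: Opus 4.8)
The plan is to work from the principal-components representation of the estimated factors. With the normalization $\widehat{\bm{\Lambda}}^\top\widehat{\bm{\Lambda}}/N = \mathbf{I}_r$ implicit in the eigenvector equation that defines $\widehat{\bm{\Lambda}}$, the estimated factor is $\widehat{\mathbf{f}}_t = \frac{1}{N}\widehat{\bm{\Lambda}}^\top(\mathbf{y}_t - \mathbf{X}_{J,t}\widehat{\bm{\beta}}_J)$. Substituting $\mathbf{y}_t = \mathbf{X}_{J,t}\bm{\beta}_{0,J} + \bm{\Lambda}_0\mathbf{f}_{0t} + \mathbf{e}_t$ yields the master decomposition
$$
\widehat{\mathbf{f}}_t - \mathbf{H}^{-1}\mathbf{f}_{0t} = \left(\tfrac{1}{N}\widehat{\bm{\Lambda}}^\top\bm{\Lambda}_0 - \mathbf{H}^{-1}\right)\mathbf{f}_{0t} + \tfrac{1}{N}\widehat{\bm{\Lambda}}^\top\mathbf{e}_t - \tfrac{1}{N}\widehat{\bm{\Lambda}}^\top\mathbf{X}_{J,t}(\widehat{\bm{\beta}}_J - \bm{\beta}_{0,J}),
$$
which I would analyze term by term. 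The coefficient matrix in the first term is controlled by the normalization together with Lemma \ref{L.B1}.2: since $\mathbf{I}_r = \frac{1}{N}\widehat{\bm{\Lambda}}^\top\widehat{\bm{\Lambda}} = \frac{1}{N}\widehat{\bm{\Lambda}}^\top\bm{\Lambda}_0\mathbf{H} + \frac{1}{N}\widehat{\bm{\Lambda}}^\top(\widehat{\bm{\Lambda}} - \bm{\Lambda}_0\mathbf{H})$, one obtains $\frac{1}{N}\widehat{\bm{\Lambda}}^\top\bm{\Lambda}_0 - \mathbf{H}^{-1} = O_P(|\widehat{\bm{\beta}}_J - \bm{\beta}_{0,J}|_F + \delta_{NT}^{-2})$, and with the rate $|\widehat{\bm{\beta}}_J - \bm{\beta}_{0,J}|_F = O_P(\sqrt{s/(NT)})$ from Lemma \ref{L.B2}.4 this is of size $\sqrt{s/(NT)} + \delta_{NT}^{-2}$.

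For Part 1 I would bound the averaged squared norm of each term. The first term contributes $O_P(s/(NT) + \delta_{NT}^{-4})$ after multiplying by $\mathbf{f}_{0t}$ and using $E|\mathbf{f}_{0t}|_F^4 < \infty$. The idiosyncratic term, after writing $\widehat{\bm{\Lambda}} = \bm{\Lambda}_0\mathbf{H} + (\widehat{\bm{\Lambda}} - \bm{\Lambda}_0\mathbf{H})$, is handled by Lemma \ref{L.B0}.5 and Lemma \ref{L.B1}.1, giving the leading $O_P(1/N) = O_P(\delta_{NT}^{-2})$. The $\widehat{\bm{\beta}}$-term is the one that generates the $s^2/(NT)$ rate: by Cauchy--Schwarz its average equals $|\widehat{\bm{\beta}}_J - \bm{\beta}_{0,J}|_F^2$ times $\frac{1}{T}\sum_t|\frac{1}{N}\widehat{\bm{\Lambda}}^\top\mathbf{X}_{J,t}|_F^2$, and since $\widehat{\bm{\Lambda}}^\top\widehat{\bm{\Lambda}}/N = \mathbf{I}_r$ the latter is $O_P(s)$ after summing the per-column moment bounds over the $s$ columns of $\mathbf{X}_{J,t}$ (as in the argument of Lemma \ref{L7}.1). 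Collecting the three pieces gives Part 1.

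Part 2 is faster because averaging against $\mathbf{f}_{0t}$ produces cancellation in the idiosyncratic term. For the first term I would factor out $\frac{1}{T}\sum_t\mathbf{f}_{0t}\mathbf{f}_{0t}^\top = O_P(1)$; for the idiosyncratic term the leading piece $\frac{1}{NT}\sum_t\mathbf{H}^\top\bm{\Lambda}_0^\top\mathbf{e}_t\mathbf{f}_{0t}^\top$ is $O_P(1/\sqrt{NT})$ by Lemma \ref{L.B0}.3 and the remainder is controlled by Lemma \ref{L.B1}.9; the $\widehat{\bm{\beta}}$-term is $O_P(s/\sqrt{NT})$ by Cauchy--Schwarz using the Part-1 bound on $\frac{1}{T}\sum_t|\frac{1}{N}\widehat{\bm{\Lambda}}^\top\mathbf{X}_{J,t}(\widehat{\bm{\beta}}_J-\bm{\beta}_{0,J})|_F^2$. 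Part 3 follows by expanding $\widehat{\mathbf{f}}_t = \mathbf{H}^{-1}\mathbf{f}_{0t} + (\widehat{\mathbf{f}}_t - \mathbf{H}^{-1}\mathbf{f}_{0t})$ inside $\frac{1}{T}\sum_t\widehat{\mathbf{f}}_t\widehat{\mathbf{f}}_t^\top$: the two cross terms are bounded by Part 2 and the quadratic term by Part 1, the latter being dominated since $s^2/(NT) = o(s/\sqrt{NT})$ under the maintained $s = o(\sqrt{NT})$. Finally Part 4 is obtained from Part 3 via $\mathbf{A}^{-1} - \mathbf{B}^{-1} = -\mathbf{A}^{-1}(\mathbf{A}-\mathbf{B})\mathbf{B}^{-1}$, after noting that $\mathbf{H}\mathbf{H}^\top = (\bm{\Lambda}_0^\top\bm{\Lambda}_0/N)^{-1} + o_P(1)$ (Lemma \ref{L.B1}.3) and $\mathbf{F}_0^\top\mathbf{F}_0/T \to_P \bm{\Sigma}_f$ positive definite guarantee that both matrices have eigenvalues bounded away from zero with probability approaching one.

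The main obstacle I anticipate is the careful bookkeeping of the sparsity level $s$ in the $\widehat{\bm{\Lambda}}^\top\mathbf{X}_{J,t}$ term of Part 1: one must resist bounding $\frac{1}{T}\sum_t|\frac{1}{N}\widehat{\bm{\Lambda}}^\top\mathbf{X}_{J,t}|_F^2$ crudely and instead recognize that it carries a full factor of $s$ (one per relevant regressor), which, combined with the $s/(NT)$ squared estimation error, is precisely what produces the $s^2/(NT)$ rate and forces the mild sparsity restriction $s = o(\sqrt{NT})$ needed for the quadratic remainder in Part 3 to be negligible relative to the $s/\sqrt{NT}$ target rate.
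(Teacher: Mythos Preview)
Your proposal is correct and follows essentially the same approach as the paper: the paper uses the identical master decomposition $\widehat{\mathbf{f}}_t - \mathbf{H}^{-1}\mathbf{f}_{0t} = \frac{1}{N}\widehat{\bm{\Lambda}}^\top(\bm{\Lambda}_0 - \widehat{\bm{\Lambda}}\mathbf{H}^{-1})\mathbf{f}_{0t} + \frac{1}{N}\widehat{\bm{\Lambda}}^\top\mathbf{e}_t + \frac{1}{N}\widehat{\bm{\Lambda}}^\top\mathbf{X}_{J,t}(\bm{\beta}_{0,J} - \widehat{\bm{\beta}}_J)$, invokes Lemma \ref{L.B1}.2 for the first coefficient, splits $\widehat{\bm{\Lambda}} = \bm{\Lambda}_0\mathbf{H} + (\widehat{\bm{\Lambda}} - \bm{\Lambda}_0\mathbf{H})$ for the error term, and obtains the $s^2/(NT)$ contribution from the $\widehat{\bm{\beta}}$-term exactly as you describe; Parts 3--4 are then deduced from Parts 1--2 via the same expansion and matrix-inverse perturbation you outline.
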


\begin{proof}[Proof of Proposition \ref{PropB1}]
\item
Proposition \ref{PropB1} follows directly from Lemmas \ref{L.B2} (2)--(4) and \ref{L.B3}.
\end{proof}

\begin{proof}[Proof of Lemma \ref{L.B0}]
\item
By using the weak dependence conditions of $\{\mathbf{e}_t\}$ in Assumption C of \cite{bai2009panel}, this lemma can be proved by using similar arguments as in the proof of \cite{bai2009panel}, e.g., the proof of Lemma A.2 in \cite{bai2009panel}. 
\end{proof}

\begin{proof}[Proof of Lemma \ref{L.B1}]
\item
\noindent (1). Note that $\mathbf{y}_t = \mathbf{X}_{J,t}\pmb{\beta}_{0,J} + \pmb{\Lambda}_0\mathbf{f}_{0t} + \mathbf{e}_t$, write
\begin{eqnarray*}
\widehat{\pmb{\Lambda}}\mathbf{V}_{NT} &=& \frac{1}{NT}\sum_{t=1}^{T}\mathbf{X}_{J,t}(\widehat{\pmb{\beta}}_J-\pmb{\beta}_{0,J})(\widehat{\pmb{\beta}}_J-\pmb{\beta}_{0,J})^\top\mathbf{X}_{J,t}^\top\widehat{\pmb{\Lambda}}+\frac{1}{NT}\sum_{t=1}^{T}\mathbf{X}_{J,t}(\pmb{\beta}_{0,J}-\widehat{\pmb{\beta}}_J)\mathbf{f}_{0t}^\top\pmb{\Lambda}_0^\top\widehat{\pmb{\Lambda}}\\
&&+\frac{1}{NT}\sum_{t=1}^{T}\mathbf{X}_{J,t}(\pmb{\beta}_{0,J}-\widehat{\pmb{\beta}}_J)\mathbf{e}_{t}^\top\widehat{\pmb{\Lambda}}+\frac{1}{NT}\sum_{t=1}^{T}\pmb{\Lambda}_0\mathbf{f}_{0t}(\pmb{\beta}_{0,J}-\widehat{\pmb{\beta}}_J)^\top\mathbf{X}_{J,t}^\top\widehat{\pmb{\Lambda}}\\
&& + \frac{1}{NT}\sum_{t=1}^{T}\mathbf{e}_{t}(\pmb{\beta}_{0,J}-\widehat{\pmb{\beta}}_J)^\top\mathbf{X}_{J,t}^\top\widehat{\pmb{\Lambda}}+\frac{1}{NT}\sum_{t=1}^{T}\pmb{\Lambda}_{0}\mathbf{f}_{0t}\mathbf{e}_t^\top\widehat{\pmb{\Lambda}}\\
&&+\frac{1}{NT}\sum_{t=1}^{T}\mathbf{e}_{t}\mathbf{f}_{0t}^\top\pmb{\Lambda}_{0}^\top\widehat{\pmb{\Lambda}}+\frac{1}{NT}\sum_{t=1}^{T}\mathbf{e}_t\mathbf{e}_t^\top\widehat{\pmb{\Lambda}} + \frac{1}{NT}\sum_{t=1}^{T}\pmb{\Lambda}_{0}\mathbf{f}_{0t}\mathbf{f}_{0t}^\top\pmb{\Lambda}_{0}^\top \widehat{\pmb{\Lambda}}\\
&\eqqcolon & \mathbf{K}_1+\cdots + \mathbf{K}_9,
\end{eqnarray*}
which implies that
$$
\widehat{\pmb{\Lambda}}\mathbf{V}_{NT}(\pmb{\Lambda}_{0}^\top\widehat{\pmb{\Lambda}}/N)^{-1}(\mathbf{F}_0^\top\mathbf{F}_{0}/T)^{-1}  - \pmb{\Lambda}_{0} = ( \mathbf{K}_1+\cdots + \mathbf{K}_8)(\pmb{\Lambda}_{0}^\top\widehat{\pmb{\Lambda}}/N)^{-1}(\mathbf{F}_{0}^\top\mathbf{F}_{0}/T)^{-1}.
$$
Since $|N^{-1/2}\widehat{\pmb{\Lambda}}|_F = \sqrt{r}$ and we can show that $N^{-1/2}\mathbf{K}_1,\ldots,N^{-1/2}\mathbf{K}_8$ are all $o_P(1)$ in the below, we have $
|\mathbf{V}_{NT} - (\widehat{\pmb{\Lambda}}^\top\pmb{\Lambda}_{0}/N)(\mathbf{F}_{0}^\top\mathbf{F}_{0}/T)(\pmb{\Lambda}_{0}^\top\widehat{\pmb{\Lambda}}/N)|_F = o_P(1)$.

Given the invertibility of $\mathbf{V}_{NT}$, we have
$$
N^{-1/2}|\widehat{\pmb{\Lambda}} - \pmb{\Lambda}_{0}\mathbf{H}|_F =  N^{-1/2}(\mathbf{K}_1+\cdots + \mathbf{K}_8) \times O_P(1).
$$
We next prove the convergence rate of $\mathbf{K}_1,\ldots,\mathbf{K}_8$. For $\mathbf{K}_1$,
\begin{eqnarray*}
|N^{-1/2}\mathbf{K}_1|_F &\leq& \frac{\sqrt{r}}{NT}\sum_{t=1}^{T}|\mathbf{X}_{J,t}(\widehat{\pmb{\beta}}_J-\pmb{\beta}_{0,J})|_F^2=(\widehat{\pmb{\beta}}_J-\pmb{\beta}_{0,J})^\top \frac{\sqrt{r}}{NT}\sum_{t=1}^{T} \mathbf{X}_{J,t}^\top\mathbf{X}_{J,t}(\widehat{\pmb{\beta}}_J-\pmb{\beta}_{0,J}) \\
&\leq&|\widehat{\pmb{\beta}}_J-\pmb{\beta}_{0,J}|_F^2 \psi_{\max}\left(\frac{\sqrt{r}}{NT}\sum_{i=1}^{N}\sum_{t=1}^{T}\mathbf{x}_{J,it}\mathbf{x}_{J,it}^\top\right) = O_P(|\widehat{\pmb{\beta}}_J-\pmb{\beta}_{0,J}|_F^2).
\end{eqnarray*}
For $\mathbf{K}_2$, by using Cauchy-Schwarz inequality, we have
\begin{eqnarray*}
|T^{-1/2}\mathbf{K}_2|_F &\leq&\sqrt{r}\left\{ \frac{1}{NT}\sum_{t=1}^{T}|\mathbf{X}_{J,t}(\widehat{\pmb{\beta}}_J-\pmb{\beta}_{0,J})|_F^2\right\}^{1/2}\left\{ \frac{1}{NT}\sum_{t=1}^{T}|\pmb{\Lambda}_{0}\mathbf{f}_{0t}|_F^2\right\}^{1/2}\\
&=& O_P(|\widehat{\pmb{\beta}}_J-\pmb{\beta}_{0,J}|_F).
\end{eqnarray*}
Similar to the term $\mathbf{K}_2$, we can show that $\mathbf{K}_3,\ldots,\mathbf{K}_5$ are all $O_P(|\widehat{\pmb{\beta}}_J-\pmb{\beta}_{0,J}|_F)$. For $\mathbf{K}_6$ and $\mathbf{K}_7$, by using Lemma \ref{L.B0} (1), we have
\begin{eqnarray*}
\left|\frac{1}{NT}\sum_{t=1}^{T}\pmb{\Lambda}_{0}\mathbf{f}_{0t}\mathbf{e}_t^\top\right|_F^2&\leq& \frac{1}{N}\sum_{i=1}^{N}|\pmb{\lambda}_{0i}|_F^2\times \frac{1}{N}\sum_{j=1}^{N}\left|\frac{1}{T}\sum_{t=1}^{T}\mathbf{f}_{0t}e_{jt}\right|_F^2 =O_P(1/T).
\end{eqnarray*}
For $\mathbf{K}_8$, by using Lemma \ref{L.B0} (2), we have
\begin{eqnarray*}
\left|\frac{1}{NT}\sum_{t=1}^{T}\mathbf{e}_t\mathbf{e}_t^\top\right|_F&\leq&  \left|\frac{1}{NT}\sum_{t=1}^{T}(\mathbf{e}_t\mathbf{e}_t^\top-E(\mathbf{e}_t\mathbf{e}_t^\top))\right|_F + \frac{1}{N}|E(\mathbf{e}_t\mathbf{e}_t^\top)|_F\\
&\leq& \sqrt{\frac{1}{N^2}\sum_{i,j=1}^{N}\left[\frac{1}{T}\sum_{t=1}^{T}\left(e_{it}e_{jt}-E(e_{it}e_{jt})\right)\right]^2} + \frac{1}{\sqrt{N}}|E(\mathbf{e}_t\mathbf{e}_t^\top)|_2\\
&=& O_P(1/\sqrt{T}) + O_P(1/\sqrt{N})
\end{eqnarray*}
as $N^{-2}\sum_{i,j=1}^{N}E\left[\frac{1}{T}\sum_{t=1}^{T}\left(e_{it}e_{jt}-E(e_{it}e_{jt})\right)\right]^2 = O(1/T)$.

Combing the above analyses, we have proved part (1).

\smallskip

\noindent (2). We first consider the term $N^{-1}\pmb{\Lambda}_{0}^\top(\widehat{\pmb{\Lambda}} - \pmb{\Lambda}_{0}\mathbf{H})$, which can be decomposed into eight terms. Similar to the proof of part (1), it is easy to show the first five terms are each $O_P(|\widehat{\pmb{\beta}}_J-\pmb{\beta}_{0,J}|_F)$. Next, consider the last three terms. For the sixth term, by using Lemma \ref{L.B0} (3), we have
\begin{eqnarray*}
&&\left|\frac{1}{N}\pmb{\Lambda}_{0}^\top\pmb{\Lambda}_{0}\frac{1}{NT}\sum_{t=1}^{T}\mathbf{f}_{0t}\mathbf{e}_t^\top\widehat{\pmb{\Lambda}}\right|_F\\
&\leq&O_P(1)\left|\frac{1}{T\sqrt{N}}\sum_{t=1}^{T}\mathbf{f}_{0t}\mathbf{e}_t^\top\right|_FN^{-1/2}\left|\widehat{\pmb{\Lambda}}-\pmb{\Lambda}_{0}\mathbf{H}\right|_F + O_P(1)\left|\frac{1}{NT}\sum_{i=1}^{N}\sum_{t=1}^{T}\pmb{\lambda}_{0i}\mathbf{f}_{0t}^\top e_{it} \right|_F\\
&=&O_P(1/\sqrt{T}) O_P(|\widehat{\pmb{\beta}}_J-\pmb{\beta}_{0,J}|_F+\delta_{NT}^{-1}) +O_P(1/\sqrt{NT})\\
&=&O_P(\delta_{NT}^{-1}|\widehat{\pmb{\beta}}_J-\pmb{\beta}_{0,J}|_F+\delta_{NT}^{-2}).
\end{eqnarray*}
Similarly, for the seventh term, we have
$$
\left|\frac{1}{NT}\sum_{t=1}^{T}\pmb{\Lambda}_{0}^\top\mathbf{f}_{0t}\mathbf{e}_t^\top\frac{1}{N}\pmb{\Lambda}_{0}^\top\widehat{\pmb{\Lambda}}\right|_F
\leq O_P(1)\left|\frac{1}{NT}\sum_{i=1}^{N}\sum_{t=1}^{T}\pmb{\lambda}_{0i}\mathbf{f}_{0t}^\top e_{it} \right|_F=O_P(1/\sqrt{NT}).
$$
For the last term, by using Cauchy-Schwarz inequality and using Lemma \ref{L.B0} (5), we have
\begin{eqnarray*}
&&\left|\frac{1}{N^2T}\sum_{t=1}^{T}\pmb{\Lambda}_{0}^\top\mathbf{e}_t\mathbf{e}_t^\top \widehat{\pmb{\Lambda}}\right|_F\\
&\leq&|\mathbf{H}|_F\frac{1}{T}\sum_{t=1}^{T}\left|N^{-1}\pmb{\Lambda}_{0}^\top\mathbf{e}_t\right|_F^2 + \frac{1}{T}\sum_{t=1}^{T}\left|N^{-1}\pmb{\Lambda}_{0}^\top\mathbf{e}_t\right|_F|N^{-1/2}\mathbf{e}_t|_F N^{-1/2}|\widehat{\pmb{\Lambda}} - \pmb{\Lambda}_{0}\mathbf{H}|_F\\
&\leq&O_P(1/N) + \left(\frac{1}{T}\sum_{t=1}^{T}\left|N^{-1}\pmb{\Lambda}_{0}^\top\mathbf{e}_t\right|_F^2\right)^{1/2}\left(\frac{1}{T}\sum_{t=1}^{T}|N^{-1/2}\mathbf{e}_t|_F^2\right)^{1/2} N^{-1/2}|\widehat{\pmb{\Lambda}} - \pmb{\Lambda}_{0}\mathbf{H}|_F\\
&=&O_P(|\widehat{\pmb{\beta}}_J-\pmb{\beta}_{0,J}|_F\delta_{NT}^{-1}+\delta_{NT}^{-2}).
\end{eqnarray*}

For the term $N^{-1}\widehat{\pmb{\Lambda}}^\top(\widehat{\pmb{\Lambda}} - \pmb{\Lambda}_{0}\mathbf{H}) $, which is bounded by
$$
\left|N^{-1/2}(\widehat{\pmb{\Lambda}} - \pmb{\Lambda}_{0}\mathbf{H})\right|_F^2 + |\mathbf{H}|_F\left|N^{-1}\pmb{\Lambda}_{0}^\top(\widehat{\pmb{\Lambda}} - \pmb{\Lambda}_{0}\mathbf{H}) \right|_F = O_P(|\widehat{\pmb{\beta}}_J-\pmb{\beta}_{0,J}|_F+\delta_{NT}^{-2}).
$$
The proof is now complete.

\smallskip

\noindent (3). By using $\widehat{\pmb{\Lambda}}^\top\widehat{\pmb{\Lambda}}/N = \mathbf{I}_r$, we have
$$
\mathbf{I}_r=\widehat{\pmb{\Lambda}}^\top\widehat{\pmb{\Lambda}}/N = \frac{1}{N}\widehat{\pmb{\Lambda}}^\top(\widehat{\pmb{\Lambda}} - \pmb{\Lambda}_{0}\mathbf{H}) + \frac{1}{N}(\widehat{\pmb{\Lambda}} - \pmb{\Lambda}_{0}\mathbf{H})^\top\widehat{\pmb{\Lambda}} + \frac{1}{N}(\widehat{\pmb{\Lambda}} - \pmb{\Lambda}_{0}\mathbf{H})^\top(\widehat{\pmb{\Lambda}} - \pmb{\Lambda}_{0}\mathbf{H})+ \frac{1}{N}\mathbf{H}^\top\pmb{\Lambda}_{0}^\top \pmb{\Lambda}_{0}\mathbf{H},
$$
which implies that $\frac{1}{N}\pmb{\Lambda}_{0}^\top \pmb{\Lambda}_{0}/N = \mathbf{H}^{-1,\top}\mathbf{H}^{-1} + O_P(|\widehat{\pmb{\beta}}_J-\pmb{\beta}_{0,J}|_F +\delta_{NT}^{-1})$.

The proof is now complete.

\smallskip

\noindent (4). By using parts (1)-(3), we have
\begin{eqnarray*}
\mathbf{P}_{\widehat{\pmb{\Lambda}}} - \mathbf{P}_{\pmb{\Lambda}_{0}} &=& \frac{1}{N}(\widehat{\pmb{\Lambda}} - \pmb{\Lambda}_{0}\mathbf{H})(\widehat{\pmb{\Lambda}} - \pmb{\Lambda}_{0}\mathbf{H})^\top + \frac{1}{N} \pmb{\Lambda}_{0}\mathbf{H}(\widehat{\pmb{\Lambda}} - \pmb{\Lambda}_{0}\mathbf{H})^\top + \frac{1}{N}(\widehat{\pmb{\Lambda}} - \pmb{\Lambda}_{0}\mathbf{H})\mathbf{H}^\top\pmb{\Lambda}_{0}^\top\\
&& + \frac{\pmb{\Lambda}_{0}}{\sqrt{N}}\left(\mathbf{H}\mathbf{H}^\top - (\pmb{\Lambda}_{0}^\top \pmb{\Lambda}_{0}/N)^{-1}  \right) \frac{\pmb{\Lambda}_{0}^\top}{\sqrt{N}} = O_P(|\widehat{\pmb{\beta}}_J-\pmb{\beta}_{0,J}|_F +\delta_{NT}^{-1}).
\end{eqnarray*}

\smallskip

\noindent (5). Note that $N^{-1}\mathbf{X}_{J,t}^\top(\widehat{\pmb{\Lambda}} - \pmb{\Lambda}_{0}\mathbf{H}) = N^{-1}\mathbf{X}_{J,t}^\top(\mathbf{K}_1+\cdots+ \mathbf{K}_8)\mathbf{V}_{NT}^{-1}$. For $N^{-1}\mathbf{X}_{J,t}^\top\mathbf{K}_1$, we have
\begin{eqnarray*}
|N^{-1}\mathbf{X}_{J,t}^\top\mathbf{K}_1|_F&\leq& \sqrt{r}|N^{-1/2}\mathbf{X}_{J,t}|_F|\widehat{\pmb{\beta}}_J-\pmb{\beta}_{0,J}|_F^2 \psi_{\max}\left(\frac{1}{NT}\sum_{i=1}^{N}\sum_{t=1}^{T}\mathbf{x}_{J,it}\mathbf{x}_{J,it}^\top\right)\\
& =& O_P(\sqrt{s}|\widehat{\pmb{\beta}}_J-\pmb{\beta}_{0,J}|_F^2).
\end{eqnarray*}
For $N^{-1}\mathbf{X}_{J,t}^\top\mathbf{K}_2$, we have
\begin{eqnarray*}
&& |N^{-1}\mathbf{X}_{J,t}^\top\mathbf{K}_2|_F \nonumber \\
&\leq& |N^{-1/2}\mathbf{X}_{J,t}|_F |\pmb{\Lambda}_{0}^\top\widehat{\pmb{\Lambda}}/N|_F\left(\frac{1}{T\sqrt{N}}\sum_{t=1}^{T}|\mathbf{X}_{J,t}(\widehat{\pmb{\beta}}_J-\pmb{\beta}_{0,J})|_F^2\right)^{1/2}\left(\frac{1}{T\sqrt{N}}\sum_{t=1}^{T}|\mathbf{f}_{0t}|_F^2\right)^{1/2}\\
&=& O_P(\sqrt{s}|\widehat{\pmb{\beta}}_J-\pmb{\beta}_{0,J}|_F).
\end{eqnarray*}
Similarly, we can show that $N^{-1}\mathbf{X}_{J,t}^\top\mathbf{K}_3,\ldots,N^{-1}\mathbf{X}_{J,t}^\top\mathbf{K}_5$ are all $O_P(\sqrt{s}|\widehat{\pmb{\beta}}_J-\pmb{\beta}_{0,J}|_F)$. 

Consider the last three terms. For $N^{-1}\mathbf{X}_{J,t}^\top\mathbf{K}_6$, we have
\begin{eqnarray*}
|N^{-1}\mathbf{X}_{J,t}^\top\mathbf{K}_6|_F &\leq& |\mathbf{X}_{J,t}^\top\pmb{\Lambda}_{0}/N|_F \left|\frac{1}{NT}\sum_{i=1}^{N}\sum_{t=1}^{T}e_{it}\pmb{\lambda}_{0i}\mathbf{f}_{0t}^\top\right|_F|\mathbf{H}|_F \\
&& + |\mathbf{X}_{J,t}^\top\pmb{\Lambda}_{0}/N|_F \left|\frac{1}{T\sqrt{N}}\sum_{t=1}^{T}\mathbf{f}_{0t}\mathbf{e}_{t}^\top\right|_FN^{-1/2}|\widehat{\pmb{\Lambda}}-\pmb{\Lambda}_{0}\mathbf{H}|_F\\
&=& O_P(\sqrt{s/(NT)}) + O_P(\sqrt{s/T})O_P(|\widehat{\pmb{\beta}}_J-\pmb{\beta}_{0,J}|_F+\delta_{NT}^{-1}).
\end{eqnarray*}
For $N^{-1}\mathbf{X}_{J,t}^\top\mathbf{K}_7$, we have
\begin{eqnarray*}
N^{-1}\mathbf{X}_{J,t}^\top\mathbf{K}_7&\leq&  \left|\frac{1}{NT}\sum_{i=1}^{N}\sum_{s=1}^{T}\mathbf{x}_{J,it}e_{is}\mathbf{f}_{0s}^\top\right|_F  |\pmb{\Lambda}_{0}^\top\widehat{\pmb{\Lambda}}/N|_F\\
&=& O_P(\sqrt{s/(NT)}).
\end{eqnarray*}
For $N^{-1}\mathbf{X}_{J,t}^\top\mathbf{K}_8$, we have
\begin{eqnarray*}
N^{-1}\mathbf{X}_{J,t}^\top\mathbf{K}_8&\leq&  \frac{1}{T}\sum_{s=1}^{T}\left|\frac{1}{N}\sum_{i=1}^{N}\mathbf{x}_{J,it}e_{is}\right|_F \left|\frac{1}{N}\sum_{i=1}^{N}e_{is}\pmb{\lambda}_{0i} \right|_F |\mathbf{H}|_F\\
&& +|N^{-1/2}\mathbf{X}_{J,t}|_F \left|\frac{1}{NT}\sum_{t=1}^{T}\mathbf{e}_{t}\mathbf{e}_t^\top\right|_F N^{-1/2}|\widehat{\pmb{\Lambda}}-\pmb{\Lambda}_{0}\mathbf{H}|_F\\
&=& O_P(\sqrt{s}/N) + O_P(\sqrt{s}\delta_{NT}^{-1}|\widehat{\pmb{\beta}}_J-\pmb{\beta}_{0,J}|_F+\sqrt{s}\delta_{NT}^{-2}).
\end{eqnarray*}
The proof is now complete.

\smallskip

\noindent (6). Write
$$
\frac{1}{NT}\sum_{t=1}^{T}\mathbf{X}_{J,t}^\top\mathbf{M}_{\widehat{\pmb{\Lambda}}}(\widehat{\pmb{\Lambda}} - \pmb{\Lambda}_{0}\mathbf{H}) = \frac{1}{T}\sum_{t=1}^{T}\frac{1}{N}\mathbf{X}_{J,t}^\top(\widehat{\pmb{\Lambda}} - \pmb{\Lambda}_{0}\mathbf{H}) + \frac{1}{T}\sum_{t=1}^{T}\frac{1}{N}\mathbf{X}_{J,t}^\top\frac{1}{N}\widehat{\pmb{\Lambda}}\widehat{\pmb{\Lambda}}^\top(\widehat{\pmb{\Lambda}} - \pmb{\Lambda}_{0}\mathbf{H}).
$$
The first term on the right is an average of (3) over $t$ and thus it can be show that its order is of same magnitude. For the second term,
\begin{eqnarray*}
&&\left|\frac{1}{T}\sum_{t=1}^{T}\frac{1}{N}\mathbf{X}_{J,t}^\top\frac{1}{N}\widehat{\pmb{\Lambda}}\widehat{\pmb{\Lambda}}^\top(\widehat{\pmb{\Lambda}} - \pmb{\Lambda}_{0}\mathbf{H})\right|_F\\
&\leq&\frac{1}{T}\sum_{t=1}^{T}|\mathbf{X}_{J,t}/\sqrt{N}|_F|\widehat{\pmb{\Lambda}}/\sqrt{T}|_F|\frac{1}{N}\widehat{\pmb{\Lambda}}^\top(\widehat{\pmb{\Lambda}} - \pmb{\Lambda}_{0}\mathbf{H})|_F = O_P(\sqrt{s}|\widehat{\pmb{\beta}}_J-\pmb{\beta}_{0,J}|_F + \sqrt{s}\delta_{NT}^2).
\end{eqnarray*}
The proof is now complete.

\smallskip

\noindent (7). Note that $N^{-1}\mathbf{e}_{t}^\top(\widehat{\pmb{\Lambda}} - \pmb{\Lambda}_{0}\mathbf{H}) = N^{-1}\mathbf{e}_{t}^\top(\mathbf{K}_1+\cdots+ \mathbf{K}_8)\mathbf{V}_{NT}^{-1}$, we next consider these eight terms respectively. For $N^{-1}\mathbf{e}_{t}^\top\mathbf{K}_1$, by Cauchy-Schwarz inequality,
\begin{eqnarray*}
&&\left|N^{-1}\mathbf{e}_{t}^\top\mathbf{K}_1\right|_F \nonumber \\
&\leq&|N^{-1/2}\widehat{\pmb{\Lambda}}|_F|\widehat{\pmb{\beta}}_J - \pmb{\beta}_{0,J}|_F \left(\frac{1}{T}\sum_{t=1}^{T}|N^{-1}\mathbf{e}_{t}^\top\mathbf{X}_{J,t}|^2\right)^{1/2}\left(\frac{1}{NT}\sum_{t=1}^{T}|\mathbf{X}_{J,t}(\widehat{\pmb{\beta}}_J - \pmb{\beta}_{0,J})|^2\right)^{1/2}\\
&=&\sqrt{r}|\widehat{\pmb{\beta}}_J - \pmb{\beta}_{0,J}|_FO_P(\sqrt{s/N})O_P(|\widehat{\pmb{\beta}}_J - \pmb{\beta}_{0,J}|_F) = O_P(\sqrt{s/N}|\widehat{\pmb{\beta}}_J - \pmb{\beta}_{0,J}|_F^2).
\end{eqnarray*}
For $N^{-1}\mathbf{e}_{t}^\top\mathbf{K}_2$, by using Cauchy-Schwarz inequality,
\begin{eqnarray*}
\left|N^{-1}\mathbf{e}_{t}^\top\mathbf{K}_2\right|_F &\leq& |\pmb{\Lambda}_{0}^\top\widehat{\pmb{\Lambda}}/N|_F|\widehat{\pmb{\beta}}_J - \pmb{\beta}_{0,J}|_F\frac{1}{T}\sum_{s=1}^{T}\left|\frac{1}{N}\sum_{i=1}^{N}\mathbf{x}_{J,is}e_{it}\right|_F|\mathbf{f}_{0s}|_F\\
&\leq& O_P\left(|\widehat{\pmb{\beta}}_J - \pmb{\beta}_{0,J}|_F\right)\left\{\frac{1}{T}\sum_{s=1}^{T}\left|\frac{1}{N}\sum_{i=1}^{N}\mathbf{x}_{J,is}e_{it}\right|_F^2\right\}^{1/2}\left\{\frac{1}{T}\sum_{s=1}^{T}|\mathbf{f}_{0s}|_F^2\right\}^{1/2}\\
&=&O_P\left(\sqrt{s/N}|\widehat{\pmb{\beta}}_J - \pmb{\beta}_{0,J}|_F\right).
\end{eqnarray*}
Similarly, we can show that $N^{-1}\mathbf{e}_{t}^\top\mathbf{K}_3,\ldots,N^{-1}\mathbf{e}_{t}^\top\mathbf{K}_5$ are both $O_P(\sqrt{s/N} |\widehat{\pmb{\beta}}_J - \pmb{\beta}_{0,J}|_F)$. For $\left|N^{-1}\mathbf{e}_{t}^\top\mathbf{K}_6\right|_F$, we have
\begin{eqnarray*}
\left|N^{-1}\mathbf{e}_{t}^\top\mathbf{K}_6\right|_F&\leq& |\mathbf{e}_t^\top\pmb{\Lambda}_{0}/N|_F \left(\left|\frac{1}{NT}\sum_{t=1}^{T}\mathbf{f}_{0t}\mathbf{e}_t^\top \pmb{\Lambda}_{0}\right|_F|\mathbf{H}|_F + \left|\frac{1}{NT}\sum_{t=1}^{T}\mathbf{f}_{0t}\mathbf{e}_t^\top \right|_F|\widehat{\pmb{\Lambda}}-\pmb{\Lambda}_{0}\mathbf{H}|_F\right)\\
&=&O_P(1/\sqrt{N})\left(O_P(1/\sqrt{T}) + O_P(1/\sqrt{T})O_P(|\widehat{\pmb{\beta}}_J-\pmb{\beta}_{0,J}|_F+\delta_{NT}^{-1}) \right)\\
&=&O_P(\widehat{\pmb{\beta}}_J-\pmb{\beta}_{0,J}|_F\delta_{NT}^{-1}+\delta_{NT}^{-2}).
\end{eqnarray*}
Similarly, we can show the next two terms are both $O_P(|\widehat{\pmb{\beta}}_J-\pmb{\beta}_{0,J}|_F\delta_{NT}^{-1}+\delta_{NT}^{-2})$.

The proof is now complete.

\smallskip

\noindent (8). Note that $N^{-1}T^{-1/2}\sum_{t=1}^{T}\mathbf{e}_{t}^\top(\widehat{\pmb{\Lambda}} - \pmb{\Lambda}_{0}\mathbf{H}) = N^{-1}T^{-1/2}\sum_{t=1}^{T}\mathbf{e}_{t}^\top(\mathbf{K}_1+\cdots+ \mathbf{K}_8)\widehat{\mathbf{V}}_{NT}^{-1}$.

For the first term, by using Cauchy-Schwarz inequality, 
\begin{eqnarray*}
&& |N^{-1}T^{-1/2}\sum_{t=1}^{T}\mathbf{e}_{t}^\top\mathbf{K}_1|_F\\
&\leq&\frac{1}{NT}\sum_{s=1}^{T}\left|(NT)^{-1/2}\sum_{t=1}^{T}\mathbf{e}_{t}^\top \mathbf{X}_{J,s}\right|_F |\widehat{\pmb{\beta}}_J - \pmb{\beta}_{0,J}|_F|\mathbf{X}_{J,s}(\widehat{\pmb{\beta}}_J - \pmb{\beta}_{0,J})|_F|N^{-1/2}\widehat{\pmb{\Lambda}}|_F\\
&\leq& \sqrt{r}|\widehat{\pmb{\beta}}_J - \pmb{\beta}_{0,J}|_F\left\{\frac{1}{NT}\sum_{s=1}^{T}\left|(NT)^{-1/2}\sum_{i=1}^{N}\sum_{t=1}^{T} \mathbf{x}_{J,is}e_{it}\right|_F^2\right\}^{1/2}\left\{\frac{1}{NT}\sum_{s=1}^{T}\left|\mathbf{X}_{J,s}(\widehat{\pmb{\beta}}_J - \pmb{\beta}_{0,J})\right|_F^2\right\}^{1/2}\\
&=&\sqrt{r}|\widehat{\pmb{\beta}}_J - \pmb{\beta}_{0,J}|_FO_P(\sqrt{s/N})O_P(|\widehat{\pmb{\beta}}_J - \pmb{\beta}_{0,J}|_F)=O_P(\sqrt{s/N}|\widehat{\pmb{\beta}}_J - \pmb{\beta}_{0,J}|_F^2).
\end{eqnarray*}
For the second term, we have
\begin{eqnarray*}
|N^{-1}T^{-1/2}\sum_{t=1}^{T}\mathbf{e}_{t}^\top\mathbf{K}_2|_F
&\leq& N^{-1/2}|\widehat{\pmb{\beta}}_J - \pmb{\beta}_{0,J}|_F\frac{1}{T}\sum_{t=1}^{T}\left|(NT)^{-1/2}\sum_{i=1}^{N}\sum_{s=1}^{T} \mathbf{x}_{J,it}e_{is}\right|_F |\mathbf{f}_{0t}|_F |\pmb{\Lambda}_{0}^\top\widehat{\pmb{\Lambda}}/N|_F\\
&=& O_P(\sqrt{s/N}|\widehat{\pmb{\beta}}_J - \pmb{\beta}_{0,J}|_F).
\end{eqnarray*}
Similarly, we have $N^{-1}T^{-1/2}\sum_{t=1}^{T}\mathbf{e}_{t}^\top\mathbf{K}_3=O_P(\sqrt{s/N}|\widehat{\pmb{\beta}}_J - \pmb{\beta}_{0,J}|_F)$ and $N^{-1}T^{-1/2}\sum_{t=1}^{T}\mathbf{e}_{t}^\top\mathbf{K}_4=O_P(\sqrt{s/N}|\widehat{\pmb{\beta}}_J - \pmb{\beta}_{0,J}|_F)$. 

For the fifth term, by using Cauchy-Schwarz inequality, 
\begin{eqnarray*}
&& |N^{-1}T^{-1/2}\sum_{t=1}^{T}\mathbf{e}_{t}^\top\mathbf{K}_5|_F\\
&\leq&\frac{1}{\sqrt{T}} \frac{1}{N}\sum_{i=1}^{N} \left|\frac{1}{\sqrt{T}}\sum_{t=1}^{T}e_{it}\right|_F\left|\frac{1}{\sqrt{T}}\sum_{s=1}^{T}e_{is}\frac{1}{N}\sum_{k=1}^{N}\mathbf{x}_{J,ks}\pmb{\lambda}_{0k}^\top\right|_F|\mathbf{H}|_F |\widehat{\pmb{\beta}}_J-\pmb{\beta}|_F\\
&&+\frac{1}{\sqrt{T}}\frac{1}{N}\sum_{i=1}^{N} \left|\frac{1}{\sqrt{T}}\sum_{t=1}^{T}e_{it}\right|_F\left|\frac{1}{N}\sum_{k=1}^{N}\mathbf{x}_{J,ks}\left(\frac{1}{\sqrt{T}}\sum_{s=1}^{T}e_{is}\right)(\widehat{\pmb{\lambda}}_k-\mathbf{H}^\top\pmb{\lambda}_{0k})^\top\right|_F |\widehat{\pmb{\beta}}_J-\pmb{\beta}|_F\\
&\leq&O_P(|\widehat{\pmb{\beta}}_J-\pmb{\beta}|_F)\frac{1}{\sqrt{T}} \left\{\frac{1}{N}\sum_{i=1}^{N} \left|\frac{1}{\sqrt{T}}\sum_{t=1}^{T}e_{it}\right|_F^2\right\}^{1/2}\left\{\frac{1}{N}\sum_{i=1}^{N}\left|\frac{1}{\sqrt{T}}\sum_{s=1}^{T}e_{is}\frac{1}{N}\sum_{k=1}^{N}\mathbf{x}_{J,ks}\pmb{\lambda}_{0k}^\top\right|_F^2\right\}^{1/2}\\
&&+O_P(|\widehat{\pmb{\beta}}_J-\pmb{\beta}|_F)\frac{1}{\sqrt{T}}\left\{\frac{1}{N}\sum_{i=1}^{N} \left|\frac{1}{\sqrt{T}}\sum_{t=1}^{T}e_{it}\right|_F^2\right\}^{1/2}\left\{\frac{1}{N}\sum_{k=1}^{N}\frac{1}{N}\sum_{i=1}^{N}\left|\frac{1}{\sqrt{T}}\sum_{s=1}^{T}\mathbf{x}_{J,ks}e_{is}\right|_F^2\right\}^{1/2}\\
&&\times\left\{\frac{1}{N}\sum_{k=1}^{N}|\widehat{\pmb{\lambda}}_k-\mathbf{H}^\top\pmb{\lambda}_{0k}|_F^2\right\}^{1/2}\\
&=& O_P(\sqrt{s/T}|\widehat{\pmb{\beta}}_J-\pmb{\beta}|_F).
\end{eqnarray*}
For the sixth term,
\begin{eqnarray*}
&& |N^{-1}T^{-1/2}\sum_{t=1}^{T}\mathbf{e}_{t}^\top\mathbf{K}_6|_F\\
&\leq& \left|\frac{1}{N\sqrt{T}}\left(\frac{1}{\sqrt{NT}}\sum_{i=1}^{N}\sum_{t=1}^{T}\pmb{\lambda}_{0i}^\top e_{it}\right)\left(\frac{1}{\sqrt{NT}}\sum_{i=1}^{N}\sum_{s=1}^{T}\pmb{\lambda}_{0i}\mathbf{f}_{0s}^\top e_{is} \right) \right|_F|\mathbf{H}|_F\\
&&+\frac{1}{\sqrt{N}}\left|\frac{1}{\sqrt{NT}}\sum_{i=1}^{N}\sum_{t=1}^{T}\pmb{\lambda}_{0i}^\top e_{it}\right|_F \left| \frac{1}{T\sqrt{N}}\sum_{s=1}^{T}\mathbf{f}_{0s} \mathbf{e}_{s}^\top\right|_F N^{-1/2}|\widehat{\pmb{\Lambda}}-\pmb{\Lambda}_{0}\mathbf{H}|_F\\
&=&O_P(N^{-1}T^{-1/2}+N^{-1/2}|\widehat{\pmb{\beta}}_J-\pmb{\beta}_{0,J}|_F+\delta_{NT}^{-2}).
\end{eqnarray*}
For the seventh term,
\begin{eqnarray*}
&&|N^{-1}T^{-1/2}\sum_{t=1}^{T}\mathbf{e}_{t}^\top\mathbf{K}_7|_F\\
&\leq& T^{-1/2} \frac{1}{N}\sum_{i=1}^{N}\left|\frac{1}{\sqrt{T}}\sum_{t=1}^{T}e_{it}\right|_F\left|\frac{1}{\sqrt{T}}\sum_{s=1}^{T}e_{is}\mathbf{f}_{0s}\right|_F |\pmb{\Lambda}_{0}^\top\widehat{\pmb{\Lambda}}/N|_F=O_P(1/\sqrt{T}).
\end{eqnarray*}
For the last term, by using Cauchy-Schwarz inequality, we have
\begin{eqnarray*}
&&\left|N^{-1}T^{-1/2}\sum_{t=1}^{T}\mathbf{e}_{t}^\top\frac{1}{NT}\sum_{s=1}^{T}\mathbf{e}_s\mathbf{e}_s^\top\widehat{\pmb{\Lambda}}\right|_F\\
&\leq&\frac{1}{NT}\sum_{s=1}^{T}\left|\frac{1}{\sqrt{NT}}\sum_{i=1}^{N}\sum_{t=1}^{T}(e_{it}e_{is}-E(e_{it}e_{is}))\right|_F\left|\frac{1}{\sqrt{N}}\sum_{i=1}^{N}e_{is}\pmb{\lambda}_{0i} \right|_F|\mathbf{H}|_F\\
&&+\frac{1}{\sqrt{NT}}\frac{1}{T}\sum_{s=1}^{T}\left|\frac{1}{N}\sum_{i=1}^{N}\sum_{t=1}^{T}E(e_{it}e_{is})\right|_F\left|\frac{1}{\sqrt{N}}\sum_{i=1}^{N}e_{is}\pmb{\lambda}_{0i} \right|_F|\mathbf{H}|_F\\
&&+N^{-1/2}\left(\frac{1}{T}\sum_{s=1}^{T}\left|\frac{1}{\sqrt{NT}}\sum_{i=1}^{N}\sum_{t=1}^{T}(e_{it}e_{is}-E(e_{it}e_{is}))\right|_F^2\right)^{1/2}\left(\frac{1}{T}\sum_{t=1}^{T}|\mathbf{e}_{t}|_F^2/N \right)^{1/2}N^{-1/2}|\widehat{\pmb{\Lambda}}-\pmb{\Lambda}_{0}\mathbf{H}|_F\\
&&+\frac{1}{\sqrt{T}}\frac{1}{T}\sum_{s=1}^{T}\left|\frac{1}{N}\sum_{i=1}^{N}\sum_{t=1}^{T}E(e_{it}e_{is})\right|_F\left|\mathbf{e}_{s}/\sqrt{N} \right|_FN^{-1/2}|\widehat{\pmb{\Lambda}}-\pmb{\Lambda}_{0}\mathbf{H}|_F\\
&=&O_P(1/N) + O_P(1/\sqrt{NT})+O_P(N^{-1/2}|\widehat{\pmb{\beta}}_J-\pmb{\beta}_{0,J}|_F+\delta_{NT}^{-2})+O_P(T^{-1/2}|\widehat{\pmb{\beta}}_J-\pmb{\beta}_{0,J}|_F+\delta_{NT}^{-2}).
\end{eqnarray*}

Combining the above analysis, the proof is now complete.

\smallskip

\noindent (9). The proof of part (9) is similar to that of part (8). The details are omitted.

\smallskip

\noindent (10). The proof of part (10) is similar to that of part (8). The details are omitted.
\end{proof}

\begin{proof}[Proof of Lemma \ref{L.B2}]
\item 
\noindent (1). Write
\begin{eqnarray*}
&&\frac{1}{N^2T^2}\sum_{t=1}^{T}\sum_{s=1}^{T}\mathbf{X}_{J,t}^\top\mathbf{M}_{\widehat{\pmb{\Lambda}}}(\mathbf{e}_s\mathbf{e}_s^\top-\pmb{\Omega}_e)\widehat{\pmb{\Lambda}}\mathbf{G}\mathbf{f}_{0t}\\
&=&\frac{1}{N^2T^2}\sum_{t=1}^{T}\sum_{s=1}^{T}\mathbf{X}_{J,t}^\top(\mathbf{e}_s\mathbf{e}_s^\top-\pmb{\Omega}_e)\pmb{\Lambda}_{0}\mathbf{H}\mathbf{G}\mathbf{f}_{0t} + \frac{1}{N^2T^2}\sum_{t=1}^{T}\sum_{s=1}^{T}\mathbf{X}_{J,t}^\top(\mathbf{e}_s\mathbf{e}_s^\top-\pmb{\Omega}_e)(\widehat{\pmb{\Lambda}}-\pmb{\Lambda}_{0}\mathbf{H})\mathbf{G}\mathbf{f}_{0t}\\
&& - \frac{1}{T}\sum_{t=1}^{T}\mathbf{X}_{J,t}^\top\widehat{\pmb{\Lambda}}/N\frac{1}{N^2T}\sum_{s=1}^{T}\widehat{\pmb{\Lambda}}^\top(\mathbf{e}_s\mathbf{e}_s^\top-\pmb{\Omega}_e)\widehat{\pmb{\Lambda}}\mathbf{G}\mathbf{f}_{0t}\\
&\eqqcolon &\mathbf{K}_{10} + \mathbf{K}_{11} + \mathbf{K}_{12}.
\end{eqnarray*}

For $\mathbf{K}_{10}$, we have
\begin{eqnarray*}
\mathbf{K}_{10}&=&\frac{1}{N^2T^2}\sum_{i=1}^{N}\sum_{j=1}^{N}\sum_{t=1}^{T}\sum_{s=1}^{T}\mathbf{x}_{J,it}({e}_{is}{e}_{js}-E({e}_{is}{e}_{js}))\pmb{\lambda}_{0j}^\top\mathbf{H}\mathbf{G}\mathbf{f}_{0t}\\
&=& \frac{1}{N\sqrt{T}}\frac{1}{T}\sum_{t=1}^{T}\left\{\frac{1}{\sqrt{T}}\sum_{s=1}^{T}\frac{1}{N}\sum_{i=1}^{N}\sum_{j=1}^{N}\mathbf{x}_{J,it}({e}_{is}{e}_{js}-E({e}_{is}{e}_{js}))\pmb{\lambda}_{0j}^\top\right\}\mathbf{H}\mathbf{G}\mathbf{f}_{0t}\\
&=& O_P(\sqrt{s}/(N\sqrt{T})).
\end{eqnarray*}

For $\mathbf{K}_{11}$, by using Cauchy-Schwarz inequality, we have
\begin{eqnarray*}
|\mathbf{K}_{11}|_F &=& \left|\frac{1}{\sqrt{NT}}\frac{1}{T}\sum_{t=1}^{T}\frac{1}{N}\sum_{j=1}^{N}\left\{\frac{1}{\sqrt{NT}}\sum_{s=1}^{T}\sum_{i=1}^{N}\mathbf{x}_{J,it}({e}_{is}{e}_{js}-E({e}_{is}{e}_{js}))\right\}(\widehat{\pmb{\lambda}}_j-\mathbf{H}^\top\pmb{\lambda}_{0j})^\top\mathbf{H}\mathbf{G}\mathbf{f}_{0t}\right|_F\\
&\leq&O_P(1/\sqrt{NT})\left\{\frac{1}{T}\sum_{t=1}^{T}\frac{1}{N}\sum_{j=1}^{N}\left|\frac{1}{\sqrt{NT}}\sum_{s=1}^{T}\sum_{i=1}^{N}\mathbf{x}_{J,it}({e}_{is}{e}_{js}-E({e}_{is}{e}_{js}))\right|_F^2\right\}^{1/2}\\
&&\times\left\{\frac{1}{N}\sum_{j=1}^{N}|\widehat{\pmb{\lambda}}_j-\mathbf{H}^\top\pmb{\lambda}_{0j}|_F^2\right\}^{1/2} \\
&=&O_P(\sqrt{s/NT}|\widehat{\pmb{\beta}}_J - \pmb{\beta}_{0,J}|_F + \sqrt{s/NT}\delta_{NT}^{-1}). 
\end{eqnarray*}

For $\mathbf{K}_{12}$, as $|\mathbf{X}_{J,t}/\sqrt{N}|_F=O_P(\sqrt{s})$ and $|\widehat{\pmb{\Lambda}}/\sqrt{N}|_F=\sqrt{r}$, we have
\begin{eqnarray*}
|\mathbf{K}_{12}|_F&\leq& \frac{1}{T}\sum_{t=1}^{T}\left|\mathbf{X}_{J,t}^\top\widehat{\pmb{\Lambda}}/N\right|_F\left|\mathbf{G}\mathbf{f}_{0t}\right|_F\left|\frac{1}{N^2T}\sum_{s=1}^{T}\widehat{\pmb{\Lambda}}^\top(\mathbf{e}_s\mathbf{e}_s^\top-\pmb{\Omega}_e)\widehat{\pmb{\Lambda}}\right|_F \\
&\leq&O_P(\sqrt{s})\left|\frac{1}{N^2T}\sum_{s=1}^{T}\widehat{\pmb{\Lambda}}^\top(\mathbf{e}_s\mathbf{e}_s^\top-\pmb{\Omega}_e)\widehat{\pmb{\Lambda}}\right|_F\\
&\leq&O_P(\sqrt{s})\left|\frac{1}{N^2T}\sum_{s=1}^{T}\pmb{\Lambda}_{0}^\top(\mathbf{e}_s\mathbf{e}_s^\top-\pmb{\Omega}_e)\pmb{\Lambda}_{0}\right|_F+O_P(\sqrt{s})\left|\frac{1}{N^2T}\sum_{s=1}^{T}\pmb{\Lambda}_{0}^\top(\mathbf{e}_s\mathbf{e}_s^\top-\pmb{\Omega}_e)(\widehat{\pmb{\Lambda}}-\pmb{\Lambda}_{0}\mathbf{H})\right|_F\\
&&+O_P(\sqrt{s})\left|\frac{1}{N^2T}\sum_{s=1}^{T}(\widehat{\pmb{\Lambda}}-\pmb{\Lambda}_{0}\mathbf{H})^\top(\mathbf{e}_s\mathbf{e}_s^\top-\pmb{\Omega}_e)(\widehat{\pmb{\Lambda}}-\pmb{\Lambda}_{0}\mathbf{H})\right|_F\\
&\eqqcolon &O_P(\sqrt{s})(\mathbf{K}_{12,1}+\mathbf{K}_{12,2}+\mathbf{K}_{12,3}).
\end{eqnarray*}
For $\mathbf{K}_{12,1}$, we have
\begin{eqnarray*}
&&\frac{1}{N^2T}\sum_{s=1}^{T}\pmb{\Lambda}_{0}^\top(\mathbf{e}_s\mathbf{e}_s^\top-\pmb{\Omega}_e)\pmb{\Lambda}_{0}\\
&=& \frac{1}{N\sqrt{T}}\frac{1}{N\sqrt{T}}\sum_{s=1}^{T}\sum_{k=1}^{N}\sum_{i=1}^{N}\pmb{\lambda}_{0k}\pmb{\lambda}_{0i}^\top({e}_{is}{e}_{ks}-E({e}_{is}{e}_{ks})) =O_P\left(\frac{1}{N\sqrt{T}}\right).
\end{eqnarray*}
For $\mathbf{K}_{12,2}$, by using Cauchy-Schwarz inequality, we have
\begin{eqnarray*}
|\mathbf{K}_{12,2}|_F&=& \left|\frac{1}{\sqrt{NT}}\frac{1}{N}\sum_{j=1}^{N}\left\{\frac{1}{\sqrt{NT}}\sum_{i=1}^{N}\sum_{s=1}^{T}\pmb{\lambda}_{0i}({e}_{is}{e}_{js}-E({e}_{is}{e}_{js}))\right\}  (\widehat{\pmb{\lambda}}_j-\mathbf{H}^\top\pmb{\lambda}_{0j})^\top \right|_F \\
&\leq&\frac{1}{\sqrt{NT}}\left\{ \frac{1}{N}\sum_{j=1}^{N}\left|\frac{1}{\sqrt{NT}}\sum_{i=1}^{N}\sum_{s=1}^{T}\pmb{\lambda}_{0i}({e}_{is}{e}_{js}-E({e}_{is}{e}_{js}))\right|_F^2 \right\}^{1/2} \left\{ \frac{1}{N}\sum_{j=1}^{N}\left|\widehat{\pmb{\lambda}}_j-\mathbf{H}^\top\pmb{\lambda}_{0j}\right|_F^2\right\}^{1/2} \\
&=& \frac{1}{\sqrt{NT}}O_P(|\widehat{\pmb{\beta}}_J - \pmb{\beta}_{0,J}|_F+\delta_{NT}^{-1}).
\end{eqnarray*}
For $\mathbf{K}_{12,3}$, by using Cauchy-Schwarz inequality, we have
\begin{eqnarray*}
|\mathbf{K}_{12,3}|_F &=& \left|\frac{1}{\sqrt{T}}\frac{1}{N^2}\sum_{i=1}^{N}\sum_{j=1}^{N}(\widehat{\pmb{\lambda}}_j-\mathbf{H}^\top\pmb{\lambda}_{0j})(\widehat{\pmb{\lambda}}_i-\mathbf{H}^\top\pmb{\lambda}_{0i})^\top\left\{\frac{1}{\sqrt{T}}\sum_{t=1}^{T}({e}_{it}{e}_{jt}-E({e}_{it}{e}_{jt}))\right\}\right|_F\\
&\leq& \frac{1}{\sqrt{T}}\left(\frac{1}{N}\sum_{j=1}^{N}|\widehat{\pmb{\lambda}}_j-\mathbf{H}^\top\pmb{\lambda}_{0j}|_F^2\right)\left\{ \frac{1}{N^2}\sum_{i=1}^{N}\sum_{j=1}^{N}\left|\frac{1}{\sqrt{T}}\sum_{t=1}^{T}({e}_{it}{e}_{jt}-E({e}_{it}{e}_{jt}))\right|_F^2\right\}^{1/2}\\
&=&\frac{1}{\sqrt{T}}O_P(|\widehat{\pmb{\beta}}_J - \pmb{\beta}_{0,J}|_F^2+\delta_{NT}^{-2}).
\end{eqnarray*}
Combing the above analyses, we have proved part (1).

\smallskip

\noindent (2). Note that
$$
\left(\frac{1}{NT}\sum_{t=1}^{T}\mathbf{X}_{J,t}^\top \mathbf{M}_{\widehat{\pmb{\Lambda}}}\mathbf{X}_{J,t}\right)(\widehat{\pmb{\beta}}_J-\pmb{\beta}_{0,J}) = \frac{1}{NT}\sum_{t=1}^{T}\mathbf{X}_{J,t}^\top \mathbf{M}_{\widehat{\pmb{\Lambda}}}\pmb{\Lambda}_{0}\mathbf{f}_{0t} + \frac{1}{NT}\sum_{t=1}^{T}\mathbf{X}_{J,t}^\top \mathbf{M}_{\widehat{\pmb{\Lambda}}}\mathbf{e}_t.
$$
Consider $\frac{1}{NT}\sum_{t=1}^{T}\mathbf{X}_{J,t}^\top \mathbf{M}_{\widehat{\pmb{\Lambda}}}\pmb{\Lambda}_{0}\mathbf{f}_{0t}$, since $\mathbf{M}_{\widehat{\pmb{\Lambda}}}\widehat{\pmb{\Lambda}}\mathbf{H}^{-1} = \mathbf{0}$, we have
\begin{eqnarray*}
\frac{1}{NT}\sum_{t=1}^{T}\mathbf{X}_{J,t}^\top \mathbf{M}_{\widehat{\pmb{\Lambda}}}\pmb{\Lambda}_{0}\mathbf{f}_{0t}
&=& - \frac{1}{NT}\sum_{t=1}^{T}\mathbf{X}_{J,t}^\top \mathbf{M}_{\widehat{\pmb{\Lambda}}} \left(\mathbf{K}_1+\cdots+\mathbf{K}_8\right) \mathbf{G}\mathbf{f}_{0t}\\
&\eqqcolon & \mathbf{K}_{13} + \cdots \mathbf{K}_{20}.
\end{eqnarray*}
For $\mathbf{K}_{13}$, since $|\mathbf{X}_{J,t}^\top \mathbf{M}_{\widehat{\pmb{\Lambda}}}|_F\leq |\mathbf{M}_{\widehat{\pmb{\Lambda}}}|_2 |\mathbf{X}_{J,t}|_F=O_P\left(\sqrt{Ns}\right)$, we have $\mathbf{K}_{13} = O_P\left(\sqrt{s}|\widehat{\pmb{\beta}}_J - \pmb{\beta}_{0,J}|_F^2\right)$. For $\mathbf{K}_{14}$, we have
\begin{eqnarray*}
\mathbf{K}_{14}&=& \frac{1}{N} \left[ \frac{1}{T^2}\sum_{t=1}^{T}\sum_{s=1}^{T}\mathbf{X}_{J,t}^\top\mathbf{M}_{\widehat{\pmb{\Lambda}}}\mathbf{X}_{J,s}a_{st}\right] (\widehat{\pmb{\beta}}_J - \pmb{\beta}_{0,J}).
\end{eqnarray*}
For $\mathbf{K}_{15}$, by using Cauchy-Schwarz inequality and $|\mathbf{M}_{\widehat{\pmb{\Lambda}}}|_2 = 1$, we have
\begin{eqnarray*}
|\mathbf{K}_{15}|_F &=& \left|\frac{1}{NT^2}\sum_{t=1}^{T}\sum_{s=1}^{T}\mathbf{X}_{J,s}^\top\mathbf{M}_{\widehat{\pmb{\Lambda}}}\mathbf{X}_{J,t}(\widehat{\pmb{\beta}}_J - \pmb{\beta}_{0,J})\left(\frac{\mathbf{e}_t^\top\widehat{\pmb{\Lambda}}}{N} \right) \mathbf{G}\mathbf{f}_{0s}\right|_F\\
&\leq&|\mathbf{G}|_F\left(\frac{1}{T\sqrt{N}}\sum_{s=1}^{T}\left|\mathbf{X}_{J,s}\right|_F|\mathbf{f}_{0s}|_{F} \right) \left\{\frac{1}{NT}\sum_{t=1}^{T}\left|\mathbf{X}_{J,t}(\widehat{\pmb{\beta}}_J - \pmb{\beta}_{0,J})\right|_F^2\right\}^{1/2}\left\{\frac{1}{T}\sum_{t=1}^{T}\left|\frac{\mathbf{e}_t^\top\widehat{\pmb{\Lambda}}}{N} \right|_F^2\right\}^{1/2}\\
&\leq&O_P(\sqrt{s}|\widehat{\pmb{\beta}}_J - \pmb{\beta}_{0,J}|_F)\left(\frac{1}{T}\sum_{t=1}^{T}\left|\frac{\mathbf{e}_t^\top\pmb{\Lambda}_{0}}{N}\right|_F+\frac{1}{T}\sum_{t=t}^{T}\left|\mathbf{e}_t/\sqrt{N}\right|_F\left|(\widehat{\pmb{\Lambda}}-\pmb{\Lambda}_{0}\mathbf{H})/\sqrt{N}\right|_F\right)\\
&=&O_P(\sqrt{s}\delta_{NT}^{-1}|\widehat{\pmb{\beta}}_J - \pmb{\beta}_{0,J}|_F+\sqrt{s}|\widehat{\pmb{\beta}}_J - \pmb{\beta}_{0,J}|_F^2).
\end{eqnarray*}
For $\mathbf{K}_{16}$, since $\mathbf{M}_{\widehat{\pmb{\Lambda}}}\widehat{\pmb{\Lambda}} = \mathbf{0}$, we have
\begin{eqnarray*}
|\mathbf{K}_{16}|_F &=& \left|\frac{1}{T^2N^2}\sum_{t=1}^{T}\sum_{s=1}^{T}\mathbf{X}_{J,t}^\top\mathbf{M}_{\widehat{\pmb{\Lambda}}}(\pmb{\Lambda}_{0}-\widehat{\pmb{\Lambda}}\mathbf{H}^{-1})\mathbf{f}_{0s}(\widehat{\pmb{\beta}}_J - \pmb{\beta}_{0,J})^\top\mathbf{X}_{J,s}^\top\widehat{\pmb{\Lambda}} \mathbf{G}\mathbf{f}_{0t}\right|_F\\
&\leq&\frac{\sqrt{r}}{\sqrt{N}}|\pmb{\Lambda}_{0}-\widehat{\pmb{\Lambda}}\mathbf{H}^{-1}|_F|\mathbf{G}|_F\left(\frac{1}{T\sqrt{N}}\sum_{t=1}^{T}\left|\mathbf{X}_{J,t}\right|_F|\mathbf{f}_{0t}|_{F} \right) \left(\frac{1}{T\sqrt{N}}\sum_{s=1}^{T}|\mathbf{X}_{J,s}(\widehat{\pmb{\beta}}_J - \pmb{\beta}_{0,J})|_F\left|\mathbf{f}_{0s}\right|_F\right)\\
&=&O_P(\sqrt{s}|\widehat{\pmb{\beta}}_J - \pmb{\beta}_{0,J}|_F^2 + \sqrt{s}|\widehat{\pmb{\beta}}_J - \pmb{\beta}_{0,J}|_F\delta_{NT}^{-1}).
\end{eqnarray*}
For $\mathbf{K}_{17}$, by using Cauchy-Schwarz inequality
\begin{eqnarray*}
|\mathbf{K}_{17}|_F&=& \left|\frac{1}{N^2T^2}\sum_{s=1}^{T}\left(\sum_{t=1}^{T}\mathbf{X}_{J,t}^\top\mathbf{M}_{\widehat{\pmb{\Lambda}}}\mathbf{e}_{s}\mathbf{f}_{0t}^\top\right)(\mathbf{F}_{0}^\top\mathbf{F}_{0}/T)^{-1}(\widehat{\pmb{\Lambda}}^\top\pmb{\Lambda}_{0}/N)^{-1}\widehat{\pmb{\Lambda}}^\top\mathbf{X}_{J,s}(\widehat{\pmb{\beta}}_J - \pmb{\beta}_{0,J}) \right|_F\\
&\leq&O_P(1)\frac{1}{T}\sum_{s=1}^{T}\left|\frac{1}{NT}\sum_{t=1}^{T}\mathbf{X}_{J,t}^\top\mathbf{M}_{\widehat{\pmb{\Lambda}}}\mathbf{e}_{s}\mathbf{f}_{0t}^\top\right|_F\left|\frac{1}{\sqrt{N}}\mathbf{X}_{J,s}(\widehat{\pmb{\beta}}_J - \pmb{\beta}_{0,J}) \right|_F \\
&\leq&O_P(1)\left\{\frac{1}{T}\sum_{s=1}^{T}\left|\frac{1}{NT}\sum_{t=1}^{T}\mathbf{X}_{J,t}^\top\mathbf{M}_{\widehat{\pmb{\Lambda}}}\mathbf{e}_{s}\mathbf{f}_{0t}^\top\right|_F^2\right\}^{1/2}\left\{\frac{1}{NT}\sum_{s=1}^{N}\left|\mathbf{X}_{J,s}(\widehat{\pmb{\beta}}_J - \pmb{\beta}_{0,J}) \right|_F^2\right\}^{1/2}\\
&=& O_P(|\widehat{\pmb{\beta}}_J - \pmb{\beta}_{0,J}|_F)\left\{\frac{1}{T}\sum_{s=1}^{T}\left|\frac{1}{NT}\sum_{t=1}^{T}\mathbf{X}_{J,t}^\top\mathbf{M}_{\widehat{\pmb{\Lambda}}}\mathbf{e}_{s}\mathbf{f}_{0t}^\top\right|_F^2\right\}^{1/2}.
\end{eqnarray*} 
Next, consider $\frac{1}{NT}\sum_{t=1}^{T}\mathbf{X}_{J,t}^\top\mathbf{M}_{\widehat{\pmb{\Lambda}}}\mathbf{e}_{s}\mathbf{f}_{0t}^\top$, by using $|\mathbf{P}_{\pmb{\Lambda}_{0}}-\mathbf{P}_{\widehat{\pmb{\Lambda}}}|_F = O_P(|\widehat{\pmb{\beta}}_J - \pmb{\beta}_{0,J}|_F+\delta_{NT}^{-1})$, we have
\begin{eqnarray*}
&&\left|\frac{1}{NT}\sum_{t=1}^{T}\mathbf{X}_{J,t}^\top\mathbf{M}_{\widehat{\pmb{\Lambda}}}\mathbf{e}_{s}\mathbf{f}_{0t}^\top\right|_F\\
&\leq&\left|\frac{1}{T\sqrt{N}}\sum_{t=1}^{T}\frac{1}{\sqrt{N}}\sum_{i=1}^{N}\mathbf{x}_{J,it}{e}_{is}\mathbf{f}_{0t}^\top\right|_F +  \left|\frac{1}{T\sqrt{N}}\sum_{t=1}^{T}\frac{\mathbf{X}_{J,t}^\top\pmb{\Lambda}_{0}}{N}(\pmb{\Lambda}_{0}^\top\pmb{\Lambda}_{0}/N)^{-1}\frac{\pmb{\Lambda}_{0}^\top\mathbf{e}_{s}}{\sqrt{N}}\mathbf{f}_{0t}^\top\right|_F \\
&& + \left|\frac{1}{NT}\sum_{t=1}^{T}\mathbf{X}_{J,t}^\top(\mathbf{P}_{\widehat{\pmb{\Lambda}}}-\mathbf{P}_{\pmb{\Lambda}_{0}})\mathbf{e}_{s}\mathbf{f}_{0t}^\top\right|_F\\
&\leq&O_P(\sqrt{s}/\sqrt{N}) + O_P(\sqrt{s/N}) + \sqrt{s}O_P(|\widehat{\pmb{\beta}}_J - \pmb{\beta}_{0,J}|_F+\delta_{NT}^{-1}).
\end{eqnarray*}
Hence, we have $|\mathbf{K}_{17}|_F=O_P(\sqrt{s}|\widehat{\pmb{\beta}}_J - \pmb{\beta}_{0,J}|_F^2 + \sqrt{s}|\widehat{\pmb{\beta}}_J - \pmb{\beta}_{0,J}|_F\delta_{NT}^{-1})$.

For $\mathbf{K}_{18}$, using $\mathbf{M}_{\widehat{\pmb{\Lambda}}}\pmb{\Lambda}_{0}=\mathbf{M}_{\widehat{\pmb{\Lambda}}}(\pmb{\Lambda}_{0} - \widehat{\pmb{\Lambda}}\mathbf{H}^{-1})$, we have
\begin{eqnarray*}
\left|\mathbf{K}_{18}\right|&\leq&O_P(1)\frac{1}{T}\sum_{t=1}^{T}\left|\mathbf{X}_{J,t}/\sqrt{N}\right|_F|\mathbf{f}_{0t}|_FN^{-1/2}|\pmb{\Lambda}_{0} - \widehat{\pmb{\Lambda}}\mathbf{H}^{-1}|_F\left|\frac{1}{NT}\sum_{s=1}^{T} \mathbf{f}_{0s}\mathbf{e}_s^\top\widehat{\pmb{\Lambda}}\right|_F\\
&\leq&\sqrt{s}O_P(|\widehat{\pmb{\beta}}_J - \pmb{\beta}_{0,J}|_F+\delta_{NT}^{-1})\left|\frac{1}{NT}\sum_{s=1}^{T} \mathbf{f}_{0s}\mathbf{e}_s^\top\widehat{\pmb{\Lambda}}\right|_F.
\end{eqnarray*}
In addition, by using Lemma \ref{L.B1} (9), 
\begin{eqnarray*}
\frac{1}{NT}\sum_{s=1}^{T} \mathbf{f}_{0s}\mathbf{e}_s^\top\widehat{\pmb{\Lambda}}&=&\frac{1}{NT}\sum_{s=1}^{T} \mathbf{f}_{0s}\mathbf{e}_s^\top\pmb{\Lambda}_{0}\mathbf{H} + \frac{1}{NT}\sum_{s=1}^{T} \mathbf{f}_{0s}\mathbf{e}_s^\top(\widehat{\pmb{\Lambda}}-\pmb{\Lambda}_{0}\mathbf{H}) \\
&=& O_P(1/\sqrt{NT}) + O_P(\sqrt{s/(NT)}|\widehat{\pmb{\beta}}_J - \pmb{\beta}_{0,J}|_F+T^{-1}+T^{-1/2}\delta_{NT}^{-2}).
\end{eqnarray*}
Hence, $\left|\mathbf{K}_{18}\right|_F = O_P(\sqrt{s}\delta_{NT}^{-2}|\widehat{\pmb{\beta}}_J - \pmb{\beta}_{0,J}|_F + s/\sqrt{NT}|\widehat{\pmb{\beta}}_J - \pmb{\beta}_{0,J}|_F^2 + \sqrt{s}\delta_{NT}^{-3})$. The term $\mathbf{K}_{19}$ is simply $\mathbf{K}_{19} = - \frac{1}{NT^2}\sum_{t=1}^T\sum_{s=1}^Ta_{st}  \mathbf{X}_{J,t}^\top\mathbf{M}_{\widehat{\pmb{\Lambda}}}\mathbf{e}_{s}$. 

For $\mathbf{K}_{20}$, by part (1), we have
\begin{eqnarray*}
\mathbf{K}_{20}&=& - \frac{1}{N^2T}\sum_{t=1}^T \mathbf{X}_{J,t}^\top\mathbf{M}_{\widehat{\pmb{\Lambda}}}\pmb{\Omega}_e \widehat{\pmb{\Lambda}}\mathbf{G}\mathbf{f}_{0t} \\
&& -  \frac{1}{N^2T^2}\sum_{t=1}^T\sum_{s=1}^T \mathbf{X}_{J,t}^\top\mathbf{M}_{\widehat{\pmb{\Lambda}}}(\mathbf{e}_s\mathbf{e}_s^\top -\pmb{\Omega}_e) \widehat{\pmb{\Lambda}}\mathbf{G}\mathbf{f}_{0t} \\
&=& - \frac{1}{N^2T}\sum_{t=1}^T \mathbf{X}_{J,t}^\top\mathbf{M}_{\widehat{\pmb{\Lambda}}}\pmb{\Omega}_e \widehat{\pmb{\Lambda}}\mathbf{G}\mathbf{f}_{0t}\\
&& +O_P\left( \sqrt{s/(NT)}|\widehat{\pmb{\beta}}_J-\pmb{\beta}_{0,J}|_F +\sqrt{s/T}|\widehat{\pmb{\beta}}_J-\pmb{\beta}_{0,J}|_F^2 + \sqrt{s/(NT)}\delta_{NT}^{-1} \right). 
\end{eqnarray*}

Combining the above analyses, we have 
\begin{eqnarray*}
\frac{1}{NT}\sum_{t=1}^{T}\mathbf{X}_{J,t}^\top \mathbf{M}_{\widehat{\pmb{\Lambda}}}\pmb{\Lambda}_{0}\mathbf{f}_{0t}
&=& \frac{1}{N} \left[ \frac{1}{T^2}\sum_{t=1}^{T}\sum_{s=1}^{T}\mathbf{X}_{J,t}^\top\mathbf{M}_{\widehat{\pmb{\Lambda}}}\mathbf{X}_{J,s}a_{st}\right] (\widehat{\pmb{\beta}}_J - \pmb{\beta}_{0,J}) \\
&&- \frac{1}{NT^2}\sum_{t=1}^T\sum_{s=1}^Ta_{st}  \mathbf{X}_{J,t}^\top\mathbf{M}_{\widehat{\pmb{\Lambda}}}\mathbf{e}_{s} - \frac{1}{N^2T}\sum_{t=1}^T \mathbf{X}_{J,t}^\top\mathbf{M}_{\widehat{\pmb{\Lambda}}}\pmb{\Omega}_e \widehat{\pmb{\Lambda}}\mathbf{G}\mathbf{f}_{0t} \\
&&+ O_P(\sqrt{s}|\widehat{\pmb{\beta}}_J - \pmb{\beta}_{0,J}|_F^2 + \sqrt{s}\delta_{NT}^{-1}|\widehat{\pmb{\beta}}_J - \pmb{\beta}_{0,J}|_F + \sqrt{s}\delta_{NT}^{-3}).
\end{eqnarray*}
In addition, by using the equality $\frac{1}{N}\sum_{k=1}^{T}a_{tk}a_{sk} = a_{ts}$, we have
$$
\frac{1}{NT}\sum_{t=1}^{T}\mathbf{X}_{J,t}^\top\mathbf{M}_{\widehat{\pmb{\Lambda}}}\mathbf{X}_{J,t} - \frac{1}{N} \left[ \frac{1}{T^2}\sum_{t=1}^{T}\sum_{s=1}^{T}\mathbf{X}_{J,t}^\top\mathbf{M}_{\widehat{\pmb{\Lambda}}}\mathbf{X}_{J,s}a_{st}\right] = \frac{1}{NT}\sum_{t=1}^{T}\widetilde{\mathbf{X}}_{J,t}^\top\mathbf{M}_{\widehat{\pmb{\Lambda}}}\widetilde{\mathbf{X}}_{J,t}= \mathbf{D}(\widehat{\pmb{\Lambda}}),
$$
which implies that
\begin{eqnarray*}
\sqrt{NT}(\widehat{\pmb{\beta}}_J - \pmb{\beta}_{0,J}) &=& \mathbf{D}^{-1}(\widehat{\pmb{\Lambda}}) \frac{1}{\sqrt{NT}}\sum_{t=1}^{T}\left[\mathbf{X}_{J,t}^\top\mathbf{M}_{\widehat{\pmb{\Lambda}}} - \frac{1}{T}\sum_{s=1}^{T}a_{st}\mathbf{X}_{J,s}^\top\mathbf{M}_{\widehat{\pmb{\Lambda}}}\right] \mathbf{e}_t + \sqrt{\frac{T}{N}}\pmb{\zeta}_{NT}\\ 
&&+ O_P(\sqrt{sNT}|\widehat{\pmb{\beta}}_J - \pmb{\beta}_{0,J}|_F^2 + \sqrt{sNT}\delta_{NT}^{-1}|\widehat{\pmb{\beta}}_J - \pmb{\beta}_{0,J}|_F + \sqrt{sNT}\delta_{NT}^{-3}).
\end{eqnarray*}

\smallskip

\noindent (3). Consider $\frac{1}{\sqrt{NT}}\sum_{t=1}^{T}\mathbf{X}_{J,t}^\top(\mathbf{M}_{\pmb{\Lambda}_{0}}-\mathbf{M}_{\widehat{\pmb{\Lambda}}})\mathbf{e}_{t}$, write
\begin{eqnarray*}
&&\frac{1}{\sqrt{NT}}\sum_{t=1}^{T}\mathbf{X}_{J,t}^\top(\mathbf{M}_{\pmb{\Lambda}_{0}}-\mathbf{M}_{\widehat{\pmb{\Lambda}}})\mathbf{e}_{t}\\
&=&\frac{1}{\sqrt{NT}}\sum_{t=1}^{T}\frac{\mathbf{X}_{J,t}^\top(\widehat{\pmb{\Lambda}}-\pmb{\Lambda}_{0}\mathbf{H})}{N}\mathbf{H}^\top\pmb{\Lambda}_{0}^\top\mathbf{e}_{t} + \frac{1}{\sqrt{NT}}\sum_{t=1}^{T}\frac{\mathbf{X}_{J,i}^\top(\widehat{\pmb{\Lambda}}-\pmb{\Lambda}_{0}\mathbf{H})}{N}(\widehat{\pmb{\Lambda}}-\pmb{\Lambda}_{0}\mathbf{H})^\top\mathbf{e}_{t} \\
&& + \frac{1}{\sqrt{NT}}\sum_{t=1}^{T}\frac{\mathbf{X}_{J,t}^\top \pmb{\Lambda}_{0}\mathbf{H}}{N}(\widehat{\pmb{\Lambda}}-\pmb{\Lambda}_{0}\mathbf{H})^\top\mathbf{e}_{t} + \frac{1}{\sqrt{NT}}\sum_{t=1}^{T}\frac{\mathbf{X}_{J,t}^\top \pmb{\Lambda}_{0}}{N}(\mathbf{H}\mathbf{H}^\top-(\pmb{\Lambda}_{0}^\top\pmb{\Lambda}_{0}/N)^{-1})\pmb{\Lambda}_{0}^\top\mathbf{e}_{t} \\
&\eqqcolon & \mathbf{K}_{21} + \mathbf{K}_{22} + \mathbf{K}_{23} + \mathbf{K}_{24}.
\end{eqnarray*}

For $\mathbf{K}_{21}$, by using Cauchy-Schwarz inequality, we have
\begin{eqnarray*}
|\mathbf{K}_{21}|_F &=& \left|\frac{1}{N}\sum_{i=1}^{N} \left(\frac{1}{\sqrt{NT}}\sum_{j=1}^{N}\sum_{t=1}^{T}\mathbf{x}_{J,it}e_{jt}\pmb{\lambda}_{0j}^\top \right)\mathbf{H}(\widehat{\pmb{\lambda}}_i - \mathbf{H}^\top\pmb{\lambda}_{0i})\right|_F \\
&\leq& \left\{\frac{1}{N}\sum_{i=1}^{N}\left|\frac{1}{\sqrt{NT}}\sum_{j=1}^{N}\sum_{t=1}^{T}\mathbf{x}_{J,it}e_{jt}\pmb{\lambda}_{0j}^\top\right|_F^2\right\}^{1/2}\left\{\frac{1}{N}\sum_{i=1}^{N}\left|\widehat{\pmb{\lambda}}_i - \mathbf{H}^\top\pmb{\lambda}_{0i}\right|_F^2\right\}^{1/2} |\mathbf{H}|_F \\
&=&\sqrt{s} O_P(|\widehat{\pmb{\beta}}_J - \pmb{\beta}_{0,J}|_F + \delta_{NT}^{-1}).
\end{eqnarray*}
Similarly, for $\mathbf{K}_{22}$, by using Cauchy-Schwarz inequality, we have
\begin{eqnarray*}
|\mathbf{K}_{22}|_F &=& \left|\sqrt{N}\frac{1}{N^2}\sum_{i=1}^{N}\sum_{j=1}^{N} (\widehat{\pmb{\lambda}}_i - \mathbf{H}^\top\pmb{\lambda}_{0i})^\top(\widehat{\pmb{\lambda}}_j - \mathbf{H}^\top\pmb{\lambda}_{0j})\left(\frac{1}{\sqrt{T}}\sum_{t=1}^{T}\mathbf{x}_{J,it}e_{jt}\right)\right|_F \\
&\leq&\sqrt{N}\left(\frac{1}{N}\sum_{i=1}^{N}\left|\widehat{\pmb{\lambda}}_i - \mathbf{H}^\top\pmb{\lambda}_{0i}\right|_F^2\right) \left\{\frac{1}{N^2}\sum_{i=1}^{N}\sum_{j=1}^{N}\left|\frac{1}{\sqrt{T}}\sum_{t=1}^{T}\mathbf{x}_{J,it}e_{jt}\right|_F^2\right\}^{1/2} \\
&=& \sqrt{Ns}O_P(|\widehat{\pmb{\beta}}_J - \pmb{\beta}_{0,J}|_F^2 + \delta_{NT}^{-2}).
\end{eqnarray*}
For $\mathbf{K}_{23}$, write
\begin{eqnarray*}
\mathbf{K}_{23} &=& \frac{1}{\sqrt{NT}}\sum_{t=1}^{T}\frac{\mathbf{X}_{J,t}^\top\pmb{\Lambda}_{0}}{N}\mathbf{H}\mathbf{H}^\top (\widehat{\pmb{\Lambda}}\mathbf{H}^{-1} - \pmb{\Lambda}_{0})^\top\mathbf{e}_t  \\
&=&\frac{1}{\sqrt{NT}}\sum_{t=1}^{T}\frac{\mathbf{X}_{J,t}^\top\pmb{\Lambda}_{0}}{N}\left(\pmb{\Lambda}_{0}^\top\pmb{\Lambda}_{0}/N\right)^{-1} (\widehat{\pmb{\Lambda}}\mathbf{H}^{-1} - \pmb{\Lambda}_{0})^\top\mathbf{e}_t  \\
&& + \frac{1}{\sqrt{NT}}\sum_{t=1}^{T}\frac{\mathbf{X}_{J,t}^\top\pmb{\Lambda}_{0}}{N}\left(\mathbf{H}\mathbf{H}^\top - \left(\pmb{\Lambda}_{0}^\top\pmb{\Lambda}_{0}/N\right)^{-1}\right) (\widehat{\pmb{\Lambda}}\mathbf{H}^{-1} - \pmb{\Lambda}_{0})^\top\mathbf{e}_t \\
&\eqqcolon &\mathbf{K}_{23,1} + \mathbf{K}_{23,2}.
\end{eqnarray*}
For $\mathbf{K}_{23,1}$, by using Lemma \ref{L.B1} (10), we have
\begin{eqnarray*}
\mathbf{K}_{23,1}&=& \sqrt{\frac{N}{T}}\frac{1}{NT}\sum_{t=1}^{T}\sum_{s=1}^{T}(\mathbf{X}_{J,t}^\top\pmb{\Lambda}_{0}/N)(\pmb{\Lambda}_{0}^\top\pmb{\Lambda}_{0}/N)\times(\mathbf{F}_{0}^\top\mathbf{F}_{0}/T)^{-1}\mathbf{f}_{0s}\left(\sum_{i=1}^{N}e_{is}e_{it}\right) \\
&& + O_P(\sqrt{s}|\widehat{\pmb{\beta}}_J - \pmb{\beta}_{0,J}|_F + \sqrt{Ns}\delta_{NT}^{-2}).
\end{eqnarray*}
For $\mathbf{K}_{23,2}$, by using Lemma \ref{L.B1} (3) and the arguments of Lemma \ref{L.B1}  (9), we have
\begin{eqnarray*}
\mathbf{K}_{23,2}&=& \sqrt{NT}\left(\frac{1}{NT}\sum_{t=1}^{T}\mathbf{e}_t^\top(\widehat{\pmb{\Lambda}}\mathbf{H}^{-1} - \pmb{\Lambda}_{0})\otimes\left(\frac{\mathbf{X}_{J,t}^\top\pmb{\Lambda}_{0}}{N}\right) \right) \mathrm{vec}\left(\mathbf{H}\mathbf{H}^\top -(\pmb{\Lambda}_{0}^\top\pmb{\Lambda}_{0}/N)^{-1}\right)\\
&=& \sqrt{NT}O_P\left(\sqrt{s/(NT)}|\widehat{\pmb{\beta}}_J - \pmb{\beta}_{0,J}|_F + \sqrt{s}/T+ \sqrt{s/T}\delta_{NT}^{-2} \right)O_P\left(|\widehat{\pmb{\beta}}_J - \pmb{\beta}_{0,J}|_F + \delta_{NT}^{-2} \right).
\end{eqnarray*}
Similarly, by using Lemma \ref{L.B1} (3), we have $\mathbf{K}_{24} = O_P\left(|\widehat{\pmb{\beta}}_J - \pmb{\beta}_{0,J}|_F + \delta_{NT}^{-2} \right)$. In summary, we have
\begin{eqnarray*}
&&\frac{1}{\sqrt{NT}}\sum_{t=1}^{T}\mathbf{X}_{J,t}^\top(\mathbf{M}_{\pmb{\Lambda}_{0}}-\mathbf{M}_{\widehat{\pmb{\Lambda}}})\mathbf{e}_{t}\\
&=& \sqrt{\frac{N}{T}}\frac{1}{NT}\sum_{t=1}^{T}\sum_{s=1}^{T}(\mathbf{X}_{J,t}^\top\pmb{\Lambda}_{0}/N)(\pmb{\Lambda}_{0}^\top\pmb{\Lambda}_{0}/N)\times(\mathbf{F}_{0}^\top\mathbf{F}_{0}/T)^{-1}\mathbf{f}_{0s}\left(\sum_{i=1}^{N}e_{is}e_{it}\right)\\
&& +  O_P(\sqrt{s}|\widehat{\pmb{\beta}}_J - \pmb{\beta}_{0,J}|_F +  \sqrt{Ns}|\widehat{\pmb{\beta}}_J - \pmb{\beta}_{0,J}|_F^2 + \sqrt{Ns}\delta_{NT}^{-2}).
\end{eqnarray*}

Let $\mathbf{V}_{J,t} = \frac{1}{T}\sum_{s=1}^{T}a_{ts}\mathbf{X}_{J,s}$. Then replacing $\mathbf{X}_{J,s}$ with $\mathbf{V}_{J,s}$ and using same arguments as above, we have
\begin{eqnarray*}
&&\frac{1}{\sqrt{NT}}\sum_{t=1}^{T}\mathbf{V}_{J,t}^\top(\mathbf{M}_{\pmb{\Lambda}_{0}}-\mathbf{M}_{\widehat{\pmb{\Lambda}}})\mathbf{e}_{t}\\
&=&\sqrt{\frac{N}{T}}\frac{1}{NT}\sum_{t=1}^{T}\sum_{s=1}^{T}(\mathbf{X}_{J,t}^\top\pmb{\Lambda}_{0}/N)(\pmb{\Lambda}_{0}^\top\pmb{\Lambda}_{0}/N)\times(\mathbf{F}_{0}^\top\mathbf{F}_{0}/T)^{-1}\mathbf{f}_{0s}\left(\sum_{i=1}^{N}e_{is}e_{it}\right)\\
&& +  O_P(\sqrt{s}|\widehat{\pmb{\beta}}_J - \pmb{\beta}_{0,J}|_F +  \sqrt{Ns}|\widehat{\pmb{\beta}}_J - \pmb{\beta}_{0,J}|_F^2 + \sqrt{Ns}\delta_{NT}^{-2}).
\end{eqnarray*}
Combing the above results, we have completed the proof.

\smallskip

\noindent (4). Part (4) follows directly from parts (1)-(3). 
\end{proof}

\begin{proof}[Proof of Lemma \ref{L.B3}]
\item
\noindent (1). Write
\begin{eqnarray*}
\left|\mathbf{D}(\widehat{\pmb{\Lambda}}) - \mathbf{D}(\pmb{\Lambda}_{0})\right|_F&\leq& \frac{1}{NT}\sum_{t=1}^{T}|\widetilde{\mathbf{X}}_{J,t}|_F^2 \times |\mathbf{P}_{\widehat{\pmb{\Lambda}}} - \mathbf{P}_{\pmb{\Lambda}_{0}}|_F \\
&=& O_P(s|\widehat{\pmb{\beta}}_J - \pmb{\beta}_{0,J}|_F + s\delta_{NT}^{-1}).
\end{eqnarray*}

\smallskip

\noindent (2). Note that $|\mathbf{D}^{-1}(\widehat{\pmb{\Lambda}}) - \mathbf{D}^{-1}(\pmb{\Lambda}_{0})|_F \leq |\mathbf{D}(\widehat{\pmb{\Lambda}}) - \mathbf{D}(\pmb{\Lambda}_{0})|_F\frac{1}{\psi_{\mathrm{min}}(\mathbf{D}(\widehat{\pmb{\Lambda}}))}\frac{1}{\psi_{\mathrm{min}}(\mathbf{D}(\pmb{\Lambda}_{0}))} = O_P(s|\widehat{\pmb{\beta}}_J - \pmb{\beta}_{0,J}|_F + s\delta_{NT}^{-1})$. It is sufficient to show 
$$
\frac{1}{NT}\sum_{t=1}^{T}\sum_{s=1}^{T}\frac{\widetilde{\mathbf{X}}_{J,t}^\top\pmb{\Lambda}_{0}}{N}\left(\frac{\pmb{\Lambda}_{0}^\top\pmb{\Lambda}_{0}}{N}\right)^{-1} \left(\frac{\mathbf{F}_{0}^\top\mathbf{F}_{0}}{T}\right)^{-1}\mathbf{f}_{0s}\left(\sum_{i=1}^{N}[e_{it}e_{is}-E(e_{it}e_{is})]\right) = O_P(\sqrt{s/N}),
$$
which can be proved by using Assumption C (iv) of \cite{bai2009panel}.

\smallskip

\noindent (3). It suffices to show that 
\begin{eqnarray*}
&&\frac{1}{NT}\sum_{t=1}^{T}\mathbf{X}_{J,t}^\top\mathbf{M}_{\widehat{\pmb{\Lambda}}}\pmb{\Omega}_e\widehat{\pmb{\Lambda}}\mathbf{G}\mathbf{f}_{0t}\\
&=&\frac{1}{NT}\sum_{t=1}^{T}\mathbf{X}_{J,t}^\top\mathbf{M}_{\pmb{\Lambda}_{0}}\pmb{\Omega}_e\pmb{\Lambda}_{0}(\pmb{\Lambda}_{0}^\top\pmb{\Lambda}_{0}/N)^{-1}(\mathbf{F}_{0}^\top\mathbf{F}_{0}/T)^{-1}\mathbf{f}_{0t} + O_P(\sqrt{s}|\widehat{\pmb{\beta}}_J - \pmb{\beta}_{0,J}|_F + \sqrt{s}\delta_{NT}^{-1}).
\end{eqnarray*}
By adding and subtracting terms, we first consider
\begin{eqnarray*}
&&\left|\frac{1}{NT}\sum_{t=1}^{T}\mathbf{X}_{J,t}^\top(\mathbf{M}_{\widehat{\pmb{\Lambda}}} - \mathbf{M}_{\pmb{\Lambda}_{0}})\pmb{\Omega}_e\widehat{\pmb{\Lambda}}(\pmb{\Lambda}_{0}^\top\widehat{\pmb{\Lambda}}/N)^{-1}(\mathbf{F}_{0}^\top\mathbf{F}_{0}/T)^{-1}\mathbf{f}_{0t} \right|_F\\
&\leq& O_P(1)\frac{1}{T}\sum_{t=1}^{T}|\mathbf{X}_{J,t}/\sqrt{N}|_F|\mathbf{f}_{0t}|_F \times |\mathbf{P}_{\widehat{\pmb{\Lambda}}} - \mathbf{P}_{\pmb{\Lambda}_{0}}|_F \\
&=& O_P(\sqrt{s}|\widehat{\pmb{\beta}}_J - \pmb{\beta}_{0,J}|_F + \sqrt{s}\delta_{NT}^{-1}).
\end{eqnarray*}
Next, note that $|\mathbf{M}_{\pmb{\Lambda}_{0}}\pmb{\Omega}_e|_2 = O(1)$, we have
\begin{eqnarray*}
&&\left|\frac{1}{NT}\sum_{t=1}^{T}\mathbf{X}_{J,t}^\top\mathbf{M}_{\pmb{\Lambda}_{0}}\pmb{\Omega}_e\left[\widehat{\pmb{\Lambda}}\left(\frac{\pmb{\Lambda}_{0}^\top\widehat{\pmb{\Lambda}}}{N}\right)^{-1} - \pmb{\Lambda}_{0}\left(\frac{\pmb{\Lambda}_{0}^\top\pmb{\Lambda}_{0}}{N}\right)^{-1} \right] \left(\frac{\mathbf{F}_{0}^\top\mathbf{F}_{0}}{T}\right)^{-1}\mathbf{f}_{0t}\right|_F\\
&\leq& O_P(1\frac{1}{T}\sum_{t=1}^{T}|\mathbf{X}_{J,t}/\sqrt{N}|_F|\mathbf{f}_{0t}|_F \times N^{-1/2}\left| \widehat{\pmb{\Lambda}}\left(\frac{\pmb{\Lambda}_{0}^\top\widehat{\pmb{\Lambda}}}{N}\right)^{-1} - \pmb{\Lambda}_{0}\left(\frac{\pmb{\Lambda}_{0}^\top\pmb{\Lambda}_{0}}{N}\right)^{-1}\right|_F\\
&=&O_P(\sqrt{s}|\widehat{\pmb{\beta}}_J - \pmb{\beta}_{0,J}|_F + \sqrt{s}\delta_{NT}^{-1})
\end{eqnarray*}
since
\begin{eqnarray*}
&&N^{-1/2}\left( \widehat{\pmb{\Lambda}}\left(\frac{\pmb{\Lambda}_{0}^\top\widehat{\pmb{\Lambda}}}{N}\right)^{-1} - \pmb{\Lambda}_{0}\left(\frac{\pmb{\Lambda}_{0}^\top\pmb{\Lambda}_{0}}{N}\right)^{-1}\right)\\
&=&N^{-1/2}\left(\widehat{\pmb{\Lambda}} - \mathbf{P}_{\pmb{\Lambda}_{0}}\widehat{\pmb{\Lambda}}\right)\left(\frac{\pmb{\Lambda}_{0}^\top\widehat{\pmb{\Lambda}}}{N}\right)^{-1} = (\mathbf{P}_{\widehat{\pmb{\Lambda}}}-\mathbf{P}_{\pmb{\Lambda}_{0}}) N^{-1/2}\widehat{\pmb{\Lambda}} \left(\frac{\pmb{\Lambda}_{0}^\top\widehat{\pmb{\Lambda}}}{N}\right)^{-1}\\
&=&O_P(|\widehat{\pmb{\beta}}_J - \pmb{\beta}_{0,J}|_F+\delta_{NT}^{-1}).
\end{eqnarray*}
\end{proof}

\begin{proof}[Proof of Lemma \ref{L.B4}]
\item 
\noindent (1). By using $\frac{1}{N}\widehat{\pmb{\Lambda}}^\top \widehat{\pmb{\Lambda}} = \mathbf{I}_r$, we have
$$
\widehat{\mathbf{f}}_t - \mathbf{H}^{-1}\mathbf{f}_{0t} = \frac{1}{N}\widehat{\pmb{\Lambda}}^\top \left(\pmb{\Lambda}_0 - \widehat{\pmb{\Lambda}}^\top\mathbf{H}^{-1}\right)\mathbf{f}_{0t} + \frac{1}{N}\widehat{\pmb{\Lambda}}^\top\mathbf{e}_t + \frac{1}{N}\widehat{\pmb{\Lambda}}^\top\mathbf{X}_{J,t}^\top(\pmb{\beta}_{0,J} - \widehat{\pmb{\beta}}_J),
$$
which follows that
\begin{eqnarray*}
\frac{1}{T}\sum_{t=1}^{T}|\widehat{\mathbf{f}}_t - \mathbf{H}^{-1}\mathbf{f}_{0t}|_F^2&\leq&\frac{3}{T}\sum_{t=1}^{T}|\mathbf{f}_{0t}|_F^2 \times \left|\frac{1}{N}\widehat{\pmb{\Lambda}}^\top \left(\pmb{\Lambda}_0 - \widehat{\pmb{\Lambda}}^\top\mathbf{H}^{-1}\right)\right|_F^2 + \frac{3}{T}\sum_{t=1}^{T}\left|\frac{1}{N}\widehat{\pmb{\Lambda}}^\top\mathbf{e}_t\right|_F^2\\
&& + \frac{3}{T}\sum_{t=1}^{T}|N^{-1/2}\mathbf{X}_{J,t}|_F^2 \left|N^{-1/2}\widehat{\pmb{\Lambda}}\right|_F^2 |\pmb{\beta}_{0,J} - \widehat{\pmb{\beta}}_J|_F^2 \\
&=& O_P(s/(NT) + \delta_{NT}^{-4}) + O_P(1/N + s/(NT) + \delta_{NT}^{-2}) + O_P(s^2/(NT))
\end{eqnarray*}
provided that
\begin{eqnarray*}
K_{26}&\leq& \frac{2}{T}\sum_{t=1}^{T}\left|\frac{1}{N}\pmb{\Lambda}_0^\top\mathbf{e}_t\right|_F^2 \times |\mathbf{H}|_F^2 + \frac{2}{T}\sum_{t=1}^{T}|N^{-1/2}\mathbf{e}_t|_F^2\left|N^{-1/2}(\widehat{\pmb{\Lambda}}-\mathbf{H}\pmb{\Lambda}_0)\right|_F^2\\
&=&O_P(1/N + s/(NT) + \delta_{NT}^{-2}).
\end{eqnarray*}

\smallskip

\noindent (2). Write
\begin{eqnarray*}
&&\left|\frac{1}{T}\sum_{t=1}^{T}(\widehat{\mathbf{f}}_t - \mathbf{H}^{-1}\mathbf{f}_{0t})\mathbf{f}_{0t}^\top\right|_F\\
&\leq&\frac{1}{T}\sum_{t=1}^{T}|\mathbf{f}_{0t}|_F^2 \times \left|\frac{1}{N}\widehat{\pmb{\Lambda}}^\top \left(\pmb{\Lambda}_0 - \widehat{\pmb{\Lambda}}^\top\mathbf{H}^{-1}\right)\right|_F + \left|\frac{1}{NT}\sum_{t=1}^{T}\widehat{\pmb{\Lambda}}^\top\mathbf{e}_t\mathbf{f}_{0t}^\top\right|_F\\
&& + \frac{1}{T}\sum_{t=1}^{T}|N^{-1/2}\mathbf{X}_{J,t}|_F|\mathbf{f}_{0t}|_F \left|N^{-1/2}\widehat{\pmb{\Lambda}}\right|_F |\pmb{\beta}_{0,J} - \widehat{\pmb{\beta}}_J|_F \\
&=& O_P(\delta_{NT}^{-2}) + O_P(1/N + \delta_{NT}^{-2}) + O_P(s/\sqrt{NT})
\end{eqnarray*}
provided that
\begin{eqnarray*}
\left|\frac{1}{NT}\sum_{t=1}^{T}\widehat{\pmb{\Lambda}}^\top\mathbf{e}_t\mathbf{f}_{0t}^\top\right|_F &\leq& \left|\frac{1}{NT}\sum_{t=1}^{T}\pmb{\Lambda}_0^\top\mathbf{e}_t\mathbf{f}_{0t}^\top\right|_F|\mathbf{H}|_F + \left|\frac{1}{TN^{1/2}}\sum_{t=1}^{T}\mathbf{e}_t\mathbf{f}_{0t}^\top\right|_FN^{-1/2}|\widehat{\pmb{\Lambda}}-\pmb{\Lambda}_0\mathbf{H}|_F\\
&=& O_P\left(1/\sqrt{NT} + T^{-1/2}(\delta_{NT}^{-1} +\sqrt{s/(NT)})\right).
\end{eqnarray*}

\smallskip

\noindent (3)-(4). By using the identity $\widehat{\mathbf{F}} = \widehat{\mathbf{F}} - \mathbf{F}_0 \mathbf{H}^{-1,\top} + \mathbf{F}_0 \mathbf{H}^{-1,\top}$, parts (3)-(4) follow directly from parts (1) and (3).
\end{proof}

}

\end{document}